\def\llncs{0}
\def\fullpage{1}
\def\anonymous{0}
\def\draft{0}
\def\submission{0}

\ifnum\submission=1
\def\llncs{1}
\def\draft{0}
\def\anonymous{1}
\fi

\ifnum\llncs=1
    \documentclass{llncs}
\else
    \documentclass[letterpaper,hmargin=1.05in,vmargin=1.05in]{article}
    \ifnum\fullpage=1
    \usepackage{fullpage}
    \fi
\fi

\usepackage{verbatim}
\usepackage{authblk}
\usepackage{graphicx} 
\usepackage{physics} 
\usepackage{xcolor} 
\usepackage{amsmath}
\usepackage{amssymb}
\usepackage{mathtools}

\usepackage{amsthm}
\usepackage{enumitem} 
\usepackage{dsfont}
\usepackage[colorlinks=true,linkcolor=magenta,citecolor=blue,pagebackref=true]{hyperref}
\usepackage{dashbox}
\usepackage{fancybox,framed}
\usepackage[skins]{tcolorbox}
\usepackage{subcaption}
\usepackage{relsize}
\usepackage{comment} 
\usepackage[capitalise,nameinlink]{cleveref}
\makeatletter
\renewcommand\appendix{\par
  \setcounter{section}{0}\setcounter{subsection}{0}%
  \gdef\thesection{\Alph{section}}%
  \gdef\theHsection{\Alph{section}}
  \crefalias{section}{appendix}
  \crefalias{subsection}{subappendix}
}
\makeatother
\usepackage{tikz}
\usetikzlibrary{shadings}

\usepackage{stmaryrd}

\ifnum\draft=1
\newcommand{\minki}[1]{\textcolor{blue}{$\langle\langle$Minki: #1$\rangle\rangle$}}
\newcommand{\wonseok}[1]{\textcolor{red}{$\langle\langle$Wonseok: #1$\rangle\rangle$}}
\newcommand{\junyoung}[1]{\textcolor{violet}{$\langle\langle$Junyoung: #1$\rangle\rangle$}}
\else
\newcommand{\minki}[1]{}
\newcommand{\wonseok}[1]{}
\newcommand{\junyoung}[1]{}
\fi

\ifnum\llncs=0
	\newtheorem{theorem}{Theorem}[section]
	\newtheorem{lemma}[theorem]{Lemma}
	\newtheorem{corollary}[theorem]{Corollary}
	
	\newtheorem{definition}[theorem]{Definition}

	\theoremstyle{remark}
	\newtheorem{claim}[theorem]{Claim}

\else
\fi

\DeclarePairedDelimiterX{\inner}[1]{\langle}{\rangle}{\innerp{#1}}

\ExplSyntaxOn
\msg_new:nnn { innerp } { too-many-operands }
  { Too~many~operands~in~argument~'#1'~of~\tl_to_str:n { \inner: } #2. }

\cs_generate_variant:Nn \msg_error:nnnn { nnnx }

\NewDocumentCommand \innerp { m }
  {
    \int_case:nnF { \clist_count:n {#1} }
      {
        { 1 } { #1, #1 }
        { 2 } {#1}
      }
      {
        \msg_error:nnnx { innerp } { too-many-operands }
          {#1} { \clist_count:n {#1} }
      }
  }
\ExplSyntaxOff

\newcommand{\C}{\mathbb{C}}

\newcommand{\D}{\mathcal{D}}

\newcommand{\cS}{\mathcal{S}}

\newcommand{\Alg}{\mathcal{A}}
\newcommand{\Blg}{\mathcal{B}}

\renewcommand{\epsilon}{\varepsilon}

\newcommand{\Exp}{\mathbb{E}}

\newcommand{\supp}{\mathsf{supp}}

\newcommand{\bit}{\{0,1\}}
\newcommand{\pmone}{\{-1,1\}}
\newcommand{\Ftwo}{\mathbb{F}_2}

\newcommand{\F}{{\mathbf{F}}}
\renewcommand{\P}{{\mathbf{P}}}
\newcommand{\hF}{{\widehat{\mathbf{F}}}}
\newcommand{\hh}{{\widehat{H}}}
\renewcommand{\H}{\mathcal H}

\newcommand{\constzero}{{\mathbf{0}}}
\newcommand{\constone}{{\mathbf{1}}}

\newcommand\truth[1]{\llbracket #1 \rrbracket}

\newcommand{\xop}{{\mathsf{XoP}}}
\newcommand{\PC}{{\mathsf{PC}}}

\newcommand{\DPC}{{\D_{\PC}}}

\newcommand{\DthPC}{{\D_{3\PC}}}
\newcommand{\DfoX}{{\D_{\mathsf{4X}}}}
\newcommand{\DXoPr}{{\D_{\xop[r]}}}

\newcommand{\muP}{\mu_{\P}}
\newcommand{\muXoP}{\mu_{\xop}}
\newcommand{\muPC}{\mu_{\PC}}
\newcommand{\muPCt}{\mu_{\PC, 2}}
\newcommand{\muthPC}{\mu_{3\PC}}
\newcommand{\mufoX}{{\mu_{\mathsf{4X}}}}
\newcommand{\muXoPr}{\mu_{\xop[r]}}
\newcommand{\hmu}{\widehat{\mu}}
\newcommand{\hmuP}{\widehat{\mu}_{\P}}
\newcommand{\hmuXoPr}{\widehat{\mu}_{\xop[r]}}

\newcommand\dist[1]{\D_{#1}}
\newcommand\dens[1]{\mu_{#1}}

\newcommand{\Adv}{{\sf Adv}}

\newcommand{\newdef}[1]{\emph{\underline{{#1}}}}

\ifnum\llncs=1
    
    \spnewtheorem{claim}{Claim}{\bfseries}{\itshape}
    \numberwithin{claim}{section}
    \Crefname{claim}{Claim}{Claims}
\fi

\title{Quantum Security of XOR of Permutations via Fourier Analysis}
\date{}

\ifnum\llncs=1
    \ifnum\anonymous=1
        \author{}
        \institute{}
    \else
        \author{
            Wonseok Choi\inst{1}
            \and
            Minki Hhan\inst{2}
            \and
            Junyoung Jang\inst{2}
        }
        \institute{
            DGIST\\
            \email{wonseok@dgist.ac.kr}
    \vspace{0.1em}
            \and
            KAIST\\
            \email{minkihhan@kaist.ac.kr, junsarang74@kaist.ac.kr}
        }
    \fi
\else

    \author[1]{Wonseok Choi}
    \author[2]{Minki Hhan}
    \author[2]{Junyoung Jang}

    \affil[1]{{\small DGIST, Daegu, Korea}
    \authorcr{\small wonseok@dgist.ac.kr}}

    \affil[2]{{\small KAIST, Daejeon, Korea}
    \authorcr{\small minkihhan@kaist.ac.kr, \quad junsarang74@kaist.ac.kr}}
\fi

\begin{document}

\maketitle
\begin{abstract}
The XOR of two or more independent random permutations (XoP) is the prototypical pseudorandom function built from permutations achieving the security beyond the birthday bound.
The security of the XoP construction is now well established against classical adversaries, however, its security against quantum adversaries that query the construction in superposition has remained widely open.

We prove that the XOR of $r\ge 2$ independent random permutations over $\{0,1\}^n$ is indistinguishable from a random function by any $q$-query quantum algorithm with advantage
\[
O\left(\min\left\{
\frac{q^3}{2^{rn}},\frac{q^{1.5}}{2^{(r-0.5)n}},\frac{1}{2^{(r-1.5)n}}
\right\}
\right)
\]
for all $q\le 2^n/57774$, where the hidden factors depend only on $r$. In particular, the XoP construction remains secure throughout the entire query range, far beyond the $2^{n/3}$ quantum birthday bound due to the quantum collision finding attack. To our knowledge, this is the first construction from permutations that achieves the quantum version of the beyond birthday bound security.

We present several heuristic attacks suggesting the tightness of our bounds. 
For $q\lesssim 2^{n/2}$, the quantum collision finding-based attacks heuristically give the advantage $\Omega(q^3/2^{rn})$ and $\Omega(q^{1.5}/2^{(r-0.5)n})$, and for $q\approx 2^n$, the heuristic collision counting-based attack appears to have the advantage about $2^{-(r-1.5)n}$.
 
We use a Fourier-analytic variant of the polynomial method on the space of functions: the distinguishing advantage of any $q$-query quantum algorithm is controlled by the Fourier components of degree at most $2q$, which is in turn controlled by $2q$ input-output data of the construction. The norms of most components are bounded well, proving the bound $2^{-(r-3/2)n}$.

The norm of low-degree components turns out to be too large for the bound $q^3/2^{rn}$. We instead reinterpret these low-degree components as (sums of) distinguishing advantages of the other problems. For example, the degree-2 and degree-4 terms are interpreted as the advantages against random functions with and without \emph{planted collisions}, which in turn are bounded using Zhandry's small-range distributions.

Finally, the $q^{1.5}/2^{(r-0.5)n}$ bound can be proven using another bound for the planted collisions. This new bound is proven using the compressed oracle by interpreting the advantage as the other bound about random functions. It also gives a new bound for the small-range indistinguishability for (ironically) large ranges, which is of independent interest.
\end{abstract}
\ifnum\fullpage=1
\vfill

\paragraph{AI use disclosure.} The Fourier-analytic approach to the quantum security of XoP was formulated by the authors, with assistance from ChatGPT 5.4 and 5.5 Pro in formalizing many technical details. 
From the authors' perspective, the first nontrivial mathematical contribution from AI was a proof of the $O(q^3/N^r)$ bound of the degree-2 components (\cref{lem: xopdeg2_smallq}), obtained through a series of conversation with ChatGPT 5.5 Pro.
Initial proofs of several other lemmas were also generated by ChatGPT 5.5, 5.6 Pro and Astra (in the Ultra mode), as well as Fable 5.0 and 5.1.
The authors subsequently simplified these proofs and, in many cases, developed substantially different proofs that they considered more natural and accessible.
The current paper is mostly written by human authors from scratch, while AI assisted in writing some paragraphs.

    \clearpage
    \newpage
    \setcounter{tocdepth}{2}
    \tableofcontents
    \newpage
\fi

\ifnum\submission=1
\else
\fi

\section{Introduction}
\label{sec:intro}
Random permutations are arguably the most important idealized object in cryptography. 
Even in symmetric-key cryptography, {\it the standard model} assumes a keyed block cipher to be a private (pseudo)random permutation due to the wide adoption of the standardized block ciphers such as DES and AES~\cite{DES77,AES}. In this regard, earlier practical symmetric-key schemes are instantiated with block ciphers (i.e., random permutations) while the schemes could enjoy better security if instantiated with random functions~\cite{EC:BelKroRog98,C:HWKS98}. The representative examples are Wegman-Carter(-Shoup) MACs and counter mode~\cite{WegCar81,C:Shoup96,EC:Bernstein05}. 

The subsequent research on symmetric-key construction designs has sought to achieve better security,
especially the security beyond the birthday bound---
namely the \emph{beyond-birthday-bound} security. In its heart, the Luby-Rackoff backward problem~\cite{EC:BelKroRog98} asks how to construct (pseudo)random functions (PRFs) using random permutations.
It has become a prominent cryptographic field intersecting theory and practice~\cite{EC:BelKroRog98,C:HWKS98,EC:Lucks00,ICITS:Patarin08,INDOCRYPT:ManPatNac10,C:DaiHoaTes17,JC:GilGueMor18,EC:BhaNan18,C:CheLamMen19,AC:ChoLeeLee19,C:GunMen20,AC:BhaNan21,AC:CKLL22,EC:Dinur24,EC:Dinur25}. See the related works for a more detailed discussion.

This paper studies the PRF constructions based on random permutations in the quantum setting, where the question becomes substantially more difficult. 
A quantum PRF (QPRF) \cite{zhandry2021construct} is a function that is indistinguishable from random functions even allowing coherent access to the function.
The coherent queries prevent the classical security proofs as it potentially \emph{sees} all data of the constructions.
Worse, unlike the random functions, these coherent queries potentially reveal the \emph{correlation} between the input-output pairs of the permutations, which hinders or complicates the beautiful compressed oracle method \cite{C:Zhandry19} for permutations; the known results are very lossy or restrictive \cite{STOC:Carolan26,HHM+26,Ros21,EC:ABCGY26,FLMNW26,C:CHLYY25}.
This difficulty appears in the current disappointing state-of-the-art, establishing tight quantum security of permutation-based cryptography remains very challenging.
In particular, no permutation-based QPRFs is known to be secure beyond $q=2^{n/3}$ queries for $n$-bit permutations, which is already (and optimally) achieved by the plain random permutations \cite{zhandry2015note,BHT97}.

This leads to a practical concern for the future quantum cryptographic primitives. We are currently uncertain about the quantum security of any PRFs from permutations and applications \cite{EC:BonZha13,C:BonZha13,C:SonYun17,EC:AMRS20}. Importantly, any instantiation of these applications using block ciphers such as AES256 is only known to be secure up to $2^{128/3}\approx2^{43}$ quantum queries, which is unsatisfactory to cryptographers.
Resolving this question would give more efficient schemes for cryptography with quantum states and unitaries \cite{C:JiLiuSon18,C:AnaQiaYue22,FOCS:MPSY24,STOC:MaHua25} as well.

\subsection{Our results}
This paper shows that the exclusive-or of two or more permutations, i.e., the XoP constructions, also have a strong security in the quantum setting. 
This construction processes the input $x$ by
\[
    \xop[r](x):=P_1(x)+\cdots + P_r(x),
\]
which we simply refer to as the $\xop[r]$ construction,
where \(+\) denotes bitwise XOR and $P_1,...,P_r$ are independent random permutations over $n$-bit strings.

In the classical setting, the tight security of the $\xop$ construction for $r=2$ was $O\left(\min\left(\frac{q^2}{2^{2n}},\frac{q}{2^{1.5n}}\right)\right)$. 
The first bound dominates for $q\lesssim 2^{0.5n}$ and is proven in \cite{DNS22,C:CheChoLee23}. The second bound dominates for $q\gtrsim 2^{0.5n}$ and recently proven in \cite{EC:Dinur24,Eberhard17} using Fourier analysis. The latter bound can be generalized to $q/2^{r-0.5}$ for $r \ge 2$. These bounds are tight, matching the known attack by Patarin \cite{ICITS:Patarin08}.

Our main theorem proves the quantum security of the $\xop$ construction.
\begin{theorem}[Informal]
Fix a constant $r \ge 2$.
For every \(0 \le q \ll 2^n\),
any quantum distinguisher making $q$ quantum queries to either the $\xop[r]$ construction or the random functions
has advantage at most
\[
O\left(\min\left(
        \frac{q^3}{2^{rn}},\frac{q^{1.5}}{2^{(r-0.5)n}},
        \frac{1}{2^{(r-1.5)n}}
    \right)\right).
\]
\end{theorem}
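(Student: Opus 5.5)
We follow the route sketched in the abstract: a Fourier-analytic polynomial method on the space of functions $f:\bit^n\to\bit^n$. Let $\muXoPr$ be the density of the $\xop[r]$ output function relative to the uniform distribution over functions. A $q$-query quantum algorithm has acceptance probability that is a polynomial of degree at most $2q$ in the indicator variables $\truth{f(x)=y}$. Writing the Fourier expansion over the group $(\Ftwo^n)^{\bit^n}$, the advantage equals $\sum_{S} \hmuXoPr(S)\,\widehat{A}(S)$, where $S$ ranges over characters $\chi_{(a_x)_x}(f)=(-1)^{\sum_x a_x\cdot f(x)}$ and the acceptance probability $A$ only has weight on characters of degree (number of nonzero $a_x$) at most $2q$. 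The first step is therefore to bound
\[
\Adv \le \sum_{d=1}^{2q} \Bigl|\sum_{|S|=d}\hmuXoPr(S)\widehat{A}(S)\Bigr| ,
\]
with $\widehat{A}$ controlled in $\ell_2$ by $\|A\|_2\le 1$ per degree, using Cauchy--Schwarz within each degree level.

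The second step computes $\hmuXoPr(S)$. Because the $P_i$ are independent, $\hmuXoPr(S)=\hmuP(S)^r$, where $\hmuP(S)=\Exp_P[(-1)^{\sum_x a_x\cdot P(x)}]$. For a single permutation this is an expectation over $d$ distinct outputs sampled without replacement. It vanishes unless $\sum_x a_x=0$, and otherwise is of order $N^{-\lceil d/2\rceil}$ with $N=2^n$. Concretely, degree-$2$ terms have $a_x=a_{x'}$ and $\hmuP=-1/(N-1)$, and higher degrees come from the inclusion--exclusion/partition structure of sampling without replacement. Summing $|\hmuXoPr(S)|^2$ over $|S|=d$ gives about $\binom{N}{d}N^{d}\cdot N^{-r d}$ up to combinatorial factors. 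Taking square roots and summing over $d\le 2q$ should give the $2^{-(r-1.5)n}$ bound for $d\ge 3$ whenever $q\le cN$. Controlling these partition factors uniformly for $d$ up to $2q\approx N/c$ is what determines the constant $57774$. I expect this to be routine but messy.

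The third step handles the low-degree terms, $d=2$ and $d=4$. Here the $\ell_2$ bound is too weak: it gives roughly $N\cdot N^{-r}\cdot N$, which does not shrink with $q$. Instead I reinterpret $\sum_{|S|=2}\hmuXoPr(S)\widehat{A}(S)$ as a signed multiple of the distinguishing advantage between a random function and a random function with a planted collision, that is, two uniformly random inputs forced to share an output. This works because the degree-2 Fourier mass is exactly the collision statistic. Using Zhandry's small-range distribution technique, that advantage is $O(q^3/N)$, and multiplying by the $N^{-(r-1)}$ prefactor yields $q^3/N^r$. The degree-4 terms are handled similarly, as planted pairs of collisions or 4-XOR relations. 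The remaining degrees between $5$ and $O(1)$ are bounded by the norm bound, which is already small enough there.

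For the middle bound $q^{1.5}/N^{r-0.5}$, I replace the small-range argument with a compressed-oracle bound of $O(q^{1.5}/\sqrt N)$ on the planted-collision advantage. The idea is to rewrite this advantage as the probability of finding a specific hidden pair, and I expect this to be the main obstacle. Combining the three regimes and taking the minimum gives the theorem.
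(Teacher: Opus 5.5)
Your architecture matches the paper's: the degree-$\le 2q$ expansion of $P_\Alg$, $\hmuXoPr=\hmuP^r$, $\ell_2$ bounds for most degrees, a planted-collision reading of degree 2, and the compressed oracle for the middle range. However, two steps fail as stated.

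\textbf{Degrees 3 and 6 are missing.} In the $q^3/N^r$ regime you treat only degrees 2 and 4 specially and claim the norm bound covers the rest. That holds for $r\ge 3$ but fails for $r=2$, the main case. There $\|\muXoP^{=3}\|_2$ and $\|\muXoP^{=6}\|_2$ are both $\Theta(N^{-3/2})$; for instance $\|\muXoP^{=3}\|_2^2=\binom N3(N-1)(N-2)\binom{N-1}{2}^{-4}\approx 8/(3N^3)$. This exceeds $q^3/N^2$ whenever $q\lesssim N^{1/6}$. The paper needs two further reinterpretations:
\begin{itemize}
\item Degree 3: $\binom{N-1}{2}^r\muXoPr^{=3}=\binom N3\bigl((\muthPC-\constone)-3(\muPC-\constone)\bigr)$, together with $\Adv_q^{\mathsf{dist}}(\DthPC,\D_\F)=O(q^2/N)$. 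That bound comes from reducing $\DthPC$ versus $\DPC$ to detecting a single marked point (polynomial method plus the Ehlich--Zeller/Rivlin--Cheney inequality).
\item Degree 6: $\hmuP(\gamma)=\bigl(-N^2C_{2,2,2}+6NC_{2,4}+4NC_{3,3}-120C_6\bigr)/(N-1)_5$. Its leading $(-N^2)^rC_{2,2,2}$ part becomes a combination of $\D_{\PC,3},\D_{\PC,2},\DPC$ advantages, and the remainder is small in $\ell_2$.
\end{itemize}

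\textbf{Degree-2 normalization.} Since $\muXoPr^{=2}=\frac{(-1)^rN}{2(N-1)^{r-1}}(\muPC-\constone)$, the prefactor is $\Theta(N^{2-r})$, not $N^{-(r-1)}$. You therefore need the planted-collision \emph{advantage} to be $O(q^3/N^2)$, respectively $O(q^{3/2}/N^{3/2})$. Your $q^3/N$ and $q^{3/2}/\sqrt N$ are the right sizes only for the unnormalized count $\frac{N-1}{2}\langle\muPC-\constone,P_\Alg\rangle$. Also, $q^3/N^2$ does not follow from Zhandry's theorem directly; it needs a large-range trick. For $R\ge N^2$, write $\mu_R-\constone=p_1(\muPC-\constone)+p_2(\rho_R-\constone)$ with $p_1\ge N^2/(8R)$ and $p_2\le N^4/(8R^2)$. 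This gives $|\langle\muPC-\constone,P_\Alg\rangle|\le 216q^3/N^2+N^2/R$, and then let $R\to\infty$.

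\textbf{The high-degree count is wrong, not just messy.} Your estimate $\binom Nd N^dN^{-rd}$ is about $1/d!$ at $r=2$, so it yields no decay and no $N^{-(r-3/2)}$ bound. The nonzero coefficients of one permutation are much sparser than $N^d$ labels of size $N^{-d/2}$. The paper uses two separate facts:
\begin{itemize}
\item A weight bound. The per-support weight $B_d=\sum_{\supp(\gamma)=S}|\hmuP(\gamma)|^2$ satisfies $B_d\le 2^{d+1}(8d/N)^{\lceil d/2\rceil}$. Parseval on $(\bit^n)^d$ gives $\sum_j\binom dj B_j=N^d/(N)_d$. Then comes binomial inversion. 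Finally, the $N^{-k}$-coefficient of $N^j/(N)_j$ is a polynomial of degree $\le 2k$ in $j$, so all orders $k<\lceil d/2\rceil$ cancel.
\item A magnitude bound. $\max_{|\gamma|=d}|\hmuP(\gamma)|\le(2d/N)^{\lceil d/2\rceil}$, from the recursion $\hmuP(\gamma)=-\frac{1}{N-d+1}\sum_{i<d}\hmuP(\gamma^{(i\leftarrow d)})$.
\end{itemize}
These combine via $\sum|\hmuP|^{2r}\le(\max|\hmuP|)^{2r-2}\sum|\hmuP|^2$. The resulting bounds $W_d$ satisfy $W_{d+2}\le W_d/2$ up to $d\le N/28887$, which is where $57774=2\cdot 28887$ comes from.

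\textbf{The middle bound.} The paper does not go through a ``find the planted pair'' probability. It is unclear how that would upper-bound a distinguishing advantage, since a random function has $\Theta(N)$ natural collisions. Instead the paper writes $\frac{N-1}{2}\langle\muPC-\constone,P_\Alg\rangle=\bra{\Psi_q}\Pi\,\Delta\mathsf{Col}\ket{\Psi_q}$ in the compressed oracle. It splits $\Delta\mathsf{Col}$ into the anti-Hermitian $T=\sum_{x<y}[Q_xQ_y,\mathsf{Col}_{xy}]$ and a remainder that annihilates $\ket{\emptyset}$. It then expands the remainder's commutator with the circuit query by query. Each query's contribution is $O(\sqrt{q/N})$, because the expected number of recorded collisions stays $O(q)$.
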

In particular, the $\xop[r]$ construction for $r\ge 2$ is secure up to $q\approx2^n$ queries.

The above theorem concerns an ideal setting where the adversary does not have access to the primitive queries to $P_1,...,P_r$.
In practice, the XoP constructions can be instantiated using block ciphers with independent random keys in the quantum ideal cipher model.
When the underlying block cipher has a sufficiently long key, say $\kappa$-bit keys, 
the corresponding keyed permutations can be viewed as private random permutations in the quantum ideal cipher model, with the distinguishing advantage loss $rq^2/2^\kappa$ by applying the quantum search bound $r$ times \cite{BBBV97,BBHT98} (assuming there is no collisions in the $r$ keys). 
Therefore, our result can be lifted to the quantum ideal cipher model with such a key length; for example, AES256 provides a heuristic instantiation with $\kappa=256$.

To our knowledge, this gives the first QPRF constructions in the quantum ideal cipher model whose security goes beyond the $2^{n/3}$ quantum queries. In fact, our theorem shows that the XoP construction is still indistinguishable from random functions near $2^n$ queries; we are not even aware of any construction secure against $2^{n/2}$ queries.

\subsection{Technical overview}\label{ssec:overview}
We focus on the XOR of two permutations in this overview.
Our proof can be viewed as a quantum analogue of the Fourier analytic security proofs developed in a recent line of works \cite{Eberhard17,EMM19,EC:Dinur24,EC:Dinur25}, which have established tight classical security bounds for the XoP constructions. 
The main object in these analyses is the classical response transcript.  
More precisely, after fixing the $q$ distinct query points, the corresponding response vector lies in $(\bit^n)^q$. For the XoP construction, this vector is distributed as the sum of two vectors sampled without replacement, while this vector is uniform random for the truly random functions.
The Fourier analysis compares these two distributions of vectors in the Fourier domain.

\paragraph{Polynomial method.}
In our quantum setting, however, there is no proper notion of the transcript because of the coherent quantum queries.
We instead employ a Fourier-analytic variant of the polynomial method \cite{JACM:BBCMD01}. 
Our starting point is the following well-known observation. 
Let $\ket{\widehat a} := \frac{1}{\sqrt{2^{n}}}\sum_{z\in \bit^n} \chi_a(z)\ket{z}  = H^{\otimes n} \ket{a}$ be a Fourier basis state where $\chi_a (z)=(-1)^{\inner{a,z}}$. 
For an oracle $O_f$ that evaluates a function $f$, i.e., that maps $\ket{x,y} \mapsto \ket{x,y+f(x)}$ coherently, we have
\[
 O_f \ket{x, \widehat a} =(-1)^{\inner{a,f(x)}}\ket{x, \widehat a}= \chi_a(f(x)) \ket{x, \widehat a}.
\]
Motivated by this, we consider a vector $\gamma=(\gamma_x)_{x\in \bit^n}\in(\bit^n)^{2^n}$ and define its Fourier degree by $|\gamma|
:=
|\{x\in\bit^n:\gamma(x)\neq0^n\}|$.
Each vector gives its Fourier character defined by
\[
\chi_\gamma(f)
:=
(-1)^{\sum_{x\in\bit^n}\inner{\gamma(x),f(x)}}, \qquad \chi_{\gamma+\eta}(f) = \chi_{\gamma}(f) \cdot \chi_\eta(f).
\]
The above observation shows that the amplitudes of each basis vector can be written as a linear sum of the Fourier characters, and each query increases the degree of amplitudes by at most 1.

The acceptance probability $\Pr[A^{O_f}=1]$ therefore must be represented by a low-degree polynomial of the Fourier characters. By regarding it as a function from $f$ to the real number in $[0,1]$,\footnote{That is, this is a \emph{functional} that maps a function to a scalar. In the main body, we refer this Fourier analysis as the Fourier analysis of functionals to distinguish the different use of the term function.}
the Fourier expansion gives
\[
 P_{\mathcal A}(f):=\Pr[A^{O_f}=1] = \sum_{|\gamma| \le 2q}
\widehat P_\Alg(\gamma) \cdot \chi_\gamma(f), \quad \widehat P_\Alg(\gamma) = \Exp_{g \gets \F}\left[\overline{\chi_\gamma(g)}P_\Alg(g)\right].
\]
Here, 
$g\gets \F$ denotes a sample from the uniformly random functions.
The degree bound $\le 2q$ is because the amplitudes of the final state of a $q$-query algorithm are Fourier polynomials
of degree at most $q$. 
This observation, or closely related variants in representation-theoretic language, has been used in many analyses of quantum algorithms
\cite{zhandry2021construct,ITCS:SY23,C:Zhandry19,Ros21,EC:ABCGY26}.
It allows us to bound the probability of the oracle algorithm through their bounded-degree Fourier statistics.

\paragraph{Distinguishing advantage as an inner product.}
We introduce some definitions to proceed the overview.
Let $N:=2^n$, and 
let $\F = \{f:\bit^n\to\bit^n\}$ be the space of functions.
We write
\[
\inner{g,h}:=\Exp_{f \gets \F}[\overline{g(f)}h(f)]
\qquad\text{and}\qquad
\norm{g}_2^2:=\inner{g,g}.
\]
With this notation, we can write
\[
\Pr_{f \gets \F}[\Alg^{O_f}=1] = \sum_{f \in \F} \frac{P_{\mathcal A}(f)}{|\F|}
    = \Exp_{f\gets \F}[ \constone(f)P_{\mathcal A}(f)] = \inner{\constone,P_{\mathcal A}}
\]
where $\constone$ is a constant function with output 1.
Similarly, define the density of the XoP distribution relative to this uniform measure by
\[
\mu_{\xop}(f)
:=
\frac{\Pr_{P, Q \gets \P}[P+Q=f]}{\Pr_{P \gets \F}[P=f]}
=
N^N\Pr_{P, Q \gets \P}[P+Q=f].
\]
An analogous calculation gives 
\[
\Pr_{P,Q \gets \P}[\Alg^{O_{P+Q}}=1] = \inner{\mu_{\xop},P_{\mathcal A}}
\]
and the distinguishing advantage can be written as
\[
\Pr_{P,Q \gets \P}[\Alg^{O_{P+Q}}=1]
-
\Pr_{f \gets \F}[\Alg^{O_f}=1]
=
\inner{P_\Alg,\mu_{\xop}-\constone}.
\]
The above discussion about the Fourier analysis applies on these probabilities, showing that $P_\Alg$ has Fourier degree at most $2q$ and $\norm{P_\Alg}_2\le 1$.

\paragraph{Decomposition by Fourier degree.}
We consider the following two terms:
\begin{enumerate}[nosep]
    \item $(\cdot)^{=d}$ that denotes the sum of all Fourier components of degree exactly $d$;
    \item $(\cdot)^{\le d}$ that denotes the sum of all Fourier components of degree $\le d$, i.e., the sum of all $(\cdot)^{=k}$ for $k \le d$.
\end{enumerate}
The orthogonality of the Fourier components, following the so-called Efron–Stein orthogonal decomposition \cite[Chapter 8]{o2014analysis}, gives the advantage bound by
\[
|\inner{P_\Alg,\mu_{\xop}-\constone}|=
    |\inner{P_\Alg,(\mu_{\xop}-1)^{\le 2q}}|
    =
\left|
\sum_{k=1}^{2q}
\inner{P_\Alg,\mu_{\xop}^{=k}}
\right|\le \sum_{k=1}^{2q} |\inner{P_\Alg,\mu_{\xop}^{=k}}|
\]
where we use 
the fact that $P_\Alg$ is of degree at most $2q$ in the first equality, and 
use $(\mu_{\xop}-1)^{=0}=0$ and $(1)^{=k}=0$ for all $k\ge 1$.
The Cauchy–Schwarz inequality gives the first way to bound this:
\[ 
\le
\sum_{k=1}^{2q}
\norm{P_\Alg}_2
\norm{\mu_{\xop}^{=k}}_2
\le
\sum_{k=1}^{2q}
\norm{\mu_{\xop}^{=k}}_2
\]
where we use $\norm{P_\Alg}_2 \le 1$. 
For most degrees, we obtain the required bound directly by estimating
$\norm{\mu_{\xop}^{=k}}_2$ through a delicate combinatorial argument.
In particular, the second advantage bound $2^{-0.5n}$ is obtained by this strategy alone.

\paragraph{Turning Fourier term into planted collision distinguisher.}
To prove the security $q^3/2^{2n}$ for the XoP construction, the above strategy turns out to be insufficient. In particular, the degree-2,3,4,6 components have too large $\ell_2$-norm.
We use a different proof strategy; namely, we turn the Fourier-analytic components $\inner{P_\Alg,\mu_{\xop}^{=k}}$ to the distinguishing advantage to \emph{some other problems}.

We focus on the degree-2 component, or more precisely $\inner{P_\Alg,\mu_{\xop}^{=2}}$, in this overview.
The degree-2 term exhibits the statistics of two outputs of the functions, such as collisions.
This component turns out to be the dominant term in the bound, which was the dominant term in the classical security as well \cite{ACNS:Patarin13,EC:Dinur24}.

This intuition holds in the quantum setting as well.
Write $\truth{X}$ to denote the truth value of the statement $X$.
We will establish the following identity at the end of the overview:
\begin{align}
    \label{eqn:intro_xop2}
    \mu_{\xop}^{=2}(f) = \left(\frac{1}{N-1}\right)^2
    \sum_{x < x'}
    \left(N \truth{f(x)=f(x')}-1\right),
\end{align}
which means that the degree-two component can be understood as a collision statistic up to a multiplicative factor; the truth value $\truth{f(x)=f(x')}$ is 1 if and only if $f$ collides at $x\neq x'$.

Fix $x\neq x'$ and fix a function $f$. Observe that 
\begin{itemize}
    \item $\frac{1}{N^N}$ is the probability that $f$ is sampled from $\F$;
    \item $\frac{1}{N^{N}} \cdot N \truth{f(x)=f(x')}$ is the probability that $f$ is sampled from $\F$ conditioned on $f(x)=f(x')$.
\end{itemize}
From this observation, we can interpret $\mu_{\xop}^{=2}(f) $ as, up to a constant factor, the difference between the density functions of 1) the distribution of the true random function $\F$ and 2) the distribution of random functions conditioned on a \emph{planted collision} $f(x)=f(x')$ for random $x\neq x'$.
That is, we can reinterpret the inner product
\[
|  \inner{P_\Alg,\mu_{\xop}^{=2}}|
=
O\left(
\left|  \Pr_{f \gets \DPC}[A^{O_f}=1]-\Pr_{f \gets \F}[A^{O_f}=1]\right|
\right)
\]
as a distinguishing advantage of \emph{another problem}:
1) a random function with a planted collision and 2) an unconditional random function. Here, $\DPC$ denote the planted collision distribution described above.

\paragraph{Planted collisions vs. random functions.}
We directly prove the distinguishing advantage $O(q^3/N^2)$ of this problem using Zhandry's small range distribution indistinguishability \cite{zhandry2021construct}, which shows that the distinguishing advantage of the following two distributions is bounded above by $O(q^3/R)$: 
The first one is the uniformly random functions from $\bit^n$ to $\bit^n$, and the second one is the distribution $D_R$ of random ``small-range'' functions, which are obtained by composing two independent random functions $g \circ h$ where $h:\bit^n \to [R]$ and $g: {\mathsf{Im}}(h) \to \bit^n$.

Ironically, we use the \emph{large range} function with $R\gtrsim N^2$.
In this case, $h$ has a unique collision with a constant probability, in which case $g \circ h$ is distributed according to the planted collision distribution.
Therefore, the distinguishing $D_R$ and $\F$ is closely related to the above problem.
It turns out that any $R\gtrsim N^2$ exhibits such a relation; when there is no collision in $h$, the distribution of $g \circ h$ becomes the uniformly random function so that the related terms are canceled out, and $R \to \infty$ makes the unique collision dominate among the other terms, 
deriving the upper bound $O(q^3/N^2).$

The other degrees are more involved, but similar strategies show the desired bounds.
Very roughly, we connect the degree-3,4,6 terms with the other problems using similarly to \cref{eqn:intro_xop2}; for example, the degree-3 term is related to the distinguishing advantage between the random functions and the random planted-3-collision functions with a constraint $f(x_1)=f(x_2)=f(x_3)$. The degree-4 term is related to the random functions with two planted collisions together with the random functions with the planted 4-XOR $f(x_1)+f(x_2)+f(x_3)+f(x_4)=0^n$.
We reduce the hardness of these problems to the planted collision advantage similarly to the above (yet in a slightly more involved way).
The bound for the degree-2 is tight in the relevant parameter range, as suggested by the corresponding quantum collision-statistics attack below.

\paragraph{Another bound for the planted collisions.}
To derive the $O(q^{1.5}/N^{1.5})$ security bound, we observe that the term in \cref{eqn:intro_xop2} can be interpreted some collision-related term in the language of the compressed oracle \cite{C:Zhandry19,EC:CFHL21}. By writing the overall final state $\ket{\Psi}$ of the algorithm and (purified) database by $\frac{1}{\sqrt{N^N}}\sum_g\ket{\psi^g}\ket{g}$, 
the advantage can be expressed by
\begin{align*}
    \inner{\muPC-\constone,P_\Alg} &= \frac{2}{N-1}\Exp_{f\gets\F}\left[
        P_\Alg(f)
        \sum_{x<y}\left(\truth{f(x)=f(y)}-\frac1N\right)
    \right]
    \\&
    =\frac{2}{N^N(N-1)}\sum_{f,g\in\F}
        \bra{\psi_q^g}\Pi\ket{\psi_q^f}
        \bra{ g}\left(\sum_{x<y}\left(\mathsf{Col}_{xy}-\frac1N I\right)\right)\ket{ f}
        \\&
        =\frac{2}{N-1} \bra{\Psi}\Pi
        \left(\sum_{x<y}\left(\mathsf{Col}_{xy}-\frac1N I\right)\right)
        \ket{\Psi}
\end{align*}
where $\Pi$ is the projection to the acceptance space of the algorithm, and $\mathsf{Col}_{xy}$ is the indicator function for the database on inputs $x,y$.
This expression can be bounded by $O(q^{1.5}/N^{1.5})$ using the compressed oracle machinery alone, proving the desired result. Intriguingly, the above-discussed small-range to planted collision can be reversed, and $q^{1.5}/N^{1.5}$ gives an alternative bound for the small-range distribution indistinguishability for large range. See \cref{cor: improved small_range}.

\paragraph{Proof of \cref{eqn:intro_xop2}.}
We now discuss how to prove \cref{eqn:intro_xop2}. 
Let $\P$ be the distribution of uniformly random permutations, and let $\muP(f) = N^N\Pr_{p \gets \P}[p = f]$ be the density function of $\P$. 
The degree-2 term expands, together with the convolution formula,
\begin{align}
    \label{eqn: intro_deg2}
\mu_{\xop}^{=2}(f) = \sum_{|\gamma|=2} \hmu_{\xop}^{=2}(\gamma) \chi_\gamma(f) = 
\sum_{|\gamma|=2} \hmuP^2(\gamma) \chi_\gamma(f).
\end{align}

We extensively use the following basic observation $\sum_{x\in \bit^n} (-1)^{\inner{a,x}} = N\truth{a=0^n}$ and its variations. For example, for $y,y'\neq 0^n$, we have
\[
\sum_{u\neq v} (-1)^{\inner{u,y} + \inner{v,y'}} = \sum_{u, v} (-1)^{\inner{u,y} + \inner{v,y'}} - \sum_{u} (-1)^{\inner{u,y+y'}} = 
-N\truth{y=y'} 
\]
where all $u,v \in \bit^n$. In the second equality, the first term vanishes because of $\truth{y=0^n}=0$, and the last term is simplified using $\truth{y+y'=0^n} = \truth{y=y'}. $

Let $\gamma=(\gamma_x)_{x\in \bit^n}\in(\bit^n)^{2^n}$ be such that $|\gamma|
=|\supp(\gamma)|=2$ where $\supp(\gamma) = \{x\in\bit^n:\gamma(x)\neq0^n\}$. Let $x,x'$ be the elements of $\supp(\gamma)$.
Observe that for a uniform random permutation $p\gets \P$, $(p(x),p(x'))$ is uniform over ordered distinct pairs, hence
\[
\Exp_{p\gets \P}[\chi_\gamma(p)] = \frac{1}{N(N-1)}
\sum_{p(x)\neq p(x')}(-1)^{p(x) \cdot \gamma(x) + p(x') \cdot \gamma(x')} = - \frac{\truth{\gamma(x)=\gamma(x')}}{N-1}.
\]
It is not hard to see that the above term coincides with $\hmuP(\gamma)$ using the definition of the Fourier coefficient and the fact that $\chi_\gamma$ is real-valued. We omit this proof in this overview.

Now we go back to \cref{eqn: intro_deg2}. Plugging the above identity gives
\[
    \sum_{|\gamma|=2} \hmuP^2(\gamma) \chi_\gamma(f) = \sum_{\supp(\gamma)=\{x<x'\}} \frac{\truth{\gamma(x)=\gamma(x')}^2 \chi_\gamma(f)}{(N-1)^2}
\]
where we use the lexicographic order in $x<x'$.
Writing $\gamma(x)=\gamma(x')=y \neq 0^n$ and using $\chi_\gamma(f)  = (-1)^{\inner{y,f(x)+f(x')}}$, we can write this as
\begin{align*}
\sum_{x<x', y\neq 0^n} \frac{(-1)^{\inner{y,f(x)+f(x')}}}{(N-1)^2} 
&=\sum_{x<x', y} \frac{(-1)^{\inner{y,f(x)+f(x')}}}{(N-1)^2}  - \sum_{x<x'} \frac{1}{(N-1)^2}\\&
=\sum_{x<x'}\frac{ N\truth{f(x)=f(x')} - 1}{(N-1)^2}.
\end{align*}
This proves \cref{eqn:intro_xop2}.
Similar calculations derive analogous algorithmic interpretations of different degrees.

\subsection{Heuristic Attacks}\label{subsec:attacks}
We briefly discuss the (potential) distinguishing attack for the $\xop$ constructions in \cref{app: attacks}.
The description focuses on the high-level ideas of the attack, with some statistical evidence or heuristic reasoning of their advantages. 
Formalizing them needs further analysis.
Most attacks for $q\ll N$ are related to the pairwise statistics, i.e., collisions of the construction.
On the other hand, the attack for $q\ge N/2$ uses the quantum parity algorithm \cite{FGGS98}, achieving the advantage $1/2$.

\cref{tab:xop-attacks} summarizes the different attacks and security bounds on $\xop[r]$.
Our lower bounds match to the heuristic attacks for $q\lesssim N^{1/2}$ and $q\approx N$.
It is unclear which is tight for the range $N^{1/2}\gtrsim q <N/2$.

\begin{table}[h]
\centering
\begingroup
\small
\setlength{\tabcolsep}{5pt}
\renewcommand{\arraystretch}{1.15}

\begin{tabular}{ccccc}
\hline
Range
&
Classical
&
Q. attack
&
Our bound
&
Q. attack type
\\
\hline

$q\lesssim N^{1/3}$
&
$\displaystyle
\Theta\left(\frac{q^2}{N^r}\right)\,\star$
&
$\displaystyle
\frac{q^3}{N^r}\,?$
&
$\displaystyle
O\left(\frac{q^3}{N^r}\right)$
&
Collision finding
\\[1.2ex]

$N^{1/3}\lesssim q\lesssim N^{1/2}$
&
$\displaystyle
\Theta\left(\frac{q^2}{N^r}\right)$
&
$\displaystyle
\frac{q^{3/2}}{N^{r-1/2}}\,?$
&
$\displaystyle
O\left(\frac{q^{3/2}}{N^{r-1/2}}\right)$
&
Collision counting
\\[1.2ex]

$N^{1/2}\lesssim q<N/2$
&
$\displaystyle
\Theta\left(\frac{q}{N^{r-1/2}}\right)$
&
$\displaystyle
\frac{\sqrt q}{N^{r-1}}\,?$
&
$\displaystyle
O\left(N^{3/2-r}\right)$
&
Collision counting
\\[1.2ex]

$q\ge N/2$
&
-
&
$\Theta(1)$
&
-
&
Parity
\\

\hline
\end{tabular}

\endgroup
\caption{The classical results, quantum attacks and our bounds. $\star$ denotes that the security bound is only known for $r=2$. The question mark (?) denotes that the attack analysis is heuristic.}
\label{tab:xop-attacks}
\end{table}

\paragraph{Collision probability.}
All attacks above except for the parity attack exploit the collision statistics as follows from \cite{ACNS:Patarin13}.
For two distinct inputs $x,x'$, it holds that
\[
\Pr\left[\xop[r](x) = \xop[r](x')\right] = \frac1N + \frac{(-1)^r}{N(N-1)^{r-1}}.
\]
On the other hand, the collision probability of random functions is $1/N$, having the bias $\delta_r = \frac{(-1)^r}{N(N-1)^{r-1}} \approx (-1)^r N^{-r}$ in the collision probability in two cases.
This bias allows the collision-based distinguishing attack.
For example, running the quantum collision finding algorithm \cite{BHT97} and outputs 1 when a collision is found has the advantage $q^3/N^r$ for $q \lesssim N^{1/3}$.
The details of the attacks are described in \cref{app: attacks}.

\subsection{Related Works}

\paragraph{Classical analyses of $\xop$ and variants.}
Most of the security analysis of $\xop$ is focused on the $r=2$ case.
Various proof techniques have been developed, and the $\xop$ construction has been a testbed of the techniques.
Lucks~\cite{EC:Lucks00} first gave an early asymptotic analysis. Patarin~\cite{ICITS:Patarin08} 
 claimed to prove the bound $O(q^2/N^2)$, but the proof was incomplete.
Dai, Hoang, and
Tessaro~\cite{C:DaiHoaTes17} proved the bound $O(q^{1.5}/N^{1.5})$. Dutta, Nandi, and Saha~\cite{DNS22} resurrect the bound of Patarin~\cite{ICITS:Patarin08}.
The multi-user security was tightened by 
Chen, Choi, and Lee~\cite{C:CheChoLee23}.

Independently, the tight bound \(O(q/N^{1.5})\) was proven using the Fourier analysis by Eberhard \cite{Eberhard17} but had been unknown to the symmetric-key cryptography community.
Dinur extended this to $\xop[r]$ with $r>2$ even in the multi-user setting \cite{EC:Dinur24}, improving the previous analysis for $r>2$ \cite{FSE:CogLamPat14,AC:CKLL22}.

There are $\xop$-like constructions based on other building blocks rather than block ciphers.
For example, 
the sum of the Even-Mansour encryption \cite{C:CheLamMen19} and 
the XOR of the tweakable permutation
\cite{ToSC:ChoLeeLee24} have been studied in the literature.

\paragraph{Quantum analyses of $\xop$ and variants.}
As far as we are aware of, the only quantum result of the $\xop$ construction is \cite{PQCrypto:MenSze17}. They study the $\xop$ in the Q1 model, where the underlying block cipher can be quantumly accessible but the $\xop$ is only available through classical queries. 
They proved the security about $rq^2/2^\kappa + q/2^n$ in the Q1 model.\footnote{To be precise, they assumed that the Grover algorithm is optimal in this attack for the concrete block cipher. This can be proven in the ideal cipher model using \cite{BBBV97} as discussed above.}
Our result gives the Q2 model security bound about $rq^2/2^\kappa + \min(q^3/2^{rn},1/2^{(r-1.5)n})$, which is even better than their Q1 bound for $r\ge 2.$

The variants have been studied in the quantum setting as well. The keyed sum of permutations and other variants are proven secure up to $2^{n/3}$ queries in the Q1 model \cite{AC:DDMNS26}.
For the sum of Even-Mansour, a polynomial time quantum attack is known in the Q2 model \cite{IPL:SI22}.

\if=0
\subsection{Related works}
\minki{TODO related works, may be go to appendix?}
Modern symmetric-key schemes have shared or been inspired by the structure of $\xop$, though may not directly make use of it. In other words, many of them can be viewed as various generalization of the $\xop$ constructions. We briefly summarize these ideas here. We also discuss some recent quantum security results about them.

\paragraph{Encryption and AEAD.} Iwata's \(\mathsf{CENC}\) mode~\cite{FSE:Iwata06} is based on a block cipher and achieves beyond-birthday-bound security while retaining desirable mode properties such as parallelizability, precomputation, and random access. The same work also proposed the authenticated encryption mode \(\mathsf{CHM}\),
which combines \(\mathsf{CENC}\) with a Wegman--Carter style MAC. 
Several subsequent BBB-secure encryption and AEAD schemes also use XoP-style constructions~\cite{AFRICACRYPT:Iwata08,ToSC:BhaNan18,AC:CLLL21,DCC:BDLN22,EC:BHILLM23,EC:CHKLL25,TCHES:ACLMN25}.

\paragraph{MACs.}
Several (deterministic) double-block MACs, including \(\mathsf{SUM\text{-}ECBC}\)~\cite{RSA:Yasuda10}, \(\mathsf{PMAC}_{+}\)~\cite{C:Yasuda11}, \(\mathsf{3kf9}\)~\cite{AC:ZWSW12}, and \(\mathsf{LightMAC}_{+}\)~\cite{AC:Naito17,RSA:Naito18}, can be understood as using a hash value to derive two blockcipher inputs, computing the blockcipher outputs from the corresponding inputs, and Xoring the outputs. Datta et al. abstracted this structure as the double-block Hash-then-Sum \((\mathsf{DbHtS})\) paradigm~\cite{ToSC:DDNP18}. Subsequent works studied generic attacks, tight bounds, and key-reduced variants of this paradigm~\cite{C:LeuNanSib18,EC:KimLeeLee20,C:SWGW21}.

Nonce-based MACs are another important application. The encrypted Wegman--Carter with Davies--Meyer construction \(\mathsf{EWCDM}\)~\cite{C:CogSeu16} and its variants use blockcipher-based masking to obtain beyond-birthday security. Dutta, Nandi, and Talnikar~\cite{EC:DutNanTal19} introduced the nonce-based Enhanced Hash-then-Mask construction \(\mathsf{nEHtM}\), which achieves beyond-birthday security and graceful degradation under nonce misuse. Later works improved the analysis of such constructions,
and developed tighter bounds~\cite{AC:CLLL20,AC:ChoLeeLee24} as well as proposed another way to use outputs of tweakable block ciphers instead of that of block ciphers~\cite{ToSC:CILLLM20}.


%
%

\fi

\ifnum\fullpage=0
\paragraph{AI use disclosure.} The Fourier-analytic approach to the quantum security of XoP was formulated by the authors, with assistance from ChatGPT 5.4 and 5.5 Pro in formalizing many technical details. 
From the authors' perspective, the first nontrivial mathematical contribution from AI was a proof of the $O(q^3/N^r)$ bound of the degree-2 components (\cref{lem: xopdeg2_smallq}), obtained through a series of conversation with ChatGPT 5.5 Pro.
Initial proofs of several other lemmas were also generated by ChatGPT 5.5, 5.6 Pro and Astra (in the Ultra mode), as well as Fable 5.0 and 5.1.
The authors subsequently simplified these proofs and, in many cases, developed substantially different proofs that they considered more natural and accessible.
The current paper is mostly written by human authors from scratch, while AI assisted in writing some paragraphs.

\fi
\section{Preliminaries} \label{sec:prelim}
Let $\Ftwo=\bit$ be the finite field of two elements 0 and 1.
For $x=(x_i)_{i \in [n]},y=(y_i)_{i \in [n]}\in \bit^n$, we define their exclusive OR (XOR) by $x+ y:= (x_i \oplus y_i)_{i\in [n]}$ and inner product by $\inner{x,y} := \sum_{i=1}^n x_i \cdot y_i \bmod 2.$
For a finite set $S$, we write $s \gets S$ to denote that $s$ is sampled uniformly at random from $S$. We also write $(s_1,...,s_t) \gets S^{(*t)}$ to denote that the elements of $\{s_i\}_{i\in[1..t]}$ are sampled uniformly at random from $S$ {\it without replacement}, i.e., $\{s_i\}_{i\in[1..t]}$ is pairwise distinct. 
For an event \(X\), we write \(\truth{X}\) for its indicator, i.e., \(\truth{X}=1\) if \(X\) occurs and \(\truth{X}=0\) otherwise. Throughout this paper, \(n\) denotes a fixed positive integer, and we write \(N := 2^n\). For integers \(a \ge b \ge 0\), we write
\[
(a)_b := a(a-1)\cdots(a-b+1).
\]
\paragraph{Probability distributions.}
For a probabilistic distribution $\dist{D}$ over a finite set $X$, 
we write $\dens{D}(x) = |X|\cdot {\Pr_{y \gets \dist{D}}[x=y]}=|X|\cdot{\Pr[x\gets \dist{D}]}$.
Observe that $\Exp_{x\gets {X}}[\dens{D}(x)] =1$ and the following identity
\begin{align}
  \Exp_{x \gets {X}}\left[\dens{D}(x)\, h(x)\right]
  = \sum_{x \in X} \frac{1}{|X|}
        \cdot \frac{\Pr[x \gets \dist{D}]}{1/|X|} \cdot h(x)
  = \Exp_{x \gets \dist{D}}\left[h(x)\right]
  \label{eq:lift}
\end{align}
which shows that the multiplicative factor $\dens{D}$ converts the expectation with respect to the uniform distribution on ${X}$ into the expectation with respect to our interested distribution $\dist{D}$.

Let $\F$ be a finite function space, and let $\dist{D}$ and $\dist{E}$ be two probability distributions over $\F$. For $f\in\F$, write $O_f$ for the quantum oracle associated with $f$. Namely, $O_f$ maps $\ket{x,y} $ to $\ket{x,y+f(x)}$ coherently.

For a quantum oracle algorithm $\mathcal A$, we define its distinguishing advantage between two distributions $\dist{D}$ and $\dist{E}$ by
\[
    \Adv^{\mathsf{dist}}(\dist{D},\dist{E};\mathcal A)
    :=
    \left|
        \Pr_{f\gets \dist{D}}\big[\mathcal A^{O_f}\to 1\big]
        -
        \Pr_{f\gets \dist{E}}\big[\mathcal A^{O_f}\to 1\big]
    \right|.
\]
For $q\ge 0$, we define the advantage over all $q$-query algorithms
\[
\Adv_q^{\mathsf{dist}}(\dist{D},\dist{E})
:= 
\max_{q\text{-query} ~\mathcal A}
        \Adv^{\mathsf{dist}}(\dist{D},\dist{E};\mathcal A)
\]
where the maximum is over all quantum oracle algorithms making at most $q$ quantum queries to the oracle.

\section{Fourier Analysis of Functionals}\label{sec:fourier}
We denote the set of all functions from $n$-bit inputs to $n$-bit outputs by
\[
\F_n = \{f:\bit^n \to \bit^n\}.
\]
When the bit-length is clear from the context, we omit the subscript and just denote them by $\F$.
The set $\F$ has a natural group structure by extending the XOR operation by $(f+g)(x) = f(x)+g(x)$ for $f,g \in \F$ and $x \in \bit^n$.
We sometimes use the notation $\hF$ with the elements denoted by Greek characters, e.g., $\gamma$ or $\eta$, to denote the identical group $(\F,+)$ to distinguish their use.\footnote{$\hF$ denotes the (Pontryagin) dual group, which is isomorphic to $\F$ since it is finite.} Looking ahead, we will use $\hF$ to denote the set of Fourier basis only.

\paragraph{Functionals.}
We use the Fourier analysis on the set of \emph{functions of functions}:
\[
\H= \left\{
    H:\F \to \C
\right\}.
\]
To clarify the difference, we use the term \newdef{functionals} to denote the elements in $\H$ with the upper case letters. 
By identifying $\F$ as $(\bit^n)^{2^n}$, the Fourier analysis of the functionals is essentially equivalent to one of the Boolean functions with an exponentially long input.

The set of functionals $\H$ is equipped with 
the inner product for $H,G\in \H$
\[
\inner{H,G} = \Exp_{f\gets \F}\left[\overline{H(f)}G(f)\right] = \sum_{f \in \F} \frac{\overline{H(f)}G(f)}{|\F|}.
\]
The corresponding norm is defined by $\|H\|_2 = |\inner{H,H}|^{1/2}$.
The Cauchy–Schwarz inequality implies that
\[
\inner{H,G} \le \|H\|_2 \cdot \|G\|_2.
\]


\subsection{Fourier analysis of the functionals}
\paragraph{Characters.}

For $a\in\bit^n$, let $\chi_a:\bit^n\to\pmone$, $\chi_a(x):=(-1)^{\inner{a,x}}$. These satisfy, for all $a,b,x\in\bit^n$,
\begin{align}\label{eqn: ortho}
    \chi_a(x)\chi_b(x) =  \chi_{a+b} (x)\quad \text{and}\quad \Exp_{x\gets \bit^n}\left[\overline{\chi_a(x)} \chi_b(x)\right] = \truth{a=b}.
\end{align}
For $\gamma\in\hF$, the \newdef{character} $\chi_\gamma\in\H$ is
\begin{align}\label{eqn: character}
    \chi_\gamma(f):=\prod_{x\in\bit^n}\chi_{\gamma(x)}\big(f(x)\big)=(-1)^{\sum_{x\in\bit^n}\inner{\gamma(x),f(x)}}.
\end{align}
The \newdef{support} and the \newdef{degree} of $\gamma\in\hF$ are
\[
    \supp(\gamma):=\{x\in\bit^n:\gamma(x)\neq0^n\}
    \qquad\text{and}\qquad
    |\gamma|:=|\supp(\gamma)|,
\]
so $|\gamma|$ is the number of nontrivial factors in \cref{eqn: character}. Clearly $|\gamma+\eta|\le|\gamma|+|\eta|$. By \cref{eqn: ortho}, characters are multiplicative and orthonormal: for $\gamma,\eta\in\hF$,
\begin{align}\label{eqn: bilinear}
\begin{aligned}
    \inner{\chi_\gamma,\chi_\eta} = \Exp_{f \gets \F}\left[\overline{\chi_\gamma(f) }\chi_\eta(f)\right] = \truth{\gamma=\eta}, \\
    \chi_\gamma\chi_\eta=\chi_{\gamma+\eta},\qquad
    \chi_\gamma(f+g)=\chi_\gamma(f)\,\chi_\gamma(g).
\end{aligned}
\end{align}
Since $|\hF|=|\F|=\dim\H$, the characters form an orthonormal basis of $\H$.

\paragraph{Fourier expansions.}
The orthonormal basis of characters gives the \newdef{Fourier expansion} of $H:\F \to \C$ (i.e., $H\in \H$) as follows.
\begin{theorem}[Fourier expansion of functionals]
\label{thm: Fourier expansion}
    A functional $H:\F \to \C$ can be decomposed by
    \[
    H(f) = \sum_{\gamma \in \hF} \hh(\gamma) \chi_\gamma(f),
    \quad 
    \text{ such that }\quad\hh(\gamma) = \inner{\chi_\gamma,H} = \Exp_{f \gets \F}\left[\overline{\chi_\gamma(f)}H(f)\right],
    \]
    where $\hh(\gamma)$ is called the \newdef{Fourier coefficient} of $H$ at $\gamma$. 
\end{theorem}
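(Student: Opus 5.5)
The plan is to use the fact, recorded just before the statement in \cref{eqn: bilinear}, that the characters $\{\chi_\gamma\}_{\gamma\in\hF}$ form an orthonormal family in $\H$. Since there are $|\hF|=|\F|=\dim\H$ of them, they form an orthonormal basis. The expansion is then the standard expansion of a vector in an orthonormal basis of a finite-dimensional inner product space.

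First I would make the dimension count explicit. The space $\H=\{H:\F\to\C\}$ is naturally isomorphic to $\C^{\F}$, so $\dim\H=|\F|$. The map $\gamma\mapsto\chi_\gamma$ is injective on $\hF$: if $\gamma\neq\eta$, orthonormality gives $\inner{\chi_\gamma,\chi_\eta}=0\neq 1=\inner{\chi_\gamma,\chi_\gamma}$. Orthonormal vectors are linearly independent, since pairing a vanishing linear combination with $\chi_\eta$ isolates its coefficient. Hence the $|\hF|=|\F|$ characters span $\H$.

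Next, writing $H=\sum_{\gamma}c_\gamma\chi_\gamma$ for some coefficients $c_\gamma\in\C$, I would take the inner product with $\chi_\eta$. The inner product is conjugate-linear in the first slot and linear in the second, so
\[
\inner{\chi_\eta,H}=\sum_\gamma c_\gamma\inner{\chi_\eta,\chi_\gamma}=c_\eta .
\]
This identifies $c_\eta=\hh(\eta)=\Exp_{f\gets\F}[\overline{\chi_\eta(f)}H(f)]$, and the formula follows.

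The only step needing care is orthonormality, which the paper asserts via \cref{eqn: ortho}. Should it need justification, I would factor the expectation over the independent uniform coordinates $f(x)$:
\[
\Exp_f\left[\overline{\chi_\gamma(f)}\chi_\eta(f)\right]=\prod_x \Exp_{z}\left[\chi_{\gamma(x)+\eta(x)}(z)\right]=\prod_x\truth{\gamma(x)=\eta(x)}.
\]
This holds because the characters are real-valued and multiplicative. I do not expect any genuine obstacle here.
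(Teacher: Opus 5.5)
Your proposal is correct and follows the paper's proof essentially verbatim: the characters form an orthonormal basis of $\H$ because they are orthonormal and $|\hF|=|\F|=\dim\H$, and pairing $H=\sum_\gamma c_\gamma\chi_\gamma$ with $\chi_\eta$ identifies $c_\eta=\inner{\chi_\eta,H}$. Your explicit checks of linear independence, and of orthonormality via the product over the independent coordinates $f(x)$, are valid details that the paper only asserts.
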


\begin{proof}
Since $\{\chi_\gamma\}_{\gamma\in\hF}$ forms an orthonormal basis of $\H$, every $H\in\H$ admits a unique expansion of the form
\[
    H=\sum_{\gamma\in\hF} c_\gamma \chi_\gamma
\]
for some coefficients $c_\gamma\in\C$.

Taking the inner product with $\chi_\eta$ and using orthonormality, we obtain
\[
    \inner{\chi_\eta,H}
    =
    \inner{\chi_\eta,\sum_{\gamma\in\hF}c_\gamma\chi_\gamma}
    =
    \sum_{\gamma\in\hF}c_\gamma\inner{\chi_\eta,\chi_\gamma}
    =
    c_\eta.
\]
Therefore,
\[
    c_\eta=\inner{\chi_\eta,H}
    =
    \Exp_{f\gets\F}\left[\overline{\chi_\eta(f)}H(f)\right].
\]
Since this holds for every $\eta\in\hF$, we have
\[
    H(f)=\sum_{\gamma\in\hF}\widehat H(\gamma)\chi_\gamma(f),
    \qquad
    \widehat H(\gamma)=\inner{\chi_\gamma,H}.
\]
This proves the theorem.
\end{proof}
The \newdef{Fourier spectrum} of $H$ is the collection of all Fourier coefficients
\[
    \left(\hh(\gamma)\right)_{\gamma \in \hF}.
\]
The \newdef{Fourier support} of $H$ is the set of $\gamma$ such that $\hh(\gamma)$ is nonzero.

The orthogonality of the characters show that for $H,G\in \H$,
\begin{align}\label{eqn: inner_Fourier}
    \inner{H,G} = \sum_{\gamma\in \hF} \overline{\widehat{H}(\gamma)}\widehat{G}(\gamma).
\end{align}

For a nonnegative integer $d$, the \newdef{degree-$d$ component} of $H$, denoted by $H^{=d}$, and the \newdef{low-degree components} of $H$ up to degree $d$, 
denoted by $H^{\le d}$, are defined as follows:
\begin{align*}
    H^{=d} := \sum_{|\gamma|=d} \hh(\gamma) \chi_\gamma,
    \qquad \text{ and }\qquad
    H^{\le d} := \sum_{i \le d } H^{=i}.
\end{align*}
It is worth noting that $H^{=0}=\Exp_{f \gets \F} [H(f)] \cdot  \constone$.
We refer to the \newdef{degree-$d$ weight}\footnote{Dinur~\cite{EC:Dinur24} uses $W^{=d}[H]$ instead of $\|H^{=d}\|^2_2$ to refer to the degree-$d$ weight of $H$.}
and the \newdef{degree-$d$ maximal magnitude} of $H$
as
\begin{align}
    \label{eqn: degreeweight}
    \|H^{=d}\|^2_2 = \sum_{|\gamma|=d} \left|\hh(\gamma)\right|^2,\qquad
    M^{=d}[H]:= \max_{|\gamma|=d}\left\{\left|\hh(\gamma)\right|\right\}
\end{align}
where the equality holds due to the orthogonality of the characters.

For a functional $H\in \H$, we say the \newdef{Fourier degree} of $H$, denoted by $\deg(H)$, by the minimum nonnegative integer $d$ such that $H=H^{\le d}$ holds, which means that $H$ is the sum of Fourier characters of degree $\le d$. Constant functionals have Fourier degree 0.
Taking conjugate does not change the degree. 
For $H,G \in \H$, we have
\begin{align}\label{eqn:deg triangle}
    \deg(H+G) \le \max(\deg(H),\deg(G)),\quad\deg(HG) \le \deg(H)+\deg(G).
\end{align}

\paragraph{Convolution.} 
The \newdef{convolution} of $G,H\in\H$ is $(G*H)(f):=\Exp_{g\gets\F}[G(f+g)H(g)]$.
\begin{lemma}
\label{lem: convol}
$\widehat{H *G }(\gamma) = \widehat{H}(\gamma)\widehat{G}(\gamma)$.
\end{lemma}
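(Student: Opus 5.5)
We compute the Fourier coefficient of the convolution directly from the definition in \cref{thm: Fourier expansion}, and then reduce it using a change of variables and the multiplicativity of characters in \cref{eqn: bilinear}.

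Write
\[
\widehat{H*G}(\gamma)=\Exp_{f\gets\F}\left[\overline{\chi_\gamma(f)}\,\Exp_{g\gets\F}[H(f+g)G(g)]\right]
=\Exp_{g\gets\F}\Exp_{f\gets\F}\left[\overline{\chi_\gamma(f)}H(f+g)G(g)\right].
\]
Swapping the finite sums is trivial.

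Next, substitute $h:=f+g$ for fixed $g$. This is a bijection of $\F$ because $(\F,+)$ is a group in which every element is its own inverse. Hence $f=h+g$ and $h$ is uniform on $\F$. By \cref{eqn: bilinear}, $\chi_\gamma(h+g)=\chi_\gamma(h)\chi_\gamma(g)$, and therefore $\overline{\chi_\gamma(f)}=\overline{\chi_\gamma(h)}\,\overline{\chi_\gamma(g)}$. The double expectation then factors as
\[
\Exp_{h\gets\F}\left[\overline{\chi_\gamma(h)}H(h)\right]\cdot\Exp_{g\gets\F}\left[\overline{\chi_\gamma(g)}G(g)\right]=\widehat H(\gamma)\widehat G(\gamma),
\]
since $h$ and $g$ are independent and uniform.

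There is no genuine obstacle here. The only point needing care is the order of the arguments: the lemma writes $H*G$ while the definition is stated for $G*H$, with the argument shift placed on the first factor. If we instead read $(H*G)(f)=\Exp_g[H(f+g)G(g)]$, the computation above applies verbatim. Alternatively, the substitution $g\mapsto f+g$ shows $H*G=G*H$, so the product formula holds for either convention because multiplication of the coefficients is commutative.
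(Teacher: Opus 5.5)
Your proposal is correct and is essentially the paper's proof: expand the Fourier coefficient of the convolution, substitute $h=f+g$, split $\overline{\chi_\gamma(h+g)}=\overline{\chi_\gamma(h)}\,\overline{\chi_\gamma(g)}$ by multiplicativity, and factor the double expectation into $\widehat H(\gamma)\widehat G(\gamma)$. Your worry about argument order is unnecessary, because the definition is generic in its two arguments. Reading it with $H$ as the first factor gives $(H*G)(f)=\Exp_{g\gets\F}[H(f+g)G(g)]$, which is exactly what the paper uses.
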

\ifnum\fullpage=0
We prove this lemma in \cref{app:Fourier} for the completeness.
\else
\begin{proof}
In \cite[Theorem 1.27]{o2014analysis}, the same statement for real-valued boolean functions is proven. It can be extended to our case as follows.
\begin{align*}
\widehat{H *G }(\gamma) ={}& \Exp_{f \gets \F}\left[\overline{\chi_\gamma(f)}H*G(f)\right]\\
={}& \Exp_{f \gets \F}\left[\overline{\chi_\gamma(f)}\Exp_{g\gets \F} [H(f+g)G(g)]\right]\\
={}& \Exp_{g,h \gets \F}\left[\overline{\chi_\gamma(g+h)}H(h)G(g)\right]\\
={}& \Exp_{g,h \gets \F}\left[\overline{\chi_\gamma(h)}H(h)\overline{\chi_\gamma(g)}G(g)\right]\\
={}& \widehat{H}(\gamma)\widehat{G}(\gamma)
\end{align*}
since $\overline{\chi_\gamma}=\chi_\gamma$ and by \cref{eqn: bilinear}.
\end{proof}
\fi

\paragraph{Distributions of XOR of functions.}
One main target of the functional Fourier analysis is the probabilistic density functions over $\F$. Let $\D_D, \D_E$ be distributions over $\F$ and write $\dens{D},\dens{E}$ to denote their density functions, i.e., $\dens{D}(f) = \frac{\Pr[f \gets \dist{D}]}{1/|\F|}$.
\ifnum\fullpage=0
The proof of the following lemma is in \cref{app:Fourier}.
\fi
\begin{lemma}\label{lem: convolDensity}
    $\dens{\D}*\dens{\mathcal E}$ is the density of the distribution of $h=f+g$ where $f\gets\D$ and $g\gets\mathcal E$ are independent.
\end{lemma}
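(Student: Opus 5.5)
The plan is to compute the density of $h=f+g$ directly from the definitions and compare it with the convolution. Write $\dens{\D\mathcal E}$ for the density of the law of $h$. By the definition of a density relative to the uniform measure on $\F$, we have $\dens{\D\mathcal E}(h)=|\F|\cdot\Pr[f+g=h]$. Because the group $(\F,+)$ has every element as its own inverse, $f+g=h$ holds if and only if $f=h+g$.

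Next we condition on $g$ and use independence:
\[
\Pr[f+g=h]=\sum_{g\in\F}\Pr[g\gets\mathcal E]\Pr[h+g\gets\D].
\]
Each probability is then rewritten via the density: $\Pr[g\gets\mathcal E]=\dens{\mathcal E}(g)/|\F|$ and $\Pr[h+g\gets\D]=\dens{\D}(h+g)/|\F|$. Substituting these gives
\[
\dens{\D\mathcal E}(h)=|\F|\sum_{g\in\F}\frac{\dens{\D}(h+g)\,\dens{\mathcal E}(g)}{|\F|^2}=\Exp_{g\gets\F}\left[\dens{\D}(h+g)\dens{\mathcal E}(g)\right],
\]
which is exactly $(\dens{\D}*\dens{\mathcal E})(h)$ by the definition of convolution given before \cref{lem: convol}.

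Every step is a direct computation, so no real obstacle is expected. The one point needing care is the order of arguments in the convolution, $(G*H)(f)=\Exp_g[G(f+g)H(g)]$. With $G=\dens{\D}$ and $H=\dens{\mathcal E}$ the shift falls on the $\D$-density, which matches the computation above. In any case, the convolution is commutative: the substitution $g\mapsto h+g$ is a bijection of $\F$ and swaps the two factors. As a sanity check, \cref{lem: convol} then gives $\widehat{\dens{\D\mathcal E}}=\widehat{\dens{\D}}\,\widehat{\dens{\mathcal E}}$. This agrees with computing $\Exp_{h\gets\D\mathcal E}[\chi_\gamma(h)]=\Exp[\chi_\gamma(f)]\Exp[\chi_\gamma(g)]$ using \cref{eq:lift} and the multiplicativity $\chi_\gamma(f+g)=\chi_\gamma(f)\chi_\gamma(g)$. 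That Fourier-side argument would also serve as an alternative proof, via uniqueness of the Fourier expansion in \cref{thm: Fourier expansion}.
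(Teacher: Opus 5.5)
Your proof is correct and is essentially the paper's argument: the paper expands $(\mu_{\mathcal D}*\mu_{\mathcal E})(h)$, substitutes $\mu_{\mathcal D}(h+g)=|\mathbf F|\Pr_{\mathcal D}[h+g]$ and $\mu_{\mathcal E}(g)=|\mathbf F|\Pr_{\mathcal E}[g]$, and uses independence to recognize $|\mathbf F|\Pr[f+g=h]$, which is your computation run in the opposite direction. Your Fourier-side cross-check is a valid alternative but is not needed.
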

\ifnum\fullpage=1
\begin{proof}
Using $\dens{\D}(f)=|\F|\Pr_{\D}[f]$,
\begin{align*}
    (\dens{\D}*\dens{\mathcal E})(h)
    =\sum_{g\in\F}\frac{\dens{\D}(h+g)\dens{\mathcal E}(g)}{|\F|}
    &=|\F|\sum_{g\in\F}\Pr_{f\gets\D}[f=h+g]\Pr_{g'\gets\mathcal E}[g'=g]\\
    &=|\F|\Pr_{f\gets\D,\,g\gets\mathcal E}[f+g=h].\qedhere
\end{align*}
\end{proof}
\fi




\subsection{Quantum oracle algorithms as functional}
Recall the quantum oracle $O_f$ for $f$ computes $O_f:\ket{x,y} \to \ket{x,y + f(x)}$ coherently.
For $a \in \bit^n$, we define the state $\ket{\widehat a}$ by 
\begin{align}\label{eqn: Fourier_basis_state}
\ket{\widehat a} := \frac{1}{\sqrt{2^{n}}}\sum_{z\in \bit^n} \chi_a(z)\ket{z}  = H^{\otimes n} \ket{a},
\end{align}
then, it is immediate that
\begin{align}\label{eqn: query}
 O_f \ket{x, \widehat a} =(-1)^{\inner{a,f(x)}}\ket{x, \widehat a}= \chi_a(f(x)) \ket{x, \widehat a}
\end{align}
where $\chi_a$ is a character.
This means that a single query increases the degree of the amplitudes by 1. The following theorem shows that this gives a polynomial-like representation of the success probability of the algorithm.

\begin{theorem}\label{thm: Prob by functionals}
    Let $f$ be a function from $\bit^n$ to $\bit^n$.
    Let $\mathcal A$ be a quantum oracle algorithm having access to $O_f$ and outputting a bit $b \in \bit$. 
    Then, there exists a functional $P_{\mathcal A}$ (which may depend on the input of $\mathcal A$, if any) of Fourier degree $\le 2q$ such that 
    the probability that $\mathcal A$ outputs $1$ equals to $P_{\mathcal A}(f)$.
    In other words, the functional $P_{\mathcal A}(f):=\Pr[\mathcal A^{O_f} \to 1]$ has the Fourier degree $\le 2q$ and can be decomposed by
    \[
        \Pr[\mathcal A^{O_f} \to 1] = P_{\mathcal A}(f) = \sum_{|\gamma|\le 2q} \widehat{P}_{\mathcal A}(\gamma) \chi_\gamma(f).
    \]
    In particular, $\|P_{\mathcal A}\|_2 \le 1.$ 
\end{theorem}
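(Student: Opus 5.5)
We follow the standard polynomial-method argument, phrased with the Fourier basis of \cref{eqn: Fourier_basis_state}. Write the algorithm as $\mathcal A = U_q O_f U_{q-1} O_f \cdots U_1 O_f U_0$ acting on $\ket{0}$. Here the $U_i$ are fixed unitaries independent of $f$, and the query register is $\ket{x,y}$ together with a workspace $\ket{w}$. The plan is to show by induction on $t$ that the state after $t$ queries,
\[
\ket{\psi_t^f} = \sum_{v} \alpha_{v}^{(t)}(f)\ket{v},
\]
has every amplitude $\alpha_v^{(t)}\in\H$ of Fourier degree at most $t$. Here $v$ ranges over a fixed $f$-independent computational basis.

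For the induction step it is convenient to expand in the basis $\{\ket{x,\widehat a, w}\}$ rather than the computational basis. This is a change of basis by the $f$-independent unitary $I\otimes H^{\otimes n}\otimes I$, so amplitudes in the two bases are $f$-independent linear combinations of one another. By \cref{eqn:deg triangle}, such a change preserves the degree bound. In the Fourier basis, \cref{eqn: query} shows that $O_f$ multiplies the amplitude of $\ket{x,\widehat a,w}$ by $\chi_a(f(x))$. This is the character $\chi_\gamma$ with $\gamma$ equal to $a$ at $x$ and $0^n$ elsewhere, which has degree at most $1$. So, by \cref{eqn:deg triangle}, each query raises the degree by at most one. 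Each $U_i$ forms fixed linear combinations of amplitudes, which does not raise the degree. The base case is $t=0$, where the amplitudes are constants of degree $0$.

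For the output probability, let $\Pi$ be the projector onto output bit $1$. Then
\[
P_{\mathcal A}(f)=\bra{\psi_q^f}\Pi\ket{\psi_q^f}=\sum_{v\in \mathrm{acc}} \overline{\alpha_v^{(q)}(f)}\,\alpha_v^{(q)}(f).
\]
Conjugation preserves the degree, and a product of two degree-$\le q$ functionals has degree $\le 2q$ by \cref{eqn:deg triangle}. Summing preserves this bound, so $\deg(P_{\mathcal A})\le 2q$. The stated expansion then follows from \cref{thm: Fourier expansion} together with the definition of degree, since all $\widehat P_{\mathcal A}(\gamma)$ with $|\gamma|>2q$ vanish. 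If $\mathcal A$ makes fewer than $q$ queries, the bound holds a fortiori. Intermediate measurements can be deferred, so we may assume the algorithm is unitary until the final measurement.

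Finally, $0\le P_{\mathcal A}(f)\le 1$ for every $f$, so $\|P_{\mathcal A}\|_2^2=\Exp_f[P_{\mathcal A}(f)^2]\le 1$. The only step needing care is the degree bookkeeping across the basis changes, and this is handled by \cref{eqn:deg triangle}. I expect no genuine obstacle.
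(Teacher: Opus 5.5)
Your proposal is correct and follows essentially the same route as the paper: you induct on the number of queries to show the amplitudes have Fourier degree at most $t$, using the Fourier basis of the response register so that each query multiplies by a degree-$\le 1$ character. You then obtain degree $\le 2q$ by writing the acceptance probability as a sum of products $\overline{p_z}\,p_w$, and deduce $\|P_{\mathcal A}\|_2\le 1$ from $0\le P_{\mathcal A}(f)\le 1$, exactly as the paper does.
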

We call the functional $P_{\mathcal A}$ above the \newdef{representing polynomial} of $\mathcal A$.
The proof of this theorem is essentially the same as the proof used in the lower bounds by polynomials \cite{JACM:BBCMD01}. 
\ifnum\fullpage=0
We defer the proof of this theorem to \cref{app:Fourier}.
\else
We defer the proof of this theorem to the end of this section.
\fi

We extensively use the following theorem derived from the above theorem.

\begin{theorem}\label{thm: advantage by inner product}
    Let $\dist{D}$ be a distribution of the functions in $\F = \{\bit^n \to \bit^n\}$. Let $\mathcal A$ be a binary output quantum algorithm making at most $q$ quantum queries to the oracle. 
    Let $P_{\mathcal A}=\Pr[\mathcal A^{O_f} \to 1]$ be the functional from \cref{thm: Prob by functionals}.
    Then, it holds that
    \[
    \Pr_{f \gets \dist{D}}[\mathcal A^{O_f} \to 1] - \Pr_{f \gets \F}[\mathcal A^{O_f} \to 1] = \inner{\dens{D} -1,P_{\mathcal A}} =\inner{(\dens{D} -1)^{\le 2q},P_{\mathcal A}}
    \]
    where $f \gets \F$ denotes the uniform random sampling and $\dens{D}(f)$ denotes the density function $\Pr[f \gets \dist{D}] / (1/|\F|)$ of $\dist{D}$.
\end{theorem}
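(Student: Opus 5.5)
The plan is to obtain both equalities directly from the density identity \cref{eq:lift}, the orthogonality of characters \cref{eqn: inner_Fourier}, and the degree bound in \cref{thm: Prob by functionals}.

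\emph{First equality.} Apply \cref{eq:lift} with $X=\F$ and $h=P_{\mathcal A}$. This gives $\Pr_{f\gets\dist{D}}[\mathcal A^{O_f}\to1]=\Exp_{f\gets\F}[\dens{D}(f)P_{\mathcal A}(f)]$. Since the uniform distribution has density $\constone$, we also have $\Pr_{f\gets\F}[\mathcal A^{O_f}\to1]=\Exp_{f\gets\F}[P_{\mathcal A}(f)]$.

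Subtracting, the advantage equals $\Exp_{f\gets\F}[(\dens{D}(f)-1)P_{\mathcal A}(f)]$. Because $\dens{D}-1$ is real-valued, it equals its own conjugate. This expectation is therefore exactly $\inner{\dens{D}-1,P_{\mathcal A}}$ under the paper's convention, which conjugates the first argument.

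\emph{Second equality.} Expand both functionals in the character basis using \cref{thm: Fourier expansion}. By \cref{eqn: inner_Fourier},
\[
\inner{\dens{D}-1,P_{\mathcal A}}=\sum_{\gamma\in\hF}\overline{\widehat{(\dens{D}-1)}(\gamma)}\,\widehat P_{\mathcal A}(\gamma).
\]
By \cref{thm: Prob by functionals}, $\widehat P_{\mathcal A}(\gamma)=0$ whenever $|\gamma|>2q$. So only the terms with $|\gamma|\le 2q$ survive.

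Those surviving terms are exactly the Fourier coefficients of $(\dens{D}-1)^{\le 2q}$, and its coefficients at $|\gamma|>2q$ vanish. Applying \cref{eqn: inner_Fourier} again in reverse, the sum equals $\inner{(\dens{D}-1)^{\le 2q},P_{\mathcal A}}$.

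Neither step is a real obstacle; all the substantive content sits in \cref{thm: Prob by functionals}. The only points needing care are the conjugation convention in the inner product and the realness of $\dens{D}$. Note also that the degree-$0$ coefficient of $\dens{D}-1$ vanishes, since $\Exp[\dens{D}]=1$; this is not needed here but explains the later sum starting at $k=1$.
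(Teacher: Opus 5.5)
Your proposal is correct and follows the paper's proof essentially step for step. The paper also writes $\Pr_{f\gets\dist{D}}[\mathcal A^{O_f}\to1]=\Exp_{f\gets\F}[\dens{D}(f)P_{\mathcal A}(f)]=\inner{\dens{D},P_{\mathcal A}}$ using the realness of $\dens{D}$ (the computation behind \cref{eq:lift}), subtracts $\inner{\constone,P_{\mathcal A}}$, and gets the truncation to degree $\le 2q$ from \cref{eqn: inner_Fourier} together with the degree bound of \cref{thm: Prob by functionals}.
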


\begin{proof}
    For the random function $f$ sampled from $\dist{D}$, the averaged probability that $\Alg$ outputs 1 is
\begin{align*}
    &\Pr_{f\gets \dist{D}}[\mathcal A^{O_f} \to 1] = \sum_{f\in \F} \Pr[f \gets \dist{D}] \Pr[\mathcal A^{O_f} \to 1] 
    \\&
    =\sum_{f\in \F} \frac{1}{|\F|} \frac{\Pr[f \gets \dist{D}]}{1/|\F|} P_{\mathcal A}(f)
    = \Exp_{f\gets \F}[ \dens{D}(f)P_{\mathcal A}(f)] = \inner{\dens{D},P_{\mathcal A}}
\end{align*}
where in the last equality we use the fact that $\dens{D}$ is real-valued so $\dens{D} = \overline{\dens{D}}$.
Noting that the density function of the uniform distribution is the constant function $\constone$, we have
\begin{align*}
    \Pr_{f\gets \dist{D}}[\mathcal A^{O_f} \to 1]  - \Pr_{f\gets \F }[\mathcal A^{O_f} \to 1]  = 
    \inner{\dens{D},P_{\mathcal A}} - \inner{\constone,P_{\mathcal A}}
    = \inner{\dens{D}-\constone,P_{\mathcal A} }.
\end{align*}
The final equality holds because $P_{\mathcal A}$ is of Fourier degree $\le 2q$ and \cref{eqn: inner_Fourier}.
\end{proof}

The theorem immediately yields the following advantage bound.
\begin{theorem}\label{cor: advantage by inner product}
Let $\dist{\F}$ be the uniform distribution over $\F$ and $\dist{D}$ be an arbitrary distribution over $\F$. For any $q$-query quantum algorithm $\mathcal A$, one has 
    \[
    \Adv^{\mathsf{dist}}(\dist{D},\dist{\F};\mathcal A) = \abs{ \inner{(\dens{D} -1)^{\le 2q},P_{\mathcal A}} } .
    \]
\end{theorem}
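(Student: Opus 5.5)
This follows almost immediately from \cref{thm: advantage by inner product}, so I will only unpack the definitions and take absolute values. By definition,
\[
\Adv^{\mathsf{dist}}(\dist{D},\dist{\F};\mathcal A)=\left|\Pr_{f\gets \dist{D}}[\mathcal A^{O_f}\to 1]-\Pr_{f\gets \F}[\mathcal A^{O_f}\to 1]\right|,
\]
since sampling from $\dist{\F}$ is the uniform sampling $f\gets\F$. The density of $\dist{\F}$ is the constant functional $\constone$. I apply \cref{thm: advantage by inner product} with $\dist{D}$ and the representing polynomial $P_{\mathcal A}$ from \cref{thm: Prob by functionals}. This rewrites the signed difference inside the absolute value as $\inner{(\dens{D}-1)^{\le 2q},P_{\mathcal A}}$, and taking absolute values gives the claimed identity.

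Only one point needs checking, and it is already handled inside the proof of \cref{thm: advantage by inner product}: the truncation to degree $\le 2q$. I will make it explicit via \cref{eqn: inner_Fourier}. We have
\[
\inner{\dens{D}-1,P_{\mathcal A}}=\sum_{\gamma}\overline{\widehat{(\dens{D}-1)}(\gamma)}\,\widehat P_{\mathcal A}(\gamma),
\]
and by \cref{thm: Prob by functionals}, $\widehat P_{\mathcal A}(\gamma)=0$ whenever $|\gamma|>2q$. Hence only the terms with $|\gamma|\le 2q$ survive, and these are exactly the Fourier coefficients of $(\dens{D}-1)^{\le 2q}$.

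The statement is phrased for algorithms making at most $q$ queries. For such an algorithm, \cref{thm: Prob by functionals} gives degree at most $2q'\le 2q$, where $q'\le q$ is the actual number of queries. So the truncation at $2q$ is still valid, and nothing further is required. I expect no genuine obstacle here, since all the substantive work (the degree bound on $P_{\mathcal A}$) sits in \cref{thm: Prob by functionals}.
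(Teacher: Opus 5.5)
Your proposal is correct and follows the paper's own route. The paper derives this statement immediately from \cref{thm: advantage by inner product} by taking absolute values, and the truncation to degree $\le 2q$ is justified there exactly as you do it, via \cref{eqn: inner_Fourier} together with the degree bound of \cref{thm: Prob by functionals}.
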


By decomposing $\dens{D}-\constone$ into the degree-$d$ components for $d\le 2q$, we have
\begin{align}\label{eqn: inner product decomposition by degree}
    \inner{(\dens{D} -\constone)^{\le 2q},P_{\mathcal A}} = \sum_{d=0}^{2q} \inner{(\dens{D} -\constone)^{=d},P_{\mathcal A}} = \sum_{d=1}^{2q} \inner{\dens{D}^{=d},P_{\mathcal A}} 
\end{align}
because the constant function $\constone$ has degree 0 and 
$\dens{D}^{=0} = \Exp_{f\gets \F}[\dens{D}(f)]  \constone= \constone.$
Therefore, this theorem reduces the PRF advantage for $f\gets \dist{D}$ to analyze the low-degree components of $\dens{D}$. 


\ifnum\fullpage=1
We now prove \cref{thm: Prob by functionals}.

\begin{proof}
    We first show the following claim.
    \begin{claim}
    The final state of a $q$-query algorithm (with the deferred measurement) can be written by
    \[
    \sum_z p_z^{(q)}(f) \ket{z}
    \]
    where $p_z^{(q)}$ is a functional of degree $\le q$.
    \end{claim}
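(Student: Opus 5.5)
We argue by induction on the number of queries $t=0,1,\dots,q$, showing that after the $t$-th query (and any subsequent oracle-independent unitary) the state of $\mathcal A$, with all measurements deferred, has the form $\sum_z p_z^{(t)}(f)\ket{z}$ with each $p_z^{(t)}$ a functional of Fourier degree at most $t$. Without loss of generality, the algorithm alternates fixed unitaries $U_0,U_1,\dots,U_q$, independent of $f$, with oracle calls $O_f$ acting on a query register $\ket{x,y}$ and a workspace register $\ket{w}$. The base case $t=0$ is immediate: the state $U_0\ket{0}$ has amplitudes that are constant functionals, hence of degree $0$.

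For the inductive step, we expand the query register in the mixed basis $\ket{x,\widehat a}\ket{w}$. Since this change of basis is the fixed unitary $I\otimes H^{\otimes n}\otimes I$, each amplitude in the new basis is a fixed $\C$-linear combination of the old amplitudes $p_z^{(t)}(f)$. By \cref{eqn:deg triangle} it therefore still has degree at most $t$. By \cref{eqn: query}, applying $O_f$ multiplies the amplitude of $\ket{x,\widehat a}\ket{w}$ by $\chi_a(f(x))$. This factor is exactly the character $\chi_\gamma(f)$ for the $\gamma\in\hF$ with $\gamma(x)=a$ and $\gamma(x')=0^n$ elsewhere, so $|\gamma|\le 1$. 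By the product rule in \cref{eqn:deg triangle}, every new amplitude has degree at most $t+1$. Changing back to the computational basis and then applying $U_{t+1}$ are again fixed linear maps, so they preserve the bound $t+1$. This proves the claim.

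To finish, let $\Pi$ be the projector onto outputs with bit $1$. Then
\[
P_{\mathcal A}(f)=\sum_{z:\Pi\ket z=\ket z}\overline{p_z^{(q)}(f)}\,p_z^{(q)}(f).
\]
Conjugation preserves degree, and the degree of a product is at most the sum of the degrees, so $\deg P_{\mathcal A}\le 2q$. By \cref{thm: Fourier expansion}, $P_{\mathcal A}$ then equals the sum of its Fourier components with $|\gamma|\le 2q$. For the norm bound, note that $0\le P_{\mathcal A}(f)\le 1$ for every $f$, so $\|P_{\mathcal A}\|_2^2=\Exp_f[P_{\mathcal A}(f)^2]\le 1$.

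Nothing here is deep. The step that needs care is the claim's basis change: we must check that $\chi_a(f(x))$ really is a degree-$\le 1$ character in the sense of \cref{eqn: character}, and that passing between bases never raises the degree because it is $f$-independent and linear.
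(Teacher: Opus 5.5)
Your proof is correct and follows the paper's argument. The paper also inducts on the number of queries: $f$-independent unitaries do not raise the degree (your explicit change to the basis $\ket{x,\widehat a}$ is one such unitary), and each query multiplies an amplitude by the degree-$\le 1$ character, exactly as in the paper's identity $O_f(\chi_\gamma(f)\ket{x,\widehat a})=\chi_{\gamma+a\delta_x}(f)\ket{x,\widehat a}$. Your closing paragraph deriving $\deg P_{\mathcal A}\le 2q$ and $\|P_{\mathcal A}\|_2\le 1$ likewise matches the rest of the paper's proof of \cref{thm: Prob by functionals}; taking $\Pi$ diagonal in the computational basis is harmless once measurements are deferred.
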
 
    \begin{proof}[Proof of Claim]
    Note that a $q$-query algorithm is represented by an alternating sequence of oracle query $O_f$ and an oracle-independent unitary $U$.
    Applying unitary $U$, as a linear map, does not increase the Fourier degree. We only focus on the oracle queries.

    Before proceeding, observe that for each query for the Fourier basis (\cref{eqn: Fourier_basis_state}) it holds that
    \[
    O_f (\chi_\gamma (f) \ket{x,\widehat a})  =\chi_\gamma (f)  \cdot \chi_{a}(f(x))\ket{x,\widehat a} = \chi_{\gamma + a \delta_x}(f) \ket{x,\widehat a}
    \]
    where $a\delta_x \in \hF$ is such that $a\delta_x(x)=a$ and $a\delta_x(y)=0^n$ for $y\neq x$.
    We use \cref{eqn: query} and \cref{eqn: bilinear} in the first and second equality. It holds that
    \[
    |\gamma + a \delta_x| \le |\gamma| + |a\delta_x| \le |\gamma|+1.
    \]

    We use the mathematical induction to prove the claim. For the state before any query, $p_z^{(0)}$ is independent of $f$, thus the base step holds. 
    
    Next, consider the state before applying the $k$-th query $\sum_z p_z^{(k-1)}(f) \ket{z}$ for $k\le q$. The Fourier degree of $p_z^{(k-1)}$ is bounded above by $k-1$ by the inductive hypothesis.
    Then, a single query increases the Fourier degree of each term in $p_z^{(k-1)}$ by at most 1, so that the degree after the query is bounded above by $(k-1)+1=k$.
    This proves the claim.
    \end{proof}

    Write $\ket{\phi_q} = \sum_z p_z(f) \ket{z}$ to denote the final state of the algorithm where $p_z$ is a functional of Fourier degree $\le q$.
    Let $(\Pi_0,\Pi_1=I-\Pi_0)$ be the final binary measurement of the algorithm. Then, the probability that $\mathcal A$ outputs 1 is
    \begin{align*}
        \Pr[\mathcal A^{O_f}\to 1] &= \bra{\phi_q} \Pi_1 \ket{\phi_q}\\
        &= \sum_{z,w} \bra z \overline{p_z(f)} \Pi_1 p_w(f) \ket w\\
        &= \sum_{z,w} \bra z  \Pi_1  \ket w \cdot \overline{p_z(f)}p_w(f)
    \end{align*}
    where $\deg(\overline{p_z}p_w) \le \deg(\overline{p_z}) + \deg(p_w) \le 2q$ because of \cref{eqn:deg triangle}.
    Since the summation does not increase the Fourier degree, $P_{\mathcal A}(f) = \Pr[\mathcal A^{O_f}\to 1]$ is a functional of Fourier degree $\le 2q$.

    The final ``in particular'' part is obvious because $|P_{\mathcal A}(f)|\le 1$ for all $f$.
\end{proof}

\section{Quantum Security of XoP}


This section introduces notation and states our main security theorem for $\xop$ constructions. We first define the set of all permutations on $n$-bit strings by
\[
\P_n = \{p\in \F_n :p \text{ is injective}\}.
\]
In the following, we fix $n$, omit the subscript, and just denote $\P_n$ by $\P$. Recall that we defined the XOR of permutations
\[
    \xop[r] := P_1 + \dots + P_r
\]
where $P_1,\ldots,P_r\gets\P$. We denote the distribution of $\xop[r]$ as $\DXoPr$. In particular, $\D_{\xop} = \D_{\xop[2]}$. 

We consider quantum-accessible $\xop[r]$ constructions in which a quantum adversary can make at most $q$ quantum queries to a given $\xop[r]$ but has no access to the inner primitives $P_r$. 
We note that the asymptotic formula hides multiplicative factors depending on $r$, as $r$ is typically chosen as a constant.

In this setting, the quantum security of $\xop$ can be stated as follows:

\begin{theorem}[Quantum security of XoP]\label{thm: main_bound}
Let $n,r,q$ be integers such that $n\ge5$, $2 \le r \le 2^n/16$ and $0\le q\le 2^n/57774$. Let $N=2^n$. It holds that
\[
    \Adv_q^{\mathsf{dist}}(\DXoPr,\dist{\F})
    = O \left( \min\left\{\frac{q^3}{N^r},\frac{q^{3/2}}{N^{r-1/2}},
    \frac1{N^{r-3/2}}\right\} \right).
\]
\end{theorem}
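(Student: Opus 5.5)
The plan is to follow the Fourier route of \cref{cor: advantage by inner product}. Since $\DXoPr$ is the law of $P_1+\dots+P_r$, \cref{lem: convolDensity} and \cref{lem: convol} give $\muXoPr = \muP^{*r}$ and $\hmuXoPr(\gamma)=\hmuP(\gamma)^r$. By \cref{eqn: inner product decomposition by degree}, the advantage of any $q$-query $\Alg$ is at most
\[
\sum_{d=1}^{2q}\bigl|\inner{\muXoPr^{=d},P_\Alg}\bigr|.
\]

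\textbf{Step 1: permutation coefficients.} First I compute $\hmuP(\gamma)=\Exp_{p\gets\P}[\chi_\gamma(p)]$ for $|\gamma|=d$. Writing $\supp(\gamma)=\{x_1,\dots,x_d\}$, this coefficient is the average of $(-1)^{\sum_i\inner{\gamma(x_i),u_i}}$ over distinct tuples $(u_1,\dots,u_d)$. Inclusion–exclusion over set partitions of $[d]$ gives
\[
\hmuP(\gamma)=\frac{1}{(N)_d}\sum_{\pi}\mu(\pi)\prod_{B\in\pi}N\Bigl\llbracket\textstyle\sum_{i\in B}\gamma(x_i)=0\Bigr\rrbracket .
\]
Since each block must have zero XOR and every $\gamma(x_i)\neq 0$, every block has size $\ge 2$. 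This yields $|\hmuP(\gamma)|\le C_d N^{-\lceil d/2\rceil}$ for small $d$, vanishing for $d=1$, and a combinatorial bound in general. These estimates then bound the weights
\[
\|\muXoPr^{=d}\|_2^2=\sum_{|\gamma|=d}|\hmuP(\gamma)|^{2r}.
\]
The number of $\gamma$ of degree $d$ whose values admit a zero-sum partition into blocks is about $\binom{N}{d}N^{\lfloor d/2\rfloor}$. I would show that $\|\muXoPr^{=d}\|_2$ decays geometrically in $d$ once $q\le N/57774$; this is where that constant comes from. The leading term of the resulting sum over $d$ is $O(N^{3/2-r})$, and Cauchy–Schwarz with $\|P_\Alg\|_2\le1$ then gives the third bound. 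The delicate point is controlling large $d$, up to $2q\sim N/\mathrm{const}$, uniformly. For that I would use the cruder bound $|\hmuP(\gamma)|\le (d/N)^{d/2}$-type estimates, in the style of Eberhard and Dinur, so that each weight decays like $(c\,d/N)^{(r-1)d/2}$.

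\textbf{Step 2: small $q$, low degrees.} For the $q^3/N^r$ and $q^{3/2}/N^{r-1/2}$ bounds, the $\ell_2$ estimate is too weak at the low degrees $d\in\{2,3,4,6\}$ (their norms are roughly $N^{1-r}$ and similar), though it suffices for the other degrees, whose weights carry extra powers of $q/N$. Following the overview, I would compute these components explicitly. For example,
\[
\muXoPr^{=2}(f)=\frac{1}{(N-1)^r}\sum_{x<x'}\bigl(N\truth{f(x)=f(x')}-1\bigr),
\]
which is $\binom N2 (N-1)^{-r}(\muPC-1)$ up to the planted-collision density. Hence $|\inner{\muXoPr^{=2},P_\Alg}|\approx N^{2-r}\cdot\Adv_q(\DPC,\dist\F)$.

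\textbf{Step 3: bounding the planted-collision advantage.} I would bound $\Adv_q(\DPC,\dist\F)$ in two ways. The first uses Zhandry's small-range bound $O(q^3/R)$ with $R\gg N^2$: expand $D_R$ by the collision pattern of $h$; the zero-collision term cancels against $\F$ and the unique-collision term dominates, giving $O(q^3/N^2)$. The second is a compressed-oracle estimate of
\[
\bra{\Psi}\Pi\sum_{x<y}\bigl(\mathsf{Col}_{xy}-\tfrac1N\bigr)\ket{\Psi},
\]
giving $O(q^{3/2}/N^{3/2})$. Combining these with Step 2 produces $q^3/N^r$ and $q^{3/2}/N^{r-1/2}$.

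\textbf{Step 4: remaining low degrees.} Degrees 3, 4 and 6 are handled by rewriting them as combinations of advantages against planted 3-collisions, two planted collisions, and planted 4-XORs. I would reduce each of these to repeated planted-collision hybrids, paying only constant factors. I expect two main obstacles: making this reduction for degrees 4 and 6 clean, and obtaining the uniform high-degree decay in Step 1.
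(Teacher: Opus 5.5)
Your outline has the same architecture as the paper: Fourier expansion with $\hmuXoPr=\hmuP^{\,r}$, $\ell_2$ weights for most degrees, and planted-distribution reinterpretations of degrees $2,3,4,6$. The one step you describe concretely, the high-degree tail in Step~1, would fail as stated, and all three bounds depend on it.

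\textbf{The high-degree tail.} Your count is wrong. Already for fixed $d$, $\hmuP(\gamma)=(-1)^{d-1}(d-1)!\,N/(N)_d\neq0$ whenever $y_1,\dots,y_{d-1}$ are linearly independent and $y_d=y_1+\dots+y_{d-1}$. So the degree-$d$ support of $\hmuP$ has size of order $\binom Nd N^{d-1}$, not $\binom Nd N^{\lfloor d/2\rfloor}$. Without further work the only available count is $\binom Nd(N-1)^{d-1}$ (the labels must sum to $0^n$). Against that count a pointwise bound is useless: for $r=2$ and even $d$, $\binom Nd(N-1)^{d-1}(M^{=d}[\muP])^{2r}$ with $M^{=d}[\muP]\le(2d/N)^{\lceil d/2\rceil}$ is at least of order $N^{-1}(4d)^d$, which is unbounded already at $d\approx\log N$.

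What is missing is an $\ell_2$ statement that large coefficients are rare. The paper gets it from Parseval on $(\bit^n)^d$ for the without-replacement distribution: $\sum_{j=0}^d\binom dj B_j=N^d/(N)_d$, where $B_j$ is the weight on one support of size $j$. Binomial inversion gives $B_d=\sum_j(-1)^{d-j}\binom dj N^j/(N)_j$. Since the coefficient of $N^{-k}$ in $N^j/(N)_j$ is a polynomial in $j$ of degree at most $2k$, all terms with $k<\lceil d/2\rceil$ cancel, which yields $B_d\le2^{d+1}(8d/N)^{\lceil d/2\rceil}$. Only the combination $\|\muXoPr^{=d}\|_2^2\le(M^{=d}[\muP])^{2r-2}\binom Nd B_d$ decays geometrically for $d\le N/28887$; here $M^{=d}$ comes from the recursion $\hmuP(\gamma)=-\frac1{N-d+1}\sum_{i<d}\hmuP(\gamma^{(i\leftarrow d)})$. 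If by ``in the style of Dinur'' you meant importing his level-$d$ weight bound for distinct tuples, that is exactly this ingredient and must be invoked as such.

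The tail must also keep the exponent $(2r-1)\lceil d/2\rceil-d$. Your rate $(cd/N)^{(r-1)d/2}$ gives $\|\muXoPr^{=5}\|_2\approx N^{-5/4}$ for $r=2$, which is too large for $q^3/N^2$ when $q\lesssim N^{1/4}$. With the correct exponent, degrees $5$ and $\ge7$ contribute a $q$-independent $O(N^{4-3r})=O(N^{-r})$. That, not extra powers of $q/N$, is why they are harmless.

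\textbf{Steps 3 and 4.} These leave the two hardest reductions as black boxes.

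For the $O(q^{3/2}/N^{3/2})$ planted-collision bound, note that $\|\Delta\mathsf{Col}\ket{\emptyset}\|^2=\binom N2(\frac1N-\frac1{N^2})\approx N/2$ already on the empty database. So no direct norm estimate of $\bra{\Psi_q}\Pi\Delta\mathsf{Col}\ket{\Psi_q}$ can work. The paper subtracts the anti-Hermitian correction $T=\sum_{x<y}[Q_xQ_y,\mathsf{Col}_{xy}]$, which agrees with $\Delta\mathsf{Col}$ on $\ket{\emptyset}$ and contributes only an imaginary part. It then propagates $R=\Delta\mathsf{Col}-T$ through the queries via $[R,O]=-[T,O]$, and bounds each commutator by $O(\sqrt{q/N})$ using the recorded-input and recorded-collision counts.

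For degree $3$ there is no evident constant-loss hybrid from planted pairs to a planted triple, because a reduction cannot copy the value of an unknown planted pair onto a third point. The paper instead compares $\DthPC$ with $\DPC$ through a reduction from detecting a single marked point, obtaining $O(q^2/N)$.

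For degrees $4$ and $6$, only the pairing parts become planted-distribution advantages: $C_{2,2}$, the all-equal part giving $\DfoX$, and $C_{2,2,2}$. The remainder must be bounded directly in $\ell_2$, which works only because of the small prefactors. This remainder is the $C_4$ term, and $S(\gamma)$ split by $C_{2,2,2}(\gamma)\in\{0,1,\ge2\}$ using $|(x+y)^r-x^r|\le r|y|(|x|+|y|)^{r-1}$.

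Finally, the $q^{3/2}/N^{r-1/2}$ bound needs a split at $q=N^{1/3}$. Above it, the degree-$\ge3$ contribution $O(N^{3-2r})$ is absorbed; below it, you use the $q^3/N^r$ bound.
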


We prove the last bound in \Cref{sec:bound1}, the first bound in \Cref{sec:bound2}, and
the second bound in \cref{sec: compressed oracle}.

\section{XoP Security for Many Quantum Queries}\label{sec:bound1}
The goal of this section is to prove the following lemma.
We assume that $n\ge 5, 2\le r\le N/16$, and $q\le N/57774$ hold in this section.
\begin{lemma}\label{lem: XoP bound 1}
If $n\ge 5, r\le \frac N{16}$, and $q\le \frac N{57774}$, then
\[
\Adv_q^{\mathsf{dist}}(\DXoPr,\dist{\F}) = O\left(\frac{1}{N^{r-1.5}}\right).
\]
\end{lemma}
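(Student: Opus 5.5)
By \cref{cor: advantage by inner product} and \cref{eqn: inner product decomposition by degree}, for any $q$-query $\Alg$ the advantage is at most $\sum_{d=1}^{2q}\|\muXoPr^{=d}\|_2$, using Cauchy--Schwarz and $\|P_\Alg\|_2\le1$ from \cref{thm: Prob by functionals}. By \cref{lem: convolDensity}, $\muXoPr=\muP^{*r}$, so \cref{lem: convol} gives $\hmuXoPr(\gamma)=\hmuP(\gamma)^r$. Hence
\[
\|\muXoPr^{=d}\|_2^2=\sum_{|\gamma|=d}|\hmuP(\gamma)|^{2r}\le \big(M^{=d}[\muP]\big)^{2r-2}\,\|\muP^{=d}\|_2^2 .
\]
The plan is therefore to bound, for each $d$, the maximal magnitude $M^{=d}[\muP]$ and the weight $\|\muP^{=d}\|_2^2$, and then show the resulting series over $d\le 2q\le N/28887$ sums to $O(N^{-(r-3/2)})$.

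The first step is an explicit formula for $\hmuP(\gamma)=\Exp_{p\gets\P}[\chi_\gamma(p)]$. If $\supp(\gamma)=\{x_1,\dots,x_d\}$ with values $a_i=\gamma(x_i)\ne0^n$, then $(p(x_1),\dots,p(x_d))$ is uniform over $d$-tuples of distinct elements. I would compute $\Exp[\prod_i\chi_{a_i}(u_i)]$ by inclusion--exclusion over set partitions of $[d]$ (Möbius inversion on the partition lattice). Each block $B$ that is forced equal contributes $N\truth{\sum_{i\in B}a_i=0}$, and the resulting expression is normalized by $(N)_d$. Blocks of size one vanish since $a_i\neq 0$, so only partitions into blocks of size $\ge2$ with zero XOR sum survive. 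This gives
\[
|\hmuP(\gamma)|\le \frac{1}{(N)_d}\sum_{\pi}\prod_{B\in\pi}(|B|-1)!\,N^{|\pi|}\truth{\textstyle\sum_{B}a=0},
\]
with $|\pi|\le d/2$. This yields $M^{=d}[\muP]\lesssim (Cd)^{d/2}N^{-d/2}$: for $d=2$ it is $1/(N-1)$, and in general it is roughly $(d/N)^{d/2}$ times a combinatorial factor. For the weight, rather than squaring this bound, I would use Parseval-type counting. I would sum $|\hmuP(\gamma)|^2$ over the $\binom{N}{d}$ supports and the admissible values $a_i$; a fixed partition into $k$ blocks leaves about $N^{d-k}$ value choices. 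Alternatively, I would compute $\|\muP^{=d}\|_2^2$ via the identity $\sum_{|\gamma|=d}\hmuP(\gamma)^2=\Exp_{p,p'}[\,\text{degree-}d\text{ part of }\chi\,]$, i.e.\ by relating $\sum_d\|\muP^{=d}\|_2^2 t^d$ to the joint distribution of two independent permutations restricted to $d$ points. Either route should give $\|\muP^{=d}\|_2^2\le \binom{N}{d}\cdot(\text{poly}(d)/N)^{d}\cdot N^{d/2}$-type bounds, which are of order $(Cd/N)^{d/2}$ up to factors.

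Combining these, the degree-$d$ term is roughly $(C d/N)^{(d/2)(r-1)}\cdot(\text{weight})^{1/2}$. Degree $1$ vanishes. Degree $2$ gives $N^{-(r-1)}\cdot\|\muP^{=2}\|_2\approx N^{-(r-1)}\cdot\sqrt{\binom N2 N}/N=O(N^{-(r-3/2)})$, which is exactly the target and is the dominant term. Degrees $3,4,\dots$ then decay geometrically in $d$ as long as $d/N$ is below a small constant, and this is where the assumption $2q\le N/28887$ enters. The hypothesis $r\le N/16$ keeps the $r$-dependent constants in check. Odd small degrees such as $d=3$ need the partition structure (a single block of size $3$) to obtain $N^{-3/2}$-type decay in $M$.

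The main obstacle will be the uniform bound for large $d$, up to order $N$. There the number of partitions grows like $d^{d}$, and the normalization $(N)_d$ versus $N^d$ costs $e^{O(d^2/N)}$, which is exponential in $d$. I expect to need a careful generating-function or entropy estimate showing that the ratio of consecutive terms stays below, say, $1/2$ when $d\le N/28887$. I would first try to bound the degree-$d$ weight by comparing with a Gaussian/matching count: perfect-matching partitions should dominate, giving about $(d-1)!!\,N^{d/2}/(N)_d$. I would then verify that the explicit constant $57774$ makes the geometric series converge.
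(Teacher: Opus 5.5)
Your overall architecture is the paper's: the Cauchy--Schwarz reduction to the norms $\|\muXoPr^{=d}\|_2$, the identity $\hmuXoPr=\hmuP^{\,r}$ via \cref{lem: convolDensity,lem: convol}, the split $\|\muXoPr^{=d}\|_2^2\le (M^{=d}[\muP])^{2r-2}\|\muP^{=d}\|_2^2$, the exact treatment of $d=2$, and a geometric tail for $d\le 2q\le N/28887$.

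The magnitude half of your plan is workable and is a genuine alternative to the paper's recursion (\cref{lem: xoprecursion}). Dropping the indicator, your unsigned M\"obius bound becomes $\frac{d!}{(N)_d}[t^d](1-t)^{-N}e^{-Nt}$, a series with nonnegative coefficients. A Cauchy bound at $t=\sqrt{d/N}$ then gives $M^{=d}[\muP]\le e\sqrt d\,(d/N)^{d/2}$ for $d\le N/28887$.

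The gap is the weight, in two respects.

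First, its order is not $(Cd/N)^{d/2}$. The weight grows: $\|\muP^{=d}\|_2^2\approx (CN/d)^{\lfloor d/2\rfloor}$, already $N/2$ at $d=2$, as your own degree-2 check uses. So for $r=2$ all the decay must come from the factor $M^2$, and the weight bound has to be sharp up to $C^d$. For instance, the trivial per-support Parseval bound $\sum_{\supp(\gamma)=S}\hmuP(\gamma)^2\le N^d/(N)_d$ only gives bounds on $\|\mu_{\xop[2]}^{=d}\|_2^2$ that are exponentially large in $d$.

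Second, your first route cannot deliver such a bound for large $d$. That route squares the unsigned partition bound and counts admissible labels per partition, but the true coefficients depend on cancellation. Take labels that are constant on the pairs of one perfect matching. The one-block partition $\{[d]\}$ is then admissible with weight $(d-1)!$, so your bound on $|\hmuP(\gamma)|$ is at least $(d-1)!\,N/(N)_d$. Once $d\gg\sqrt N$, this exceeds the matching term $N^{d/2}/(N)_d$ by a factor of roughly $N(d^2/(e^2N))^{d/2}$. Summing squares over just the $(N-1)^{d/2}$ such labels per support, and multiplying by $M^2$, gives a bound on $\|\mu_{\xop[2]}^{=d}\|_2^2$ of order at least $(d^2/(e^2N^{3/2}))^{d}$ up to polynomial factors. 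This blows up for $d$ of order $N/28887$ once $N$ is large. So ``perfect-matching partitions dominate'' holds only after cancellation, and your plan never establishes that cancellation.

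What is missing is a cancellation-aware bound on $\sum_{|\gamma|=d}\hmuP(\gamma)^2$. The paper builds it from three ingredients:
\begin{itemize}
\item the per-support Parseval identity $\sum_{j}\binom{d}{j}B_j=N^d/(N)_d$;
\item binomial inversion, $B_d=\sum_j(-1)^{d-j}\binom{d}{j}N^j/(N)_j$;
\item the fact that the coefficient of $N^{-k}$ in $N^j/(N)_j$ is a polynomial in $j$ of degree at most $2k$, so all orders $k<\lceil d/2\rceil$ cancel.
\end{itemize}
This is \cref{lem: deg d weight bound}.

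Your second route can be completed in the same spirit, and more cleanly. Let $F=|\{x:p(x)=p'(x)\}|$ for independent uniform $p,p'$. Then $\sum_d t^d\|\muP^{=d}\|_2^2=\Exp\bigl[(1+(N-1)t)^F(1-t)^{N-F}\bigr]$. Since $\Exp\binom{F}{j}=1/j!$ for $j\le N$, its coefficients up to $t^N$ coincide with those of $\exp\bigl(N\sum_{m\ge2}\frac{m-1}{m}t^m\bigr)$, which are nonnegative. A Cauchy bound at $t=\sqrt{d/(2N)}$ then gives $\|\muP^{=d}\|_2^2\le (2eN/d)^{d/2}e^{O(d^{3/2}/\sqrt N)}$. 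Together with your magnitude bound, this makes the degree-$d$ terms decay geometrically for $3\le d\le N/28887$.

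Without one of these two arguments, the step you flag as the main obstacle is exactly the part of the lemma that remains unproved.
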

To prove the above lemma, we will use Fourier analysis explained in~\Cref{sec:fourier}. We denote the density function of $\xop[r]$ as $\muXoPr$. Then we can express the advantage of $q$-query quantum algorithm $\Alg$ attacking $\xop[r]$ by \Cref{cor: advantage by inner product} and \Cref{eqn: inner product decomposition by degree} as follows:
\begin{align}
\label{eqn: advantage bound by degree}
    \Adv^{\mathsf{dist}}(\DXoPr,\dist{F};\Alg)= \abs{ \inner{(\muXoPr -1)^{\le 2q},P_{\Alg}} } \leq  \sum_{d=1}^{2q} \left|\inner{\muXoPr^{=d},P_{\Alg}^{=d}}\right|.
\end{align}
We bound the advantage using the Cauchy–Schwarz inequality as follows
\begin{align}\label{ineq: xop1}
    \leq \sum_{d=1}^{2q}  \|\muXoPr^{=d}\|_2 \cdot \|P_{\Alg}^{=d}\|_2\leq \left(\sum_{d=1}^{2q}  \|\muXoPr^{=d}\|_2^2 \right)^{\frac{1}{2}}
\end{align}
where we use $\sum_d \|P_\Alg^{=d}\|^2_2 \le \|P_\Alg\|^2_2 \le 1$ as shown in \cref{thm: Prob by functionals}.

The only remaining task is bounding the sum of \(\muXoPr^{=d}\). 
The following lemma shows such upper bounds. We separately deal with $d\in\{1,2,3,4,6\}$, which will be used in the later section.
The remainder of this section will be dedicated to prove this lemma.
\begin{lemma}\label{lem: xop_weights}
    For $r\ge 2$ and $N=2^n \ge \max\{16r, 96\}$ and $5\le d_{\max} \le  N / 28887$,
    \begin{itemize}[nosep]
        \item $\muXoPr^{=1} = \constzero$,
        \item $\|\muXoPr^{=2}\|_2^2 \le {1}/{N^{2r-3}},$
        \item $\|\muXoPr^{=3}\|_2^2 \le O\left( {1}/{N^{4r-5}} \right),$
        \item $\|\muXoPr^{=4}\|_2^2 \le O\left( {1}/{N^{4r-6}} \right),$
        \item $\|\muXoPr^{=6}\|_2^2 \le O\left({1}/{N^{6r-9}}\right),$ and
        \item $\|\muXoPr^{=5}\|_2^2+\sum_{d=7}^{d_{\max}}\|\muXoPr^{=d}\|_2^2
        \le
        O\left({1}/{N^{6r-8}}\right).$
    \end{itemize}
    In particular, all of the above terms are bounded above by $O(1/N^{2r-3})$.
\end{lemma}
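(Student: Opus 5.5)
By \cref{lem: convol} and \cref{lem: convolDensity}, $\muXoPr=\muP*\cdots*\muP$ ($r$ times), so $\hmuXoPr(\gamma)=\hmuP(\gamma)^r$. This gives
\[
\|\muXoPr^{=d}\|_2^2=\sum_{|\gamma|=d}|\hmuP(\gamma)|^{2r}.
\]
Everything therefore reduces to understanding $\hmuP(\gamma)=\Exp_{p\gets\P}[\chi_\gamma(p)]$, which is real since $\chi_\gamma$ is $\pm1$-valued.

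Fix $\supp(\gamma)=\{x_1,\dots,x_d\}$ and write $a_i=\gamma(x_i)\neq 0^n$. The values $(p(x_1),\dots,p(x_d))$ are uniform over distinct tuples. I would expand the sum over distinct tuples by M\"obius inversion on the partition lattice, which generalizes the degree-2 computation of the overview:
\[
\hmuP(\gamma)=\frac{1}{(N)_d}\sum_{\pi}\prod_{B\in\pi}(-1)^{|B|-1}(|B|-1)!\;N\,\Big\llbracket \textstyle\sum_{i\in B}a_i=0^n\Big\rrbracket .
\]
Since every $a_i\neq0^n$, only partitions whose blocks all have size $\ge2$ and all have zero sum survive. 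The small degrees then follow by direct counting.

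\textbf{Degrees 1--3.}
\begin{itemize}[nosep]
\item $d=1$: no partition survives, so $\muXoPr^{=1}=\constzero$.
\item $d=2$: $\hmuP=-\truth{a_1=a_2}/(N-1)$. Summing over $\binom N2$ supports and $N-1$ values gives $\binom N2 (N-1)^{1-2r}\le N^{3-2r}$, using $N\ge 96$.
\item $d=3$: only the single block survives, since a zero-sum pair would force the third $a_i$ to be $0^n$. Hence $|\hmuP|=2N/(N)_3=O(N^{-2})$, supported on $a_1+a_2+a_3=0$. There are at most $N^3\cdot N^2$ such $\gamma$, giving $O(N^{5-4r})$.
\end{itemize}

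\textbf{Degrees 4 and 6.}
\begin{itemize}[nosep]
\item $d=4$: the dominant $\gamma$ are those admitting a pairing $a_i=a_j$, $a_k=a_l$. These have $|\hmuP|=O(N^2/N^4)$, and there are $O(N^4\cdot N^2)$ of them, giving $O(N^{6-4r})$. The $\gamma$ with only the full block have $|\hmuP|=O(N^{-3})$ and number $O(N^4\cdot N^3)$, which is smaller.
\item $d=6$: three-pair structures have $|\hmuP|=O(N^{-3})$ and number $O(N^{6}\cdot N^3)$, giving $O(N^{9-6r})$. Structures with fewer blocks trade a factor $N$ in the count for a factor $N^{-2r}$ in the weight, so they are smaller since $r\ge 2$.
\item $d=5$: the best structure is a $2{+}3$ split, with $|\hmuP|=O(N^{-3})$ and count $O(N^5\cdot N^{3})$, giving $O(N^{8-6r})$. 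This matches the claimed $N^{-(6r-8)}$.
\end{itemize}

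\textbf{Large $d$ (the main obstacle).} The remaining step is the uniform bound for $7\le d\le d_{\max}\approx N/28887$. Here the $(|B|-1)!$ factors and the number of partitions grow like $d^{d}$, so constants cannot be hidden; this is where the explicit $28887$ enters. I would try the following.

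First, bound $\sum_{|\gamma|=d}|\hmuP(\gamma)|^{2r}\le (M^{=d}[\muP])^{2r-2}\,\|\muP^{=d}\|_2^2$.

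Second, bound the maximal magnitude by grouping partitions by their number of blocks $k\le d/2$. This gives $M^{=d}[\muP]\le \sum_{k\le d/2} S(d,k)\,\prod(|B|-1)!\,N^k/(N)_d$, which should be $\le (Cd/N)^{d/2}$ for $d\le N/C'$.

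Third, bound the level weight $\|\muP^{=d}\|_2^2$. I would use the same partition expansion squared, or cite the Eberhard/Dinur estimate $\|\muP^{=d}\|_2^2\lesssim \binom{N}{d}(Cd/N)^{d/2}$-type bounds.

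Combining these, the $d$-th term behaves like $(C''d/N)^{(r-1)d}$ times a polynomial factor. For $d/N$ below the stated constant this is geometric in $d$, so the sum over $d\ge 7$ is dominated by its first term. The $d=7$ term is $O(N^{-(6r-8)})$ or smaller, which yields the claimed bound.

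Checking the explicit constants in the partition-count estimate is the step I expect to be most delicate. Finally, each listed bound is at most $O(N^{-(2r-3)})$ because $r\ge2$.
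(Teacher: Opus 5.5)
Your treatment of degrees $1$ through $6$ is sound. It is the paper's inclusion--exclusion computation written as M\"obius inversion on the partition lattice. Your heuristic "at most $k$ zero-sum blocks gives $|\hmuP(\gamma)|=O(N^{k-d})$ and $O(N^{d-k})$ label tuples" yields the right exponent for every fixed $d$. One small correction: the degree-2 inequality $\frac N2(N-1)^{2-2r}\le N^{3-2r}$ needs $N\ge 4r-4$, which comes from the hypothesis $N\ge 16r$, not from $N\ge 96$.

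The gap is in the uniform tail $7\le d\le d_{\max}$. With your proposed bounds $M^{=d}[\muP]\le (Cd/N)^{d/2}$ and $\|\muP^{=d}\|_2^2\lesssim\binom Nd (Cd/N)^{d/2}\le (C'N/d)^{d/2}$, the product $(M^{=d}[\muP])^{2r-2}\|\muP^{=d}\|_2^2$ is $C_r^{\,d}(d/N)^{(r-3/2)d}$. It is not $(C''d/N)^{(r-1)d}$ times a polynomial, because the level weight contributes the exponential factor $(N/d)^{d/2}$. At $d=7$ this gives only $O(N^{-(7r-21/2)})$, which for $r=2$ is $N^{-7/2}$ rather than $O(N^{-4})=O(N^{-(6r-8)})$.

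The missing ingredient is parity. For odd $d$ an admissible partition has at most $\lfloor d/2\rfloor$ blocks, so you need $M^{=d}[\muP]\le (Cd/N)^{\lceil d/2\rceil}$ and $\|\muP^{=d}\|_2^2\le C^d (N/d)^{\lfloor d/2\rfloor}$. The paper obtains these as follows:
\begin{itemize}
\item The magnitude bound comes from the one-step recursion $\hmuP(\gamma)=-\frac{1}{N-d+1}\sum_{i<d}\hmuP(\gamma^{(i\leftarrow d)})$, obtained by conditioning on $z_d$.
\item The weight bound comes from Parseval on $(\bit^n)^d$ and the binomial inversion $B_d=\sum_j(-1)^{d-j}\binom dj N^j/(N)_j$. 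It then uses the fact that the $N^{-k}$-coefficient of $N^j/(N)_j$ is a polynomial of degree $\le 2k$ in $j$, so every term with $k<\lceil d/2\rceil$ cancels.
\end{itemize}
Your suggestion of squaring the partition expansion would not easily reproduce this. Bounded term by term, the M\"obius weights $\prod_B(|B|-1)!$ sum to a derangement number $\approx d!/e$. The factorial losses are avoided only by exploiting the sign cancellations, which is exactly what the Parseval and binomial-inversion argument does.

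Even with the correct exponents, your claim that the tail is geometric and dominated by its $d=7$ term is false. The per-degree bound behaves like $C^d(d/N)^{(2r-1)\lceil d/2\rceil-d}$ and oscillates with the parity of $d$, and this reflects the truth. For $r=2$ the $d=7$ term is $O(N^{-5})$. By contrast, $\|\mu_{\xop[2]}^{=8}\|_2^2$ is genuinely of order $N^{-4}$: the $\gamma$ whose eight labels form four equal pairs already contribute about $\frac{105}{8!}N^{12}\cdot N^{-16}$. That is exactly the order of the claimed bound. The paper therefore proceeds in three steps:
\begin{itemize}
\item It proves $W_{d+2}/W_d\le\frac12$; this is where $d_{\max}\le N/28887$ enters.
\item It sums the odd and even degrees separately, obtaining $\le 2(W_5+W_8)$.
\item It checks that $W_8=O(W_5)=O(N^{-(6r-8)})$. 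At $r=2$ this last step has no slack in the exponent.
\end{itemize}
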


Now we can give an upper bound of \Cref{ineq: xop1} by plugging the last part of \cref{lem: xop_weights} by decomposing the summand accordingly.
\begin{align*}
    \left(\sum_{d=1}^{2q}  \|\muXoPr^{=d}\|_2^2 \right)^{\frac{1}{2}} 
    \le 
    \left( O\left(\frac{1}{N^{2r-3}} \right) \right)^{\frac{1}{2}} 
    =  O\left(\frac{1}{N^{r-1.5}}\right).
\end{align*}
which directly proves \Cref{lem: XoP bound 1} as it is an upper bound of $\Adv^{\mathsf{dist}}(\DXoPr,\dist{F};\Alg) .$

\ifnum\fullpage=1
\subsection{Proofs for low degrees}
\else
\subsection{Proof of \Cref{lem: xop_weights}}
\fi
We first observe that the density function of the distribution of $\xop[r]$ follows the convolution $\muP * ... * \muP$ of $r$ $\muP$'s because of \cref{lem: convolDensity}. Then, by \cref{lem: convol}, we have
\begin{align}
\label{eqn:xopconvol}
\hmuXoPr(\gamma)=\hmuP(\gamma)^r.
\end{align}
Therefore, our analysis boils down to analyze the Fourier coefficient $\hmuP(\gamma)$ of uniformly random permutations.
We will use the following formula.
Here, recall $z_1,...,z_d \gets (\bit^n)^{(*d)}$ denotes the sampling without replacement, i.e., $z_i \neq z_j$ for $1\le i<j\le d$.
\begin{lemma}
\label{eqn: simplifiaction}
Suppose $|\gamma|=d$ with $\supp(\gamma) = \{x_1,...,x_d\}.$
Then
\[
\hmuP(\gamma)
=\Exp_{z_1,...,z_d \gets ((\bit^n)^{(*d)}} \left[
       \prod_{i=1}^{d}\chi_{\gamma(x_i)}(z_i)
    \right].
\]
\end{lemma}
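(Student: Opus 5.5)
Starting from the definition of the Fourier coefficient in \cref{thm: Fourier expansion}, together with the lifting identity \cref{eq:lift}, we get
\[
\hmuP(\gamma)=\Exp_{f\gets\F}\left[\overline{\chi_\gamma(f)}\,\muP(f)\right]=\Exp_{p\gets\P}\left[\chi_\gamma(p)\right].
\]
Here $\chi_\gamma$ is real-valued ($\pm1$), so $\overline{\chi_\gamma}=\chi_\gamma$. In words, multiplying by the density $\muP$ converts the uniform expectation over $\F$ into an expectation over a uniformly random permutation.

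Next, by \cref{eqn: character}, $\chi_\gamma(p)=\prod_{x\in\bit^n}\chi_{\gamma(x)}(p(x))$. Every factor with $x\notin\supp(\gamma)$ has $\gamma(x)=0^n$ and hence equals $1$. So only $\prod_{i=1}^d\chi_{\gamma(x_i)}(p(x_i))$ survives, and $\chi_\gamma(p)$ depends on $p$ only through the tuple $(p(x_1),\dots,p(x_d))$.

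Finally, for $p\gets\P$ the tuple $(p(x_1),\dots,p(x_d))$ is uniformly distributed over ordered $d$-tuples of pairwise distinct strings, i.e.\ it has the law of $(z_1,\dots,z_d)\gets(\bit^n)^{(*d)}$. To see this, count permutations: for each admissible distinct tuple, exactly $(N-d)!$ permutations realize it, which is the same number for every tuple. Substituting this law into the expectation gives the claimed formula.

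There is no real obstacle here. The only step needing care is the uniformity-of-restriction count, together with keeping the conjugation and normalization conventions consistent: the density is scaled by $|\F|$, and the expectation is taken over uniform $f$.
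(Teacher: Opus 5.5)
Your proof is correct and follows essentially the same route as the paper: it writes $\hmuP(\gamma)=\Exp_{p\gets\P}[\chi_\gamma(p)]$ via the lifting identity \cref{eq:lift}, drops the trivial factors outside $\supp(\gamma)$, and uses that $(p(x_1),\dots,p(x_d))$ is uniform over ordered distinct $d$-tuples. Your explicit $(N-d)!$ count only spells out that last step, which the paper leaves implicit.
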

\begin{proof}
    We expand $\hmuP(\gamma)$ using the Fourier expansion formula in \cref{thm: Fourier expansion} by
    \[
    \hmuP(\gamma) = \inner{ \chi_\gamma,\muP}
    =\Exp_{f \gets \F} \left[
        \chi_\gamma(f)\muP(f) 
    \right] = 
    \Exp_{p \gets \P} \left[
       \chi_\gamma(p)
    \right]
    \]
    where we use the property of density from \cref{eq:lift} in the last equality.
    Using \cref{eqn: character} and $\supp(\gamma) = \{x_1,...,x_d\}$, we can compute this as follows
    \begin{align*}
    \Exp_{p\gets \P} \left[
       \prod_{i=1}^{d}{\chi_{\gamma(x_i)}(p(x_i))}
    \right] 
    =\Exp_{z_1,...,z_d \gets ((\bit^n)^{(*d)}} \left[
       \prod_{i=1}^{d}\chi_{\gamma(x_i)}(z_i)
    \right]
    \end{align*}
    which completes the proof.
\end{proof}
We also use the following identity occasionally, which can be easily seen by writing $\chi_y(z) = (-1)^{\inner {y,z}}:$
 \begin{align}\label{eqn: characsum}
     \sum_{z\in\bit^n}\chi_y(z) = N\truth{y=0^n}, \quad\text{ and }\quad\sum_{y\in\bit^n} \chi_y(z) = N\truth{z=0^n}.
 \end{align}


\subsubsection{Degree-1 Component Vanishes} 
We first show that the degree-1 component becomes $\constzero$. 
Fix $\gamma \in \hF$ with $|\gamma|=1$, then there exists a unique $x
_1$ such that $\gamma(x_1)=y_1\neq 0^n$. 
Using \cref{eqn: simplifiaction} and \cref{eqn: characsum}, we have
\begin{align*}
    \hmuP(\gamma) = \Exp_{z\gets \bit^n} \left[\chi_{y_1}(z)\right] = 0.
\end{align*}
This shows the degree-1 Fourier coefficient of $\hmuXoPr(\gamma)=\hmuP(\gamma)^r$ is all zero,
proving the first item of \cref{lem: xop_weights}.

\subsubsection{Degree-2 Component} 
For $d=2$, fix $\gamma \in \hF$ with $\supp(\gamma)=\{x_1,x_2\}.$ 
Write $\gamma(x_1) = y_1 \neq 0^n, \gamma(x_2) =y_2 \neq 0^n$. 
\cref{eqn: simplifiaction} gives
\begin{align*}
    \hmuP(\gamma) &= \Exp_{z_1,z_2 \gets (\bit^n)^{(*2)}} \left[
        \chi_{y_1}(z_1)\chi_{y_2}(z_2)
    \right]
    = \sum_{z_1\neq z_2} \frac{\chi_{y_1}(z_1)\chi_{y_2}(z_2)}{N(N-1)}
    \\&
    =
    \frac{\sum_{z_1,z_1}\chi_{y_1}(z_1) \chi_{y_2}(z_2)}{N(N-1)} - \frac{\sum_{z} \chi_{y_1}(z)\chi_{y_2}(z)}{N(N-1)}
\end{align*}
where we omit $z_1,z_2,z \in \bit^n$ in the last terms. 
By \cref{eqn: characsum}, the first term vanishes because $y_1,y_2 \neq 0^n$, and the second term is computed by
\[
- \sum_{z\in \bit^n} \frac{\chi_{y_1}(z)\chi_{y_2}(z)}{N(N-1)}
    = - \sum_{z\in \bit^n} \frac{\chi_{y_1+y_2}(z)}{N(N-1)}
    = 
    -\frac{\truth{y_1+y_2=0^n}}{N-1} 
    =
    -\frac{\truth{y_1=y_2}}{N-1} 
\]
where we use the bilinearity (\cref{eqn: ortho}) and \cref{eqn: characsum}.

Consequently, the degree-2 component becomes
\begin{align}\label{eqn: XoPdeg2}
    \muXoPr^{=2} = \sum_{|\gamma|=2} \hmuP(\gamma)^r \chi_\gamma 
    =\sum_{\supp(\gamma)=\{x_1<x_2\}} \frac{(-1)^r\truth{y_1=y_2} \chi_\gamma}{(N-1)^r}
\end{align}
where we use $\truth{y_1=y_2}^r = \truth{y_1=y_2}$.
Consequently, the degree-$2$ weight is, by letting $y=y_1(=\gamma(x_1))=y_2(=\gamma(x_2))$, computed as in \cref{eqn: degreeweight} by
\begin{align*}
    \|\muXoPr^{=2}\|_2^2
    &=
    \sum_{|\gamma|=2} |\hmuP(\gamma)|^{2r}
    =
    \sum_{x_1<x_2}
    \sum_{y\neq 0^n}
    \frac{1}{(N-1)^{2r}}
    =
    \binom N2 (N-1)\frac{1}{(N-1)^{2r}}
    \\&
    =
    \frac{N}{2(N-1)^{2r-2}}
    \leq
    \frac{1}{2N^{2r-4}(N- 2r+2)}
    \leq
    \frac{1}{N^{2r-3}}
\end{align*}
where the last inequalities use $N/2 \ge 2r-2$.
This proves the second item of \cref{lem: xop_weights}.

\subsubsection{Degree-3 Component} 
Fix $\gamma \in \hF$ with $\supp(\gamma)=\{x_1,x_2,x_3\}.$ 
Write $\gamma(x_i) = y_i \neq 0^n$ for $i=1,2,3.$
We first prove that
\begin{align}
\label{eqn:deg3permutation}
    \hmuP(\gamma)
    =
    \frac{
    \truth{y_1+y_2+y_3=0^n}
    }{\binom{N-1}{2}}.
\end{align}
The proof of this equation uses some combinatorial argument using the inclusion-exclusion principle. The reader may skip the proof of this equation.

\paragraph{Proof of \cref{eqn:deg3permutation}.}
By \cref{eqn: simplifiaction}, this can be written as
\begin{align*}
    \hmuP(\gamma)
    &=
    \sum_{z_1,z_2,z_3 \text{ distinct}}
    \frac{
        \chi_{y_1}(z_1)
        \chi_{y_2}(z_2)
        \chi_{y_3}(z_3)
    }{(N)_3}
    \\&=
    \sum_{z_1,z_2,z_3\in \bit^n}
    \frac{\truth{z_1,z_2,z_3 \text{ distinct}}
        \chi_{y_1}(z_1)
        \chi_{y_2}(z_2)
        \chi_{y_3}(z_3)
    }{(N)_3}.
\end{align*}

For three variables $z_1,z_2,z_3$, applying the inclusion-exclusion principle to the three events
$E_{ij}=[z_i=z_j]$ gives
\[
    \truth{z_1,z_2,z_3 \text{ distinct}}
    =
    1-\sum_{i<j}\truth{z_i=z_j}
    +2\truth{z_1=z_2=z_3}.
\]
where the last term comes from
\[
    \truth{E_{12}\cap E_{13}}
    +
    \truth{E_{12}\cap E_{23}}
    +
    \truth{E_{13}\cap E_{23}}
    -
    \truth{E_{12}\cap E_{13}\cap E_{23}},
\]
all of which are the same event $[z_1=z_2=z_3]$.

Applying this identity to the summation over all $z_1,z_2,z_3$, we have
\begin{align*}
    \hmuP(\gamma)
    ={}&
    \sum_{z_1,z_2,z_3\in \bit^n}
    \frac{
        \chi_{y_1}(z_1)
        \chi_{y_2}(z_2)
        \chi_{y_3}(z_3)
    }{(N)_3}
    \\
    &
    -
    \sum_{i<j}
    \sum_{z_1,z_2,z_3\in \bit^n: z_i=z_j}
    \frac{
        \chi_{y_1}(z_1)
        \chi_{y_2}(z_2)
        \chi_{y_3}(z_3)
    }{(N)_3}
    \\
    &
    +
    2\sum_{z\in \bit^n}
    \frac{
        \chi_{y_1}(z)
        \chi_{y_2}(z)
        \chi_{y_3}(z)
    }{(N)_3} .
\end{align*}
This can be factored into
\begin{align*}
    &=\frac{
        (\sum_{z_1\in \bit^n}\chi_{y_1}(z_1))
        (\sum_{z_2\in \bit^n}\chi_{y_2}(z_2))
        (\sum_{z_3\in \bit^n}\chi_{y_3}(z_3))
    }{(N)_3}
    \\
    &
    -
    \sum_{1\le i<j\le 3; k:=6-i-j}
    \frac{
        (\sum_{z\in \bit^n}\chi_{y_i}(z)\chi_{y_j}(z))
        (\sum_{z_k\in \bit^n}\chi_{y_k}(z_k))
    }{(N)_3}
    \\&
    +
    2\sum_{z\in \bit^n}
    \frac{
        \chi_{y_1}(z)
        \chi_{y_2}(z)
        \chi_{y_3}(z)
    }{(N)_3}
\end{align*}
where $k$ is chosen so that $\{i,j,k\}=\{1,2,3\}$. 
The first and second terms become 0 by \cref{eqn: characsum}, because they have a factor $\sum_{z_w\in \bit^n}\chi_{y_w}(z_w)=0$ for some $w$.
Thus only the last term remains, proving
\begin{align*}
    \hmuP(\gamma)
    &=
    \sum_{z\in \bit^n}
    \frac{2
        \chi_{y_1}(z)
        \chi_{y_2}(z)
        \chi_{y_3}(z)
    }{(N)_3}
    =
    \sum_{z\in \bit^n}
    \frac{2
        \chi_{y_1+y_2+y_3}(z)
    }{(N)_3}
    \\
    &=
    \frac{2N}{(N)_3}
    \truth{y_1+y_2+y_3=0^n}
    =
    \frac{
    \truth{y_1+y_2+y_3=0^n}
    }{\binom{N-1}{2}},
\end{align*}
where we use \cref{eqn: characsum} in the third equality.

\paragraph{Bounding the degree-3 component.}
Recall $ \hmuXoPr(\gamma)=\hmuP(\gamma)^r$ from \cref{eqn:xopconvol}. 
Therefore, the degree-3 weight is
\[
    \|\muXoPr^{=3}\|_2^2
    =
    \sum_{|\gamma|=3}|\hmuP(\gamma)|^{2r}
    =
    \sum_{\substack{\supp(\gamma)=\{x_1<x_2<x_3\}\\\gamma(x_i)=y_i \neq0^n}}
    \frac{
    \truth{y_1+y_2+y_3=0^n}
    }{\binom{N-1}{2}^{2r}}.
\]
The last term is straightforward to calculate. The number of possible support is $\binom{N}{3}$, and the number of nonzero triples
$(y_1,y_2,y_3)$ such that $y_1+y_2+y_3=0^n$
is $(N-1)(N-2)$.\footnote{Choose $y_1\neq 0^n$ and
$y_2\neq 0^n,y_1$, then $y_3$ is determined and nonzero.} Therefore,
\begin{align*}
    \|\muXoPr^{=3}\|_2^2
    &=
    \binom{N}{3}
    (N-1)(N-2)
    \frac{1}{\binom{N-1}{2}^{2r}}
    =\frac{2^{2r-1}}{3N^{4r-5}\left(1-\frac1N\right)^{2r-2}\left(1-\frac2N\right)^{2r-2}}
    \\
    &\leq 
    \frac{2^{2r-1}}{3N^{4r-5}\left(1-\frac{6r-6}{N}\right)} \leq \frac{2^{2r}}{3N^{4r-5}}
    < 
    \frac{2^{2r-1}}{N^{4r-5}}.
\end{align*}
where we use $(1-\frac{1}{N})^{2r-2}(1-\frac{2}{N})^{2r-2} \ge 1-\frac{6r-6}{N}$ in the first inequality.
The last two inequalities follow from \(N/2 \ge 6r-6\) and $3>2$. This concludes the proof of the third item of \cref{lem: xop_weights}.

\subsubsection{Degree-4 Component}
Fix $\gamma \in \hF$ with $\supp(\gamma)=\{x_1,x_2,x_3,x_4\}.$ 
Write $\gamma(x_i) = y_i \neq 0^n$ for $i=1,2,3,4.$
\ifnum\fullpage=0
We prove the following identity in \cref{app: deg4}:
\else
We prove the following identity at the end of this section.
\fi
\begin{align}\label{eqn:deg4permutation}
    \hmuP(\gamma)=\dfrac{N\,C_{2,2}(\gamma)-6\,C_4(\gamma)}{(N-1)_3}
\end{align}
using the inclusion-exclusion principle similar to the proof of \cref{eqn:deg3permutation},
where
\begin{align*}
    C_{2,2}(\gamma)&:=\truth{y_1=y_2}\truth{y_3=y_4}+\truth{y_1=y_3}\truth{y_2=y_4}+\truth{y_1=y_4}\truth{y_2=y_3},\\
    C_4(\gamma)&:=\truth{y_1+y_2+y_3+y_4=0^n}.
\end{align*}

In the main body, we derive the upper bound of the degree-4 component of the $\xop[r]$ using this formula. 
Observe that $0 \le C_{2,2}(\gamma)\le 3$ and $0 \le C_4(\gamma)\le 1$.
This gives the upper bound of the maximal magnitude
\begin{align*}
    M^{=4}[\muP] = \max_{|\gamma|=4}|\hmuP(\gamma)| \le \frac{3N}{(N-1)_3} .
\end{align*}
Using $(a-b)^2\le a^2+b^2$ for $a,b\ge0$, we have
\begin{align}\label{eqn: deg4pointwise}
    |\hmuP(\gamma)|^{2r}
    &\le
    \left(M^{=4}[\muP]\right)^{2r-2}
    |\hmuP(\gamma)|^2
    \notag\\
    &\le
    \left(\frac{3N}{(N-1)_3}\right)^{2r-2}
    \frac{
        N^2C_{2,2}(\gamma)^2+36C_4(\gamma)
    }{((N-1)_3)^2}.
\end{align}

Again, using $0\le C_{2,2}(\gamma)\le3$ and $0\le C_{4}(\gamma) \le 1$, we can give upper bounds
\[
    \sum_{|\gamma|=4}C_{2,2}(\gamma)^2
    \le
    3\sum_{|\gamma|=4}C_{2,2}(\gamma)
    =
    9\binom N4(N-1)^2
    ,~
    \sum_{|\gamma|=4}C_4(\gamma)
    \le
    \binom N4(N-1)^3
\]
where $\binom{N}{4}$ denotes the number of the support of $\gamma$, and $(N-1)^2$ and $(N-1)^3$ is the number of $y_1,y_2,y_3,y_4\neq 0^n$ satisfying the constraints from $C_{2,2}(\gamma)$ and $C_4(\gamma)$, e.g., $\truth{y_1=y_2}\truth{y_3=y_4}.$
Therefore, \cref{eqn: deg4pointwise} is bounded above by
\begin{align}
    \label{eqn: simplified deg4 upper}
    \left(\frac{3N}{(N-1)_3}\right)^{2r-2}
    \frac{
        \binom N4
        \left(
            9N^2(N-1)^2+36(N-1)^3
        \right)
    }{((N-1)_3)^2} \le \frac{3^{2r-1}}{N^{4r-6}}
\end{align}
\ifnum\fullpage=0
where the detailed calculation can be found in \cref{app: deg4}.
\else
The detailed calculation is as follows:
\begin{align*}
    &
    \left(\frac{3N}{(N-1)_3}\right)^{2r-2}
    \frac{
        \binom N4
        \left(
            9N^2(N-1)^2+36(N-1)^3
        \right)
    }{((N-1)_3)^2}
    \\
    &\qquad\qquad=
    \frac{3^{2r}}{N^{4r-6}}
    \frac{
        \displaystyle
        \frac1{24}+\frac{N-1}{6N^2}
    }{
        \displaystyle
        \left(1-\frac1N\right)^{2r-3}
        \left(1-\frac2N\right)^{2r-1}
        \left(1-\frac3N\right)^{2r-1}
    }.
\end{align*}
Observe that
$    \frac1{24}+\frac{N-1}{6N^2}\le \frac{1}{12}$ for $N\ge 4$
and
\begin{align*}
    \left(1-\frac1N\right)^{2r-3}
    \left(1-\frac2N\right)^{2r-1}
    \left(1-\frac3N\right)^{2r-1}
    \ge
    1-\frac{12r-8}{N}
    \ge \frac14
\end{align*}
for $N\ge16r$. This bounds the above term by
\[
\le \frac{3^{2r}}{N^{4r-6}} \cdot \frac{1/12}{1/4} = \frac{3^{2r-1}}{N^{4r-6}}
\]
as desired.
\fi
Hence
\[
    \|\muXoPr^{=4}\|_2^2
    \le
    \frac{3^{2r-1}}{N^{4r-6}}.
\]
This concludes the proof of the fourth item of \cref{lem: xop_weights}.

\paragraph{Proof of \cref{eqn:deg4permutation}.}
Fix $\gamma \in \hF$ with $\supp(\gamma)=\{x_1,x_2,x_3,x_4\}.$ Write $\gamma(x_i) = y_i \neq 0^n$ for $i\in [4]$. For convenience, we recall \cref{eqn:deg4permutation}:
\[
    \hmuP(\gamma)=\dfrac{N\,C_{2,2}(\gamma)-6\,C_4(\gamma)}{(N-1)_3}.
\]
By \cref{eqn: simplifiaction},
\begin{align}\label{eqn: simple4}
    \hmuP(\gamma)
    =
    \sum_{z_1,z_2,z_3,z_4 \text{ distinct}}
    \frac{
    \chi_{y_1}(z_1)\chi_{y_2}(z_2)
    \chi_{y_3}(z_3)\chi_{y_4}(z_4)
    }{(N)_4}.
\end{align}

For four variables $z_1,z_2,z_3,z_4$, the inclusion-exclusion principle
for the six events $E_{ij}=[z_i=z_j]$ gives, after simplification,
\begin{align*}
    &\truth{z_1,z_2,z_3,z_4 \text{ distinct}}
    =
    1-\sum_{i<j}\truth{z_i=z_j}
    +2\sum_{i<j<k}\truth{z_i=z_j=z_k}
    \\
    &\qquad
    +
    \sum_{i<j, k<\ell, \{i,j,k,\ell\}=\{1,2,3,4\}}
    \truth{z_i=z_j\land z_k=z_\ell}
    -6\truth{z_1=z_2=z_3=z_4}.
\end{align*}
Applying this to \cref{eqn: simple4} over all $z_1,z_2,z_3,z_4$,
every term with a separated single vertex vanishes by \cref{eqn: characsum}.
Hence only the last two terms remain:
\begin{align*}
    \hmuP(\gamma)
    &=
    \sum_{\substack{i<j, k<\ell\\ \{i,j,k,\ell\}=\{1,2,3,4\}}}
    \sum_{z,w\in \bit^n}
    \frac{\chi_{y_i+y_j}(z)\chi_{y_k+y_\ell}(w)}{(N)_4}
    -
    \sum_{z\in \bit^n}
    \frac{6\chi_{y_1+y_2+y_3+y_4}(z)}{(N)_4}
    \notag \\
    &=
    \sum_{\substack{i<j, k<\ell\\ \{i,j,k,\ell\}=\{1,2,3,4\}}}
    \frac{N\truth{y_i=y_j}\truth{y_k=y_\ell}}{(N-1)_3}
    -
    \frac{6\truth{y_1+y_2+y_3+y_4=0^n}}{(N-1)_3}
    \notag \\
    &=
    \frac{N C_{2,2}(\gamma)-6C_4(\gamma)}{(N-1)_3},
\end{align*}
where we use \cref{eqn: characsum} in the second equality.
This proves \cref{eqn:deg4permutation}.

\ifnum\fullpage=1
\subsection{Proofs for high degrees}
We prove the high-degree upper bounds.
The proof of the following two lemmas can be found in the next subsection.
\begin{lemma}\label{lem: deg d magnitude bound}
For $1\le d\le N$, it holds that
    $M^{=d}[\muP]
    \le
    \left({2d}/{N}\right)^{\lceil d/2\rceil}.$
\end{lemma}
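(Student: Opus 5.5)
The plan is to prove the bound by induction on $d$, using a one-variable elimination recursion for the expectation in \cref{eqn: simplifiaction} rather than full inclusion--exclusion over set partitions. Full inclusion--exclusion gives $|\hmuP(\gamma)|\le d!\,N^{\lfloor d/2\rfloor}/(N)_d$, which is too lossy when $d$ is comparable to $N$. Define $T(d)$ as the maximum of
\[
\left|\Exp_{(z_1,\dots,z_d)\gets(\bit^n)^{(*d)}}\Big[\prod_{i=1}^d\chi_{y_i}(z_i)\Big]\right|
\]
over all $y_1,\dots,y_d\neq0^n$. By \cref{eqn: simplifiaction}, $M^{=d}[\muP]\le T(d)$, so it suffices to show $T(d)\le(2d/N)^{\lceil d/2\rceil}$. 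The base values are $T(0)=1$, and $T(1)=0$ by \cref{eqn: characsum}.

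\textbf{Recursion.} Fix distinct $z_1,\dots,z_{d-1}$ and sum over $z_d\notin\{z_1,\dots,z_{d-1}\}$. Since $y_d\neq 0^n$, \cref{eqn: characsum} gives
\[
\sum_{z_d\notin\{z_1,\dots,z_{d-1}\}}\chi_{y_d}(z_d)=-\sum_{j<d}\chi_{y_d}(z_j),
\]
and $\chi_{y_j}(z_j)\chi_{y_d}(z_j)=\chi_{y_j+y_d}(z_j)$. Dividing by the $N-d+1$ choices of $z_d$, we get
\[
\Exp_{(*d)}\Big[\prod_{i\le d}\chi_{y_i}(z_i)\Big]=-\frac{1}{N-d+1}\sum_{j=1}^{d-1}\Exp_{(*(d-1))}\Big[\chi_{y_j+y_d}(z_j)\prod_{i\ne j,\,i<d}\chi_{y_i}(z_i)\Big].
\]
For each $j$ there are two cases:
\begin{itemize}[nosep]
\item If $y_j+y_d\neq0^n$, the $j$-th summand is an instance of $T(d-1)$.
\item If $y_j=y_d$, the factor at $z_j$ disappears. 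Averaging over the now unconstrained $z_j$ among the remaining $N-d+2$ values leaves an expectation over $d-2$ distinct samples, so the summand is at most $T(d-2)$.
\end{itemize}
Hence
\[
T(d)\le\frac{d-1}{N-d+1}\max\{T(d-1),T(d-2)\}.
\]

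\textbf{Induction.} Assume the claim for smaller degrees. If $2d/N\ge1$, the claim is trivial because $|\hmuP(\gamma)|\le1$. Otherwise $d<N/2$, so
\[
\frac{d-1}{N-d+1}\le\frac{2d}{N}.
\]
Both $2(d-1)/N$ and $2(d-2)/N$ are at most $c:=2d/N<1$. Also $\lceil (d-1)/2\rceil\ge\lceil d/2\rceil-1$ and $\lceil (d-2)/2\rceil=\lceil d/2\rceil-1$. Since $c<1$, this gives $\max\{T(d-1),T(d-2)\}\le c^{\lceil d/2\rceil-1}$. Multiplying by the prefactor, which is at most $c$, yields $T(d)\le c^{\lceil d/2\rceil}$.

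\textbf{Main point of care.} The delicate step is justifying the $y_j=y_d$ case. Once the character at position $j$ becomes trivial, the expectation over $d-1$ distinct points must equal the expectation over $d-2$ distinct points. This follows by marginalizing: a uniform tuple without replacement, restricted to any $d-2$ coordinates, is again uniform without replacement. The case split into $T(d-1)$ versus $T(d-2)$, together with $c<1$, is what produces the $\lceil d/2\rceil$ exponent, since each degree-two drop costs only one factor of $c$.
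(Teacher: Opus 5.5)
Your proposal is correct and follows essentially the same route as the paper. The paper also eliminates $z_d$ to get the recursion $\hmuP(\gamma)=-\frac{1}{N-d+1}\sum_{i<d}\hmuP(\gamma^{(i\leftarrow d)})$, deduces $M^{=d}[\muP]\le\frac{d-1}{N-d+1}\max\{M^{=(d-1)}[\muP],M^{=(d-2)}[\muP]\}$, and inducts with the large-$d$ case trivial. Your explicit marginalization remark for the $y_j=y_d$ case spells out what the paper uses implicitly when it identifies the reduced expectation with a degree-$(d-2)$ coefficient via \cref{eqn: simplifiaction}.
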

\begin{lemma}\label{lem: deg d weight bound}
For $5\le d\le N/16$, it holds that
    $\sum_{|\gamma|=d}|\hmuP(\gamma)|^2
    \le
    3^{3d}
    \left( N/d\right)^{\lfloor d/2\rfloor}.$
\end{lemma}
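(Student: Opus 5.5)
We will expand $\hmuP(\gamma)$ over set partitions using Möbius inversion on the partition lattice, which generalizes the inclusion–exclusion computations done for $d=3,4$. Fix $\gamma$ with $\supp(\gamma)=\{x_1,\dots,x_d\}$ and $y_i=\gamma(x_i)\neq 0^n$. Write
\[
\truth{z_1,\dots,z_d\text{ distinct}}=\sum_{\pi}\mu(\pi)\,\truth{z\text{ constant on each block of }\pi},
\]
where $\mu(\pi)=\prod_{B\in\pi}(-1)^{|B|-1}(|B|-1)!$. Plugging this into \cref{eqn: simplifiaction} and applying \cref{eqn: characsum} blockwise gives
\[
\hmuP(\gamma)=\frac{1}{(N)_d}\sum_{\pi}\mu(\pi)\,N^{|\pi|}\prod_{B\in\pi}\truth{\textstyle\sum_{i\in B}y_i=0^n}.
\]
Since every $y_i\neq0^n$, only partitions with no singleton blocks survive. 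For those, $|\pi|\le\lfloor d/2\rfloor$, and this is the source of the exponent in the statement.

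Next we square and sum over $\gamma$. There are $\binom Nd$ supports, and for each we sum over $(y_1,\dots,y_d)$, so
\[
\sum_{|\gamma|=d}|\hmuP(\gamma)|^2=\frac{\binom Nd}{((N)_d)^2}\sum_{\pi,\sigma}\mu(\pi)\mu(\sigma)N^{|\pi|+|\sigma|}\,\#\{y:\text{block sums of }\pi\text{ and }\sigma\text{ vanish}\}.
\]
The key counting claim concerns the linear constraints over $\Ftwo^n$. The constraints of $\pi$ together with those of $\sigma$ have rank $|\pi|+|\sigma|-c(\pi\vee\sigma)$, where $c(\pi\vee\sigma)$ is the number of blocks of the join: each connected component of the block-incidence graph contributes exactly one dependency. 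Hence the count is at most $N^{d-|\pi|-|\sigma|+c(\pi\vee\sigma)}$, and the whole sum is at most
\[
\frac{N^d\binom Nd}{((N)_d)^2}\sum_{\rho}N^{c(\rho)}\sum_{\pi\vee\sigma=\rho}|\mu(\pi)\mu(\sigma)|,
\]
with $\rho$ ranging over singleton-free partitions, so $c(\rho)\le\lfloor d/2\rfloor$. Using $\binom Nd/(N)_d=1/d!$ and $(N)_d\ge N^d(1-d/N)^d\ge N^d(15/16)^d$ (valid since $d\le N/16$), the prefactor becomes $O(1)^d/d!$.

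The main obstacle is the remaining weighted count. The target $(N/d)^{\lfloor d/2\rfloor}$ requires beating $d!$ by about $d^{d/2}$. The crude bound $\sum_\pi|\mu(\pi)|\le d!$, which counts permutations, combined with $c\le|\pi|$, only gives $N^{d/2}$ and loses exactly this factor. So we must really use the join structure, and we will carry this out in three steps.
\begin{enumerate}
\item Factor the inner sum over the blocks $B$ of $\rho$. A pair $(\pi,\sigma)$ with join $\rho$ restricts on each $B$ to a pair of partitions of $B$ with join $\{B\}$.
\item Bound the per-block factor $\sum_{\pi_B\vee\sigma_B=\{B\}}|\mu(\pi_B)\mu(\sigma_B)|$ by $C^{|B|}\,|B|!$. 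The crude estimate $(|B|!)^2$ would suffice only when the blocks have bounded size.
\item Sum over $\rho$ with $c$ blocks. This amounts to at most $d!$ times a coefficient extraction from the exponential generating function $\exp\bigl(N\sum_{m\ge2}C^m x^m\bigr)$. Its coefficient of order $N^c$ is about $d!\,C^d/(c!\,(d/c)^{\dots})$, which is maximized near $c=\lfloor d/2\rfloor$, where it yields $d!\,C^d/(d/2)!$.
\end{enumerate}
Dividing by the prefactor's $d!$ then leaves $C'^d N^{\lfloor d/2\rfloor}/(d/2)!\le C''^d(N/d)^{\lfloor d/2\rfloor}$. The point is that the diagonal perfect-matching terms $\pi=\sigma=\rho$ already give $(d-1)!!\,N^{d/2}/d!\approx(N/d)^{d/2}e^{d/2}$, and the off-diagonal pairs must be shown to cost only a $C^d$ factor. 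Finally, we would track the constants so that $C''\le 27$, giving the stated $3^{3d}$. If the per-block bound in step 2 turns out to be too lossy, we would instead use $c(\rho)\le\min(|\pi|,|\sigma|)$ on blocks of size $\ge3$ and keep the exact count only on the size-$2$ blocks.
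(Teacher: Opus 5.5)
Your M\"obius expansion of $\hmuP(\gamma)$ over singleton-free partitions is correct. So is the rank formula $|\pi|+|\sigma|-c(\pi\vee\sigma)$.

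The gap is the step where you replace $\mu(\pi)\mu(\sigma)$ by $|\mu(\pi)\mu(\sigma)|$. For large $d$ the lemma holds only because of heavy sign cancellation among partitions, and that step throws it away for good.

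\begin{itemize}
\item Your Step~2 is false as stated. The diagonal pair $\pi_B=\sigma_B=\{B\}$ alone contributes $((|B|-1)!)^2$, which is not $O(C^{|B|}\,|B|!)$.
\item More decisively, the single pair $\pi=\sigma=\{[d]\}$ contributes
\[
\frac{N^d\binom Nd}{((N)_d)^2}\cdot N\cdot\bigl((d-1)!\bigr)^2=\frac{N^{d+1}\,(d-1)!}{d\,(N)_d}\ge\frac{N\,(d-1)!}{d}
\]
to your upper bound, and all other terms are nonnegative.
\item Comparing $d$-th roots, this quantity exceeds $C^d(N/d)^{\lfloor d/2\rfloor}$ for every fixed $C$ once $d\gg(Ce)^{2/3}N^{1/3}$. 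At $d=N/16$ it is roughly $(d/e)^d$ against $(4C)^d$, whereas the true sum there is only exponential in $d$.
\end{itemize}

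The lemma is needed on the whole range $5\le d\le N/16$; it is applied with $d$ up to $2q\le N/28887$. So the plan fails exactly where it matters. Your fallback using $c(\rho)\le\min(|\pi|,|\sigma|)$ does not help, because the loss sits in the size of the M\"obius weights, not in the power of $N$.

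What is missing is a way to exploit the cancellation exactly, before any absolute values are taken. The paper does this for the aggregate $B_d:=\sum_{\supp(\gamma)=S}|\hmuP(\gamma)|^2$ rather than coefficient by coefficient.

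\begin{itemize}
\item Allow zero labels and let $\gamma_y$ carry the labels $y$ on $S$. Then $(\hmuP(\gamma_y))_{y\in(\bit^n)^d}$ is $N^d$ times the Fourier transform of the uniform distribution on distinct $d$-tuples.
\item Parseval then gives $\sum_{j}\binom dj B_j=N^d/(N)_d$, and binomial inversion gives $B_d=\sum_{j=0}^d(-1)^{d-j}\binom dj N^j/(N)_j$.
\item Write $N^j/(N)_j=\sum_k p_k(j)N^{-k}$, where $p_k$ is a polynomial in $j$ of degree at most $2k$. The $d$-th finite difference annihilates every $k<\lceil d/2\rceil$.
\item The surviving terms are controlled by $|p_k(j)|\le(8d)^k$, giving $B_d\le 2^{d+1}(8d/N)^{\lceil d/2\rceil}$. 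Multiplying by $\binom Nd\le(eN/d)^d$ gives the lemma.
\end{itemize}

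Summing your squared expansion over all $y\in(\bit^n)^d$, zeros included, would just reproduce this Parseval identity. The alternating structure is what you must keep.
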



Now we extend the low-degree calculations to the higher-degree terms.
We prove the following lemma.

\begin{lemma}\label{lem: higher degree tail}
For $r\ge 2$ and $5\le d_{\max}\le N/28887$, it holds that $\|\muXoPr^{=6}\|_2^2 \leq 3^{18}2^3\left(\frac{12}{N}\right)^{6r-9}$ and
\[
    \|\muXoPr^{=5}\|_2^2
    +
    \sum_{d=7}^{d_{\max}}
    \|\muXoPr^{=d}\|_2^2
    \le
    3^{22}2^{11}\left(\frac{10}{N}\right)^{6r-8}, \qquad \|\muXoPr^{=6}\|_2^2 = O\left( \frac{1}{N^{6r-9}}\right)
\]
\end{lemma}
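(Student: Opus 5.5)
Using \cref{eqn:xopconvol}, for each degree $d$ we have $\|\muXoPr^{=d}\|_2^2=\sum_{|\gamma|=d}|\hmuP(\gamma)|^{2r}$. We bound this exactly as in the degree-4 computation: pull out all but two factors using the maximal magnitude, giving
\[
\|\muXoPr^{=d}\|_2^2\le \left(M^{=d}[\muP]\right)^{2r-2}\sum_{|\gamma|=d}|\hmuP(\gamma)|^2 .
\]
We then plug in \cref{lem: deg d magnitude bound} and \cref{lem: deg d weight bound}, which are valid since $5\le d\le d_{\max}\le N/28887\le N/16$. This yields the per-degree estimate
\[
\|\muXoPr^{=d}\|_2^2\le 3^{3d}\left(\frac{2d}{N}\right)^{(2r-2)\lceil d/2\rceil}\left(\frac Nd\right)^{\lfloor d/2\rfloor}.
\]

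For $d=6$ we have $\lceil d/2\rceil=\lfloor d/2\rfloor=3$, so the bound is $3^{18}(12/N)^{6r-6}(N/6)^3=3^{18}\,2^3\,(12/N)^{6r-9}$. This is exactly the stated bound, and it is $O(N^{-(6r-9)})$.

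For $d=5$ we have $\lceil 5/2\rceil=3$ and $\lfloor 5/2\rfloor=2$, giving $3^{15}(10/N)^{6r-6}(N/5)^2=3^{15}\cdot 4\,(10/N)^{6r-8}$. This fits within the claimed constant $3^{22}2^{11}$ with room to spare.

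For $d\ge7$, write $d=2k$ or $d=2k+1$ with $k\ge3$. The term is at most
\[
3^{3d}\,(2d)^{(2r-2)\lceil d/2\rceil}\,N^{-(2r-2)\lceil d/2\rceil+\lfloor d/2\rfloor},
\]
and the exponent of $1/N$ is at least $(2r-3)k+(2r-2)\truth{d \text{ odd}}$, which is $\ge 6r-8$ once $k\ge 4$ or $d=7$. (For $d=7$: $(2r-2)\cdot4-3=8r-11\ge 6r-8$ since $r\ge2$.)

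The plan is to show that the sequence $T_d$ of these bounds decays geometrically relative to the $d=5$ (or $d=7$) term. The natural comparison is the ratio of $T_{d+2}$ to $T_d$, which is roughly
\[
3^6\left(\frac{2(d+2)}{N}\right)^{2r-2}\frac{N}{d+2}\cdot\left(\frac{d+2}{d}\right)^{O(rd)}.
\]
The growth of the base $2d$ is the delicate part. We handle it by writing
\[
\left(\frac{2d}{N}\right)^{(2r-2)\lceil d/2\rceil}\left(\frac Nd\right)^{\lfloor d/2\rfloor}\le \left(\frac{2d}{N}\right)^{(2r-3)\lceil d/2\rceil}\cdot 2^{\lceil d/2\rceil}\cdot\frac Nd\cdot(\ldots),
\]
and then using $d/N\le 1/28887$. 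Each additional pair of degrees then contributes a factor of at most $3^6\cdot 2^{2r-2}\cdot (d/N)^{2r-3}\cdot O(1)\le 1/2$. This is where the specific constant $28887$ (about $3^6\cdot 2\cdot$ small factors) should come from.

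So the key steps are: (i) the Hölder-type reduction above; (ii) exact evaluation for $d=5,6$; (iii) for $d\ge7$, bounding each term by $c\,(10/N)^{6r-8}\rho^{d-7}$ with $\rho\le 1/2$, and summing the geometric series. The main obstacle is step (iii). The factor $(2d)^{(2r-2)\lceil d/2\rceil}$ grows superexponentially in $d$, so it must be absorbed by $N^{-(2r-3)\lceil d/2\rceil}$ uniformly in $r$.

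I would first try writing each term as $\bigl(3^{6}(2d)^{2r-2}/N^{2r-3}\cdot d^{-1}\bigr)^{\lfloor d/2\rfloor}$ times a bounded prefactor. Inside the bracket I would use
\[
(2d)^{2r-2}/N^{2r-3}\le 2d\cdot(2d/N)^{2r-3}\le 2d\cdot(2/28887)^{2r-3}.
\]
After the $d^{-1}$ cancels, the bracket is at most $2\cdot3^6\cdot 2/28887<1/2$ for $r\ge2$. The leftover powers of $N$ are then matched against $(10/N)^{6r-8}$, with any slack absorbed into the constant $3^{22}2^{11}$.
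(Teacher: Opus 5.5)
Your reduction to $(M^{=d}[\muP])^{2r-2}\sum_{|\gamma|=d}|\hmuP(\gamma)|^2$ is correct and agrees with the paper, and so are your exact evaluations at $d=5$ and $d=6$. The gap is in step (iii), at exactly the point you flag as the difficulty.

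For even $d=2k$, your bound is exactly $T_d=\beta_d^{\,k}$ with $\beta_d:=2\cdot 3^6(2d/N)^{2r-3}$, and there is no prefactor. For odd $d=2k+1$ it is $\beta_d^{\,k}\cdot 27(2d/N)^{2r-2}$. Once you bound $\beta_d$ by the absolute constant $4\cdot 3^6/28887\approx 0.1$, all dependence on $N$ disappears. You only obtain $T_d\le (4\cdot 3^6/28887)^{\lfloor d/2\rfloor}$, whose sum over $d\ge 7$ is a constant independent of $N$, not $O(N^{-(6r-8)})$. So there are no ``leftover powers of $N$'' to match against $(10/N)^{6r-8}$.

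If instead you leave a few brackets unbounded to keep the $N$-decay, each retained factor is $(2d/N)^{2r-3}$ with $d$ as large as $N/28887$. That factor is not $O(N^{-(2r-3)})$ uniformly in $d$, so you are back to the superexponential growth your plan was meant to absorb.

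The missing idea is to bound each term only relative to its predecessor, and to anchor the chain at fixed degrees. Your middle paragraph gestures at this but does not carry it out. The paper computes
$W_{d+2}/W_d=3^6 2^{2r-2}(1+2/d)^{(2r-1)\lceil d/2\rceil-d}((d+2)/N)^{2r-3}$.
It bounds the middle factor by $e^{(12r-16)/5}$ uniformly in $d$, using $\lceil d/2\rceil\le 3d/5$. It then checks $3^6 2^{2r-2}e^{(12r-16)/5}<\frac12\cdot 28887^{2r-3}$ for all $r\ge 2$; this is nearly tight at $r=2$ and is where the constant $28887$ comes from. Hence $W_{d+2}\le W_d/2$ whenever $d+2\le d_{\max}$.

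The odd tail is then at most $2W_5$ and the even tail at most $2W_8$. Because $W_5$ and $W_8$ are evaluated at fixed degrees, their $N$-dependence is genuine. Finally, $W_8\le (3^9 2^{14}/5^4)W_5$ gives $2(W_5+W_8)\le 3^{22}2^{11}(10/N)^{6r-8}$.
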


\begin{proof}
\cref{lem: deg d magnitude bound,lem: deg d weight bound} give,
for every $5\le d\le d_{\max}$,
\begin{align*}
    \|\muXoPr^{=d}\|_2^2
    &=
    \sum_{|\gamma|=d}|\hmuP(\gamma)|^{2r}
    \\
    &\le
    \left(M^{=d}[\muP]\right)^{2r-2}
    \sum_{|\gamma|=d}|\hmuP(\gamma)|^2
    \\
    &\le
    \left(\frac{2d}{N}\right)^{
        (2r-2)\lceil d/2\rceil
    }
    3^{3d}
    \left(\frac Nd\right)^{\lfloor d/2\rfloor}
    \\
    &=
    3^{3d}
    2^{(2r-2)\lceil d/2\rceil}
    \left(\frac dN\right)^{
        (2r-1)\lceil d/2\rceil-d
    }
    =:
    W_d,
\end{align*}
where we use the following in the last line
\[
    \left\lfloor\frac d2\right\rfloor
    +
    \left\lceil\frac d2\right\rceil
    =
    d.
\]
Degree-6 is immediate by $W_6$, in particular $N\geq 96$. We will prove
\[
    \frac{W_{d+2}}{W_d}\le\frac12
\]
for every $5\le d\le d_{\max}-2$. Since
\(
    \left\lceil\frac{d+2}{2}\right\rceil
    =
    \left\lceil\frac d2\right\rceil+1,
\)
we have
\begin{align*}
    \frac{W_{d+2}}{W_d}
    &=
    3^6 2^{2r-2}
    \left(1+\frac2d\right)^{
        (2r-1)\lceil d/2\rceil-d
    }
    \left(\frac{d+2}{N}\right)^{2r-3}.
\end{align*}
For $d\ge5$,
\[
    \left\lceil\frac d2\right\rceil
    \le
    \frac{3d}{5},
\]
and hence
\[
    (2r-1)\left\lceil\frac d2\right\rceil-d
    \le
    \frac{6r-8}{5}d.
\]
Using $1+x\le e^x$, we obtain
\begin{align*}
    \left(1+\frac2d\right)^{
        (2r-1)\lceil d/2\rceil-d
    }
    &\le
    \exp\left(
        \frac2d
        \left(
            (2r-1)\left\lceil\frac d2\right\rceil-d
        \right)
    \right)
    \le
    e^{(12r-16)/5}.
\end{align*}
In addition, we observe that, for all $r\ge 2$,
\[
            3^6 2^{2r-2}e^{(12r-16)/5}
    < \frac{1}{2} \cdot 28887^{2r-3}.
\]
Using the above,
\begin{align*}
    \frac{W_{d+2}}{W_d}
    &\le
    3^6 2^{2r-2}e^{(12r-16)/5}
    \left(\frac{d+2}{N}\right)^{2r-3}
    \le
    \frac{1}{2} \left( 28887 \right) ^{2r-3}
    \left(\frac{d_{\max}}{N}\right)^{2r-3}
    \le
    \frac12
\end{align*}

Summing the odd and even degrees separately gives
\begin{align*}
    &\|\muXoPr^{=5}\|_2^2
    +
    \sum_{d=7}^{d_{\max}}
    \|\muXoPr^{=d}\|_2^2
    \le
    W_5+
    \sum_{d=7}^{d_{\max}}W_d
    \\
    &\le
    \left(
        1+\frac12+\frac1{2^2}+\cdots
    \right)
    (W_5+W_8)
    =
    2(W_5+W_8).
\end{align*}
Moreover,
\[
    \frac{W_5}{W_8}
    =
    \frac{5^{6r-8}}{3^9\cdot 2^{26r-38}}N^{2r-4}
    =
    \frac{5^4}{3^9\cdot 2^{14}}\left(\frac{125N}{8192}\right)^{2r-4}
    \ge
    \frac{5^4}{3^9\cdot 2^{14}}.
\]
Consequently,
\begin{align*}
    &\|\muXoPr^{=5}\|_2^2
    +
    \sum_{d=7}^{d_{\max}}
    \|\muXoPr^{=d}\|_2^2
    \le
    2\cdot\left(1+\frac{3^9\cdot2^{14}}{5^4}\right)W_5
    \\&\le
    2\cdot\frac{3^9\cdot2^{14}}{3^2\cdot2^6}\cdot3^{15}\cdot2^{6r-6}
    \left(\frac5N\right)^{6r-8}
    =
    3^{22}2^{11}
    \left(\frac{10}{N}\right)^{6r-8}.\qedhere
\end{align*}
\end{proof}

\subsection{Maximal Magnitude and Weight of the components of $\muP$}\label{app: magniweight}

\subsubsection{Maximal magnitude of components}
The goal of this section is to prove \cref{lem: deg d magnitude bound}, or more precisely
\[
    M^{=d}[\muP]
    \le
    \left(\frac{2d}{N}\right)^{\lceil d/2\rceil}.
\]

Let $\gamma\in\hF$ be a character with $|\gamma|=d$ such that $\supp(\gamma)=\{x_1<\cdots<x_d\}.$
Write $\gamma(x_i)=y_i$ for $i\in[d].$
As shown in \cref{eqn: simplifiaction}, we can write
\[
    \hmuP(\gamma)
    =
    \Exp_{(z_1,\ldots,z_d)\gets(\bit^n)^{(*d)}}
    \left[
        \prod_{i=1}^d\chi_{y_i}(z_i)
    \right].
\]
We observe the following fact: The conditional distribution of $z_d$ given the other elements
$(z_1,\ldots,z_{d-1})$ is uniform over the set
$\bit^n\setminus\{z_1,\ldots,z_{d-1}\},$
which has size $N-d+1$.
From this, we have
\begin{align*}
    &\Exp_{z_d\gets
    \bit^n\setminus\{z_1,\ldots,z_{d-1}\}}
    \left[
        \chi_{y_d}(z_d)
    \right]
    \\
    &\quad=
    \frac1{N-d+1}
    \sum_{z\notin\{z_1,\ldots,z_{d-1}\}}
    \chi_{y_d}(z)
    \\
    &\quad=
    \frac1{N-d+1}
    \left(
        \sum_{z\in\bit^n}\chi_{y_d}(z)
        -
        \sum_{i=1}^{d-1}\chi_{y_d}(z_i)
    \right)
    \\
    &\quad=
    -\frac1{N-d+1}
    \sum_{i=1}^{d-1}\chi_{y_d}(z_i),
\end{align*}
where the last equality follows from
\(
    \sum_{z\in\bit^n}\chi_{y_d}(z)=0
\)
because $y_d\neq0^n$.

Using this equation, we have
\begin{align*}
    \hmuP(\gamma)
    &=
    \Exp_{(z_1,\ldots,z_d)\gets(\bit^n)^{(*d)}}
    \left[
        \chi_{y_d}(z_d)
        \prod_{j=1}^{d-1}\chi_{y_j}(z_j)
    \right]
    \\
    &=
    \Exp_{(z_1,\ldots,z_{d-1})\gets(\bit^n)^{(*(d-1))}}
    \left[
        \Exp_{z_d\gets
        \bit^n\setminus\{z_1,\ldots,z_{d-1}\}}
        \left[
            \chi_{y_d}(z_d)
        \right]
        \prod_{j=1}^{d-1}\chi_{y_j}(z_j)
    \right]
    \\
    &=
    -\frac1{N-d+1}
    \sum_{i=1}^{d-1}
    \Exp_{(z_1,\ldots,z_{d-1})\gets(\bit^n)^{(*(d-1))}}
    \left[
        \chi_{y_d}(z_i)
        \prod_{j=1}^{d-1}\chi_{y_j}(z_j)
    \right]
    \\
    &=
    -\frac1{N-d+1}
    \sum_{i=1}^{d-1}
    \Exp_{(z_1,\ldots,z_{d-1})\gets(\bit^n)^{(*(d-1))}}
    \left[
        \chi_{y_i+y_d}(z_i)
        \prod_{\substack{j=1\\j\neq i}}^{d-1}
        \chi_{y_j}(z_j)
    \right].
\end{align*}

To simplify the above expression, define
$\gamma^{(i\leftarrow d)}\in\hF$ by
\[
    \gamma^{(i\leftarrow d)}(x)
    :=
    \begin{cases}
        y_i+y_d,
        & x=x_i,\\
        y_j,
        & x=x_j
          \text{ for some }j\in[d-1]\setminus\{i\},\\
        0^n,
        & \text{otherwise}.
    \end{cases}
\]
If $y_i+y_d=0^n$, then $\gamma^{(i\leftarrow d)}(x_i)=0$ and therefore $\gamma^{(i\leftarrow d)}$ has degree $d-2$. In other cases, $\gamma^{(i\leftarrow d)}$ has degree $d-1$.
This gives the following recursion.

\begin{lemma}\label{lem: xoprecursion}
Let $2\le d\le N$, and let $\gamma\in\hF$ satisfy $|\gamma|=d$.
Then
\[
    \hmuP(\gamma)
    =
    -\frac1{N-d+1}
    \sum_{i=1}^{d-1}
    \hmuP\!\left(\gamma^{(i\leftarrow d)}\right).
\]
\end{lemma}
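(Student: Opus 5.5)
The recursion is essentially a repackaging of the computation displayed just before the lemma; I would give the argument in three steps.

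First, by \cref{eqn: simplifiaction},
\[
\hmuP(\gamma)=\Exp_{(z_1,\ldots,z_d)\gets(\bit^n)^{(*d)}}\left[\prod_{i=1}^d\chi_{y_i}(z_i)\right],
\]
where $\supp(\gamma)=\{x_1<\cdots<x_d\}$ and $y_i=\gamma(x_i)\neq 0^n$. I then condition on the prefix $(z_1,\ldots,z_{d-1})$. Given the prefix, $z_d$ is uniform on $\bit^n\setminus\{z_1,\ldots,z_{d-1}\}$. This requires $d\le N$, so the complement has $N-d+1\ge 1$ elements, and the marginal of the prefix is uniform on $(\bit^n)^{(*(d-1))}$.

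Second, I compute the inner expectation with \cref{eqn: characsum}. Since $y_d\neq0^n$, the full sum $\sum_{z}\chi_{y_d}(z)$ vanishes, so
\[
\Exp_{z_d}\left[\chi_{y_d}(z_d)\right]=-\frac{1}{N-d+1}\sum_{i=1}^{d-1}\chi_{y_d}(z_i).
\]
Substituting this and using linearity of expectation gives $-\frac{1}{N-d+1}$ times a sum over $i\in[d-1]$ of expectations over the prefix. In the $i$-th term I merge $\chi_{y_d}(z_i)\chi_{y_i}(z_i)=\chi_{y_i+y_d}(z_i)$ using \cref{eqn: ortho}.

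Third, I identify each summand as $\hmuP(\gamma^{(i\leftarrow d)})$. There are two cases.
\begin{itemize}
\item If $y_i+y_d\neq0^n$, then $\supp(\gamma^{(i\leftarrow d)})=\{x_1,\ldots,x_{d-1}\}$. The summand is exactly \cref{eqn: simplifiaction} applied to $\gamma^{(i\leftarrow d)}$ with support listed in the same order. One also needs that \cref{eqn: simplifiaction} does not depend on the ordering of the support, which is clear.
\item If $y_i+y_d=0^n$, the factor at $z_i$ equals $1$, and the summand is the expectation over distinct $(z_j)_{j\in[d-1]}$ of $\prod_{j\neq i}\chi_{y_j}(z_j)$. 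Marginalizing out $z_i$ leaves a uniform tuple of $d-2$ distinct elements. This is again $\hmuP(\gamma^{(i\leftarrow d)})$ by \cref{eqn: simplifiaction}, now with support $\{x_j\}_{j\neq i,d}$.
\end{itemize}
When $d=2$ and the support becomes empty, the summand equals $1=\hmuP(0)$, which is consistent with $\muP^{=0}=\constone$. This recovers the degree-2 formula derived earlier.

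The only delicate point is this marginalization in the degree-dropping case, namely checking that deleting coordinates from a uniform injective tuple leaves a uniform injective tuple. This is immediate by symmetry, so I expect no real obstacle.
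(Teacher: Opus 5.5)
Your proposal is correct and follows the paper's argument essentially verbatim: you condition on the prefix $(z_1,\ldots,z_{d-1})$, use $\sum_z\chi_{y_d}(z)=0$ to rewrite the inner expectation as $-\frac{1}{N-d+1}\sum_{i<d}\chi_{y_d}(z_i)$, merge $\chi_{y_d}(z_i)\chi_{y_i}(z_i)=\chi_{y_i+y_d}(z_i)$, and read off $\hmuP(\gamma^{(i\leftarrow d)})$ via \cref{eqn: simplifiaction}. Your explicit handling of the degree-dropping case $y_i=y_d$ fills in a step the paper only states in passing. There you marginalize out $z_i$, which is equivalent to noting that \cref{eqn: simplifiaction} also holds with zero labels.
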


We are now ready to prove \cref{lem: deg d magnitude bound}.

\begin{proof}[Proof of \cref{lem: deg d magnitude bound}]
It is immediate that $M^{=0}[\muP]=1$ and $M^{=1}[\muP]=0.$
Now fix $2\le d\le N$, and let $\gamma\in\hF$ satisfy
$|\gamma|=d$. By \cref{lem: xoprecursion},
\[
    \hmuP(\gamma)
    =
    -\frac1{N-d+1}
    \sum_{i=1}^{d-1}
    \hmuP\!\left(\gamma^{(i\leftarrow d)}\right).
\]
For each $i\in[d-1]$, the character
$\gamma^{(i\leftarrow d)}$ has support size either $d-1$ or $d-2$.
Therefore,
\[
    \left|
        \hmuP\!\left(\gamma^{(i\leftarrow d)}\right)
    \right|
    \le
    \max\left\{
        M^{=(d-1)}[\muP],
        M^{=(d-2)}[\muP]
    \right\}.
\]
Since the recursion holds for every degree-$d$ character, taking
absolute values gives
\begin{align}\label{ineq: recursive}
    M^{=d}[\muP]
    \le
    \frac{d-1}{N-d+1}
    \max\left\{
        M^{=(d-1)}[\muP],
        M^{=(d-2)}[\muP]
    \right\}.
\end{align}
We now prove the claimed bound by induction on $d\ge 2$. If $d>N/2$, the bound is immediate as $M^{=d}[\muP]\le 1\le (2d/N)^{\lceil d/2\rceil}$.
So let $2\le d\le N/2$ and assume the claim for all smaller degrees. Since $2d/N\le1$ and $\lceil (d-1)/2\rceil\ge\lceil (d-2)/2\rceil$, the induction hypothesis gives
\begin{align*}
    &\max\left\{
        M^{=(d-1)}[\muP],
        M^{=(d-2)}[\muP]
    \right\}\\
    &\quad\le
    \max\left\{
        \left(\frac{2(d-1)}N\right)^{\lceil(d-1)/2\rceil},
        \left(\frac{2(d-2)}N\right)^{\lceil(d-2)/2\rceil}
    \right\}
    \le
    \left(\frac{2d}{N}\right)^{\lceil(d-2)/2\rceil}.
\end{align*}
Moreover, $\frac{d-1}{N-d+1}\le\frac{2d}{N}$ for $d\le N/2$.
Plugging both bounds into~\eqref{ineq: recursive},
\[
    M^{=d}[\muP]
    \le
    \frac{2d}{N}
    \left(\frac{2d}{N}\right)^{\lceil(d-2)/2\rceil}
    =
    \left(\frac{2d}{N}\right)^{\lceil d/2\rceil},
\]
this completes the induction and proves the lemma.
\end{proof}

\subsubsection{Weight of components}
We prove that, for $5\le d\le N/16$, it holds that
\[
    \sum_{|\gamma|=d}|\hmuP(\gamma)|^2
    \le
    3^{3d}
    \left(\frac Nd\right)^{\lfloor d/2\rfloor}.
\]

\begin{proof}[Proof of \cref{lem: deg d weight bound}]
By \cref{eqn: simplifiaction}, $\hmuP(\gamma)$ depends only on the labels of $\gamma$, so for a support $S$ of size $d$ the quantity
\[
    B_d:=\sum_{\supp(\gamma)=S}|\hmuP(\gamma)|^2
\]
does not depend on $S$, and $\sum_{|\gamma|=d}|\hmuP(\gamma)|^2=\binom Nd B_d$. We will show that
\begin{align}\label{eqn: Bd identity}
    B_d = \sum_{j=0}^d (-1)^{d-j}\binom{d}{j}\frac{N^j}{(N)_j}
\end{align}
and
\begin{align}\label{eqn: Bdupper}
    B_d \le 2^{d+1} \left(\frac{8d}N\right)^{\lceil d/2 \rceil}
    \qquad\text{for } 5\le d\le N/16.
\end{align}
Given~\eqref{eqn: Bdupper}, the lemma follows from $\binom Nd\le(eN/d)^d$ and $d-\lceil d/2\rceil=\lfloor d/2\rfloor$:
\begin{align*}
    \sum_{|\gamma|=d}|\hmuP(\gamma)|^2
    =
    \binom Nd B_d
    &\le
    2^{d+1}
    \left(\frac{eN}{d}\right)^d
    \left(\frac{8d}{N}\right)^{\lceil d/2\rceil}\\
    &=
    2^{d+1}8^{\lceil d/2\rceil}e^d
    \left(\frac Nd\right)^{\lfloor d/2\rfloor}
    \le
    3^{3d}
    \left(\frac Nd\right)^{\lfloor d/2\rfloor},
\end{align*}
where the last inequality uses $2^{d+1}8^{\lceil d/2\rceil}\le 8^d$ for $d\ge 5$ (equivalently, $d+1\le 3\lfloor d/2\rfloor$) and $8e<27$.
 
\paragraph{The identity \cref{eqn: Bd identity}.}
Fix $S=\{x_1,\ldots,x_d\}$ and, for $y=(y_1,\ldots,y_d)\in(\bit^n)^d$, let $\gamma_y\in\hF$ be the index with $\gamma_y(x_i)=y_i$ and $\gamma_y(x)=0^n$ for $x\notin S$; thus $\supp(\gamma_y)=\{x_i:y_i\neq 0^n\}$.
Let $\nu$ be the uniform distribution on $(\bit^n)^{(*d)}$, viewed as a probability mass function on $(\bit^n)^d$. By~\eqref{eqn: simplifiaction} (which does not require the $y_i$ to be nonzero),
\[
    \hmuP(\gamma_y)=\sum_{z\in(\bit^n)^d}\nu(z)\prod_{i=1}^d\chi_{y_i}(z_i)
    \qquad\text{for all }y\in(\bit^n)^d,
\]
i.e., $(\hmuP(\gamma_y))_{y}$ is $N^d$ times the Fourier transform of $\nu$ on the group $(\bit^n)^d$. Parseval's identity on $(\bit^n)^d$ therefore gives
\[
    \sum_{y\in(\bit^n)^d}|\hmuP(\gamma_y)|^2
    =N^{2d}\cdot\frac{1}{N^d}\sum_{z\in(\bit^n)^d}\nu(z)^2
    =N^d\cdot\frac{(N)_d}{(N)_d^2}
    =\frac{N^d}{(N)_d}.
\]
On the other hand, if exactly the coordinates in $J\subseteq[d]$ of $y$ are nonzero, then $\hmuP(\gamma_y)$ is the Fourier coefficient of a degree-$|J|$ index, so grouping the $y$ according to $J$ yields $\sum_{y\in(\bit^n)^d}|\hmuP(\gamma_y)|^2=\sum_{J\subseteq[d]}B_{|J|}=\sum_{j=0}^d\binom djB_j$, with $B_0:=1$. Hence $N^d/(N)_d=\sum_{j=0}^d\binom dj B_j$ for every $d\ge 0$, and binomial inversion gives~\cref{eqn: Bd identity}.
 
\paragraph{The bound~\cref{eqn: Bdupper}.}
The right-hand side of~\cref{eqn: Bd identity} resembles the binomial expansion of $(1-1)^d=0$, and we exploit the resulting cancellations. Recall the elementary fact that
\begin{align}\label{eqn: binomial cancellation}
    \sum_{j=0}^d
    (-1)^{d-j}\binom{d}{j}p(j)
    =
    0
    \qquad
    \text{for every polynomial $p$ with } \deg(p)<d.
\end{align}
Write
\[
    \frac{N^j}{(N)_j}
    =
    \prod_{\ell=0}^{j-1}\frac{1}{1-\ell/N}
    =
    u_j(1/N),
    \qquad
    u_j(t)
    :=
    \prod_{\ell=0}^{j-1}\frac{1}{1-\ell t}
    =
    \sum_{k\ge0}p_k(j)t^k,
\]
where the last expression is the Taylor expansion of $u_j$ at $t=0$, which converges absolutely for $|t|<1/(j-1)$, and in particular at $t=1/N$ for all $j\le d\le N/16$.
Comparing the coefficients of $t^k$ in $u_{j+1}(t)=u_j(t)/(1-jt)=u_j(t)\sum_{\ell\ge0}j^\ell t^\ell$ gives
\[
    p_k(j+1)
    =
    \sum_{\ell=0}^k
    j^\ell p_{k-\ell}(j),
    \qquad\text{i.e.,}\qquad
    p_k(j+1)-p_k(j)
    =
    \sum_{\ell=1}^k
    j^\ell p_{k-\ell}(j),
\]
with $p_0(j)=1$ and $p_k(0)=0$ for $k\ge 1$ (since $u_0=1$).
 
We claim that $j\mapsto p_k(j)$ is a polynomial of degree at most $2k$. This is clear for $k=0$. If it holds for all $k'<k$, then each summand $j^\ell p_{k-\ell}(j)$ on the right-hand side above has degree at most $\ell+2(k-\ell)\le 2k-1$, so the forward difference $p_k(j+1)-p_k(j)$ is a polynomial of degree at most $2k-1$; since $p_k(j)=\sum_{i<j}(p_k(i+1)-p_k(i))$ and partial sums of a polynomial of degree at most $2k-1$ form a polynomial of degree at most $2k$, the claim follows.
Consequently, by~\cref{eqn: Bd identity} and~\cref{eqn: binomial cancellation}, all terms with $2k<d$, i.e., with $k<\lceil d/2\rceil$, cancel:
\begin{align*}
    B_d
    &=
    \sum_{j=0}^d
    (-1)^{d-j}\binom{d}{j}
    \sum_{k\ge0}p_k(j)N^{-k} \\
    &=
    \sum_{k\ge0}N^{-k}
    \sum_{j=0}^d
    (-1)^{d-j}\binom{d}{j}p_k(j)\\
    &=
    \sum_{k\ge\lceil d/2\rceil}N^{-k}
    \sum_{j=0}^d
    (-1)^{d-j}\binom{d}{j}p_k(j).
\end{align*}
 
It remains to bound $|p_k(j)|$ for $j\le d$ and $k\ge\lceil d/2\rceil$. For $j\ge 2$, expanding each factor of $u_j(t)=\prod_{\ell=1}^{j-1}(1-\ell t)^{-1}$ as a geometric series shows that
\[
    p_k(j)
    =
    \sum_{\substack{
        s_1,\ldots,s_{j-1}\ge0\\
        s_1+\cdots+s_{j-1}=k
    }}
    \prod_{\ell=1}^{j-1}\ell^{s_\ell}
\]
is a sum of $\binom{j+k-2}{k}\le 2^{j+k-1}$ nonnegative terms, each at most $j^k$; hence $0\le p_k(j)\le j^k2^{j+k-1}$, and this also holds for $j\in\{0,1\}$ since $p_k(0)=p_k(1)=0$ for $k\ge 1$. For $j\le d\le 2k$ we get
\[
    |p_k(j)|\le d^k 2^{j+k-1}\le d^k2^{3k}=(8d)^k.
\]
Therefore, using $\sum_j\binom dj=2^d$ and $8d/N\le 1/2$,
\begin{align*}
    B_d
    \le
    \sum_{k\ge\lceil d/2\rceil}N^{-k}
    \sum_{j=0}^d
    \binom{d}{j}|p_k(j)|
    \le
    2^d
    \sum_{k\ge\lceil d/2\rceil}
    \left(\frac{8d}{N}\right)^k
    \le
    2^{d+1}
    \left(\frac{8d}{N}\right)^{\lceil d/2\rceil},
\end{align*}
which is \cref{eqn: Bdupper}.
\end{proof}

\else
\subsubsection{Higher-degree Components}
Here we provide an outline of the proof of the last two items of \cref{lem: xop_weights}.
We prove the following lemmas in \cref{app: magniweight}.
\begin{lemma}\label{lem: deg d magnitude bound}
For $1\le d\le N$, it holds that
    $M^{=d}[\muP]
    \le
    \left({2d}/{N}\right)^{\lceil d/2\rceil}.$
\end{lemma}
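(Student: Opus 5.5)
The plan is to prove the bound by strong induction on $d$, using a one-step recursion for the Fourier coefficients $\hmuP(\gamma)$ obtained by integrating out the last coordinate.

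Fix $\gamma$ with $|\gamma|=d$, $\supp(\gamma)=\{x_1<\dots<x_d\}$ and $y_i=\gamma(x_i)\neq 0^n$. By \cref{eqn: simplifiaction}, $\hmuP(\gamma)$ is the expectation of $\prod_i\chi_{y_i}(z_i)$ over distinct tuples.

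Conditioned on $z_1,\dots,z_{d-1}$, the last coordinate $z_d$ is uniform on the complement $\bit^n\setminus\{z_1,\dots,z_{d-1}\}$. Since $y_d\ne 0^n$, \cref{eqn: characsum} gives
\[
\Exp[\chi_{y_d}(z_d)\mid z_{<d}]=-\frac{1}{N-d+1}\sum_{i<d}\chi_{y_d}(z_i).
\]
Merging $\chi_{y_d}(z_i)\chi_{y_i}(z_i)=\chi_{y_i+y_d}(z_i)$ and reading the result as a Fourier coefficient via \cref{eqn: simplifiaction} again yields the recursion
\[
\hmuP(\gamma)=-\frac{1}{N-d+1}\sum_{i=1}^{d-1}\hmuP\bigl(\gamma^{(i\leftarrow d)}\bigr).
\]
Here $\gamma^{(i\leftarrow d)}$ is obtained from $\gamma$ by moving label $y_d$ onto $x_i$. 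Its degree is $d-1$, or $d-2$ when $y_i=y_d$.

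Taking absolute values gives
\[
M^{=d}[\muP]\le \frac{d-1}{N-d+1}\max\{M^{=(d-1)}[\muP],M^{=(d-2)}[\muP]\}.
\]

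The induction then runs as follows. The base cases are $M^{=0}=1$, and $M^{=1}=0$ from the degree-1 computation. For $d>N/2$ the claim is trivial, since $M\le 1\le (2d/N)^{\lceil d/2\rceil}$. For $2\le d\le N/2$, note that $2d/N\le 1$ and the exponents are monotone, so the two inductive bounds are both at most $(2d/N)^{\lceil (d-2)/2\rceil}$. Combined with $\frac{d-1}{N-d+1}\le\frac{2d}{N}$, valid for $d\le N/2$, this gives
\[
M^{=d}[\muP]\le (2d/N)^{\lceil d/2\rceil}.
\]

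The main obstacle is the exponent bookkeeping in the degree-$(d-1)$ branch. Gaining only one factor of $2d/N$ per step would give exponent $\lceil (d-1)/2\rceil+1$, which overshoots. The fix is to bound both branches by the weaker exponent $\lceil (d-2)/2\rceil$, using $2d/N\le1$; this still suffices, since $\lceil (d-2)/2\rceil+1=\lceil d/2\rceil$. One must also check that $M^{=1}=0$ makes the $d=2,3$ cases consistent, where the degree-$(d-1)$ or degree-$(d-2)$ term vanishes or equals $1$.
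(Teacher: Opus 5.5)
Your proposal is correct and takes essentially the same route as the paper. The paper also integrates out $z_d$ to get the one-step recursion onto degrees $d-1$ and $d-2$, disposes of $d>N/2$ trivially, and bounds both inductive branches by $(2d/N)^{\lceil (d-2)/2\rceil}$ using $2d/N\le 1$ before multiplying by $\frac{d-1}{N-d+1}\le\frac{2d}{N}$.
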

\begin{lemma}\label{lem: deg d weight bound}
For $5\le d\le N/16$, it holds that
    $\sum_{|\gamma|=d}|\hmuP(\gamma)|^2
    \le
    3^{3d}
    \left( N/d\right)^{\lfloor d/2\rfloor}.$
\end{lemma}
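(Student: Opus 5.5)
We prove \cref{lem: deg d weight bound} by first reducing to a single support. By \cref{eqn: simplifiaction}, the coefficient $\hmuP(\gamma)$ depends only on the labels $(y_1,\dots,y_d)$ and not on the support $\{x_1,\dots,x_d\}$. So define $B_d:=\sum_{\supp(\gamma)=S}|\hmuP(\gamma)|^2$ for a fixed $S$ of size $d$. Then
\[
\sum_{|\gamma|=d}|\hmuP(\gamma)|^2=\binom Nd B_d,
\]
and it suffices to show $B_d\le 2^{d+1}(8d/N)^{\lceil d/2\rceil}$ for $5\le d\le N/16$. Combining this with $\binom Nd\le (eN/d)^d$ and $d-\lceil d/2\rceil=\lfloor d/2\rfloor$ gives $2^{d+1}8^{\lceil d/2\rceil}e^d(N/d)^{\lfloor d/2\rfloor}$, and we finish with $d+1\le 3\lfloor d/2\rfloor$ for $d\ge5$ and $8e<27$.

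To get an exact formula for $B_d$, we allow zero labels. Let $\nu$ be the uniform distribution on $(\bit^n)^{(*d)}$. The formula of \cref{eqn: simplifiaction} still holds when some $y_i=0^n$, so $(\hmuP(\gamma_y))_{y\in(\bit^n)^d}$ is $N^d$ times the Fourier transform of $\nu$ on $(\bit^n)^d$. Parseval on $(\bit^n)^d$ then gives
\[
\sum_y|\hmuP(\gamma_y)|^2=\frac{N^d}{(N)_d}.
\]
Now group the $y$ by the set $J$ of nonzero coordinates; each group contributes $B_{|J|}$. This yields $N^d/(N)_d=\sum_j\binom dj B_j$ with $B_0=1$, and binomial inversion gives
\[
B_d=\sum_{j=0}^d(-1)^{d-j}\binom dj\frac{N^j}{(N)_j}.
\]

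The main obstacle is extracting the cancellation in this alternating sum: each term is of size $O(1)$, but the answer should be $N^{-\lceil d/2\rceil}$. We plan to expand
\[
\frac{N^j}{(N)_j}=u_j(1/N)=\prod_{\ell<j}(1-\ell t)^{-1}\Big|_{t=1/N}=\sum_k p_k(j)N^{-k}.
\]
The key claim is that $p_k(j)$ is a polynomial in $j$ of degree at most $2k$. We prove it by induction on $k$, using the recursion $p_k(j+1)-p_k(j)=\sum_{\ell\ge1}j^\ell p_{k-\ell}(j)$, obtained from $u_{j+1}=u_j/(1-jt)$. The right-hand side has degree at most $2k-1$, and summing a forward difference raises the degree by one. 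Since the $d$-th finite difference kills every polynomial of degree $<d$, all terms with $k<\lceil d/2\rceil$ vanish; the exchange of sums is justified by absolute convergence, as $d\le N/16$.

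For the remaining terms we bound the coefficients crudely. Expanding each factor as a geometric series, $p_k(j)$ is a sum of at most $\binom{j+k-2}{k}\le 2^{j+k-1}$ monomials $\prod_\ell \ell^{s_\ell}$, each at most $j^k$. Since $j\le d\le 2k$, this gives $|p_k(j)|\le (8d)^k$. Summing over $j$ contributes a factor $2^d$, and the geometric series in $8d/N\le 1/2$ contributes a factor $2$, giving $B_d\le 2^{d+1}(8d/N)^{\lceil d/2\rceil}$ as required.
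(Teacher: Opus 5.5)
Your proposal is correct and follows the paper's proof essentially step for step. The reduction to $B_d$ for a fixed support, the Parseval and binomial-inversion identity $B_d=\sum_{j}(-1)^{d-j}\binom dj N^j/(N)_j$, the expansion $N^j/(N)_j=\sum_k p_k(j)N^{-k}$ with $\deg p_k\le 2k$ (which cancels all terms with $k<\lceil d/2\rceil$), and the crude bound $|p_k(j)|\le(8d)^k$ are exactly the paper's argument, including the same final constants.
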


Given these lemmas, we can prove an upper bound of the degree-$d$ component of $\xop[r]$ for $5\le d \le d_{\max}$ following the same approach in \cref{eqn: deg4pointwise} as follows
\begin{align*}
    \|\muXoPr^{=d}\|_2^2
    &=
    \sum_{|\gamma|=d}|\hmuP(\gamma)|^{2r}\le
    \left(M^{=d}[\muP]\right)^{2r-2}
    \sum_{|\gamma|=d}|\hmuP(\gamma)|^2
    \\
    &\le
    \left(\frac{2d}{N}\right)^{
        (2r-2)\lceil d/2\rceil
    } \cdot
    3^{3d}
    \left(\frac Nd\right)^{\lfloor d/2\rfloor}.
\end{align*}
For small even $d$, this becomes $3^{3d} 2^{(r-1)d} \left(\frac{d}{N}\right)^{\frac{(2r-3)d}{2}} $, and in particular for $d=6$
\[
O(N^{-(r-1.5)d}) = O(N^{-6r+9})
\]
proving the fifth item. The last item is proven by bounding $\|\muXoPr^{=d}\|_2^2$ by a certain geometric progression based on the above idea.
The full proof of the last item, formally written in \cref{lem: higher degree tail}, is given in \cref{app: higher}.

\fi
\section{Few Quantum Query Security via Planted Collisions}\label{sec:bound2}
This section proves the following lemma.
\begin{lemma}\label{len: small q bound}
If $n\ge 5, 2\le r\le \frac N{16}$, and $q\le \frac N{57774}$, then
\[
\Adv^{\mathsf{dist}}_q(\DXoPr,\dist{\F}) = O\left(\frac{q^3}{N^{r}}\right).
\]
\end{lemma}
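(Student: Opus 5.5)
We would start from \cref{eqn: advantage by inner product} and split the sum over degrees into the low degrees $d\in\{2,3,4,6\}$ and all the rest. For the rest we use Cauchy--Schwarz together with the last item of \cref{lem: xop_weights} (taking $d_{\max}=2q\le N/28887$). This gives a contribution $O(N^{-(3r-4)})$. Since $q\ge 1$ and $r\ge 2$, this is at most $O(q^3/N^r)$: indeed $3r-4\ge r$ exactly when $r\ge 2$. The degree-1 term vanishes. The degree-3 and degree-6 terms are also cheap by Cauchy--Schwarz, since $\|\muXoPr^{=3}\|_2=O(N^{-(2r-2.5)})$ and $\|\muXoPr^{=6}\|_2=O(N^{-(3r-4.5)})$. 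For $r\ge 3$ both are at most $N^{-r}$. For $r=2$ they give only $N^{-1.5}$, so in that case we instead treat them via planted 3-collisions or 6-structures, as sketched below. The main work is the degree-2 and degree-4 terms, whose norms $N^{-(r-1.5)}$ and $N^{-(2r-3)}$ are too large.

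For degree 2, \cref{eqn: XoPdeg2} and the computation in the overview give
\[
\muXoPr^{=2}(f)=\frac{(-1)^r}{(N-1)^r}\sum_{x_1<x_2}\bigl(N\truth{f(x_1)=f(x_2)}-1\bigr)=\frac{(-1)^r\binom N2}{(N-1)^{r}}\bigl(\muPC(f)-1\bigr),
\]
where $\muPC$ is the density of the planted-collision distribution $\DPC$: pick a uniform pair $x_1<x_2$ and sample $f$ uniformly conditioned on $f(x_1)=f(x_2)$. Since $\muPC-1$ also has degree $\le 2$, \cref{thm: advantage by inner product} gives
\[
|\inner{P_\Alg,\muXoPr^{=2}}|=\Theta(N^{2-r})\cdot\Adv^{\mathsf{dist}}(\DPC,\dist{\F};\Alg).
\]
So it suffices to show $\Adv_q(\DPC,\dist\F)=O(q^3/N^2)$. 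For this we use Zhandry's small-range bound $O(q^3/R)$ for $D_R=g\circ h$ with $h:\bit^n\to[R]$. We expand the density of $D_R$ according to the collision pattern of $h$. With no collision, the pattern gives density $1$. With exactly one collision pair, it gives $\muPC$, with probability $\approx\binom N2/R$. Patterns with $k$ pairs or triples occur with probability $O((N^2/R)^k)$ or $O(N^3/R^2)$. Writing $\mu_{D_R}-1=\Pr[\text{one pair}](\muPC-1)+\text{(rest)}$, we get
\[
\Pr[\text{one pair}]\,\Adv(\DPC,\F)\le \frac{Cq^3}{R}+\Adv(\text{rest}),
\]
where the rest has advantage at most its total variation, $O(N^4/R^2)$. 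Taking $R=cN^2$ with a large constant $c$ only gives a constant probability for the one-pair pattern and a constant-size error, which is not good enough. Instead, the right move is to let $R\to\infty$ and divide by $N^2/R$. Then the rest contributes $O(N^2/R)\to0$ relative to the main term, and we get $\Adv(\DPC,\F)\le O(q^3/N^2)$. We expect this limiting argument to be the main step.

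For degree 4, \cref{eqn:deg4permutation} gives
\[
\muXoPr^{=4}=\sum_\gamma\left(\frac{NC_{2,2}-6C_4}{(N-1)_3}\right)^r\chi_\gamma .
\]
We expand the $r$-th power. Because the indicators take values in $\{0,1\}$, the expansion only produces terms in $C_{2,2}$-patterns (which also carry $C_4$ automatically, since $y_i=y_j$ and $y_k=y_\ell$ force $\sum y=0$) and in pure $C_4$. Summing characters as in the degree-2 computation turns each piece into a scalar multiple of a centered density:
\begin{itemize}
  \item two planted collisions $f(x_1)=f(x_2)$, $f(x_3)=f(x_4)$;
  \item a planted 4-XOR $\sum f(x_i)=0$.
\end{itemize}
In both cases we then subtract the lower-degree parts, i.e. project onto the exact degree-4 part by inclusion–exclusion over single planted collisions. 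We bound the two-collision advantage by a hybrid: plant one collision at a time and apply the $\DPC$ bound conditionally, getting $O(q^3/N^2)$ each. The 4-XOR distribution is the XOR of two independent $\DPC$-type objects, so the same Zhandry-style argument applies, or simply convexity. Multiplied by the prefactors $N^{r}\cdot N^{4-3r}$ and $N^{4-3r}$, these come out as $O(q^3 N^{4-2r})\cdot N^{-2}\le O(q^3/N^r)$ for $r\ge2$.

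Degrees 3 and 6 when $r=2$ are handled the same way, via planted 3-collisions, which are dominated by a single collision using the same small-range expansion. The main obstacle is getting exact identities for these degree-4 and degree-6 projections and keeping the lower-degree correction terms controlled.
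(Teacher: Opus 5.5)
Your overall architecture matches the paper: the split into degrees $\{2,3,4,6\}$ versus the rest, the Cauchy--Schwarz tail, and the degree-2 identity $\muXoPr^{=2}=\frac{(-1)^rN}{2(N-1)^{r-1}}(\muPC-\constone)$ followed by the $\cS_R$ decomposition with $R\to\infty$ are all correct. Your $r\ge3$ shortcut is also valid, and it covers degree 4 as well, since $\|\muXoPr^{=4}\|_2=O(N^{-(2r-3)})=O(N^{-r})$ there. So beyond degree 2 the whole difficulty is the case $r=2$, and that is exactly where your arguments for degrees 3, 4 and 6 break down.

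\textbf{Degree 4.} You claim the expansion of $(NC_{2,2}-6C_4)^r$ yields only $C_{2,2}$ and $C_4$ terms. This is false, because $C_{2,2}$ is not $\{0,1\}$-valued: it equals $3$ when $y_1=y_2=y_3=y_4$. Hence $C_{2,2}^k=C_{2,2}+(3^k-3)E$ with $E(\gamma):=\truth{y_1=y_2=y_3=y_4}$, and
$(N-1)_3^r\,\hmuP(\gamma)^r=a_rC_{2,2}+b_rE+(-6)^rC_4$.
Here $b_r=(3N-6)^r-3(N-6)^r+2(-6)^r$ (e.g.\ $b_2=6N^2$), which is of the same order $N^r$ as $a_r$. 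So your expansion drops a main term.

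You have also attached the planted structures to the wrong label patterns. All-equal labels dualize to the 4-XOR: $\sum_{|\gamma|=4}E(\gamma)\chi_\gamma=\frac{(N)_4}{24}(\mufoX-\constone)$. Labels summing to $0^n$ (the $C_4$ pattern) dualize instead to a 4-collision-type density. The $C_4$ term carries only the constant $(-6)^r$, so a plain norm bound kills it. For $r=2$, however, the $E$ term contributes about $\frac14\inner{\mufoX-\constone,P_\Alg}$, so you genuinely need $\Adv^{\mathsf{dist}}_q(\DfoX,\dist{\F})=O(q^3/N^2)$.

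Your justification for that bound fails. The XOR of two independent $\DPC$ samples with distinct planted pairs is exactly uniform, because the subgroups $\{g:g(a_1)=g(a_2)\}$ and $\{g:g(b_1)=g(b_2)\}$ already sum to all of $\F$. Convexity gives nothing either. The missing idea is the paper's output splitting. Take a random permutation $\pi$ and fresh uniform $R_j$, and set $h(\pi(j\|0))=R_j$ and $h(\pi(j\|1))=R_j+g(j\|0)$. If the planted pair of $g$ lies in $\{j\|0\}$ (probability at least $1/6$), then $h\sim\DfoX$; otherwise $h$ is uniform. This gives $\Adv^{\mathsf{dist}}_q(\DfoX,\dist{\F})\le 6\,\Adv^{\mathsf{dist}}_{2q}(\DPC,\dist{\F})$.

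\textbf{Degrees 3 and 6 at $r=2$.} Degree 3 reduces to $\binom{N-1}{2}^r\muXoPr^{=3}=\binom N3\bigl[(\muthPC-\constone)-3(\muPC-\constone)\bigr]$, so you need a planted-3-collision bound. The small-range expansion cannot supply it. In $\cS_R$ a colliding triple appears only with probability about $\binom N3/R^2$, which is swamped by the $27q^3/R$ error of Zhandry's bound, so letting $R\to\infty$ isolates only the single-pair term.

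What is missing is a search reduction. Given a Boolean oracle $b$ on $[N]$, set $h_b=v$ on $\{a_1,a_2\}\cup b^{-1}(1)$ and $h_b=R$ elsewhere. The all-zero oracle yields $\DPC$, and a single marked point yields $\frac{N-2}{N}\DthPC+\frac2N\DPC$. The polynomial-method bound $O(q^2/N)$ for detecting one marked point with $2q$ queries then gives $\Adv^{\mathsf{dist}}_q(\DthPC,\DPC)=O(q^2/N)$.

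For degree 6, 3-collisions are not the relevant structure. The $N^{2r}C_{2,2,2}$ part dualizes to three disjoint planted pairs, which the same hybrid as for two pairs handles. The remainder comes from $6NC_{2,4}+4NC_{3,3}-120C_6=O(N)$ and is bounded in norm after splitting by the value of $C_{2,2,2}$. When $C_{2,2,2}=1$, use $|(x+y)^r-x^r|\le r|y|(|x|+|y|)^{r-1}$.
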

Recall that the query-independent $\ell_2$-norm bounds in \cref{lem: xop_weights}: For $d=5$ and $7 \le d \le N/28887$, that lemma give $O(N^{-3r+4})=O(N^{-r})$ upper bounds. 
We give the following alternative query-dependent bounds for $d=2,3,4,6$ in the subsections. 

\begin{lemma}\label{lem:bounds query dependent collisions}
    If $n\ge 5, 2\le r\le \frac N{16}$, and $q\le \frac N{57774}$, then
    $$|\inner{\muXoPr^{=d},P_\Alg}| = O\left(\frac{q^3}{N^r}\right)$$
    for $d=2,3,4,6.$
\end{lemma}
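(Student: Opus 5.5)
The plan is to reinterpret each low-degree component $\muXoPr^{=d}$, $d\in\{2,3,4,6\}$, as a linear combination of density differences $\mu_{\D}-\constone$ for ``planted'' distributions $\D$. Then $\inner{\muXoPr^{=d},P_\Alg}$ becomes a weighted sum of distinguishing advantages, which \cref{cor: advantage by inner product} lets us treat as genuine advantages.

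For $d=2$ the identity from the overview is already available. By \cref{eqn: XoPdeg2} and \cref{eqn: characsum},
\[
\muXoPr^{=2}(f)=\frac{(-1)^r}{(N-1)^r}\sum_{x_1<x_2}\left(N\truth{f(x_1)=f(x_2)}-1\right)=\frac{(-1)^r\binom N2}{(N-1)^{r}}\left(\muPC(f)-1\right),
\]
where $\DPC$ plants a collision at a uniformly random pair. Since $\muPC-1$ is orthogonal to constants, we get $|\inner{\muXoPr^{=2},P_\Alg}|=O(N^{2-r})\cdot\Adv(\DPC,\dist\F;\Alg)$. It therefore suffices to prove $\Adv_q(\DPC,\dist\F)=O(q^3/N^2)$.

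To bound this advantage, use Zhandry's small-range result $\Adv_q(D_R,\dist\F)=O(q^3/R)$. Take $R=cN^2$ and decompose $D_R$ according to the collision pattern of $h:\bit^n\to[R]$. With no collision, $g\circ h$ is uniform and contributes nothing. With exactly one colliding pair, which has probability $\Theta(1)$ for $R=\Theta(N^2)$, $g\circ h$ is exactly $\DPC$. With two or more collisions, the probability is a small constant $\epsilon(c)$, and the advantage is at most $1$. Applying this naively only gives $\Theta(1)\cdot\Adv(\DPC)\le O(q^3/N^2)+\epsilon$, which is useless, so the higher-order terms must be handled more carefully.

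The better route is to write $\mu_{D_R}-1=\sum_k \Pr[k\text{ collisions}]\,(\mu_{k}-1)$ and compare two ranges, $R$ and $2R$ (or let $R\to\infty$ with proper normalization). The ratio of one-collision to multi-collision weight is then controlled. More cleanly, work in Fourier: the degree-2 component of $\mu_{D_R}$ is exactly $\Pr[\text{single planted pair}]$-weighted collision statistics plus degree-2 parts of multi-collision patterns, which are themselves pair statistics with smaller coefficients. So $(\mu_{D_R})^{=2}=\alpha_R(\muPC-1)$ with $\alpha_R=\Theta(N^2/R)$.

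This calls for a degree projection argument. Whether $\inner{(\mu_{D_R})^{=2},P_\Alg}$ is itself bounded by an advantage is the main obstacle, since projections do not preserve advantages. I would circumvent it by choosing a linear combination of $D_R$ over several $R$ (Richardson-style extrapolation in $1/R$) that cancels the $O(N^4/R^2)$ contributions of two-collision patterns. This leaves $\muPC-1$ up to $O(1/R)$-relative error terms whose total advantage is bounded by $O(q^3/R)$ times bounded coefficients.

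The degrees $3,4,6$ follow the same scheme using the formulas \cref{eqn:deg3permutation} and \cref{eqn:deg4permutation}. Degree 3 becomes a planted 3-collision density (a scaled version of $f(x_1)=f(x_2)=f(x_3)$ minus pair terms), with prefactor $\binom N3 N^{2}/\binom{N-1}2^r=O(N^{5-2r})$. Degree 4 splits into a two-planted-collisions part, weighted by $N^{r}/(N-1)_3^{r}\cdot N^4$, and a planted 4-XOR part. Degree 6 is handled analogously via \cref{lem: xoprecursion}. Each planted structure is reduced to $\DPC$ by hybrid arguments that plant one collision at a time (each step costing $O(q^3/N^2)$ up to conditioning), and the 4-XOR constraint is reduced by viewing $f(x_1)+f(x_2)=f(x_3)+f(x_4)$ as a collision of the pair-sum function. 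The prefactors then give $O(q^3/N^r)$ in every case since $r\ge2$.
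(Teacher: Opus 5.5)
Your overall plan is the paper's. You express each $\muXoPr^{=d}$ through differences $\mu_{\D}-\constone$ of planted distributions, bound $\Adv^{\mathsf{dist}}_q(\DPC,\dist{\F})=O(q^3/N^2)$ via Zhandry's small-range theorem, and reduce the other planted structures to $\DPC$ by simulation arguments. Your pair-sum view of the 4-XOR is exactly the paper's splitting reduction. Several steps are off, however.

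\textbf{The $\DPC$ bound.} The obstacle you describe does not exist, and the Fourier-projection and Richardson detours should be dropped. From $\mu_R-\constone=p_1(\muPC-\constone)+p_2(\rho_R-\constone)$ with $p_1\ge N^2/(8R)$ and $p_2\le N^4/(8R^2)$ (for $R\ge N^2$), you get $|\langle\muPC-\constone,P_\Alg\rangle|\le(27q^3/R+p_2)/p_1\le 216q^3/N^2+N^2/R$, and letting $R\to\infty$ finishes. This is your own parenthetical, and it is the paper's proof. Any fixed $R=cN^2$, with or without extrapolation, leaves a $q$-independent additive error that vanishes only as $R\to\infty$.

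\textbf{Degree 3.} The natural hybrid that merges a fresh point into $g$'s unknown planted pair creates a triple only with probability $\Theta(1/N)$. So that step costs $O(q^3/N)$, not $O(q^3/N^2)$. This still suffices, but only with the right normalization: $\muXoPr^{=3}=\binom N3\binom{N-1}{2}^{-r}\bigl((\muthPC-\constone)-3(\muPC-\constone)\bigr)$, a prefactor $\Theta(N^{3-2r})$. Your $N^{5-2r}$ counts the $N^2$ already inside $\muthPC$; if it multiplies a normalized advantage, the bound fails at $r=2$. The paper instead gets $\Adv^{\mathsf{dist}}_q(\DthPC,\DPC)\le 64q^2/(N-2)$ from a polynomial-method search bound.

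\textbf{Degree 6 is the genuine gap.} Here ``analogously'' hides the main work. Write $(N-1)_5\hmuP(\gamma)=-N^2C_{2,2,2}(\gamma)+R(\gamma)$ with $R(\gamma)=O(N)$. Only the piece $(-N^2)^rC_{2,2,2}(\gamma)$ of the $r$-th power is a three-planted-collisions term. The remainder $S(\gamma)$ behaves as follows.
\begin{itemize}
\item On labels with $C_{2,2,2}\ge2$ it is $\Theta(N^{2r})$. These are four equal labels plus two equal labels, or all six equal; in function space, a planted 4-XOR together with a collision, or a planted 6-XOR.
\item On $C_{2,2,2}=1$ it is $\Theta(N^{2r-1})$. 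This includes patterns $(a,a,b,b,a+b,a+b)$, whose function-space counterpart is $f(x_1)+f(x_2)=f(x_3)+f(x_4)=f(x_5)+f(x_6)$.
\end{itemize}
None of these is a union of disjoint collisions. Bounding them only by advantage at most $1$ gives contributions $O(N^{6-3r})$ and $O(N^{5-3r})$, which are not $O(q^3/N^2)$ at $r=2$.

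The paper does not reduce these to planted structures at all. It splits by $C_{2,2,2}\in\{0,1,\ge2\}$ and applies Cauchy--Schwarz with $\|P_\Alg\|_2\le1$, using that the label patterns are sparse: $O(N^3)$ labels per support when $C_{2,2,2}=1$, and $O(N^2)$ when $C_{2,2,2}\ge2$. This yields $O(N^{6-4r})+O(N^{3.5-3r})+O(N^{4-3r})$. You need this step, the same device that handles the $C_4$ term at degree 4. The alternative is explicit reductions for each of these structures, which your proposal does not supply.

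\textbf{Degree 4.} The 4-XOR does not come from $C_4$; that term is a 4-collision with the tiny weight $(-6)^r$. It comes from the cross term in $C_{2,2}^r=C_{2,2}+(3^r-3)E$, with weight $\Theta(N^r)$. A naive $r$-th power of $NC_{2,2}-6C_4$ misses it.
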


\begin{proof}[Proof of \cref{len: small q bound}]
    Let $\Alg$ be a $q$-query algorithm. As shown in \cref{eqn: advantage bound by degree}, \cref{cor: advantage by inner product} and \cref{eqn: inner product decomposition by degree} gives
    \[
    \Adv^{\mathsf{dist}}(\DXoPr,\dist{\F};\Alg) \le \sum_{d=1}^{2q} \inner{\muXoPr^{=d},P_{\Alg}^{=d}}.
    \]
    By decomposing the above term for $d=2,3,4,6$ and the others, we have the following upper bound
    \[
    \left(\sum_{d=2,3,4,6} \inner{\muXoPr^{=d},P_{\Alg}^{=d}} \right) +
        \sum_{d=5, 7 \le d \le 2q} \|\muXoPr^{=d}\|_2 \|P_{\Alg}^{=d}\|_2
    \]
    where the second term is bounded above by the Cauchy–Schwarz as in \cref{ineq: xop1}
    \[
    \left(\sum_{d=5, 7 \le d \le 2q}  \|\muXoPr^{=d}\|_2^2 \right)^{\frac{1}{2}}
    = O\left(\frac{1}{N^{3r-4}}\right) 
    = O\left(\frac{1}{N^{r}}\right) 
    \]
    where we use the last inequality of \cref{lem: xop_weights} and $r\ge 2$. The remainder terms are bounded by $O(q^3/N^r)$ \cref{lem:bounds query dependent collisions} for $d=2,3,4,6$, we have the overall upper bound $O(q^3/N^r)$.
\end{proof}

\subsection{Preparation: Distributions with planted constraints}\label{subsec: prep}

We will use the indistinguishability of the small-range distributions \cite{zhandry2021construct}.
\begin{definition}[Small-range distributions]\label{def:small_range}
For an integer $R\ge 1$, the distribution $\cS_R$ over $\F=\F_n$ is sampled as follows:
\begin{enumerate}[nosep]
    \item sample a uniformly random function $f:\bit^n\to[R]$ and let $I$ be its image;
    \item sample a uniformly random function $g:I\to\bit^n$;
    \item output $h:=g\circ f$.
\end{enumerate}
\end{definition}
Note that $\mathcal{S}_{\infty}$ is identical to the distributions of random functions.
The following result is proven in \cite{zhandry2021construct}.
\begin{theorem} \label{thm:smallrange}
$\Adv^{\mathsf{dist}}_q(\mathcal{S}_R,\mathcal{S}_{\infty}) \le \frac{27q^3}{R}.$
\end{theorem}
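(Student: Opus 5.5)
The plan is to reprove Zhandry's bound with the polynomial method of \cref{sec:fourier}, using a polynomial in the single variable $t=1/R$ rather than in the oracle.

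\textbf{Step 1: $\Pr_{h\gets\cS_R}[\Alg^{O_h}\to1]$ is a polynomial in $1/R$.} Fix a $q$-query algorithm $\Alg$ and let $P_\Alg$ be its representing polynomial from \cref{thm: Prob by functionals}. Then
\[
p(R):=\Pr_{h\gets\cS_R}[\Alg^{O_h}\to1]=\sum_{|\gamma|\le2q}\widehat P_\Alg(\gamma)\,\Exp_{h\gets\cS_R}[\chi_\gamma(h)].
\]
Fix $\gamma$ with support $\{x_1,\dots,x_d\}$, where $d\le 2q$.

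\begin{itemize}
\item Condition on the set partition $\pi$ of $\{x_1,\dots,x_d\}$ induced by the fibres of $f$.
\item Given $\pi$, the values of $g$ on the distinct images are independent and uniform. So $\Exp[\chi_\gamma(g\circ f)\mid\pi]$ equals the indicator that $\sum_{x\in B}\gamma(x)=0^n$ for every block $B$ of $\pi$. This does not depend on $R$.
\item The probability of a fixed partition with $k$ blocks is $(R)_k/R^d=\prod_{i<k}(1-i/R)$.
\end{itemize}

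Each such probability is a polynomial in $t=1/R$ of degree at most $k-1<2q$. Summing over $\gamma$ and $\pi$ gives a polynomial $\tilde p(t)$ of degree at most $D:=2q$ with $\tilde p(1/R)=p(R)$ for every positive integer $R$. Since the partition into singletons has probability tending to $1$, we get $\tilde p(0)=\Pr_{h\gets\cS_\infty}[\Alg^{O_h}\to1]$. Finally, $\tilde p(1/R)\in[0,1]$ for all integers $R\ge1$.

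\textbf{Step 2: a Markov-type polynomial lemma.} It suffices to prove the following claim.
\begin{quote}
If $\tilde p$ has degree at most $D$ and $\tilde p(1/R)\in[0,1]$ for all integers $R\ge1$, then $|\tilde p(1/R)-\tilde p(0)|\le c\,D^3/R$.
\end{quote}
We may assume $R\ge cD^3$, since otherwise the bound exceeds $1$ and is trivial.

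First I will show that $\tilde p$ is bounded by an absolute constant (say $2$) on the whole interval $[0,1/R_0]$ with $R_0=\Theta(D^2)$. The sample points $1/R$ with $R\ge R_0$ are spaced at most $1/R_0^2$ apart. After rescaling $[0,1/R_0]$ to $[-1,1]$, this spacing becomes $O(1/D^2)$.

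The standard Ehlich--Zeller / Coppersmith--Rivlin argument applies here. Let $M$ be the maximum of $|\tilde p|$ on the interval. Markov's inequality gives $|\tilde p'|\le D^2 M\cdot R_0\cdot O(1)$, and every point lies within $1/R_0^2$ of a sample point. Hence $M\le 1+O(D^2/R_0)\,M$, and choosing $R_0$ to be a large constant times $D^2$ yields $M\le2$.

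Applying Markov's inequality once more on $[0,1/R_0]$ gives $|\tilde p'|\le 2D^2\cdot 2R_0=O(D^4)$. This would only yield $D^4/R$, so instead I will use the sharper endpoint form of Markov's inequality. I will also use the Bernstein-type weighting near the endpoint $0$: the derivative bound improves like $D/\sqrt{t(1/R_0-t)}$. Integrating that bound from $0$ to $1/R$ is too lossy as stated.

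The cleaner route I would try first is Zhandry's own. Substitute $t=s^2$, or equivalently note that the relevant grid is $1/R$, and compare with Chebyshev extremal polynomials. This gives $|\tilde p(t)-\tilde p(0)|=O(D^2\cdot D\,t)$.

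Plugging in $D=2q$ and tracking constants should give $27q^3/R$.

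\textbf{Main obstacle.} Step 1 is routine. The real difficulty is the polynomial lemma in Step 2, specifically getting the $D^3$ rather than $D^4$ dependence together with an explicit constant. For this step I would follow Zhandry's inductive argument on the interval $[0,1/R_0]$ with $R_0\approx D^2$ closely.
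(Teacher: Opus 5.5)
\textbf{The gap is in Step 2.} This step carries the whole content of the theorem, and you do not prove it. Moreover, the tools you set up cannot prove it.

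Everything you establish concerns $\tilde p$ on $[0,1/R_0]$ with $R_0=\Theta(D^2)$: first boundedness there, then Markov or Bernstein there. No argument using only this information can beat $D^4/R$. Take $\tfrac12\bigl(1+T_D(1-2R_0t)\bigr)$. It lies in $[0,1]$ on all of $[0,1/R_0]$, in particular at every grid point $1/R$ with $R\ge R_0$. Yet its derivative at $0$ is $-D^2R_0=-\Theta(D^4)$, so $|p(1/R)-p(0)|=\Theta(D^4/R)$ for $R\gg D^4$.

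What excludes this polynomial is that it violates the constraint at the sparse grid points $1/s$ with $s<R_0$; at $t=1$ it has size $R_0^{\Theta(D)}$. Your plan never uses those constraints. An $O(D^3)$ derivative bound via Markov would need boundedness on an interval of length $\Theta(1/D)$. That is exactly where your bootstrap $M\le 1+O(D^2/R_0)M$ collapses, because the grid spacing there is $\Theta(D^{-2})$.

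The suggested repair does not help. Since $T_D(1-2R_0s^2)=\pm T_{2D}(\sqrt{R_0}\,s)$, substituting $t=s^2$ maps the same extremal example to a Chebyshev polynomial. Bernstein or Markov applied to the even polynomial $\tilde p(s^2)$ therefore returns the same $O(D^2R_0\,t)$ bound. So the asserted $O(D^2\cdot D\,t)$ and the constant $27$ are unsupported.

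The missing ingredient is Zhandry's polynomial lemma: if $\tilde p$ has degree $D$ and $\tilde p(1/R)\in[0,1]$ for all integers $R\ge1$, then $|\tilde p(1/R)-\tilde p(0)|=O(D^3/R)$ with an explicit constant. You must either cite it, as the paper does for the whole theorem, or prove it by an argument that genuinely uses the constraints at $1/s$ for $s\lesssim D^2$. The loss is not cosmetic. With $q^4/R$ in place of $q^3/R$, the planted-collision bound would degrade to $O(q^4/N^2)$, and the degree-2 XoP bound to $O(q^4/N^r)$.

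\textbf{Step 1.} This is the standard first half of Zhandry's argument; the paper itself has no proof to compare with, since it imports the theorem. It is right up to one slip. The probability that $f$ induces a given $k$-block partition on a $d$-element support is $(R)_k/R^d=t^{d-k}\prod_{i<k}(1-it)$, not $\prod_{i<k}(1-it)$. With your formula every partition would have probability $1$ at $t=0$, which contradicts your identification of $\tilde p(0)$ with the $\mathcal S_\infty$ acceptance probability. With the factor $t^{d-k}$ restored:
\begin{itemize}
\item each term has degree at most $d-1\le 2q-1$;
\item only the all-singletons partition survives at $t=0$;
\item hence $\tilde p(0)=\widehat P_{\mathcal A}(0)$, as required.
\end{itemize}
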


We use the several distributions of functions with some specific constraints. 
First one is the distribution $\D_{\PC}$ of random functions with a planted collision, which is defined as follows:
\begin{description}
    \item[$\D_{\PC}$:] Sample $x_1 \neq x_2$ from $\bit^n$. Then, sample a uniform random $f\gets \F$ such that $f(x_1) = f(x_2)$.
\end{description}
It is not hard to see that its density function is computed by
\begin{align}
    \label{eqn: PCdensity}
\muPC(f)=\binom N2^{-1}\sum_{\{x_1,x_2\}}N\truth{f(x_1)=f(x_2)}.
\end{align}

We use the following observation.
\begin{lemma}\label{lem: decomposing small to pc}
    Let $N\ge 4$ and $R \ge N^2$. 
    For $h\gets \cS_R$ which is defined by $h=g \circ f$, let $C_f:=|\{\{x,x'\}:x\neq x',\,f(x)=f(x')\}|$.
    Then
    \[
    \Pr[C_f=1]\ge \frac{N^2}{8R} , \qquad \Pr[C_f\ge 2]\le\frac{N^4}{8R^2}.
    \]
    Moreover, conditioned on $C_f=0$, the distribution of $h$ is identical to the uniform distribution $\dist{F}$. Conditioned on 
    $C_f=1$, the distribution of $h$ is $\dist{\PC}$.
\end{lemma}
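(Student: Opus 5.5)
We need to prove that $C_f=1$ has probability at least $N^2/(8R)$ and $C_f\ge 2$ has probability at most $N^4/(8R^2)$, together with the two conditional-distribution statements. The approach is a first-moment and second-moment (Bonferroni) computation on the collision pairs of $f$, followed by a direct symmetry argument for the conditional laws.

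\textbf{Probability bounds.} Write $C_f=\sum_{\{x,x'\}}\truth{f(x)=f(x')}$, with $\binom N2$ pairs, each colliding with probability $1/R$.

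Then $\Exp[C_f]=\binom N2/R$.

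For the upper bound, observe that $C_f\ge2$ forces at least two distinct colliding pairs. So
\[
\Pr[C_f\ge2]\le \Exp\left[\binom{C_f}{2}\right]=\sum_{\{P,P'\},\,P\neq P'}\Pr[\text{both } P, P' \text{ collide}].
\]
Each such joint event has probability exactly $1/R^2$, both when $P,P'$ are disjoint and when they share a point (in the latter case, three inputs must map to the same value). Hence
\[
\Pr[C_f\ge 2]\le \binom{\binom N2}{2}\frac{1}{R^2}\le \frac{N^4}{8R^2}.
\]

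For the lower bound, use the Bonferroni inequality $\truth{C_f=1}\ge C_f-2\binom{C_f}{2}$, which can be checked case by case in $C_f\in\{0,1,2,\dots\}$. This gives
\[
\Pr[C_f=1]\ge \binom N2\frac1R-\frac{N^4}{4R^2}.
\]
Since $N\ge4$, we have $\binom N2=\frac{N(N-1)}{2}\ge \frac{3N^2}{8}$. Since $R\ge N^2$, we have $\frac{N^4}{4R^2}\le \frac{N^2}{4R}$. Therefore
\[
\Pr[C_f=1]\ge \frac{3N^2}{8R}-\frac{2N^2}{8R}=\frac{N^2}{8R}.
\]
The only real care needed is getting these constants to land on $1/8$.

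\textbf{Conditional distributions.} The key point is that, conditioned on $f$, the function $g$ is uniform on $I=\mathrm{Im}(f)$, and $h=g\circ f$ depends only on the partition of $\bit^n$ induced by the fibers of $f$.

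If $C_f=0$, then $f$ is injective. The values $g(f(x))$ are then independent and uniform, one for each $x$, so $h$ is uniform on $\F$.

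If $C_f=1$, the partition consists of a single block $\{x_1,x_2\}$ and singletons elsewhere. In this case $h$ is uniform among functions with $h(x_1)=h(x_2)$, with all other values free and independent. It remains to check that the colliding pair is uniformly distributed over the $\binom N2$ pairs given $C_f=1$. This holds because the law of $f$ is invariant under permutations of the input set $\bit^n$, so no pair is favored. Averaging over the pair then gives exactly $\DPC$, matching \cref{eqn: PCdensity}.
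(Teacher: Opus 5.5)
Your proposal is correct and follows essentially the same route as the paper. It uses the moments $\Exp[C_f]=\binom N2/R$ and $\Exp\bigl[\binom{C_f}{2}\bigr]=\binom{\binom N2}{2}/R^2$, the inequalities $\truth{C\ge 2}\le\binom C2$ and $\truth{C=1}\ge C-2\binom C2$ with the same constants, and the same independence-plus-symmetry argument for the two conditional laws (your check of $\binom N2\ge 3N^2/8$ and the input-permutation invariance just make explicit what the paper leaves implicit).
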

\begin{proof}
    Observe that $\Pr[f(x)=f(x')]=1/R$ for any $x\neq x',$ and any two distinct pairs $(x,x')$ and $(z,z')$ collide simultaneously with probability $1/R^2$, even when they share an input. Thus
\[
\Exp[C_f]=\frac{\binom N2}{R},
\qquad
\Exp\left[\binom{C_f}{2}\right]
=\frac{\binom{\binom N2}{2}}{R^2}
\le \frac{N^4}{8R^2}.
\]
Using $\truth{C\ge 2}\le\binom C2$ and
$\truth{C=1}\ge C-2\binom C2$ for nonnegative integers $C$, we obtain
\[
\Pr[C_f\ge 2]\le \frac{N^4}{8R^2},
\qquad
\Pr[C_f=1]
\ge \frac{N(N-1)}{2R}-\frac{N^4}{4R^2}
\ge \frac{N^2}{8R}.
\]
where the last inequality uses $N\ge 4$ and $R \ge N^2.$

If $C_f=0$, the values of $h$ are independent and uniform because of $g$.
If $C_f=1$, the colliding pair of $f$ is uniform by symmetry, thus 
the conditional distribution is exactly $\D_{\PC}$.
\end{proof}

\begin{lemma}\label{lem: planted_collision}
    $\Adv^{\mathsf{dist}}_q(\D_{\PC},\dist{\F}) =O(q^3/N^2)$.
\end{lemma}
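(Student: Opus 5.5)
We derive the bound directly from Zhandry's small-range bound (\cref{thm:smallrange}) together with the decomposition in \cref{lem: decomposing small to pc}. Fix a $q$-query algorithm $\Alg$ and write $p_{\mathcal{D}} := \Pr_{h\gets\mathcal D}[\Alg^{O_h}\to1]$. We work with $\cS_R$ for some $R\ge N^2$ to be chosen later, and condition on the random variable $C_f$. Set $\alpha_0=\Pr[C_f=0]$, $\alpha_1=\Pr[C_f=1]$ and $\alpha_{\ge2}=\Pr[C_f\ge2]$, and let $p_{\ge2}$ be the acceptance probability conditioned on $C_f\ge2$. Since $\cS_R$ conditioned on $C_f=0$ is exactly $\dist{\F}$, and conditioned on $C_f=1$ is exactly $\DPC$, we get
\[
p_{\cS_R}-p_{\dist{\F}} = \alpha_1\,(p_{\DPC}-p_{\dist{\F}}) + \alpha_{\ge2}\,(p_{\ge2}-p_{\dist{\F}}).
\]
The $C_f=0$ contribution cancels exactly, because $\alpha_0+\alpha_1+\alpha_{\ge2}=1$.

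Rearranging and applying \cref{thm:smallrange}, which gives $|p_{\cS_R}-p_{\dist{\F}}|\le 27q^3/R$ because $\cS_\infty=\dist{\F}$, we obtain
\[
|p_{\DPC}-p_{\dist{\F}}| \le \frac{1}{\alpha_1}\left(\frac{27q^3}{R}+\alpha_{\ge2}\right).
\]
Now we insert the estimates of \cref{lem: decomposing small to pc}, namely $\alpha_1\ge N^2/(8R)$ and $\alpha_{\ge2}\le N^4/(8R^2)$. This yields
\[
|p_{\DPC}-p_{\dist{\F}}|\le \frac{8R}{N^2}\cdot\frac{27q^3}{R}+\frac{8R}{N^2}\cdot\frac{N^4}{8R^2} = \frac{216q^3}{N^2}+\frac{N^2}{R}.
\]

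The only potential obstacle is the error term from $C_f\ge2$. It contributes $N^2/R$, which is harmless because the first term does not depend on $R$. We therefore let $R\to\infty$, or simply pick $R\ge N^4/q^3$ (for $q\ge1$, which in any case satisfies $R\ge N^2$). This makes the second term at most $q^3/N^2$, giving $|p_{\DPC}-p_{\dist{\F}}|\le 217q^3/N^2$. The case $q=0$ is trivial, since the advantage is then $0$. Maximizing over all $q$-query $\Alg$ gives $\Adv^{\mathsf{dist}}_q(\DPC,\dist{\F})=O(q^3/N^2)$.

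The hypothesis $N\ge4$ that \cref{lem: decomposing small to pc} requires is harmless: for smaller $N$ the claimed bound is at least a constant whenever $q\ge1$, so it holds trivially.
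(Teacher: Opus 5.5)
Your proof is correct and follows the paper's argument essentially verbatim. Like the paper, you decompose $\cS_R$ for $R\ge N^2$ by conditioning on $C_f$, cancel the $C_f=0$ part, apply Zhandry's $27q^3/R$ bound, and divide by $\Pr[C_f=1]\ge N^2/(8R)$ to obtain $216q^3/N^2+N^2/R$. One small slip: $R\ge N^4/q^3$ does not imply $R\ge N^2$ when $q^3>N^2$, so take $R\ge\max\{N^2,N^4/q^3\}$, or simply let $R\to\infty$ as the paper does.
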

\begin{proof}
    Fix a $q$-query algorithm $\Alg$. Recall $\inner{\dens{D},P_{\mathcal A}} =\Pr_{h\gets\mathcal D}[\Alg^h=1] $ for any oracle distribution $\mathcal D$.

    Let $r \ge N^2$ and choose $h \gets\cS_R$.
    Let $p_0,p_1,p_2$ be the probabilities of $C_f=0$, $C_f=1$, and $C_f\ge 2$, respectively, as defined in \cref{lem: decomposing small to pc}.
    By the same lemma, there exists a distribution $\mathcal R_R$ such that, writing
$\mu_R$ and $\rho_R$ for the densities of $\cS_R$ and
$\mathcal R_R$, respectively,
\begin{align}
    \label{eqn:smallrange decomposition}
\mu_R=p_0\cdot \constone+p_1\cdot \muPC+p_2\cdot \rho_R,
\qquad
p_1\ge \frac{N^2}{8R},
\qquad
p_2\le \frac{N^4}{8R^2}
\end{align}
where $\constone$ denotes the density function of $\D_F$.
Since $p_0+p_1+p_2=1$, we have
\[
\mu_R-\constone=p_1(\muPC-\constone)+p_2(\rho_r-\constone).
\]
Taking inner products with $P_{\mathcal A}$ and applying the
triangle inequality gives
\[
p_1\left|\inner{\muPC-\constone,P_{\mathcal A}}\right|
\le
\left|\inner{\mu_R-\constone,P_{\mathcal A}}\right|
+p_2\left|\inner{\rho_R-\constone,P_{\mathcal A}}\right| \le \frac{27q^3}{R} + p_2
\]
where we use \cref{thm:smallrange} for the first term and $\left|\inner{\rho_R-\constone,P_{\mathcal A}}\right|\le 1$ for the second term.
This proves 
\[
\left|\inner{\muPC-\constone,P_{\mathcal A}}\right|
\le
\frac{8R}{N^2}
\left(\frac{27q^3}{R}+\frac{N^4}{8R^2}\right)
=
\frac{216q^3}{N^2}+\frac{N^2}{R}.
\]
Taking $R$ sufficiently large, we have the same bound $O(q^3/N^2)$ for all $q$-query algorithms $\Alg$. This completes the proof.
\end{proof}

\subsection{Degree-2 component as planted collision}
We first prove \cref{lem:bounds query dependent collisions} for $d=2$.

\begin{lemma}\label{lem: xopdeg2_smallq}
    $|\inner{\muXoPr^{=2},P_\Alg}| = O\left(\frac{q^3}{N^r}\right).$
\end{lemma}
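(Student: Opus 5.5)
The plan is to express $\muXoPr^{=2}$ as a scalar multiple of $\muPC-\constone$ and then invoke \cref{lem: planted_collision}.

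First, use \cref{eqn: XoPdeg2}. Sum over $y\neq 0^n$ exactly as in the overview, using \cref{eqn: characsum}:
\begin{align*}
\muXoPr^{=2}(f)
&=\frac{(-1)^r}{(N-1)^r}\sum_{x_1<x_2}\sum_{y\neq 0^n}(-1)^{\inner{y,f(x_1)+f(x_2)}}\\
&=\frac{(-1)^r}{(N-1)^r}\sum_{x_1<x_2}\left(N\truth{f(x_1)=f(x_2)}-1\right).
\end{align*}
Next, compare this with the density in \cref{eqn: PCdensity}. There we have
\[
\muPC(f)-1=\binom N2^{-1}\sum_{x_1<x_2}\left(N\truth{f(x_1)=f(x_2)}-1\right).
\]
Hence
\[
\muXoPr^{=2}=\frac{(-1)^r\binom N2}{(N-1)^r}\,(\muPC-\constone)=\frac{(-1)^rN}{2(N-1)^{r-1}}(\muPC-\constone).
\]
As a sanity check, this is consistent with $\muPC-\constone$ being purely of degree $2$. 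Its degree-$0$ part vanishes, and the computation above shows it lies in the span of degree-$2$ characters.

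Then take the inner product with $P_\Alg$. By \cref{thm: advantage by inner product}, $\inner{\muPC-\constone,P_\Alg}$ is exactly the advantage of $\Alg$ in distinguishing $\DPC$ from $\dist{\F}$. So
\[
|\inner{\muXoPr^{=2},P_\Alg}|=\frac{N}{2(N-1)^{r-1}}\Adv^{\mathsf{dist}}(\DPC,\dist{\F};\Alg)\le\frac{N}{2(N-1)^{r-1}}\cdot O\!\left(\frac{q^3}{N^2}\right)
\]
by \cref{lem: planted_collision}. Finally, use $N/(N-1)^{r-1}\le 2^{r-1}N^{2-r}$, which holds since $N-1\ge N/2$. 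This gives $O(q^3/N^r)$, with the constant depending only on $r$.

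The only delicate step is the identity linking $\muXoPr^{=2}$ to $\muPC-\constone$, in particular getting the normalization $\binom N2$ against $(N-1)^r$ right. The rest is a direct chain of earlier results, since the heavy lifting (the reduction to the small-range bound of \cref{thm:smallrange}) is already done in \cref{lem: planted_collision}.
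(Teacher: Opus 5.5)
Your proposal is correct and matches the paper's proof essentially step for step. Like the paper, you use \cref{eqn: XoPdeg2}, \cref{eqn: characsum}, and \cref{eqn: PCdensity} to rewrite $\muXoPr^{=2}$ as $\frac{(-1)^rN}{2(N-1)^{r-1}}(\muPC-\constone)$, and then apply \cref{lem: planted_collision} to the resulting planted-collision advantage; your final bound on $N/(N-1)^{r-1}$ is fine under the paper's convention that hidden constants may depend on $r$.
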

\begin{proof}
We start from the degree-2 component represented in \cref{eqn: XoPdeg2}.
The equation shows that it suffices to consider the degree-2 $\gamma\in \hF$ with 
the support $\{x_1<x_2\}$ satisfying $\gamma(x_1)=\gamma(x_2)$. Letting $\gamma(x_1)=\gamma(x_2)=y$ and plugging $f$,
\begin{align*}
    \muXoPr^{=2}&(f)
    =\sum_{\supp(\gamma)=\{x_1<x_2\}} \frac{(-1)^r\truth{\gamma(x_1)=\gamma(x_2)} \chi_\gamma(f)}{(N-1)^r}\\
    &
    =\sum_{x_1< x_2 ; y\neq 0^n} \frac{(-1)^r}{(N-1)^r} \chi_y (f(x_1)+f(x_2)) 
    \\
    &
    =\sum_{x_1 < x_2} \frac{(-1)^r}{(N-1)^r} (N \cdot \truth{f(x_1)=f(x_2)} -\constone)
    =\frac{(-1)^r\cdot N}{2(N-1)^{r-1}} \left(\muPC(f)  - \constone\right) 
\end{align*}
where we use \cref{eqn: characsum} in the third equality, and use \cref{eqn: PCdensity} in the last equality.
Therefore, 
\[
|\inner{\muXoPr^{=2},P_\Alg}| = \frac{ N}{2(N-1)^{r-1}}  \left|\inner{\muPC,P_\Alg} - \inner{\constone,P_\Alg} \right|.
\]
The right hand side is the distinguishing advantage between the planted collision function and random functions up to the multiplicative factor $\frac{ N}{2(N-1)^{r-1}}.$
Plugging \cref{lem: planted_collision} here gives the desired upper bound.
\end{proof}

\subsection{Degree-3 component as planted-three-collision}
To bound the degree-3 component, we need to define $\DthPC$ of random functions with a planted 3-collision, which is defined as follows:
\begin{description}
    \item[$\DthPC$:] Sample three distinct $x_1,x_2,x_3$ from $\bit^n$. Then, sample a uniform random $f\gets \F$ such that $f(x_1) = f(x_2)= f(x_3)$. 
\end{description}
Its density function is
    \[
        \muthPC(f)=\frac{1}{(N)_{3}}\sum_{(x_1,x_2,x_3)\in(\bit^n)^{*3}}N^2\,\truth{f(x_1)=f(x_2)=f(x_3)}.
    \]
We prove the following lemma in \cref{app: collision}.
\begin{lemma}\label{lem: muthpc_smallq}
        $\Adv^{\mathsf{dist}}_q(\DthPC,\dist{\F}) =O(q^2/N)$.
\end{lemma}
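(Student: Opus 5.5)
I would mirror the proof of \cref{lem: planted_collision}, decomposing a small-range distribution according to the collision structure of its inner function. The target bound $q^2/N$ is weaker than for $\DPC$, so we can afford a cruder argument. For three-collisions the natural range is $R\approx N^{3/2}$, but fixing $R$ and letting it tend to infinity, as before, does not work directly because pair collisions dominate. So I would instead relate $\DthPC$ to $\DPC$ by one further ``collision planting'' step, and use the already proven bound $O(q^3/N^2)$ for $\DPC$ together with a direct argument.

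Concretely, write $f\gets\DthPC$ as follows. Sample $f_0\gets\DPC$ with planted pair $\{x_1,x_2\}$, pick a uniform $x_3\notin\{x_1,x_2\}$, and overwrite $f(x_3):=f_0(x_1)$. This gives exactly $\DthPC$. Comparing with $\F$, I would use the hybrid $\F\to\DPC\to\DthPC$. The first step costs $O(q^3/N^2)\le O(q^2/N)$ whenever $q\le N$, which holds in our range (and for $q\ge N$ the bound $q^2/N$ is trivial). The second step is reprogramming a single uniformly random point $x_3$, which is unknown to the adversary, from an independent uniform value to the value $f_0(x_1)$. Since $x_3$ is uniform among $N-2$ points, the standard one-point reprogramming/hybrid argument (BBBV-style: the total query magnitude on a random point is $q/N$ per query) bounds the change in acceptance probability by $O(q\sqrt{q/N})$ in general. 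Via the squared-amplitude bound it is $O(q^2/N)$ in the distinguishing sense. Precisely, $\mathbb{E}_{x_3}\|\,|\psi\rangle-|\psi'\rangle\|^2\le O(q^2/N)$, and the advantage is bounded by the trace distance averaged over $x_3$.

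The main obstacle is getting $q^2/N$ rather than $q/\sqrt{N}$ from the reprogramming step, because the trace distance is linear in the amplitude difference, not quadratic. To fix this, I would instead use the small-range route: take $\cS_R$ with $R$ large and condition on the event that $f$ has exactly one three-collision and no other collisions. Or, more cleanly, use \cref{thm: advantage by inner product} in Fourier form. Expand $\muthPC-\constone$ into its degree-2 and degree-3 parts: the degree-2 part is a multiple $\Theta(1)$ of $\muPC-\constone$, and the degree-3 part is $\frac{1}{N}$-scaled triple-XOR characters. The degree-2 part is then bounded by \cref{lem: planted_collision} as $O(q^3/N^2)\le O(q^2/N)$. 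For the degree-3 part, I would bound $|\inner{\cdot,P_\Alg}|$ through the $\ell_2$ norm restricted to $P_\Alg$ via Cauchy--Schwarz, and fall back on the reprogramming bound applied to a carefully chosen two-distribution difference if the norm is too large. I expect this degree-3 term to be the real work.
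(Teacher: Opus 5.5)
Your hybrid $\dist{\F}\to\DPC\to\DthPC$ is the right skeleton, and so is the view of $\DthPC$ as ``overwrite $f(x_3):=f_0(x_1)$ at a uniform $x_3\notin\{x_1,x_2\}$''. The first step is fine for $q\le N$. The second step, however, carries all the content of the lemma, and it is never proven.

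The BBBV/reprogramming argument bounds $\Exp_{x_3}\|\ket{\psi}-\ket{\psi'}\|^2=O(q^2/N)$. This does not give an $O(q^2/N)$ advantage. The difference of acceptance probabilities contains the cross term $2\operatorname{Re}\bra{\psi}\Pi\bigl(\ket{\psi'}-\ket{\psi}\bigr)$, which is linear in the amplitude difference, so you only get $O(q/\sqrt N)$. Any argument that passes through this squared norm alone, e.g.\ via fidelity, loses the same square root. Your sentence ``via the squared-amplitude bound it is $O(q^2/N)$ in the distinguishing sense'' is therefore unjustified, as you then concede.

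Neither fallback repairs this.
\begin{itemize}
\item \emph{Small-range route.} For every $R$, two disjoint colliding pairs occur with probability $\approx N^4/(8R^2)$. That is a factor $\Theta(N)$ more than a single triple collision, which occurs with probability $\approx N^3/(6R^2)$. So the triple can never be isolated.
\item \emph{Fourier route.} The decomposition is correct: $\muthPC-\constone=3(\muPC-\constone)+D_3$ with $D_3=\binom N3^{-1}\sum_{|\gamma|=3,\ y_1+y_2+y_3=0^n}\chi_\gamma$. But $\|D_3\|_2^2=(N-1)(N-2)/\binom N3=6/N$, so Cauchy--Schwarz yields only the $q$-independent bound $\sqrt{6/N}$. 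That is useless for $q\ll N^{1/4}$. Moreover, bounding $\inner{D_3,P_\Alg}$ by $O(q^2/N)$ is precisely what the paper needs this lemma for, since the XoP degree-3 component is a scalar multiple of $D_3$. So ``falling back on reprogramming'' is circular.
\end{itemize}

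The missing idea is that reprogramming at a uniformly random \emph{unknown} point is an instance of the \emph{decisional} search problem. That problem is to distinguish the all-zero Boolean function from the indicator $b_z$ of a uniformly random $z\in[N]$. Its $q$-query advantage is $O(q^2/N)$ rather than $O(q/\sqrt N)$, because averaging over $z$ suppresses the first-order term.

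The paper proves this with the polynomial method. It symmetrizes the acceptance probability into a univariate polynomial $p$ of degree at most $2q$ with $0\le p(k)\le 1$ on $\{0,\dots,N\}$. It then applies the Ehlich--Zeller/Rivlin--Cheney Markov-type inequality to get $|p(1)-p(0)|\le 16q^2/N$.

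Your overwrite construction is then realized through the Boolean oracle. Sample distinct $a_1,a_2$, a value $v$, and $R\gets\F$. Set $h_b(x)=v$ if $x\in\{a_1,a_2\}$ or $b(x)=1$, and $h_b(x)=R(x)$ otherwise; each $h_b$-query costs two $b$-queries. Then $b\equiv 0$ gives $\DPC$, and $b=b_z$ gives $\frac{N-2}{N}\DthPC+\frac2N\DPC$. Hence $\Adv^{\mathsf{dist}}_q(\DthPC,\DPC)\le 64q^2/(N-2)$, and your triangle inequality finishes the proof.
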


We actually prove the following stronger bound than $O(q^3/N^r)$.
\begin{lemma}\label{lem: xopdeg3_smallq}
    $|\inner{\muXoPr^{=3},P_\Alg}| = O\left(\frac{q^2}{N^r}\right).$
\end{lemma}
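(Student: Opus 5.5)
We mirror the proof of \cref{lem: xopdeg2_smallq}: rewrite $\muXoPr^{=3}$ as a scalar multiple of $\muthPC-\constone$ plus lower-order corrections, then apply \cref{lem: muthpc_smallq}.

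\textbf{Step 1: express the degree-3 component.} By \cref{eqn:xopconvol} and \cref{eqn:deg3permutation}, for $|\gamma|=3$ with $\supp(\gamma)=\{x_1<x_2<x_3\}$ and labels $y_1,y_2,y_3\neq 0^n$, we have
\[
\hmuXoPr(\gamma)=\truth{y_1+y_2+y_3=0^n}\binom{N-1}{2}^{-r}.
\]
Hence
\[
\muXoPr^{=3}(f)=\binom{N-1}{2}^{-r}\sum_{x_1<x_2<x_3} S(f;x_1,x_2,x_3),
\]
where
\[
S(f;x_1,x_2,x_3)=\sum_{\substack{y_1,y_2\neq0^n,\ y_1\neq y_2}}\chi_{y_1}\bigl(f(x_1)+f(x_3)\bigr)\chi_{y_2}\bigl(f(x_2)+f(x_3)\bigr),
\]
using $y_3=y_1+y_2$ to rewrite the character.

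\textbf{Step 2: evaluate $S$ by inclusion--exclusion.} Write $a=f(x_1)+f(x_3)$ and $b=f(x_2)+f(x_3)$. Summing over all $y_1,y_2$ gives $N^2\truth{a=0}\truth{b=0}$. We then subtract the terms with $y_1=0$, $y_2=0$, or $y_1=y_2$, each contributing $N\truth{\cdot}$, and add back the overlap $y_1=y_2=0$ (with the appropriate multiplicity). The result is
\[
S=N^2\truth{f(x_1)=f(x_2)=f(x_3)}-N\bigl(\truth{f(x_1)=f(x_3)}+\truth{f(x_2)=f(x_3)}+\truth{f(x_1)=f(x_2)}\bigr)+2.
\]
Summing over the triples, the first term is proportional to $\muthPC$ and the pair-collision terms are proportional to $\muPC$. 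Each pair $\{x_i,x_j\}$ lies in $N-2$ triples, so
\[
\sum_{x_1<x_2<x_3}S=\binom N3\bigl(\muthPC-3\muPC+2\cdot\constone\bigr)=\binom N3\bigl[(\muthPC-\constone)-3(\muPC-\constone)\bigr].
\]
We will double-check the constants by confirming that the degree-0 part vanishes, as it must for a degree-3 component.

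\textbf{Step 3: bound the inner product.} Taking the inner product with $P_\Alg$ gives
\[
|\inner{\muXoPr^{=3},P_\Alg}|\le \binom N3\binom{N-1}{2}^{-r}\Bigl(\Adv^{\mathsf{dist}}(\DthPC,\dist{\F};\Alg)+3\,\Adv^{\mathsf{dist}}(\DPC,\dist{\F};\Alg)\Bigr).
\]
The prefactor is $\Theta(2^{r}N^{3-2r})$. \cref{lem: muthpc_smallq} bounds the 3-collision advantage by $O(q^2/N)$, which contributes $O(N^{2-2r}q^2)=O(q^2/N^r)$ since $2r-2\ge r$ for $r\ge2$. The planted-collision term contributes $O(N^{3-2r}q^3/N^2)=O(q^3N^{1-2r})$. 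This is at most $O(q^2/N^r)$ because $q\le N$ and $r\ge 2$ give $qN^{1-2r}\le N^{2-2r}\le N^{-r}$. Alternatively, we can use the trivial bound $1$ together with \cref{lem: planted_collision}.

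\textbf{Main obstacle.} The main obstacle is getting the decomposition in Step 2 exactly right: the pair-collision terms must appear only in the combination $\muPC-\constone$, with no stray constant left over. The hidden constants depend on $r$, through factors like $2^r$, which the $O(\cdot)$ notation allows.
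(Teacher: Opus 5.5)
Your proposal is correct and follows the paper's proof essentially step for step. It uses the same substitution $y_3=y_1+y_2$ and the same per-triple value $N^2\truth{f(x_1)=f(x_2)=f(x_3)}-N(\truth{f(x_1)=f(x_2)}+\truth{f(x_1)=f(x_3)}+\truth{f(x_2)=f(x_3)})+2$; the paper obtains this value by subtracting the diagonal $\sum_{y\neq0^n}\chi_y(f(x_1)+f(x_2))$ from the product of two sums over nonzero labels, rather than by a full inclusion--exclusion. It then uses the same identity $\binom N3[(\muthPC-\constone)-3(\muPC-\constone)]$ and the same final appeal to \cref{lem: muthpc_smallq,lem: planted_collision} with $q\le N$ and $r\ge2$. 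Your closing ``alternatively'' remark is not needed, and the trivial bound $1$ on its own would not suffice for $r=2$ when $q\ll\sqrt N$.
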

\begin{proof}
For $\supp(\gamma)=\{x_1<x_2<x_3\}$, write $y_i=\gamma(x_i)$. 
Recall
\[
\muXoPr(\gamma) = \hmuP(\gamma)^r = \frac{
    \truth{y_1+y_2+y_3=0^n}
    }{\binom{N-1}{2}^r}.
\]
Expanding the degree-3 component with plugging $f$ becomes
\begin{align*}
    \binom{N-1}{2}^r\muXoP^{=3}(f)
    &=\sum_{\supp(\gamma)=\{x_1<x_2<x_3\}} \truth{y_1+y_2+y_3=0^n}\chi_\gamma(f)\\
    &=\sum_{\substack{x_1<x_2<x_3\\
        y_1\neq0^n,\ y_2\neq0^n,\ y_1\neq y_2}} \chi_{y_1}(f(x_1))\chi_{y_2}(f(x_2))\chi_{y_1+y_2}(f(x_3))
    \\
    &=\sum_{\substack{x_1<x_2<x_3\\
        y_1\neq0^n,\ y_2\neq0^n,\ y_1\neq y_2}} \chi_{y_1}(f(x_1)+f(x_3))\chi_{y_2}(f(x_2)+f(x_3)).
\end{align*}
For each $x_1,x_2,x_3$, observe that
\begin{align*}
    &\sum_{y_1\neq0^n,\ y_2\neq0^n,\ y_1\neq y_2} \chi_{y_1}(f(x_1)+f(x_3))\chi_{y_2}(f(x_2)+f(x_3))\\
    &=\sum_{y_1\neq0^n,\ y_2\neq0^n} \chi_{y_1}(f(x_1)+f(x_3))\chi_{y_2}(f(x_2)+f(x_3))
      -\sum_{y\neq0} \chi_y(f(x_1)+f(x_2))\\
    &=
        (N\truth{f(x_1)=f(x_3)}-\constone)
        (N\truth{f(x_2)=f(x_3)}-\constone)-(N\truth{f(x_1)=f(x_2)}-\constone)
\end{align*}
where we use \cref{eqn: characsum} in the last equality. Noting that $\truth{f(x_1)=f(x_3)}\truth{f(x_2)=f(x_3)} = \truth{f(x_1)=f(x_2)=f(x_3)}$, the overall sum becomes
\begin{align*}
    &\sum_{x_1<x_2<x_3}
      \Big(
        N^2\truth{f(x_1)=f(x_2)=f(x_3)}
    \\
    &\quad
        -N\truth{f(x_1)=f(x_2)}
        -N\truth{f(x_1)=f(x_3)}
        -N\truth{f(x_2)=f(x_3)}+2\cdot \constone
      \Big)
    \\
    &=\binom{N}{3}\bigl((\muthPC(f)-\constone)-3(\muPC(f)-\constone)\bigr).
\end{align*}

Hence, $|\inner{\muXoPr^{=3},P_\Alg}| $ can be decomposed into a linear sum of the distinguishing advantage of the planted 3-collision and the planted collision functions from random functions up to the multiplicative factor $\binom{N}{3} / \binom{N-1}{2}^r$. This gives
\[
|\inner{\muXoPr^{=3},P_\Alg}| \le 
O\left(\frac{2^r}{N^{2r-3}}\right) \cdot \left(O\left(\frac{q^2}{N} \right) + O\left(\frac{q^3}{N^2} \right)\right) = O\left(\frac{q^2}{N^r}\right)
\]
where we omit the factor regarding $r$ and use $q\le N$ and \cref{lem: planted_collision,lem: muthpc_smallq}.
\end{proof}

\subsection{Degree-4 component as planted two collisions plus small terms}
We define the distributions of random functions with multiple planted collisions to bound the degree-4 component.
For $k \in \mathbb N$, they are defined as follows.
\begin{description}
    \item[$\D_{\PC, k}$:] Sample a tuple $(x_1,\dots,x_{2k})\gets(\bit^n)^{*(2k)}$ of distinct inputs. Then sample a uniform random $f\gets\F$ such that $f(x_{2j-1})=f(x_{2j})$ for all $j\in[k]$. 
    \item[$\DfoX$:] Sample four distinct $x_1,\ldots,x_4$ from $\bit^n$. Then, sample a uniform random $f\gets\F$ such that $f(x_1)+f(x_2)+f(x_3)+f(x_4)=0^n$. 
\end{description} 
The density functions of these distributions can be computed as follows
\begin{align*}
    \mu_{\PC, k}(f)&=\frac{1}{(N)_{2k}}\sum_{(x_1,\dots,x_{2k})\in(\bit^n)^{*(2k)}}N^k\prod_{j=1}^k\truth{f(x_{2j-1})=f(x_{2j})},\\
    \mufoX(f)&=\frac{1}{(N)_{4}}\sum_{(x_1,\dots,x_4)\in(\bit^n)^{*4}}N\,\truth{f(x_1)+f(x_2)+f(x_3)+f(x_4)=0^n}.
\end{align*}

Note that $\DPC=\D_{\PC, 1}$
and $\D_{\PC, 0}=\dist{\F}$. 
Similarly to the previous distributions, 
we prove the following two lemmas in \cref{app: collision}.
\begin{lemma}\label{lem: multi_planted}
    $\Adv^{\mathsf{dist}}_q(\D_{\PC,k},\dist{\F}) =O(q^3/N^2)$ for $k=2,3.$
\end{lemma}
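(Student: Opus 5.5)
I would prove \cref{lem: multi_planted} by mimicking the proof of \cref{lem: planted_collision}: compare the small-range distribution $\cS_R$ with a random function, now in the regime where the inner function $f:\bit^n\to[R]$ has exactly $k$ collision pairs. The target is to write $\mu_R-\constone$ as a linear combination of $\mu_{\PC,j}-\constone$ for $j=1,\dots,k$ plus a remainder. Then we solve for $\mu_{\PC,k}-\constone$ by induction on $k$, using the already proven bound $O(q^3/N^2)$ for $j<k$.

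\textbf{Step 1 (decomposition).} As in \cref{lem: decomposing small to pc}, call a pair $\{x,x'\}$ with $f(x)=f(x')$ a collision, and let $E_j$ be the event that $f$ has exactly $j$ collisions and these $j$ pairs are pairwise disjoint (no triple collisions). Conditioned on $E_j$, the collision pairs form a uniformly random set of $j$ disjoint pairs by symmetry, and $g$ is uniform on the image. Hence $h=g\circ f$ is distributed exactly as $\D_{\PC,j}$. Write $p_j=\Pr[E_j]$ and let $p_{\mathrm{rest}}$ be the probability of all other configurations. By moment computations like those in \cref{lem: decomposing small to pc} (second and third factorial moments of $C_f$, plus the probability $O(N^3/R^2)$ of a triple collision), I expect
\[
p_j=\Theta\left(\left(\tfrac{N^2}{R}\right)^j\right)\quad (j\le k),\qquad p_{\mathrm{rest}}=O\left(\left(\tfrac{N^2}{R}\right)^{k+1}+\tfrac{N^3}{R^2}\right).
\]
The constants in the $\Theta$ depend only on $k$. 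Taking $R=cN^2$ for a small enough constant $c$ will not suffice; the regime has to be chosen more carefully, as discussed in Step 3.

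\textbf{Step 2 (inversion).} This gives
\[
\mu_R-\constone=\sum_{j=1}^k p_j(\mu_{\PC,j}-\constone)+p_{\mathrm{rest}}(\rho_R-\constone).
\]
Pair both sides with $P_\Alg$ and apply \cref{thm:smallrange}. The left side is at most $27q^3/R$, and the $j<k$ terms are bounded by induction. This yields
\[
p_k\left|\inner{\mu_{\PC,k}-\constone,P_\Alg}\right|\le \frac{27q^3}{R}+\sum_{j<k}p_j\,O\!\left(\frac{q^3}{N^2}\right)+p_{\mathrm{rest}}.
\]

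\textbf{Step 3 (main obstacle).} Dividing by $p_k\approx(N^2/R)^k$ fails for $k\ge2$. The term $p_1 q^3/N^2$ divided by $p_k$ becomes $(R/N^2)^{k-1}q^3/N^2$, which blows up as $R$ grows. Taking $R\approx N^2$ makes all $p_j$ comparable, but then a single $R$ cannot separate the $k$ unknowns. To fix this, I would use $k$ (or more) distinct values $R_1,\dots,R_m$, all of order $N^2$ (say $R_i=iN^2$ with $m$ depending on $k$). Each choice gives one linear relation among the unknowns $a_j:=\inner{\mu_{\PC,j}-\constone,P_\Alg}$. The coefficient matrix $(p_j(R_i))$ is then, up to constants, a Vandermonde-type matrix in $1/R_i$, which I would argue is invertible with condition number depending only on $k$.

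A complication is that with $R\approx N^2$ the probability of non-disjoint configurations, i.e.\ triple collisions, is $O(1/N)$. I would either absorb this into an error of size $O(1/N)$ or treat $\D_{3\PC}$ as an additional unknown; in the latter case \cref{lem: muthpc_smallq} bounds its contribution, giving $q^2/N\cdot N^{-1}$ terms. The error $p_{\mathrm{rest}}$ also includes the configurations with more than $k$ collisions. Those I would include as further unknowns up to some constant $K$, and bound the tail beyond $K$ directly by its probability, which is exponentially small in $K$. I would choose $K$ so that the tail is $O(q^3/N^2)$ whenever $q\ge1$; note that for $q=0$ the advantage is $0$ and for $q\ge1$ any $O(1/N^2)$ error suffices.

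Solving the well-conditioned linear system gives $|a_k|=O(q^3/N^2)$ for $k=2,3$. The delicate part is verifying the invertibility and the error sizes uniformly in $N$, and this is where I expect most of the work.
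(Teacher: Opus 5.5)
Your Steps 1 and 2 are sound, and in Step 3 you correctly see why a single $R$ cannot work. The proposal breaks at the tail: the configurations with more than $K$ collision pairs.

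\textbf{The tail is a constant, not $O(N^{-2})$.} At $R_i=\Theta(N^2)$ the pair count $C_f$ is approximately Poisson with mean $\lambda_i=\binom N2/R_i=\Theta(1)$. So $\Pr[C_f>K]$ is of order $\lambda_i^{K+1}/(K+1)!$, which depends on $K$ but not on $N$. For fixed $K$ it is never $O(q^3/N^2)$; for $q=1$ you need $O(N^{-2})$. Being ``exponentially small in $K$'' therefore does not help.

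\textbf{Letting $K$ grow breaks the conditioning.} Pushing the tail below $N^{-2}$ forces $K$ to grow like $\log N/\log\log N$. The $K\times K$ Vandermonde-type system in the $\lambda_i$ then no longer has an inverse bounded by a constant depending only on $k$. Extracting a low-order coefficient from values on a bounded interval amplifies errors by factors that grow with $K$, hence with $N$, so you do not get $O(q^3/N^2)$.

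\textbf{Keeping the tail as unknowns does not help either.} At $R=\Theta(N^2)$ every equation has coefficients $p_j(R_i)=\Theta_j(1)$ for all $j$, so finitely many equations are underdetermined. The only a priori control on $a_j=\langle\mu_{\PC,j}-\constone,P_\Alg\rangle$ for large $j$ is trivial ($|a_j|\le 1$). Any useful control is essentially the statement you are proving.

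\textbf{The triple-collision part has a similar problem.} Absorbing it as $O(1/N)$ fails when $q^3\ll N$. Also, at $R=\Theta(N^2)$ a triple collision typically comes with extra pairs, and those mixtures are not covered by \cref{lem: muthpc_smallq}.

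\textbf{The missing idea: no inversion is needed.} The reduction can plant the extra collisions itself, which is the paper's route.
\begin{itemize}
\item Given an oracle $g$ from $\dist{\F}$ or $\DPC$, pick $2i-2$ distinct random points. Give each of the $i-1$ pairs a fresh uniform common value, and answer with $g$ elsewhere. One query to this $h$ costs two queries to $g$.
\item If $g\gets\dist{\F}$, then $h\sim\D_{\PC,i-1}$.
\item If $g\gets\DPC$, then $h\sim p_i\D_{\PC,i}+(1-p_i)\D_{\PC,i-1}$ with $p_i=(N-2i+2)_2/(N)_2\ge 1/5$. With probability $p_i$ the planted pair of $g$ avoids the new points; otherwise $g$ restricted to the remaining points is uniform.
\item Hence $p_i\,\Adv^{\mathsf{dist}}_q(\D_{\PC,i},\D_{\PC,i-1})\le \Adv^{\mathsf{dist}}_{2q}(\DPC,\dist{\F})$.
\item A hybrid over $i=1,\dots,k$ gives $\Adv^{\mathsf{dist}}_q(\D_{\PC,k},\dist{\F})\le 5k\,\Adv^{\mathsf{dist}}_{2q}(\DPC,\dist{\F})=O(q^3/N^2)$ by \cref{lem: planted_collision}.
\end{itemize}
This uses the small-range bound only once, inside \cref{lem: planted_collision}, and avoids any bookkeeping of higher-order collision configurations.
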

\begin{lemma}\label{lem: mufox_smallq}
    $\Adv^{\mathsf{dist}}_q(\DfoX,\dist{\F}) =O(q^3/N^2)$.
\end{lemma}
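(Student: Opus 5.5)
We need to prove $\Adv^{\mathsf{dist}}_q(\DfoX,\dist{\F}) = O(q^3/N^2)$. We mimic the proof of \cref{lem: planted_collision} and reduce to a small-range distribution. The plan is to find a natural distribution whose density, after subtracting $\constone$, is dominated by a positive multiple of $\mufoX-\constone$. The remaining pieces should be distributions we already control, namely $\D_{\PC,k}$ via \cref{lem: multi_planted} and $\DPC$ via \cref{lem: planted_collision}, plus a negligible remainder.

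\textbf{Step 1: the candidate construction.} I would take $h(x) = g(f(x)) + g'(f'(x))$, where $f,f':\bit^n\to[R]$ and $g,g'$ are random as in \cref{def:small_range}. Equivalently, $h$ is the XOR of two independent samples of $\cS_R$. Since $\mu_{\cS_R}*\mu_{\cS_R}$ is its density by \cref{lem: convolDensity}, the distinguishing advantage against $\dist{\F}$ is at most $2\cdot 27q^3/R$. This follows by a hybrid argument: replace one summand by a uniform function, after which the XOR is uniform.

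\textbf{Step 2: decomposing by collision patterns.} The collision patterns of $f$ and $f'$ now determine the structure of $h$. Let $C_f$ and $C_{f'}$ be the collision counts as in \cref{lem: decomposing small to pc}.
\begin{itemize}
\item If either $f$ or $f'$ is injective, then $h$ is uniform.
\item If both have exactly one collision, on pairs $\{a,b\}$ and $\{c,d\}$, we get the following. When the pairs are disjoint, $h$ is uniform subject to $h(a)+h(b)+h(c)+h(d)=0^n$, which is exactly a sample of $\DfoX$. When they share one point, $h$ is uniform subject to a 3-XOR constraint $h(a)+h(b)+h(c)=0^n$ (for instance with $b=c$ and $\{a,b\},\{b,d\}$ colliding); call this $\D_{3X}$. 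When the pairs coincide, $h$ is uniform.
\end{itemize}
The conditional probability of disjoint pairs is $1-O(1/N)$ and of sharing a point is $O(1/N)$. The event of both having one collision has probability $p\gtrsim N^4/(64R^2)$, while patterns with at least three collisions in total have probability $O(N^6/R^3)$. The patterns with two collisions in one of $f,f'$ and none in the other contribute uniform conditional distributions and cancel. This gives
\[
\mu_{\cS_R}*\mu_{\cS_R}-\constone = p\Bigl((1-\epsilon)(\mufoX-\constone)+\epsilon(\mu_{3X}-\constone)\Bigr)+O\Bigl(\frac{N^6}{R^3}\Bigr)\cdot(\rho-\constone),
\]
where $\epsilon=O(1/N)$. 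Taking inner products with $P_\Alg$ and dividing by $p\approx N^4/R^2$ yields
\[
\left|\inner{\mufoX-\constone,P_\Alg}\right|\le O\Bigl(\frac{R q^3}{N^4}\Bigr)+O\Bigl(\frac1N\Bigr)+O\Bigl(\frac{N^2}{R}\Bigr).
\]

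\textbf{Step 3: the main obstacle.} The $R$-dependence in Step 2 is bad. Choosing $R=N^2$ balances the first and third terms only at $q^3/N^2+1$, which is useless. The factor is $R$ rather than $1/N^2$ because the signal is quadratic in $1/R$ while the small-range bound is linear. I therefore expect the naive XOR of two small-range functions to fail, and I would instead work directly at the level of Fourier characters.

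By the computation behind \cref{eqn:deg4permutation}, $\mufoX-\constone$ should expand in degree-2 and degree-4 characters. The degree-4 part is $\sum\chi_\gamma$ over $\gamma$ with all labels equal to some $y\ne0^n$; this is essentially $\sum_y S_y^2$, where $S_y=\sum_x \chi_y(f(x))$. I would express this via the density of $\D_{\PC,2}$ minus the cross terms coming from pair-collisions, and then invoke \cref{lem: multi_planted} for $k=2$ and \cref{lem: planted_collision}. Concretely, expanding $\prod_j\bigl(N\truth{f(x_{2j-1})=f(x_{2j})}\bigr)=\prod_j\sum_{y_j}\chi_{y_j}(\cdot)$ gives all characters with labels $(y_1,y_1,y_2,y_2)$. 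The 4-XOR density instead gives labels $(y,y,y,y)$. The diagonal $y_1=y_2$ of $\mu_{\PC,2}$ is precisely $\mufoX$'s degree-4 part up to normalisation, but isolating the diagonal is not a distinguishing problem.

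So the hard step is to realize the diagonal as the XOR/composition structure of a single small-range function with large range $R\ge N^2$. There a unique collision plants $f(x_1)=f(x_2)$. I would plant four points into two collisions with the same $g$-value difference by using $h(x)=g(f(x))+g(f(x)+\delta)$-type maps over a group range $[R]\cong$ a vector space. This makes a 4-XOR arise at first order in $1/R$, with error terms reducible to $\DPC$ and $\D_{\PC,2}$. Achieving first-order (rather than second-order) dependence on $1/R$ is the crux on which the whole $O(q^3/N^2)$ bound rests.
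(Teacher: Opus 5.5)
Your proposal does not reach a proof. Steps 1--2 are inherently lossy, as you yourself diagnose. The small-range bound is first order in $1/R$, but the 4-XOR pattern of $g\circ f+g'\circ f'$ only appears with weight $p\approx N^4/R^2$. Dividing gives $Rq^3/N^4+N^2/R$, and no choice of $R\ge N^2$ makes this $O(q^3/N^2)$. Your $O(1/N)$ shared-point term is also, by itself, already too large when $q^3<N$. Step 3 then leaves the ``crux'' open, and the one concrete instance you write down cannot supply it. Take $[R]\cong\Ftwo^m$ and $h(x)=g(f(x))+g(f(x)+\delta)$. Then $h$ depends only on the coset $\{f(x),f(x)+\delta\}$, and $G(\{a,a+\delta\}):=g(a)+g(a+\delta)$ is a uniformly random function on the $R/2$ cosets. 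So $h$ is distributed exactly as $\cS_{R/2}$: it plants only collisions, never a 4-XOR. A smaller point: $\mufoX-\constone$ has no degree-2 part. Expanding $N\truth{\cdot=0^n}=\sum_y\chi_y(\cdot)$ shows it is purely degree 4, with all four labels equal.

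The missing idea is that you do not need to return to small-range distributions at all. A planted 4-XOR reduces, with only constant loss, to a planted collision, for which \cref{lem: planted_collision} already gives $O(q^3/N^2)$.

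The reduction works as follows. Given an oracle $g$ that is either uniform or drawn from $\DPC$, sample a random permutation $\pi$ of $\bit^n$ and uniform $R_j\in\bit^n$ for $j\in\bit^{n-1}$. Set $h(\pi(j\|0))=R_j$ and $h(\pi(j\|1))=R_j+g(j\|0)$, so that $h(\pi(j\|0))+h(\pi(j\|1))=g(j\|0)$.

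\begin{itemize}
\item If $g$ is uniform, then $h$ is uniform.
\item If $g\gets\DPC$ and the planted pair $\{u_0\|0,v_0\|0\}$ lies in the half $\{j\|0\}$, which happens with probability $(N/2)_2/(N)_2\ge 1/6$, then $h\gets\DfoX$. The values on the four uniformly placed points $\pi(u_0\|0),\pi(u_0\|1),\pi(v_0\|0),\pi(v_0\|1)$ are $(a,a+w,b,b+w)$ with $a,b,w$ uniform. This is exactly uniform subject to a zero sum, and all other values are uniform.
\item Otherwise the restriction of $g$ to that half is uniform, so $h$ is uniform.
\end{itemize}

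Each query to $h$ costs two queries to $g$. This gives $\Adv^{\mathsf{dist}}_q(\DfoX,\dist{\F})\le 6\,\Adv^{\mathsf{dist}}_{2q}(\DPC,\dist{\F})=O(q^3/N^2)$. This is the first-order reduction your small-range construction was missing.
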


We prove the following lemma in this subsection using the above lemmas.
\begin{lemma}\label{lem: xopdeg4_smallq}
    $|\inner{\muXoPr^{=4},P_\Alg}| = O\left(\frac{q^3}{N^r}\right).$
\end{lemma}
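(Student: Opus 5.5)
The plan is to mimic the degree-2 and degree-3 arguments. I will rewrite $\muXoPr^{=4}$ exactly as a linear combination of the centered densities $\mu_{\PC,2}-\constone$, $\muPC-\constone$ and $\mufoX-\constone$. Then I bound each inner product with $P_\Alg$ by \cref{lem: planted_collision,lem: multi_planted,lem: mufox_smallq}, and check that the prefactors turn $O(q^3/N^2)$ into $O(q^3/N^r)$.

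\textbf{Step 1 (coefficients).} I start from \cref{eqn:deg4permutation} and \cref{eqn:xopconvol}, which give $\hmuXoPr(\gamma)=\bigl(N C_{2,2}(\gamma)-6C_4(\gamma)\bigr)^r/((N-1)_3)^r$. The key combinatorial observation is that $C_{2,2}(\gamma)\in\{0,1,3\}$. Indeed, two different pairings being satisfied forces $y_1=y_2=y_3=y_4$. Moreover, $C_{2,2}\ge1$ implies $C_4=1$. So there are exactly three nonzero cases:
\begin{itemize}[nosep]
\item exactly one pairing and not all equal, with value $(N-6)^r$;
\item all four equal, with value $(3N-6)^r$;
\item no pairing but $y_1+y_2+y_3+y_4=0^n$, with value $(-6)^r$.
\end{itemize}
I rewrite this as
\[
\hmuXoPr(\gamma)=\tfrac{1}{((N-1)_3)^r}\Bigl[(N-6)^r C_{2,2}(\gamma)+\beta\,\truth{y_1=\cdots=y_4}+(-6)^r C_4(\gamma)\Bigr],
\]
with constants $(N-6)^r$, $\beta$ and $(-6)^r$ fixed so that the three cases come out right. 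Concretely, $\beta=(3N-6)^r-3(N-6)^r-(-6)^r$. The first term is then corrected by $-(-6)^r C_{2,2}$, since $C_4=1$ already holds whenever $C_{2,2}\ge1$. In every case the coefficients have magnitude $O_r(N^r)$, $O_r(N^r)$ and $O_r(1)$ respectively.

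\textbf{Step 2 (character sums to densities).} I multiply by $\chi_\gamma(f)$ and sum over $\gamma$ with support $\{x_1<\cdots<x_4\}$, using \cref{eqn: characsum} as in the degree-2 and degree-3 proofs.
\begin{itemize}[nosep]
\item \emph{Pairing term.} A fixed pairing $\{x_i,x_j\},\{x_k,x_\ell\}$ gives $(N\truth{f(x_i)=f(x_j)}-1)(N\truth{f(x_k)=f(x_\ell)}-1)$. Summed over supports this is $\Theta(N^4)\bigl[(\mu_{\PC,2}-\constone)-2(\muPC-\constone)\bigr]$.
\item \emph{All-equal term.} Here $\sum_{y\neq0^n}\chi_y(f(x_1)+\cdots+f(x_4))=N\truth{\sum_i f(x_i)=0^n}-1$, which sums to $\Theta(N^4)(\mufoX-\constone)$.
\item \emph{$C_4$ term.} Summing over nonzero $(y_1,y_2,y_3)$ with $y_4=y_1+y_2+y_3\neq0^n$ requires inclusion--exclusion on the forbidden zero coordinates. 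The main part is $N^3\truth{\sum_i f(x_i)=0^n}$. The corrections are products of fewer characters, which reduce to pair-collision indicators and constants. This gives $\Theta(N^4)\bigl[N^2(\mufoX-\constone)+O(N)\cdot(\muPC-\constone)\text{-type terms}\bigr]$. Constant pieces must cancel, since $\muXoPr^{=4}$ has zero mean.
\end{itemize}

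\textbf{Step 3 (bounding).} I apply \cref{lem: planted_collision,lem: multi_planted,lem: mufox_smallq}, each giving $O(q^3/N^2)$.
\begin{itemize}[nosep]
\item The pairing and all-equal terms contribute $O(N^r\cdot N^4/N^{3r})\cdot q^3/N^2=O(q^3N^{2-2r})$.
\item The $C_4$ term contributes $O(N^6/N^{3r})\cdot q^3/N^2=O(q^3N^{4-3r})$.
\end{itemize}
Both are $O(q^3/N^r)$ for $r\ge2$, with the constants depending only on $r$.

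\textbf{Main obstacle.} I expect the bookkeeping in Steps 1 and 2 to be the hardest part. One must get the overlap between $C_{2,2}$ and $C_4$ exactly right, because a naive binomial expansion of $(NC_{2,2}-6C_4)^r$ produces cross terms. The inclusion--exclusion for the $C_4$ sum with nonzero coordinates must also yield only $\muPC$-type lower-order terms with the correct powers of $N$. If the latter gets messy, I would instead bound those correction terms by Cauchy--Schwarz using the crude weight bounds of \cref{lem: xop_weights}.
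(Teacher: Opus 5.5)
Your treatment of the $C_{2,2}$ and all-equal parts matches the paper. It uses the same three-way split of $\hmuP(\gamma)^r$ and the same identities $\sum_{|\gamma|=4}C_{2,2}(\gamma)\chi_\gamma=\frac{(N)_4}{8}\bigl((\mu_{\PC,2}-\constone)-2(\muPC-\constone)\bigr)$ and $\sum_{|\gamma|=4}\truth{y_1=y_2=y_3=y_4}\chi_\gamma=\binom N4(\mufoX-\constone)$. It reaches the same $O(q^3/N^{2r-2})$ contribution. There is one harmless slip. Once the $C_{2,2}$ coefficient is corrected to $(N-6)^r-(-6)^r$, the coefficient of the all-equal indicator must be $(3N-6)^r-3(N-6)^r+2(-6)^r$. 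With your $\beta$, the all-equal case evaluates to $(3N-6)^r-3(-6)^r$.

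\textbf{The genuine gap is the $C_4$ term.} There you have the Fourier duality reversed. Substituting $y_4=y_1+y_2+y_3$, labels constrained by $y_1+y_2+y_3+y_4=0^n$ give
\[
\sum_{y_1,y_2,y_3}\chi_{y_1}\bigl(f(x_1)+f(x_4)\bigr)\chi_{y_2}\bigl(f(x_2)+f(x_4)\bigr)\chi_{y_3}\bigl(f(x_3)+f(x_4)\bigr)=N^3\truth{f(x_1)=f(x_2)=f(x_3)=f(x_4)}.
\]
This is a planted \emph{4-collision} statistic. The 4-XOR indicator $\truth{\sum_i f(x_i)=0^n}$ comes from the all-equal labels, which you already used.

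Excluding the zero labels then produces triple-collision terms $N^2\truth{f(x_i)=f(x_j)=f(x_k)}$, not only pair collisions. Done correctly, one gets $\sum_{|\gamma|=4}C_4(\gamma)\chi_\gamma=\binom N4\bigl((\mu_{4\PC}-\constone)-4(\muthPC-\constone)+6(\muPC-\constone)\bigr)$, where $\mu_{4\PC}$ is the density of a random function with a planted 4-collision. The paper proves no distinguishing bound for that distribution. So \cref{lem: mufox_smallq} does not apply, and your Step 3 estimate for this term is unsupported.

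Your fallback does not close the gap. It applies Cauchy--Schwarz only to the correction terms. Moreover, the weight bound of \cref{lem: xop_weights} for the whole degree-4 component, $O(N^{3-2r})$, is too weak when $r=2$.

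The missing observation is that this term needs no reduction, because its coefficient is only $(-6)^r=O_r(1)$. Cauchy--Schwarz gives $\bigl|\langle\sum_{|\gamma|=4}C_4(\gamma)\chi_\gamma,P_\Alg\rangle\bigr|\le\bigl(\binom N4(N-1)^3\bigr)^{1/2}=O(N^{7/2})$. Hence its contribution is $O(N^{7/2-3r})=O(N^{-r})$ for $r\ge2$. This is how the paper handles it.
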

\begin{proof}
Let $\gamma$ be such that $|\gamma|=4$.
For $\supp(\gamma)=\{x_1<x_2<x_3<x_4\}$, write $y_i=\gamma(x_i)$. We begin with the degree-4 Fourier component from \cref{eqn:deg4permutation}
\begin{align*}
    \hmuP(\gamma) = \frac{NC_{2,2}(\gamma) - 6C_4(\gamma)}{(N-1)_3}
\end{align*}
where we recall the following for convenience.
\begin{align*}
    C_{2,2}(\gamma)&=\truth{y_1=y_2}\truth{y_3=y_4}+\truth{y_1=y_3}\truth{y_2=y_4}+\truth{y_1=y_4}\truth{y_2=y_3},\\
    C_4(\gamma)&=\truth{y_1+y_2+y_3+y_4=0^n}.
\end{align*}
We write $X,Y,Z$ to denote $\truth{y_1=y_2}\truth{y_3=y_4}$, $\truth{y_1=y_3}\truth{y_2=y_4}$, and $\truth{y_1=y_4}\truth{y_2=y_3}$, respectively, so that $C_{2,2}(\gamma) = X+Y+Z$. 
Observe the following.
\begin{enumerate}
    \item $X^2=X,Y^2=Y,Z^2=Z$ and $C_{4}^2=C_4$ as they are in $\bit$,
    \item $XY=YZ=ZX = \truth{y_1=y_2=y_3=y_4}=:E(\gamma)$,
    \item $C_{2,2}(\gamma) \ge 1 $ implies $C_4(\gamma)=1$, thus $C_{2,2}(\gamma)C_4(\gamma)=C_{2,2}(\gamma)$,
    \item similarly $C_{2,2} E = 3E$ and $C_{4}E = E$.
\end{enumerate}
From this, we have $C_{2,2}^k = C_{2,2} + (3^k-3)E$ inductively for $k \in \mathbb N$.
Further calculation based on the above gives the following identity
\begin{align*}
    \hmuP(\gamma)^r 
    &=
    \left(
    \frac{NC_{2,2}(\gamma) - 6C_4(\gamma)}{(N-1)_3}
    \right)^r
    =\frac{a_r C_{2,2}(\gamma)+ b_{r} E(\gamma) + c_r C_4(\gamma)}{(N-1)_3^r}
\end{align*}
where $
    a_r = (N-6)^r - (-6)^r,
    b_r = (3N-6)^r -3(N-6)^r+2 (-6)^r,
    c_r = (-6)^r.$

Now we compute $\muXoPr^{=4}(f) $ .
Observe that $C_{2,2}=X+Y+Z$ terms can be computed as follows, where $8=4!/3$ comes from the order of $x_1,x_2,x_3,x_4$ (4!) and the symmetry of $X,Y,Z$ (3),
\begin{align*}
    &\sum_{|\gamma|=4} C_{2,2}(\gamma) \chi_\gamma (f) = \frac{1}{8} \sum_{\supp(\gamma)=\{x_1,x_2,x_3,x_4\}}\truth{y_1=y_2}\truth{y_3=y_4} \chi_{\gamma}(f)\\
    &=\frac{1}{8}\sum_{\substack{\supp(\gamma)=\{x_1,x_2,x_3,x_4\}\\ y_1\neq0^n,\ y_3\neq0^n}} \chi_{y_1}(f(x_1)+f(x_2)) \chi_{y_3}(f(x_3)+f(x_4))\\
    &=\frac{1}{8}\sum_{\substack{x_1,x_2,x_3,x_4 \\ \text{distinct}}} (N\truth{f(x_1)=f(x_2)} - \constone)(N\truth{f(x_3)=f(x_4)} - \constone)\\
    &=\frac{(N)_4}{8}((\muPCt(f) - \constone) - 2(\muPC(f) - \constone)).
\end{align*}
Here, the last term can be interpreted as a linear sum of distinguishing advantage between random functions with planted collision(s) and uniformly random functions. \cref{lem: multi_planted,lem: planted_collision} gives the bound $\Big|\inner{\sum_{|\gamma|=4}C_{2,2}(\gamma)\chi_\gamma,P_\Alg}\Big| = O(q^3N^2)$. 

For the $E$ term, we have
\begin{align*}
    &\sum_{|\gamma|=4}E(\gamma)\chi_\gamma(f) = \sum_{\supp(\gamma)=\{x_1,x_2,x_3,x_4\}}\truth{y_1=y_2=y_3=y_4} \chi_{\gamma}(f)\\
    &= \sum_{\substack{\supp(\gamma)=\{x_1,x_2,x_3,x_4\} \\ y\ne0^n}}\chi_y(f(x_1)+f(x_2)+f(x_3)+f(x_4))
    =\frac{(N)_4}{24}(\mufoX(f) - 1),
\end{align*}
which gives $\Big|\inner{\sum_{|\gamma|=4}E(\gamma)\chi_\gamma,P_\Alg}\Big| = O(q^3N^2)$ because of \cref{lem: mufox_smallq}.

For the last term with $C_4$,  we can directly bound
\[
\Big|\inner{\sum_{|\gamma|=4}C_4(\gamma)\chi_\gamma,P_\Alg}\Big| \leq \norm{\sum_{|\gamma|=4}C_4(\gamma)\chi_\gamma}_2 = O(N^4).
\]

Plugging three bounds on the expansion of $\inner{\muXoPr^{=4},P_\Alg}$ gives
\begin{align*}
    |\inner{\muXoPr^{=4},P_\Alg}| = O\left( 
        \frac{N^r \cdot q^3 N^2 + N^r \cdot q^3 N^2 + N^4}{N^{3r}}
    \right) = O\left(
    \frac{q^3}{N^{2r-2}}
    \right)
\end{align*}
which is $O(q^3/N^r)$ because $r\ge2$,
as desired.
\end{proof}

\subsection{Degree-6 component as planted three collisions plus small terms}
\ifnum\fullpage=0
For the degree 6, it holds that $|\inner{\muXoPr^{=6},P_\Alg}| = O\left({q^3}/{N^r}\right).$
The proof of this inequality is rather complicated, and can be found in \cref{app: xopdeg6_smallq}.

\else
We use the following lemma.
\begin{lemma}\label{lem: binom_inequality}
    $\mathrm{For\ } x, y \in \mathbb{R}, |(x+y)^r-x^r|\leq r|y|(|x|+|y|)^{r-1}.$
\end{lemma}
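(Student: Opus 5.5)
The plan is to prove \cref{lem: binom_inequality} by the mean value theorem applied to $t\mapsto t^r$ on the segment between $x$ and $x+y$. Here $r\ge 1$ is a positive integer, as everywhere in this paper; the case $r=0$ is trivial since both sides vanish. First, if $y=0$ both sides are zero and there is nothing to prove. Otherwise, the function $\phi(t)=t^r$ is differentiable on $\mathbb R$, so there is some $\xi$ strictly between $x$ and $x+y$ with
\[
(x+y)^r-x^r=\phi'(\xi)\,y=r\,\xi^{r-1}\,y .
\]

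It then remains to bound $|\xi|$. Since $\xi$ lies on the segment between $x$ and $x+y$, we have $|\xi|\le\max\{|x|,|x+y|\}\le|x|+|y|$ by the triangle inequality. Hence $|\xi|^{r-1}\le(|x|+|y|)^{r-1}$, because $r-1\ge0$ and $s\mapsto s^{r-1}$ is nondecreasing on $[0,\infty)$. Taking absolute values in the displayed identity then gives $|(x+y)^r-x^r|\le r|y|(|x|+|y|)^{r-1}$, as claimed.

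A purely algebraic alternative avoids calculus. Use the factorization
\[
(x+y)^r-x^r=y\sum_{k=0}^{r-1}(x+y)^k x^{r-1-k},
\]
and bound each of the $r$ summands in absolute value by $(|x|+|y|)^k|x|^{r-1-k}\le(|x|+|y|)^{r-1}$. Either route is immediate, so there is no real obstacle here. The only point needing care is that both the factorization and the monotonicity step use that $r$ is a nonnegative integer, which matches how the lemma will be used: comparing $\hmuP(\gamma)^r$ with the $r$-th power of its dominant part in the degree-6 expansion.
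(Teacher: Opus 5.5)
Your proof is correct, and both of your arguments differ from the paper's. The paper expands $(x+y)^r$ binomially and bounds $|(x+y)^r-x^r|$ termwise by $\sum_{k=1}^r\binom rk|x|^{r-k}|y|^k$. It then uses the coefficient inequality $\binom rk\le k\binom rk=r\binom{r-1}{k-1}$ to re-sum this into $r|y|(|x|+|y|)^{r-1}$.

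Your mean-value argument gets the factor $r(|x|+|y|)^{r-1}$ directly as a bound on $|\phi'|$ along the segment from $x$ to $x+y$, which makes its origin transparent. It also works verbatim for any differentiable $\phi$ whose derivative is bounded by a nondecreasing function of $|t|$. Your algebraic alternative via $a^r-b^r=(a-b)\sum_{k=0}^{r-1}a^kb^{r-1-k}$ also avoids calculus. It needs no binomial-coefficient inequality, since there are exactly $r$ summands, each bounded by $(|x|+|y|)^{r-1}$.

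The paper's route yields, along the way, the slightly sharper intermediate bound $(|x|+|y|)^r-|x|^r$. That refinement is not needed where the lemma is applied, namely bounding $|S(\gamma)|$ for $\gamma\in\Gamma_1$ in the degree-6 estimate. All three arguments are equally short and elementary.
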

\begin{proof}
Noting that $\binom{r}{k}\le k\binom{r}{k}=r\binom{r-1}{k-1}$ for $1\le k\le r$, we have
\[
|(x+y)^r-x^r| \le \sum_{k=1}^r  \binom{r}{k} |x^{r-k}y^k| \le r|y| \sum_{k=1}^r \binom{r-1}{k-1} |x|^{r-k}|y|^{k-1}
\]
which equals to $r|y|(|x|+|y|)^{r-1}.$
\end{proof}

\begin{lemma}
    $|\inner{\muXoPr^{=6},P_\Alg}| = O\left(\frac{q^3}{N^r}\right).$
\end{lemma}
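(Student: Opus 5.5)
We follow the same template as the degree-4 case. First compute $\hmuP(\gamma)$ for $|\gamma|=6$ with labels $y_1,\dots,y_6$ by iterating the recursion of \cref{lem: xoprecursion} (or the inclusion–exclusion over set partitions of $[6]$, keeping only partitions with no singleton blocks). The surviving partitions are of three types: three pairs, a pair plus a quadruple, two triples, and the full block. The coefficients have sizes $N^3/(N)_6$, $N^2/(N)_6$, $N^2/(N)_6$ and $N/(N)_6$ respectively. Thus we write
\[
\hmuP(\gamma)=\frac{N^{2}C_{2,2,2}(\gamma)+N\,R(\gamma)}{(N-1)_5},
\]
where $C_{2,2,2}(\gamma)$ counts the $15$ perfect matchings $\{i,j\},\{k,\ell\},\{m,s\}$ with $y_i=y_j$, $y_k=y_\ell$, $y_m=y_s$. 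The remainder $R(\gamma)$ is a bounded integer combination of indicators of the constraints ``pair $+$ 4-XOR'', ``two 3-XORs'', and ``6-XOR'', where the last comes with a factor $1/N$.

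Next, apply \cref{lem: binom_inequality} with $x=N^2C_{2,2,2}(\gamma)$ and $y=N R(\gamma)$:
\[
\hmuP(\gamma)^r=\frac{N^{2r}C_{2,2,2}(\gamma)^r}{(N-1)_5^r}+\Delta(\gamma),\qquad |\Delta(\gamma)|\le \frac{r|NR(\gamma)|\,(N^2|C_{2,2,2}(\gamma)|+N|R(\gamma)|)^{r-1}}{(N-1)_5^r}.
\]
The main term is handled as in the degree-4 case. $C_{2,2,2}^r$ is a linear combination, with bounded coefficients depending on $r$, of $C_{2,2,2}$ and of indicators of coarser coincidences ($y_1=\dots=y_6$, a pair plus a 4-equality, etc.). The leading piece $\sum_{|\gamma|=6}C_{2,2,2}(\gamma)\chi_\gamma(f)$ expands, exactly as in the degree-4 computation, into
\[
\frac{(N)_6}{48}\sum_{(x_i)\text{ distinct}}\prod_{j=1}^3\bigl(N\truth{f(x_{2j-1})=f(x_{2j})}-1\bigr)/(N)_6 .
\]
By inclusion–exclusion this is a bounded linear combination of $\mu_{\PC,k}-\constone$ for $k=1,2,3$. \cref{lem: multi_planted,lem: planted_collision} then give $|\inner{\cdot,P_\Alg}|=O(N^6\cdot q^3/N^2)$. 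After multiplying by $N^{2r}/(N)_5^r\approx N^{-3r}$, this contribution is $O(q^3N^{4-3r})=O(q^3/N^r)$ since $r\ge2$.

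The coarser coincidence terms coming from $C_{2,2,2}^r$, and the whole error $\Delta$, I would bound crudely by Cauchy–Schwarz: $|\inner{G,P_\Alg}|\le\|G\|_2$. Their $\ell_2$ norms are controlled by counting solutions. A term whose constraint forces $s$ independent nonzero labels on $\binom N6$ supports has $\|G\|_2^2\lesssim N^{6+s}\cdot(\text{coefficient})^2$.

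For a coarser term of $C_{2,2,2}^r$ we have $s\le 2$, giving $\|G\|_2 \lesssim N^{4}\cdot N^{2r}/N^{5r}=N^{4-3r}\le N^{-r}$.

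For $\Delta$, the worst case is $C_{2,2,2}=0$ with $R\neq0$. Then $|\Delta|\lesssim N^{r}/N^{5r}$ on $O(N^{6+4})$ labelled characters, giving $\|\Delta\|_2\lesssim N^{5-4r}$. When $C_{2,2,2}\ne0$ we have $s=3$ and $|\Delta|\lesssim N^{2r-1}/N^{5r}$, giving $\|\Delta\|_2\lesssim N^{4.5-3r-1}$. Both are $O(N^{-r})$ for $r\ge2$. Summing all pieces yields $O(q^3/N^r)$.

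The main obstacle I expect is the first step: obtaining the exact (or sufficiently precise) degree-6 formula for $\hmuP(\gamma)$. One must verify that the only $N^2$-order coefficient sits on the three-pair matchings, and that every other partition type is genuinely at most $N$ times smaller, so that the error-term norm estimates above go through. I would derive it by applying \cref{lem: xoprecursion} twice down to degree 4, then using \cref{eqn:deg4permutation} and \cref{eqn:deg3permutation}, tracking only the leading orders.
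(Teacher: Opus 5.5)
Your proposal is correct and is essentially the paper's proof. Both use the same inclusion--exclusion formula for $\hmuP(\gamma)$ at degree six, and both rewrite the $C_{2,2,2}$ part as a combination of $\mu_{\PC,k}-\constone$ for $k=1,2,3$, bounded via \cref{lem: planted_collision,lem: multi_planted}. Everything else is bounded by Cauchy--Schwarz with label counting; the paper groups this remainder by $C_{2,2,2}=0$, $=1$, $\ge 2$, whereas you split off $C_{2,2,2}^r-C_{2,2,2}$ separately.

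There are two harmless slips. First, the three-pair coefficient is $-N^2$, not $+N^2$, because the M\"obius value of a partition into three pairs is $(-1)^3$. Second, the characters whose only constraint is the 6-XOR number $\Theta(N^{11})$ rather than $O(N^{10})$; there $|NR(\gamma)|=O(1)$, so their contribution is even smaller.
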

\begin{proof}
We need to compute $\hmuP$ first. Let us define $Z_{2, 2, 2}, Z_{2, 4}, Z_{3, 3}, Z_6$ by
\begin{align*}
  Z_{2,2,2}
  &:=\sum_{M\ \mathrm{perfect\ matching\ of\ }[6]}\prod_{\{i,j\}\in M}\truth{z_i=z_j}, \\
  Z_{2,4}
  &:=\sum_{\substack{\{i, j\} \subset [6] \\ \{i, j\} \cap \{g, h, k, \ell\}=\varnothing}}\truth{z_i=z_j}\truth{z_g=z_h=z_k=z_\ell}, \\
  Z_{3,3}
  &:=\sum_{\substack{1 \in \{i, j, k\} \subset [6] \\ \{i, j, k\} \cap \{g, h, \ell\}=\varnothing}}\truth{z_i=z_j=z_k}\truth{z_g=z_h=z_\ell}, \\
  Z_6
  &:=\truth{z_1=z_2=z_3=z_4=z_5=z_6}. \\
\end{align*}
By the inclusion-exclusion principle for the fifteen events $E_{ij}=[z_i=z_j]$ gives,
\begin{align*}
    \truth{z_1, \cdots, z_6 \mathrm{\ distinct}} = 1 - \cdots - Z_{2, 2, 2} + 6Z_{2, 4} + 4Z_{3, 3} - 120Z_6,
\end{align*}
where the omitted terms correspond to partitions containing at least one
singleton block, which will be vanished.

Similarly, we define $C_{2,2,2}, C_{2,4}, C_{3,3}, C_6$ by
\begin{align*}
  C_{2,2,2}(\gamma)
  &:=\sum_{M\ \mathrm{perfect\ matching\ of\ }[6]}\prod_{\{i,j\}\in M}\truth{y_i=y_j}, \\
  C_{2,4}(\gamma)
  &:=\sum_{\substack{\{i, j\} \subset [6] \\ \{i, j\} \cap \{g, h, k, \ell\}=\varnothing}}\truth{y_i=y_j}\truth{y_g+y_h+y_k+y_\ell=0^n}, \\
  C_{3,3}(\gamma)
  &:=\sum_{\substack{1 \in \{i, j, k\} \subset [6] \\ \{i, j, k\} \cap \{g, h, \ell\}=\varnothing}}\truth{y_i+y_j+y_k=0^n}\truth{y_g+y_h+y_\ell=0^n}, \\
  C_6(\gamma)
  &:=\truth{y_1+y_2+y_3+y_4+y_5+y_6=0^n}. \\
\end{align*}
By applying \cref{eqn: characsum} and simplifying, we obtain
\begin{align*}
    \hmuP(\gamma) = \frac{-N^2C_{2,2,2}(\gamma) + 6NC_{2,4}(\gamma)+4NC_{3,3}(\gamma)-120C_6(\gamma)}{(N-1)_5}.
\end{align*}
set $R(\gamma) = 6NC_{2,4}(\gamma)+4NC_{3,3}(\gamma)-120C_6(\gamma)$. Note that $R(\gamma) = O(N)$. We rewrite
\begin{align*}
    \left(\hmuP(\gamma) (N-1)_5 \chi_\gamma\right)^r 
    &=\big(-N^2C_{2,2,2}(\gamma) + R(\gamma) \big)^r\chi_\gamma \\
    &=S(\gamma)\chi_\gamma + (-N^2)^rC_{2,2,2}(\gamma)\chi_\gamma,
\end{align*}
where $S(\gamma) = \big(-N^2C_{2,2,2}(\gamma) + R(\gamma) \big)^r - (-N^2)^rC_{2,2,2}(\gamma)$. For the second term, we have
\begin{align*}
    &\sum_{\supp(\gamma)=\{x_1<\ldots<x_6\}}C_{2,2,2}(\gamma)\chi_\gamma(f)\\
    &=\frac{15}{6!}\sum_{\supp(\gamma)=\{x_1,\ldots,x_6\}}\truth{y_1=y_2}\truth{y_3=y_4}\truth{y_5=y_6}(\gamma)\chi_\gamma(f)\\
    &=\frac{1}{48}\sum_{\substack{x_1,\ldots,x_6 \\ \text{distinct}}}\prod_{s=1}^3 (N\truth{f(x_{2s-1})=f(x_{2s})}-\constone) \\
    &=\frac{(N)_6}{48}(\mu_{\PC, 3} - 3\mu_{\PC, 2} + 3\muPC - \constone) \\
    &=\frac{(N)_6}{48}((\mu_{\PC, 3}-\constone) - 3(\mu_{\PC, 2}-\constone) + 3(\muPC-\constone)).
\end{align*}
This will be routinely bounded using \cref{lem: planted_collision,lem: multi_planted} below.

Now let us bound $S(\gamma)$. Partition the degree-six characters into
\begin{align*}
    \Gamma_0&:=\{\gamma:C_{2,2,2}(\gamma)=0\},\\
    \Gamma_1&:=\{\gamma:C_{2,2,2}(\gamma)=1\},\\
    \Gamma_{\geq2}&:=\{\gamma:C_{2,2,2}(\gamma)\geq2\}.
\end{align*}
For $J\in\{0,1,{\geq2}\}$, let $B_J:=\sum_{\gamma\in\Gamma_J}S(\gamma)\chi_\gamma$. Then we have
\begin{align*}
    \Big|\inner{\sum_{|\gamma|=6}S(\gamma)\chi_\gamma,P_\Alg}\Big| \leq |\inner{B_0,P_\Alg}| + |\inner{B_1,P_\Alg}| + |\inner{B_{\geq 2},P_\Alg}|
\end{align*}
and using Cauchy-Schwarz inequality
\begin{align*}
    |\inner{B_J,P_\Alg}| \leq \|B_J\|_2\|P_\Alg\|_2 \leq \|B_J\|_2 \le \left(\sum_{\gamma\in\Gamma_J}|S(\gamma)|^2\right)^{1/2}.
\end{align*}
If $\gamma\in\Gamma_0$, then $S(\gamma)=R(\gamma)^r=O(N^r)$.
\begin{align*}
    |\inner{B_0,P_\Alg}|
    &\leq \left( \sum_{\substack{|\gamma|=6 \\ C_{2,2,2}(\gamma)=0}} |S(\gamma)|^2 \right)^{1/2} \\
    &\leq \left( O(N^{12})O(N^{2r}) \right)^{1/2} = O(N^{r+6}).
\end{align*}
If $\gamma\in\Gamma_1$, we apply \cref{lem: binom_inequality}, which gives
\begin{align*}
    |S(\gamma)| &= \big|\big(-N^2C_{2,2,2}(\gamma) + R(\gamma) \big)^r - (-N^2)^rC_{2,2,2}(\gamma)\big| \\
    &= \big|\big(-N^2 + R(\gamma) \big)^r - (-N^2)^r\big| = O(N^{2r-1}).
\end{align*}
Since $C_{2,2,2}(\gamma)=1$, we have at most $O(N^3)$ choices of labels per support. Hence,
\begin{align*}
    |\inner{B_1,P_\Alg}|
    &\leq \left( \sum_{\substack{|\gamma|=6 \\ C_{2,2,2}(\gamma)=1}} |S(\gamma)|^2 \right)^{1/2} \\
    &\leq \left( O(N^{3+6})O(N^{(2r-1)\cdot 2}) \right)^{1/2} = O(N^{2r+3.5}).
\end{align*}
If $\gamma\in\Gamma_{\geq2}$, then
\begin{align*}
    |S(\gamma)| &= \big|\big(-N^2C_{2,2,2}(\gamma) + R(\gamma) \big)^r - (-N^2)^rC_{2,2,2}(\gamma)\big| \\
    &= \big|O(N^2)^r - (-N^2)^rO(1)\big| = O(N^{2r}).
\end{align*}
Since $C_{2,2,2}(\gamma)\ge2$, there exist indices $i, j$ and $g, h, k, l$ such that $y_i=y_j$ and $y_g=y_h=y_k=y_l$. Hence, we have at most $O(N^2)$ choices of labels per support. Hence,
\begin{align*}
    |\inner{B_{\geq 2},P_\Alg}|
    &\leq \left( \sum_{\substack{|\gamma|=6 \\ C_{2,2,2}(\gamma)\geq 2}} |S(\gamma)|^2 \right)^{1/2} \\
    &\leq \left( O(N^{2+6})O(N^{2r\cdot 2}) \right)^{1/2} = O(N^{2r+4})
\end{align*}
Finally,
\begin{align*}
    |\inner{\muXoPr^{=6},P_\mathcal{A}}|
    &\leq \frac{N^{2r}\Big|\inner{\sum_{|\gamma|=6}C_{2,2,2}(\gamma)\chi_\gamma,P_\mathcal{A}}\Big| + \sum_{J\in\{0,1,\ge 2\}}|\inner{B_J,P_\mathcal{A}}|}{(N-1)_5^r} \\
    &\leq \frac{O(q^3N^{2r+4}) + O(N^{r+6}) + O(N^{2r+3.5}) + O(N^{2r+4})}{\Theta(N^{5r})} \\
    &= O\Big(\frac{q^3}{N^{3r-4}}\Big) + O\Big(\frac{1}{N^{4r-6}}\Big) + O\Big(\frac{1}{N^{3r-3.5}}\Big) + O\Big(\frac{1}{N^{3r-4}}\Big)
    \\&= O\left(\frac{q^3}{N^r}\right).
\end{align*}
This concludes the proof.
\end{proof}

\fi
\section{Middle Quantum Query Security via Improved Small-range Indistinguishability}\label{sec: compressed oracle}
In this section, we prove the following theorem.
As in \cref{thm: main_bound}, the asymptotic notation may hide a constant depending on $r$.

\begin{lemma}\label{thm: improved main_bound}
If $n\ge 5, r\le \frac N{16}$, and $q\le \frac N{57774}$, then
\[
    \Adv_q^{\mathsf{dist}}(\DXoPr,\dist{\F})
    =O\left(\frac{q^{3/2}}{N^{r-1/2}}\right).
\]
\end{lemma}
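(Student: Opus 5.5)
We follow the same skeleton as the proof of \cref{len: small q bound}, but replace the planted-collision bound $O(q^3/N^2)$ by a new bound $O(q^{3/2}/N^{3/2})$ obtained from the compressed oracle.

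First, write the advantage as $\sum_{d=1}^{2q}\inner{\muXoPr^{=d},P_\Alg}$ via \cref{cor: advantage by inner product} and \cref{eqn: inner product decomposition by degree}. The degree-$1$ term vanishes. Degrees $d=5$ and $7\le d\le 2q$ contribute $O(N^{-(3r-4)})$ by Cauchy–Schwarz and the last item of \cref{lem: xop_weights}. This is $O(q^{3/2}/N^{r-1/2})$ whenever $3r-4\ge r-1/2$, i.e.\ always for $r\ge 2$. The hypothesis $q\le N/57774$ makes $d_{\max}=2q\le N/28887$ admissible. It therefore suffices to bound the degree-$2,3,4,6$ terms. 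The proofs of \cref{lem: xopdeg2_smallq,lem: xopdeg3_smallq,lem: xopdeg4_smallq} and the degree-6 lemma already express each of these as a prefactor times a linear combination of advantages:
\begin{itemize}
    \item the degree-2 prefactor is $\Theta(N^{2-r})$, multiplying the advantage of $\DPC$;
    \item degree 3 involves $\DthPC$ and $\DPC$;
    \item degree 4 involves $\D_{\PC,2}$, $\DPC$ and $\DfoX$, plus an $\ell_2$ remainder of order $N^{4-3r}$;
    \item degree 6 involves $\D_{\PC,k}$ for $k\le 3$, plus remainders of order $N^{-(3r-4)}$ or smaller.
\end{itemize}
If every planted advantage is $O(q^{3/2}/N^{3/2})$, the prefactors give at most $N^{2-r}\cdot q^{3/2}/N^{3/2}=q^{3/2}/N^{r-1/2}$. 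For degrees 4 and 6 the prefactors are even smaller, $N^{2-2r}$ or $N^{4-3r}$, and the remainders are dominated since $r\ge2$. For degree 3, the existing bound $O(q^2/N^r)$ from \cref{lem: xopdeg3_smallq} is already at most $q^{3/2}/N^{r-1/2}$ because $q\le N$.

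The core new step is showing $|\inner{\muPC-\constone,P_\Alg}|=O(q^{3/2}/N^{3/2})$. Purify the random oracle as in the overview: the final joint state is $\ket{\Psi}=N^{-N/2}\sum_g\ket{\psi^g}\ket{g}$. This gives
\[
\inner{\muPC-\constone,P_\Alg}=\tfrac{2}{N-1}\bra{\Psi}\Pi\otimes\Bigl(\sum_{x<y}(\mathsf{Col}_{xy}-\tfrac1N I)\Bigr)\ket{\Psi}.
\]
Next, pass to the compressed-oracle picture by applying the database Fourier transform. Each operator $\mathsf{Col}_{xy}-\frac1N I$ is $\frac1N\sum_{a\neq0}\chi_a\otimes\chi_a$ on registers $x,y$. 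It therefore annihilates any database in which register $x$ or $y$ is still $\ket{\widehat 0}$, i.e.\ unqueried. Hence only pairs with both $x,y$ recorded contribute.

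For a pair of nonempty registers, $\sum_{a\ne0}\chi_a\otimes\chi_a$ acts as a shift in the Fourier/database basis. It maps a database with Fourier labels $(u,v)$ at $(x,y)$ to those with $(u+a,v+a)$. It thus carries amplitude between databases that differ by the same nonzero shift on two coordinates.

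The plan is to decompose $\ket{\Psi}$ by the number $k\le q$ of recorded entries and apply Cauchy–Schwarz pair by pair. One factor $\frac1N$ comes from the normalization of the operator. A factor of about $q$ for the number of relevant $x$ should then combine with a $\sqrt{q/N}$-type bound on the weight of the shifted database: the shifted database must agree with a genuinely reachable one, and the known compressed-oracle progress bound for collisions, of order $q^3/N$ in probability, gives amplitude $\sqrt{q^3/N}$. Together these aim at $\frac{1}{N}\cdot\sqrt{q^3/N}=q^{3/2}/N^{3/2}$.

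I expect this step to be the main obstacle. The naive count over $\binom{q}{2}$ pairs with $N-1$ shifts loses too much, so the sum over $a$ has to be handled coherently as a single operator of norm $O(1)$ on the recorded subspace, not term by term. A fallback route goes through $\cS_R$ with $R\approx N^2$: relate $\Pi$ to a collision-detecting projector and reuse Zhandry's compressed analysis of small-range functions with the improved dependence.

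Once the $\DPC$ bound is established, the same argument, or the decomposition of \cref{lem: decomposing small to pc} run in reverse, extends to $\D_{\PC,k}$ for $k=2,3$ and to $\DfoX$. The relevant operators are products or sums of such pair operators, and each additional factor only costs a further $q/N\le1$. Combining all of this with the first paragraph proves the lemma.
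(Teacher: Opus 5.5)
Your outer bookkeeping is sound. Degrees $5$ and $\ge7$ go via \cref{lem: xop_weights}, and degree $3$ via $O(q^2/N^r)\le O(q^{3/2}/N^{r-1/2})$. Degrees $4,6$ go via the reductions already in the paper, $\Adv^{\mathsf{dist}}_q(\D_{\PC,k},\dist{\F})\le 5k\,\Adv^{\mathsf{dist}}_{2q}(\DPC,\dist{\F})$ and $\Adv^{\mathsf{dist}}_q(\DfoX,\dist{\F})\le 6\,\Adv^{\mathsf{dist}}_{2q}(\DPC,\dist{\F})$, which transport any improved $\DPC$ bound for free. (The prefactors there are actually $N^{4-2r}$ and $N^{6-3r}$, not $N^{2-2r}$ and $N^{4-3r}$, but this still suffices for $r\ge2$.) The paper is shorter at this level. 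For $q\le N^{1/3}$ it uses $q^3/N^r\le q^{3/2}/N^{r-1/2}$. For $q\ge N^{1/3}$ it bounds all degrees $d\ge3$ at once by Cauchy--Schwarz, $O(N^{-(2r-3)})\le q^{3/2}/N^{r-1/2}$, so only degree $2$ needs new input. Either way, everything rests on \cref{thm: improved planted_collision}, i.e.\ on $\bra{\Psi}\Pi\,\Delta\mathsf{Col}\ket{\Psi}=O(q^{3/2}/\sqrt N)$. This is exactly the step your proposal does not establish.

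The mechanism you sketch starts from a false claim. The operator $\frac1N\sum_{a\neq0}\chi_a\otimes\chi_a$ shifts both Fourier labels by the same $a$. It therefore sends an unqueried pair (labels $(0,0)$) to a superposition of labels $(a,a)$, $a\neq0$, with squared norm $(N-1)/N^2\neq0$. So it \emph{creates} recorded pairs rather than annihilating them; in the paper's compressed notation, $(\mathsf{Col}_{xy}-\frac1N I)\ket{\bot,\bot}=Q_xQ_y\mathsf{Col}_{xy}\ket{\bot,\bot}\neq0$. Summed over all $\binom N2$ pairs this gives $\|\Delta\mathsf{Col}\ket{\emptyset}\|\approx\sqrt{N/2}$. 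Once these terms are kept, Cauchy--Schwarz against the collision subspace (amplitude $\sqrt{q^3/N}$) yields only about $q^{3/2}$, a factor $\sqrt N$ short. Nor can they be dropped: for $\Pi=I$ the expectation is exactly $\Exp_{f}\sum_{x<y}(\truth{f(x)=f(y)}-\frac1N)=0$. So they must cancel the recorded-pair part, which is essentially the expected number of recorded collisions minus $\frac1N\binom q2$ and is not zero in general.

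The missing idea is the device that neutralizes the creation terms and turns the rest into per-query progress. Write $\Delta\mathsf{Col}=T+R$ with $T=\sum_{x<y}[Q_xQ_y,\mathsf{Col}_{xy}]$. $T$ is anti-Hermitian and commutes with $\Pi$, so $\operatorname{Re}\bra{\Psi}\Pi T\ket{\Psi}=0$. Meanwhile $R\ket{\Omega}=0$ and $[\Delta\mathsf{Col},O]=0$, whence $\bra{\Psi}\Pi\,\Delta\mathsf{Col}\ket{\Psi}=-\sum_{j=1}^q\operatorname{Re}\bra{\Phi_j}[T,O]\ket{\Psi_j^-}$. Each commutator is $O(\sqrt{q/N})$, because $[T,O]=OB-BO+B^\dagger O-OB^\dagger$ with $\|B^\dagger\ket{\Xi}\|^2\le\frac4N\bra{\Xi}(\mathsf D_{\mathrm{cnt}}+\mathsf{Col}_{\mathrm{cnt}})\ket{\Xi}$, and both counts are $O(q)$ by \cref{lem: pc recorded counts}. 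So the $q\cdot\sqrt{q/N}$ comes from $q$ queries times a per-query bound, not from counting relevant inputs against a collision probability.

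Your fallback through $\cS_R$ does not close the gap either. Zhandry's $27q^3/R$ only reproduces the $O(q^3/N^2)$ of \cref{lem: planted_collision}. The improved large-range bound of \cref{cor: improved small_range} is derived \emph{from} the improved planted-collision bound, so invoking it would be circular.
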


The main result of this section is the following bound for the planted collision pseudorandomness. Compared to $q^3/N^2$ bound in \cref{lem: planted_collision}, the following lemma gives a better bound for $q^3 \ge N$.
\begin{lemma}\label{thm: improved planted_collision}
Let $N=2^n\ge2$. For every integer $q\ge0$, it holds that $\Adv_q^{\mathsf{dist}}(\DPC,\dist{\F})
    =O\left({q^{3/2}}/{N^{3/2}}\right).$
\end{lemma}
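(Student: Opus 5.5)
We prove the bound through Zhandry's compressed oracle, following the sketch in the technical overview. The first step rewrites the advantage as an expectation of a collision observable on the purified joint state. By \cref{thm: advantage by inner product} and \cref{eqn: PCdensity},
\[
\inner{\muPC-\constone,P_\Alg}=\frac{2}{N-1}\,\Exp_{f\gets\F}\Big[P_\Alg(f)\sum_{x<y}\Big(\truth{f(x)=f(y)}-\frac1N\Big)\Big].
\]
Now run $\Alg$ against a purified uniform oracle $\frac{1}{\sqrt{N^N}}\sum_g\ket{g}$, and let $\ket{\Psi}=\frac{1}{\sqrt{N^N}}\sum_g\ket{\psi^g}\ket{g}$ be the final joint state. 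With $\Pi$ the acceptance projector on the algorithm register and $\mathsf{Col}_{xy}$ the diagonal projector onto $\{g: g(x)=g(y)\}$, the expectation equals $\frac{2}{N-1}\bra{\Psi}\Pi\otimes K\ket{\Psi}$, where $K:=\sum_{x<y}(\mathsf{Col}_{xy}-\frac1N I)$. This holds because $\Pi\otimes K$ is diagonal in $g$. It therefore suffices to show $|\bra{\Psi}\Pi\otimes K\ket{\Psi}|=O(q^{3/2}N^{-1/2})$.

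The second step moves to the compressed picture. Conjugate by the local Fourier/compression isometry. The uniform superposition $\ket{\widehat 0}$ at each point $x$ plays the role of $\bot$.

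\begin{itemize}
\item Write $\mathsf{Col}_{xy}-\frac1N I=\frac1N\sum_{a\neq0}(\text{phase }\chi_a(g(x)+g(y)))$. In the Fourier basis of the database this is $\frac1N\sum_{a\ne0}X^{(x)}_aX^{(y)}_a$, where $X_a$ shifts the Fourier label by $a$.
\item This operator annihilates any Fourier database in which both $x$ and $y$ are at $\widehat 0$.
\item More generally, it maps a Fourier database of support $S$ to databases of support within $S\cup\{x,y\}$, and its matrix entries connect only databases whose supports differ on $\{x,y\}$.
\end{itemize}

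After $q$ queries, $\ket{\Psi}$ is supported on Fourier databases of support size at most $q$. Hence only pairs with $\{x,y\}\cap S\neq\emptyset$ matter, and a pair contributes to a diagonal-type matrix element only if the two databases have support sizes within $q$.

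The third step splits $K=K_2+K_1$ by how many of $x,y$ lie in the support. For $K_2$ (both points in $S$), the entries are $\frac1N\sum_{a\neq0}$ of relabelings. Its norm restricted to support-$\le q$ databases is bounded by $\frac1N\cdot\binom q2\cdot$(the norm of $\sum_a X_a\otimes X_a$ on a pair). That last norm is $N$ on the subspace orthogonal to the constraint, which gives $O(q^2)$. This is too crude, so I would instead use that $\sum_{a}X_a\otimes X_a$ is $N$ times the projector onto ``sum of labels fixed''. The result is a pair-collision projector on the compressed database, namely $\mathsf{Col}$ in the database values. Bounding its expectation is the standard compressed-oracle collision bound: the amplitude of a database containing a collision after $q$ queries is $O(q^{3/2}/\sqrt N)$ (Zhandry's collision-finding analysis). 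Applying Cauchy–Schwarz with $\Pi$ then bounds this term by $O(q^{3/2}N^{-1/2})\cdot\|\ldots\|$. For $K_1$ (exactly one of $x,y$ in $S$, the other at $\widehat0$), the term maps a database with $x\in S$ at label $a$ to one where $y$ is newly added with label $a$ and $x$ is removed or relabeled. I would bound $\sum_y$ of these by $\frac1N\sqrt{N}\cdot q$ through an orthogonality argument: the images for distinct $y$ are orthogonal, so the norm is $\frac{1}{N}\sqrt{N\cdot q\cdot N}$. Here I would need to check that the total stays $O(\sqrt{q^3/N})$, using that such matrix elements only pair with components of the state whose support differs by one, and that the database-size growth per query controls them.

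The main obstacle is the $K_1$ cross term. Naive norm bounds give $O(q)$ rather than $q^{3/2}/\sqrt N$, and showing the correct bound likely needs a progress-measure (hybrid) argument over queries. Concretely, I would define $\ket{\Psi_t}$ after $t$ queries and bound the change in $\bra{\Psi_t}\Pi_t\otimes K\ket{\Psi_t}$ per query by $O(\sqrt{t/N})$. A single compressed-oracle query changes the overlap with the ``collision-relevant'' subspace by amplitude $O(\sqrt{t/N})$, as in the collision-finding bound, and summing over $t\le q$ gives $O(q^{3/2}/\sqrt N)$. Multiplying by $\frac{2}{N-1}$ then yields $O(q^{3/2}/N^{3/2})$.
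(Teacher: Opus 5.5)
Your first step, rewriting the advantage as $\frac{2}{N-1}\bra{\Psi}\Pi\otimes K\ket{\Psi}$ with $K=\sum_{x<y}(\mathsf{Col}_{xy}-\frac1N I)$ (the paper's $\Delta\mathsf{Col}$), is exactly what the paper does. The subsequent analysis of $K$, however, has a genuine gap, and it sits at the crux of the problem.

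Your claim that $\mathsf{Col}_{xy}-\frac1N I$ annihilates databases with $x,y$ both at $\widehat 0$ is false. It sends $\ket{\widehat 0}_x\ket{\widehat 0}_y$ to $\frac1N\sum_{a\neq 0}\ket{\widehat a}_x\ket{\widehat a}_y\neq 0$. Summing over pairs, on the empty database $\ket{\emptyset}$ one gets $\|K\ket{\emptyset}\|^2=\binom N2\frac{N-1}{N^2}\approx N/2$.

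Consequences for your argument:
\begin{itemize}
\item Your split $K=K_2+K_1$ drops the term that creates a fresh equal-label pair, and (inside $K_2$) its adjoint that erases one.
\item Your inference that only pairs meeting $S$ matter fails. This term couples components of $\ket{\Psi}$ with supports $S$ and $S\cup\{x,y\}$, and both have size at most $q$.
\item A progress measure on $K$ itself cannot rescue this. Write $\ket{\Psi}=W\ket{\Omega}$ for the full circuit. Since $K$ is diagonal in $g$, $[K,O]=0$ and $[K,U_i]=0$, so $\bra{\Psi}\Pi K\ket{\Psi}=\bra{\Psi}\Pi WK\ket{\Omega}$ exactly.
\item Hence every per-query change of a hybrid built from $K$ is zero. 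The whole quantity sits in the initial vector $K\ket{\Omega}$, which has norm $\Theta(\sqrt N)$.
\item Your quantity $\bra{\Psi_t}\Pi_t\otimes K\ket{\Psi_t}$ is therefore either undefined (there is no intermediate $\Pi_t$) or, with the Heisenberg-evolved projector, constant in $t$.
\item Your $K_1$ estimate is, as you acknowledge, also not established.
\end{itemize}

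The missing idea is a correction operator that cancels $K\ket{\Omega}$ without changing the answer. The paper works in the compressed basis and uses $T=\sum_{x<y}[Q_xQ_y,\mathsf{Col}_{xy}]$ with $Q=I-\ket{\bot}\bra{\bot}$.
\begin{itemize}
\item $T$ is anti-Hermitian and commutes with $\Pi$, so $\bra{\Psi}\Pi T\ket{\Psi}$ is purely imaginary. Hence $\bra{\Psi}\Pi\Delta\mathsf{Col}\ket{\Psi}=\operatorname{Re}\bra{\Psi}\Pi R\ket{\Psi}$ with $R=\Delta\mathsf{Col}-T$.
\item $T\ket{\emptyset}=\Delta\mathsf{Col}\ket{\emptyset}$, which gives $R\ket{\Omega}=0$.
\item Telescoping then yields $-\sum_j\operatorname{Re}\bra{\Phi_j}[T,O]\ket{\Psi_j^-}$, and $[T,O]=OB-BO+B^\dagger O-OB^\dagger$ is local to the queried register.
\item Each term is $O(\sqrt{q/N})$. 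This uses $\|B^\dagger\ket{\Xi}\|^2\le\frac4N\bra{\Xi}(\mathsf D_{\mathrm{cnt}}+\mathsf{Col}_{\mathrm{cnt}})\ket{\Xi}$ and the count bounds $t$ and $129t$. These bounds are applied also to the backward states $\ket{\Phi_j}$, by realizing them as algorithms making at most $2q$ queries.
\end{itemize}
Summing over the $q$ queries gives $O(q^{3/2}/\sqrt N)$. Without this device, or an equivalent one, your sketch does not reach the bound.

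There is also a smaller issue in your $K_2$ term. There you need the expected number of recorded collision pairs, not the amplitude of the database containing some collision. The latter controls the former only up to a factor of $\binom q2$.
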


This lemma can be used to give an alternative bound for the small-range indistinguishability \cite{zhandry2021construct} for the large-range regime.
This improves the bound in \cref{thm:smallrange}
when $q^3\gg N$ and $R$ is sufficiently large, or more precisely for $q^3 \gg N$ and $R\gg N^4/q^3$.

\begin{corollary}\label{cor: improved small_range}
Let $N=2^n\ge4$, $q\ge1$, and $R\ge N^2$ be integers. Then
\[
    \Adv_q^{\mathsf{dist}}(\cS_R,\cS_\infty)
    =O\left(\frac{q^{3/2}\sqrt N}{R}+\frac{N^4}{R^2}\right).
\]
In particular, if $R\ge N^{7/2}/q^{3/2}$, then
$
    \Adv_q^{\mathsf{dist}}(\cS_R,\cS_\infty)
    =O\left(\frac{q^{3/2}\sqrt N}{R}\right).$
\end{corollary}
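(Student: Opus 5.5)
The plan is to reuse the decomposition from \cref{lem: decomposing small to pc} and the proof of \cref{lem: planted_collision}, running the argument in the reverse direction. For $R\ge N^2$ and $h=g\circ f\gets\cS_R$, let $p_0,p_1,p_2$ be the probabilities of $C_f=0$, $C_f=1$, and $C_f\ge2$. As in \cref{eqn:smallrange decomposition}, the density of $\cS_R$ satisfies
\[
\mu_R-\constone=p_1(\muPC-\constone)+p_2(\rho_R-\constone),
\]
where $\rho_R$ is the density of $h$ conditioned on $C_f\ge 2$. Here we use that conditioned on $C_f=0$ the function $h$ is uniform, and conditioned on $C_f=1$ it is distributed as $\DPC$. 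Since $\cS_\infty=\dist{\F}$ has density $\constone$, \cref{thm: advantage by inner product} applied to any $q$-query $\Alg$ gives
\[
\Adv^{\mathsf{dist}}(\cS_R,\cS_\infty;\Alg)=\left|\inner{\mu_R-\constone,P_\Alg}\right|\le p_1\left|\inner{\muPC-\constone,P_\Alg}\right|+p_2\left|\inner{\rho_R-\constone,P_\Alg}\right|.
\]

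Next I bound each term. The second inner product is a difference of two acceptance probabilities, so it is at most $1$. By \cref{lem: decomposing small to pc}, $p_2\le N^4/(8R^2)$. For the first term I need an upper bound on $p_1$, which \cref{lem: decomposing small to pc} does not state (it only gives a lower bound). This is immediate from $\truth{C=1}\le C$: it gives $p_1\le\Exp[C_f]=\binom N2/R\le N^2/(2R)$. The first inner product is exactly the planted-collision advantage, which \cref{thm: improved planted_collision} bounds by $O(q^{3/2}/N^{3/2})$.

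Multiplying out,
\[
p_1\cdot O\left(\frac{q^{3/2}}{N^{3/2}}\right)=O\left(\frac{N^2}{R}\cdot\frac{q^{3/2}}{N^{3/2}}\right)=O\left(\frac{q^{3/2}\sqrt N}{R}\right),
\]
and adding $p_2\le N^4/(8R^2)$ yields the first claim after maximizing over $\Alg$. For the ``in particular'' part, the condition $R\ge N^{7/2}/q^{3/2}$ is equivalent to $N^4/R^2\le q^{3/2}\sqrt N/R$, so the second term is absorbed into the first.

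There is no real obstacle once \cref{thm: improved planted_collision} is assumed, since all the substance lies in that lemma. The only point needing care is that $\rho_R$ is a genuine probability density when $p_2>0$, which holds because it is the conditional density on the event $C_f\ge2$. This makes the bound $|\inner{\rho_R-\constone,P_\Alg}|\le1$ valid, and if $p_2=0$ the term simply vanishes.
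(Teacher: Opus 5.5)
Your proposal is correct and follows essentially the same route as the paper. Both arguments use the decomposition $\mu_R-\constone=p_1(\muPC-\constone)+p_2(\rho_R-\constone)$, bound $p_1\le\Exp[C_f]=\binom N2/R$, apply \cref{thm: improved planted_collision} to the planted-collision term, bound the remainder by $p_2\le N^4/(8R^2)$, and absorb that remainder when $R\ge N^{7/2}/q^{3/2}$.
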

\begin{proof}
We use the fact from the proof of \cref{lem: planted_collision} that the acceptance probability of the small-range distribution can be decomposed into the random functions and the random functions with a planted collision as in \cref{eqn:smallrange decomposition}.
Using \cref{lem: decomposing small to pc}, the coefficient $\Pr[C_f=1]$ of the planted collision part satisfies
$\Pr[C_f=1]\le\Exp[C_f]=\binom N2/R$.
By \cref{thm: improved planted_collision}, we have
\[
    \Adv_q^{\mathsf{dist}}(\cS_R,\cS_\infty)
    \le
    \frac{\binom N2}{R}
    \Adv_q^{\mathsf{dist}}(\DPC,\dist{\F})
    +\frac{N^4}{8R^2}
    =
    O\left(\frac{q^{3/2}\sqrt N}{R}+\frac{N^4}{R^2}\right).
\]
The last claim follows from
$N^4/R^2\le q^{3/2}\sqrt N/R$ when $R\ge N^{7/2}/q^{3/2}$.
\end{proof}

We prove the main result. The remainder of this section is devoted to prove the above lemma.

\begin{proof}[Proof of \cref{thm: improved main_bound}]
If $q\le N^{1/3}$, then \cref{len: small q bound} proves the desired result.
We therefore assume $q\ge N^{1/3}$.
Fix a $q$-query algorithm $\Alg$.
Recall from the proof of \cref{lem: xopdeg2_smallq} that
\[
    \muXoPr^{=2}
    =\frac{(-1)^rN}{2(N-1)^{r-1}}(\muPC-\constone).
\]
Applying \cref{thm: improved planted_collision}, we have
\[
    \left|\inner{\muXoPr^{=2},P_\Alg}\right|
    \le\frac{N}{2(N-1)^{r-1}}
        \Adv_q^{\mathsf{dist}}(\DPC,\dist{\F})
    =O\left(\frac{q^{3/2}}{N^{r-1/2}}\right).
\]
For the remaining components, \cref{lem: xop_weights} with $d_{\max}=2q$ gives
\begin{align*}
    \sum_{d=3}^{2q}\|\muXoPr^{=d}\|_2^2
    &=
    O\left(
        \frac1{N^{4r-5}}+\frac1{N^{4r-6}}
        +\frac1{N^{6r-9}}+\frac1{N^{6r-8}}
    \right)
    =O\left(\frac1{N^{4r-6}}\right).
\end{align*}
The conditions of that lemma follow from $q\ge N^{1/3}$, $q\le N/57774$, and $N\ge16r$.
Since $\|P_\Alg\|_2\le1$, the Cauchy--Schwarz inequality bounds the contribution of these components by
\[
    \left|\sum_{d=3}^{2q}\inner{\muXoPr^{=d},P_\Alg^{=d}}\right|
    \le
    \left(\sum_{d=3}^{2q}\|\muXoPr^{=d}\|_2^2\right)^{1/2}
    =O\left(\frac1{N^{2r-3}}\right).
\]
Finally, $r\ge2$ and $q\ge N^{1/3}$ imply
\[
    \frac1{N^{2r-3}}\le\frac1{N^{r-1}}
    \le\frac{q^{3/2}}{N^{r-1/2}}.
\]
Combining the above bounds with $\muXoPr^{=1}=\constzero$ and \cref{cor: advantage by inner product} proves the theorem.
\end{proof}

\subsection{Planted collisions in the compressed oracle}\label{subsec: pc compressed}

Fix a $q$-query algorithm $\Alg$.
By \cref{eqn: PCdensity}, we immediately have
\begin{align}\label{eqn: pc density centered}
    \inner{\muPC-\constone,P_\Alg}
    &=
    \frac{2}{N-1}\,
    \Exp_{f\gets\F}\left[
        P_\Alg(f)
        \sum_{x<y}\left(\truth{f(x)=f(y)}-\frac1N\right)
    \right].
\end{align}
We express the expectation on the right-hand side using the compressed oracle \cite{C:Zhandry19}. We follow the formalization of~\cite{EC:CFHL21}.

\paragraph{The compressed oracle.}
We briefly explain the compressed oracle for random functions, which is perfectly indistinguishable from a quantum random oracle.
The database consists of one register for each input $x$, with basis
$\{\ket{\bot}\}\cup\{\ket v:v\in\bit^n\}$.
Write $\ket{+}=N^{-1/2}\sum_{v\in\bit^n}\ket v$, and let the compression operator $\mathsf{Comp}$ be the unitary that swaps $\ket{\bot}$ and $\ket{+}$ while fixing their orthogonal complement.
For the compressed representation of $v$, we use the notation
\[
    \ket{\widehat v}:=\mathsf{Comp}\ket v
    =\ket v-\frac{\ket{+}}{\sqrt N}+\frac{\ket{\bot}}{\sqrt N}.
\]
The vectors $\{\ket{\widehat v}\}_v$ are orthonormal.
Their span is the orthogonal complement of $\ket{+}$ in the database register.
Throughout the execution, each database register remains supported on this span.
For $f\in\F$, write
$\ket{\widehat f}:=\bigotimes_x\ket{\widehat{f(x)}}$.
The empty database and a compressed query $O$ satisfy
\[
    \ket{\emptyset}=\ket{\bot}^{\otimes N}
    =N^{-N/2}\sum_{f\in\F}\ket{\widehat f},
    \qquad
    O\ket{x,y}\ket{\widehat f}
    =\ket{x,y+f(x)}\ket{\widehat f}.
\]
A compressed query applies $\mathsf{Comp}$ to the database register indexed by the query input, performs the usual quantum random oracle query, and applies $\mathsf{Comp}$ again.
In particular, $O^\dagger=O$, and a query on input $x$ acts only on database register $x$.

Define the collision operators
\begin{align}\label{eqn: pc collision operators}
    \mathsf{Col}_{xy}
    &:=
    \sum_v(\ketbra{\widehat v})_x\otimes(\ketbra{\widehat v})_y,
    &
    \Delta\mathsf{Col}
    &:=
    \sum_{x<y}\left(\mathsf{Col}_{xy}-\frac1N I\right).
\end{align}
They satisfy the following equations; see the similarity between the second equation and \cref{eqn: pc density centered}:
\[
    \mathsf{Col}_{xy}\ket{\widehat f}
    =\truth{f(x)=f(y)}\ket{\widehat f},
    \qquad
    \Delta\mathsf{Col}\ket{\widehat f}
    =
    \sum_{x<y}\left(\truth{f(x)=f(y)}-\frac1N\right)\ket{\widehat f}.
\]

Let $\ket{\psi_q^f}$ be the final state of the purified algorithm $\Alg$ with oracle $f$.
Let $\Pi_{\mathcal A}=\ketbra1_Z\otimes I_{\mathrm{rest}}$ be its acceptance projector on the output qubit $Z$, so
$P_\Alg(f)=\bra{\psi_q^f}\Pi_{\mathcal A}\ket{\psi_q^f}$.
On the joint algorithm--database space, write
$\Pi=\Pi_{\mathcal A}\otimes I_{\mathrm{db}}$.
The final state of the compressed execution is
\[
    \ket{\Psi_q}
    =
    N^{-N/2}\sum_{f\in\F}\ket{\psi_q^f}\ket{\widehat f}.
\]
Expanding its acceptance expectation and using the orthogonality of $\{\ket{\widehat f}\}_f$, we obtain
\begin{align}\label{eqn: pc compressed expectation}
    \bra{\Psi_q}\Pi\Delta\mathsf{Col}\ket{\Psi_q}
    &=
    \frac1{N^N}\sum_{f,g\in\F}
        \bra{\psi_q^g}\Pi_{\mathcal A}\ket{\psi_q^f}
        \bra{\widehat g}\Delta\mathsf{Col}\ket{\widehat f}
        \notag\\
    &=
    \Exp_{f\gets\F}\left[
        P_\Alg(f)
        \sum_{x<y}\left(\truth{f(x)=f(y)}-\frac1N\right)
    \right]
        \notag\\
    &=\frac{N-1}{2}\inner{\muPC-\constone,P_\Alg}.
\end{align}
Thus, it suffices to bound the left-hand side by $O(q^{3/2}/\sqrt N)$ to prove \cref{thm: improved planted_collision}.

\subsection{The number of collisions in the database}\label{subsec: pc recorded}


For a database $D\in(\bit^n\cup\{\bot\})^{\bit^n}$, define the two diagonal operators
\begin{align*}
    \mathsf D_{\mathrm{cnt}}\ket D
    &:=
    |\{x:D(x)\ne\bot\}|\ket D,\\
    \mathsf{Col}_{\mathrm{cnt}}\ket D
    &:=
    \sum_{x<y}\truth{D(x)=D(y)\ne\bot}\ket D.
\end{align*}
These count the queried inputs and collision pairs in the database, respectively.
We prove the following bound in \cref{app: compressed}; the second bound is not tight but sufficient for our purpose.

\begin{lemma}\label{lem: pc recorded counts}
Let $\ket{\Xi_t}$ be the final state over the algorithm and database registers outputted by some algorithm after making $t\le N$ queries to the compressed oracle $O$. 
Then
\[
    \bra{\Xi_t}\mathsf D_{\mathrm{cnt}}\ket{\Xi_t}\le t,
    \qquad
    \bra{\Xi_t}\mathsf{Col}_{\mathrm{cnt}}\ket{\Xi_t}\le129t.
\]
\end{lemma}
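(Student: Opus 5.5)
We prove both bounds by induction on the number of queries $t$, tracking how a single compressed query can change each counter. Write $\ket{\Xi_t}=O U_t\cdots O U_1\ket{\phi_0}\ket{\emptyset}$. The oracle-independent unitaries act only on the algorithm registers, so they commute with the diagonal operators $\mathsf D_{\mathrm{cnt}}$ and $\mathsf{Col}_{\mathrm{cnt}}$ and leave their expectations unchanged. It therefore suffices to bound the increase caused by one compressed query $O$, applied to an arbitrary state $\ket{\Xi}$ supported on databases with at most $t-1$ non-$\bot$ entries.

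\textbf{The count $\mathsf D_{\mathrm{cnt}}$.} We decompose by the query input $x$. Since $O$ acts only on register $x$ and on the algorithm's $(x,y)$ registers, and it preserves the $x$-register, the entries $D(x')$ for $x'\neq x$ are unchanged. Thus, on the branch with query input $x$, $\mathsf D_{\mathrm{cnt}}$ changes only through $\truth{D(x)\neq\bot}$, which lies in $\{0,1\}$. This gives the operator inequality $O^\dagger\mathsf D_{\mathrm{cnt}}O\le \mathsf D_{\mathrm{cnt}}+I$. Hence the expectation grows by at most $1$ per query, and since the initial value is $0$, we obtain $\bra{\Xi_t}\mathsf D_{\mathrm{cnt}}\ket{\Xi_t}\le t$.

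\textbf{The count $\mathsf{Col}_{\mathrm{cnt}}$.} On the branch with query input $x$, we write $\mathsf{Col}_{\mathrm{cnt}}=C_{\neg x}+\sum_{x'\ne x}\truth{D(x)=D(x')\neq\bot}$, where $C_{\neg x}$ commutes with $O$. So we must bound
\[
\Delta_x:=O^\dagger\Big(\sum_{x'\neq x}\truth{D(x)=D(x')\neq\bot}\Big)O-\sum_{x'\neq x}\truth{D(x)=D(x')\neq\bot}.
\]
This is the standard collision-finding transition analysis of \cite{C:Zhandry19,EC:CFHL21}. We fix the rest of the database $D_{-x}$, which has $k\le t-1$ defined entries, and condition on the phase register value $y$ (in the Fourier basis, as in the compressed oracle). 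The query then acts on register $x$ as $\mathsf{Comp}\,(\text{phase }\chi_y)\,\mathsf{Comp}$.

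Consider the amplitude moved into the subspace where $D(x)$ lies in the set $S=\{D(x')\}$ of defined values of $D_{-x}$, with $|S|\le k$. The usual bound shows that this amplitude is $O(\sqrt{|S|/N})$ relative to the branch norm. The post-query value $D(x)$ is a superposition over $v$ with amplitudes of order $1/\sqrt N$ (plus the $\bot$ and retained components). Its weight on $S$ is therefore at most about $|S|/N$, up to cross terms between the retained component and the $1/\sqrt N$ spread.

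Each $v\in S$ contributes a multiplicity $m_v=|\{x':D(x')=v\}|$ to the collision count, and $\sum_v m_v\le k$. We bound the expected increase by
\[
\sum_{v\in S}m_v\cdot O(1/N)+\text{cross terms}.
\]
The cross terms are bounded by Cauchy--Schwarz, which gives $O\big(\sqrt{\sum m_v^2/N}\big)$ times norms. The first part is $O(k/N)\le O(1)$ because $t\le N$. Crude constant bookkeeping yields a per-query increase of at most $129$, which gives $\bra{\Xi_t}\mathsf{Col}_{\mathrm{cnt}}\ket{\Xi_t}\le129t$.

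\textbf{Main obstacle.} The hard part is the cross terms. An entry $D(x)=v$ already present before the query, with large multiplicity $m_v$, can interfere with the freshly spread amplitude. Because the observable is weighted by $m_v$ rather than being an indicator, a naive operator-norm bound loses a factor up to $k$.

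To handle this, I would first try to bound $\|\,[O,\,M_x]\,\|$, where $M_x:=\sum_{x'\ne x}\truth{D(x)=D(x')\ne\bot}$. The commutator is supported where $\mathsf{Comp}$ acts nontrivially, and $\mathsf{Comp}-I$ has rank $2$ on register $x$. So the commutator has norm at most $O\big(\sqrt{\sum_v m_v^2/N}+\max_v m_v/\sqrt N\big)$. This is $O(1)$ whenever $m_v\le\sqrt N$.

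If some $m_v$ exceeds $\sqrt N$, I would fall back on the fact that the lemma only needs a linear bound. Specifically, I would bound the change by $O(1)\cdot\langle M_x\rangle^{1/2}$ and use the quadratic inequality $a_{t}\le a_{t-1}+c\sqrt{a_{t-1}/\cdot}+c$, closing the induction with the generous constant $129$.
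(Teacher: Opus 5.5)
Your argument for $\mathsf D_{\mathrm{cnt}}$ is fine. For $\mathsf{Col}_{\mathrm{cnt}}$ you found the right key estimate. Fix a branch with query input $x$, Fourier response $b$ and fixed values of the other registers. There the multiplicity operator $A=\sum_v m_v\ket{v}\bra{v}$ commutes with the phase $Z_b$ and annihilates $\ket{\bot}$, so $\|[A,O_b]\|=O(\sqrt{\sum_v m_v^2/N})$; your extra $\max_v m_v/\sqrt N$ term is subsumed by this.

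The gap is in turning this estimate into a linear bound. Two of your claims are false.
\begin{itemize}
\item You claim crude bookkeeping gives a per-query increase of at most $129$ for an arbitrary state with at most $t-1$ defined entries.
\item You claim the commutator is $O(1)$ whenever all $m_v\le\sqrt N$.
\end{itemize}
A single example refutes both. Suppose about $N$ other registers are defined and split into about $\sqrt N$ groups of $\sqrt N$ equal values. Then every $m_v\le\sqrt N$, but $\sqrt{\sum_v m_v^2/N}\approx N^{1/4}$. Take the equal superposition of $\ket{\bot}$ and the normalized projection of $A\ket{+}$ orthogonal to $\ket{+},\ket{\chi_b}$. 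A direct computation shows this state gains about $N^{1/4}$ in expected collision count in one query.

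Your fallback also feeds back the wrong quantity. $\langle M_x\rangle$ counts collisions \emph{involving} $x$, whereas the commutator is governed by $\sum_v m_v^2$, i.e.\ by collisions among the \emph{other} registers. Routing the bound through $\langle M_x\rangle$, e.g.\ via $\|[A^{1/2},O_b]\|=O(\sqrt{\ell/N})$, only yields an increase of order $\sqrt{s\,k_s/N}$. That recurrence does not close to a bound linear in $t$ for $t$ near $N$.

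The missing step is the identity $\sum_v m_v^2=\ell+2c$. Here $\ell\le s$ is the number of defined registers other than $x$, and $c$ is the number of collision pairs not involving $x$. The argument then runs as follows.
\begin{enumerate}
\item Decompose the pre-query state into orthogonal branches $\lambda$ (query input, Fourier response, values of the other registers) with weights $p_\lambda$.
\item On each branch the increase is at most $p_\lambda\|[A,O_b]\|\le 8p_\lambda\sqrt{(\ell_\lambda+2c_\lambda)/N}$.
\item Cauchy--Schwarz, together with $\sum_\lambda p_\lambda c_\lambda\le k_s$, gives the self-referential recurrence $k_{s+1}\le k_s+8\sqrt{(s+2k_s)/N}$.
\item This closes at $k_s\le 129s$, since $8\sqrt{259s/N}\le 8\sqrt{259}<129$ for $s<N$.
\end{enumerate}
No case split on $\max_v m_v$ is needed.

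The specific constant $129$ depends on the commutator constant $8$. The paper gets $8$ from the explicit six-term form of $O_b$: three terms contribute $2\|A\ket{+}\|$ each and two contribute $\|A\ket{+}\|=\|A\ket{\chi_b}\|$. A rank-based count through $\mathsf{Comp}$ would give a larger constant.
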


Write the compressed execution of $\Alg$ as
\[
    W=U_qOU_{q-1}\cdots U_1OU_0,
    \qquad
    \ket{\Psi_q}=W\ket{\Omega},
\]
where $\ket{\Omega}=\ket{0}\ket{\emptyset}$ is the initial algorithm state and the $U_i$ are oracle-independent unitaries acting only on the algorithm registers.
On each database register, define
\[
    Q:=I-\ketbra{\bot}.
\]
Observe the following identities
\[
    Q_xQ_y\mathsf{Col}_{xy}\ket{\emptyset}
    =
    \left(\mathsf{Col}_{xy}-\frac1N I\right)\ket{\emptyset},
    \qquad
    \mathsf{Col}_{xy}Q_xQ_y\ket{\emptyset}=0
\]
followed by expanding
$\mathsf{Col}_{xy}\ket{\bot,\bot}=N^{-1}\sum_v\ket{\widehat v,\widehat v}$.
Define, using the commutator $[A,B]=AB-BA$,
\begin{align}\label{eqn: pc correction}
    T:=\sum_{x<y}[Q_xQ_y,\mathsf{Col}_{xy}],
    \qquad
    R:=\Delta\mathsf{Col}-T
\end{align}
that acts on the database registers.
The two identities above gives
\begin{align*}
    T\ket{\emptyset}
    &=
    \sum_{x<y}
    \left(
        Q_xQ_y\mathsf{Col}_{xy}\ket{\emptyset}
        -
        \mathsf{Col}_{xy}Q_xQ_y\ket{\emptyset}
    \right)
    =
    \sum_{x<y}
        \left(\mathsf{Col}_{xy}-\frac1N I\right)
        \ket{\emptyset}
    =\Delta\mathsf{Col}\ket{\emptyset}.
\end{align*}
Since $R=\Delta\mathsf{Col}-T$ acts only on the database registers, we also have $R\ket{\Omega}=0.$
Using $RW=WR+[R,W]$, we therefore obtain
\begin{align}\label{eqn: pc circuit commutator}
    RW\ket{\Omega}
    &=WR\ket{\Omega}+[R,W]\ket{\Omega}
      =[R,W]\ket{\Omega}
        \notag\\
    &=
    \sum_{j=1}^q
        (U_qO\cdots OU_j)[R,O]
        (U_{j-1}O\cdots OU_0)\ket{\Omega}.
\end{align}
The last equality uses $[R,U_i]=0$ and the product rule for commutators.
The first and last products are interpreted as $U_q$ when $j=q$ and $U_0$ when $j=1$, respectively.

We further observe that $[R,O]=-[T,O]$.
The query $O$ preserves each database state $\ket{\widehat f}$,
while $\mathsf{Col}_{xy}$ acts on it by multiplication by
$\truth{f(x)=f(y)}$.
This gives
\[
    \mathsf{Col}_{xy}O\ket{z,w}\ket{\widehat f}
    =
    \truth{f(x)=f(y)}
        \ket{z,w+f(z)}\ket{\widehat f}
    =
    O\mathsf{Col}_{xy}\ket{z,w}\ket{\widehat f}.
\]
Thus $[\mathsf{Col}_{xy},O]=0$ for every $x<y$, and hence
$[\Delta\mathsf{Col},O]=0$.
Since $R=\Delta\mathsf{Col}-T$, it follows that
\[
    [R,O]
    =
    [\Delta\mathsf{Col},O]-[T,O]
    =
    -[T,O].
\]

Define the states immediately before a query and the corresponding backward states by
\[
    \ket{\Psi_j^-}:=U_{j-1}O\cdots OU_0\ket{\Omega},
    \qquad
    \ket{\Phi_j}:=U_j^\dagger O\cdots OU_q^\dagger\Pi\ket{\Psi_q}.
\]

Note also that $T^\dagger=-T$ and $[\Pi,T]=0$ because $T$ acts on the database while $\Pi$ acts on the algorithm's register. This shows that the expectation of $\Pi T$ is purely imaginary.
Therefore,
\[
    \bra{\Psi_q}\Pi\Delta\mathsf{Col}\ket{\Psi_q}
    =
    \operatorname{Re}\bra{\Psi_q}\Pi R\ket{\Psi_q}.
\]

Combining the preceding identities gives
\begin{align}\label{eqn: pc query sum}
    \bra{\Psi_q}\Pi\Delta\mathsf{Col}\ket{\Psi_q}
    =
    \operatorname{Re}\bra{\Psi_q}\Pi RW\ket{\Omega}
    &
    =
    \operatorname{Re}\bra{\Psi_q}\Pi[R,W]\ket{\Omega}
        \notag\\
    &
    =
    \sum_{j=1}^q
        \operatorname{Re}\bra{\Phi_j}[R,O]\ket{\Psi_j^-}  
        \notag\\
    &=-\sum_{j=1}^q
        \operatorname{Re}\bra{\Phi_j}[T,O]\ket{\Psi_j^-}  
\end{align}

\subsection{Proof of the planted-collision bound}\label{subsec: pc completion}

\begin{proof}[Proof of \cref{thm: improved planted_collision}]
The cases $q=0$ and $q>N/2$ are immediate.
We assume $1\le q\le N/2$.

Define
\[
    B_x:=\ketbra{\bot}_x\sum_{y\ne x}Q_y\mathsf{Col}_{xy},
    \qquad
    B:=\sum_x\ketbra x\otimes B_x,
\]
where the first register in $B$ is the query input register.
For a fixed query input $x$, we have
\[
    [Q_xQ_y,\mathsf{Col}_{xy}]
    =
    [Q_y,\mathsf{Col}_{xy}]
    -\ketbra{\bot}_xQ_y\mathsf{Col}_{xy}
    +\mathsf{Col}_{xy}Q_y\ketbra{\bot}_x.
\]
The first term commutes with the query.
The last two terms, summed over $y\ne x$, give $-B_x+B_x^\dagger$.
All terms not involving $x$ also commute with the query.
Hence
\begin{align}\label{eqn: pc local commutator}
    [T,O]=OB-BO+B^\dagger O-OB^\dagger.
\end{align}

We show in \cref{app: compressed} that every joint algorithm-database state $\ket{\Xi}$ whose database registers are supported on $\operatorname{span}\{\ket{\widehat f}:f\in\F\}$ satisfies
\begin{align}\label{eqn: pc B bound}
    \|B^\dagger\ket{\Xi}\|^2
    \le\frac4N
        \bra{\Xi}\bigl(\mathsf D_{\mathrm{cnt}}
                    +\mathsf{Col}_{\mathrm{cnt}}\bigr)\ket{\Xi}.
\end{align}

Each of the four vectors $\ket{\Phi_j}, O\ket{\Phi_j},
    \ket{\Psi_j^-}, O\ket{\Psi_j^-}$
has norm at most one.
Recall the states $\ket{\Psi_j^-}$ and $O\ket{\Psi_j^-}$ are obtained after
$j-1$ and $j$ queries to $O$, respectively.

To apply \cref{lem: pc recorded counts} to $\ket{\Phi_j}$,
consider an algorithm that runs $W$, copies its output bit to an
auxiliary qubit using a CNOT gate, and then applies
$U_j^\dagger O\cdots OU_q^\dagger$ to the original registers.
The component of the resulting state in which the auxiliary qubit
is $1$ is precisely $\ket{\Phi_j}$.
This algorithm makes $2q-j$ queries, since $O^\dagger=O$.
One additional query gives the component $O\ket{\Phi_j}$ and uses
$2q-j+1\le2q$ queries in total.
Both count operators are positive and act only on the database registers,
so their expectations in either component are at most their expectations
in the corresponding full state.
Thus, since $2q\le N$, \cref{lem: pc recorded counts} gives
\[
    \bra{\Xi}\mathsf D_{\mathrm{cnt}}\ket{\Xi}\le2q,
    \qquad
    \bra{\Xi}\mathsf{Col}_{\mathrm{cnt}}\ket{\Xi}\le258q
\]
for each of the four vectors $\ket{\Xi}$ above.
Applying \cref{eqn: pc B bound}, we obtain
\[
    \|B^\dagger\ket{\Xi}\|^2
    \le\frac4N(2q+258q)
    =\frac{1040q}{N}.
\]

By the Cauchy--Schwarz inequality, each of the four terms in
\cref{eqn: pc local commutator}, between $\bra{\Phi_j}$ and
$\ket{\Psi_j^-}$, has absolute value at most $\sqrt{1040q/N}$.
For example,
\[
    \left|\bra{\Phi_j}OB\ket{\Psi_j^-}\right|
    \le
    \|B^\dagger O\ket{\Phi_j}\|\,
    \|\ket{\Psi_j^-}\|
    \le\sqrt{\frac{1040q}{N}}.
\]
It follows that
\[
    \left|\bra{\Phi_j}[T,O]\ket{\Psi_j^-}\right|
    \le4\sqrt{\frac{1040q}{N}}.
\]
Using \cref{eqn: pc query sum}, we obtain
\[
    \left|\bra{\Psi_q}\Pi\Delta\mathsf{Col}\ket{\Psi_q}\right|
    \le
    \sum_{j=1}^q
        \left|\bra{\Phi_j}[T,O]\ket{\Psi_j^-}\right|
    \le4q\sqrt{\frac{1040q}{N}}
    =16\sqrt{65}\,\frac{q^{3/2}}{\sqrt N}.
\]
Finally, \cref{eqn: pc compressed expectation} concludes the proof of the lemma.
\end{proof}

\ifnum\anonymous=1
\else
\fi

\bibliographystyle{plain}
\bibliography{ref,cryptobib/abbrev3,cryptobib/crypto}

\appendix
\section{Collision Bias and Heuristic Attacks}\label{app: attacks}

We present a heuristic outline of the quantum attacks described in
\cref{subsec:attacks}.
Throughout, $r\ge 2$ is fixed, and the oracle $O$ is either a uniformly
random function ($O=\mathrm{rf}$) or the $\xop[r]$ construction
($O=\xop$).
The collision statistic calculations below are intended to explain the expected scalings of advantages.

Recall the collision probability of $\xop[r]$ 
\[
\Pr\left[\xop[r](x) = \xop[r](x')\right] = \frac1N + \frac{(-1)^r}{N(N-1)^{r-1}}
,\qquad \delta_r:=\frac{(-1)^r}{N(N-1)^{r-1}}
\]
where $\delta_r$ is the difference between the collision probabilities
of $\xop[r]$ and a random function, and
$|\delta_r|=\Theta(N^{-r})$.

All attacks except the last one are based on the collision finding attack \cite{BHT97}. We actually \emph{count} the number of the (parts of) collisions in the following way:
\begin{enumerate}[nosep]
    \item For an appropriate set $S \subset \bit^n$, compute the table $T_S:=\{O(x):x\in S\}$, the set of distinct outputs obtained from the inputs in $S$.
    Define the indicator function $f_S(z)= \truth{O(z) \in T_S}$ for $z \in \bit^n \setminus S$.
    \item Approximately count the number of $z \in \bit^n \setminus S$ such that $f_S(z)=1$.
\end{enumerate}
Note that $f_S(z)= \truth{O(z) \in T_S}$ indicates that $z$ is a part of the \emph{cross collision} $O(z)=O(x)$ for some $x\in S$. 
The first step makes $s:=|S|$ queries to the oracle $O$. Let $K=|\{z\in \bit^n \setminus S : f_S(z)=1\}|$ be the number of inputs outside $S$ whose outputs collide with one of the stored values in $T_S.$
Using quantum counting \cite{BHMT02} with $t$ queries to $O$, we can obtain an estimate $\widetilde K$ satisfying
\[
\left|\frac{\widetilde K}{N-s} - \frac{K}{N-s} \right| = O\left(
    \frac{\sqrt{K/(N-s)}}{t} + \frac{1}{t^2}
\right)
\]
with a constant probability. In other words, if $s,t=\Theta(q)$ and $K=\Theta(s)$, the additive error $|\widetilde K-K|$ is bounded by
\[
O\left(\sqrt{\frac{N}{q}} + \frac{N}{q^2}\right).
\]

\newcommand{\Bin}{\mathsf{Bin}}
\newcommand{\std}{\mathsf{std}}
We first compare the distribution of $K$ under the two oracle choices.
If $O=\mathrm{rf}$, then conditioned on $T_S$, the values $O(z)$ for
$z\notin S$ are independent and uniform. Therefore,
\[
    K\mid T_S
    \sim
    \Bin\left(
        N-s,\frac{|T_S|}{N}
    \right).
\]
In particular, for $s=\Theta(q)$ and $s\le cN$ for a fixed constant
$c<1$, we have
\begin{align}
    \label{eqn: appendix_stat}
    \Exp_{\mathrm{rf}}[K]=\Theta(q),
    \qquad
    \std_{\mathrm{rf}}(K)=\Theta(\sqrt q).
\end{align}
For $\xop[r]$, the collision probability between every pair of distinct inputs differs from that of a random function by $\delta_r$.
As a heuristic approximation, we expect the number of cross collisions to differ by
\[
\left|
\Exp_{\xop}[K]-\Exp_{\mathrm{rf}}[K]
\right|
\approx
(N-s)s|\delta_r|
=
\Theta\left(
\frac{q}{N^{r-1}}
\right),
\]
where we use $s=\Theta(q)$ and $N-s=\Theta(N)$.
We use this heuristic estimate below to derive the expected advantage
of the collision-based attacks.

\paragraph{Small-query regime: $q\lesssim N^{1/3}$.}
In this regime, it suffices to search for a collision rather than
estimate $K$ accurately.
The algorithm first constructs $T_S$ using $s=\Theta(q)$ queries and
then uses $t=\Theta(q)$ further queries to search for a
$z\notin S$ satisfying $f_S(z)=1$.

For a fixed value of $K$, Grover search using $t$ queries finds such a
$z$ with probability
\[
\Theta\left(\frac{t^2K}{N}\right)
\]
which coincides with $\Theta\left(\frac{q^3}{N}\right)$ for our parameter choice when $K=\Theta(q)$.

Using the heuristic estimate above for the difference in $\Exp[K]$,
we expect the difference between the acceptance probabilities under
the two oracle choices to be of order
\[
\Theta\left(
\frac{t^2}{N}
\left|
\Exp_{\xop}[K]-\Exp_{\mathrm{rf}}[K]
\right|
\right)
=
\Theta\left(
\frac{q^3}{N^r}
\right).
\]
Thus the collision-finding attack is expected to achieve advantage
$\Theta(q^3/N^r)$ in this regime.

\paragraph{Moderate-query regime:
$N^{1/3}\lesssim q\lesssim N^{1/2}$.}
We now use quantum counting to estimate $K$.
Let $\widetilde K$ be the estimate obtained using $t=\Theta(q)$
queries in the second step.
The counting error discussed above (obtained with constant probability) is
\[
|\widetilde K - K | =
O\left(
        \sqrt{\frac Nq}
        +
        \frac{N}{q^2}
    \right) = O\left(\sqrt{\frac Nq}\right)
\]
since $q\gtrsim N^{1/3}$.
Also note that the standard deviation $\Theta(\sqrt q)$ of $K$ in \cref{eqn: appendix_stat} is relatively smaller than this error in this regime. This comparison will be important in the next regime.

Using the heuristic estimate above, we expect
\[
\left|
\Exp_{\xop}[K]-\Exp_{\mathrm{rf}}[K]
\right|
=
\Theta\left(\frac{q}{N^{r-1}}\right).
\]
Comparing this difference with the quantum-counting error
$\Theta(\sqrt{N/q})$ heuristically suggests distinguishing advantage
of order
\[
    \Theta\left(
        \frac{
            q/N^{r-1}
        }{
            \sqrt{N/q}
        }
    \right)
    =
    \Theta\left(
        \frac{q^{3/2}}{N^{r-1/2}}
    \right).
\]

 \paragraph{Large-query regime:
$N^{1/2}\lesssim q\lesssim N$.}
Continue to use $s=\Theta(q)$ queries to construct the table and
$t=\Theta(q)$ queries for quantum counting.
The additive error of the estimate is, given $q\gtrsim N^{1/2}$,
\[
    O\left(
        \sqrt{\frac Nq}
        +
        \frac{N}{q^2}
    \right) = 
    O(\sqrt q).
\]
Note that the standard deviation of $K$ in \cref{eqn: appendix_stat} is already $\Theta(\sqrt q)$.
This means that the values of $K$ for two different random oracles may be of order $\Theta(\sqrt q)$, which is no smaller than the estimation error.

Using the heuristic estimate above, we expect
\[
    \left|
        \Exp_{\xop}[K]-\Exp_{\mathrm{rf}}[K]
    \right|
    =
    \Theta\left(
        \frac{q}{N^{r-1}}
    \right).
\]
Comparing this difference with the above typical difference $\sqrt{q}$ of two random oracles suggests distinguishing advantage
of order, heuristically,
\[
    \Theta\left(
        \frac{
            q/N^{r-1}
        }{
            \sqrt q
        }
    \right)
    =
    \Theta\left(
        \frac{\sqrt q}{N^{r-1}}
    \right).
\]
At $q=\Theta(N)$, this heuristic scaling becomes
$\Theta(N^{3/2-r})$, matching the second bound of
\cref{thm: main_bound}.

\paragraph{Parity attack ($q\ge N/2$).}
Finally, suppose that $q\ge N/2$.  Fix a nonzero
$a\in\bit^n$ and define the Boolean function $g_O(x):=\langle a,O(x)\rangle$ for the oracle $O$, which can be computed by a single query to $O$.
For the $\xop[r]$ construction, assuming $n\ge2$, the parity of $g_{\xop[r]}$, i.e., $    \bigoplus_{x\in\bit^n} g_{\xop[r]}(x)$ becomes 0 because
\[
    \left\langle
        a,
        \bigoplus_{x\in\bit^n}\xop[r](x)
    \right\rangle
    =
    \left\langle
        a,
        \bigoplus_{i=1}^r\bigoplus_{x\in\bit^n}P_i(x)
    \right\rangle
    =
    \left\langle
        a,
        \bigoplus_{i=1}^r\bigoplus_{y\in\bit^n}y
    \right\rangle
    =
    0.
\]
For a random function, this parity is a uniformly random bit.
The parity of $g_O$ can be computed exactly with $N/2$ quantum queries by pairing up the $N$ inputs and computing the XOR of each pair with Deutsch's algorithm, which is optimal~\cite{FGGS98}. The distinguisher that outputs the parity thus has advantage $1/2$.

\ifnum\fullpage=0
\section{Missing Proofs for Fourier Analysis}\label{app:Fourier}

The following proof shows $\widehat{H *G }(\gamma) = \widehat{H}(\gamma)\widehat{G}(\gamma)$.
\begin{proof}[Proof of \cref{lem: convol}]
In \cite[Theorem 1.27]{o2014analysis}, the same statement for real-valued boolean functions is proven. It can be extended to our case as follows.
\begin{align*}
\widehat{H *G }(\gamma) ={}& \Exp_{f \gets \F}\left[\overline{\chi_\gamma(f)}H*G(f)\right]\\
={}& \Exp_{f \gets \F}\left[\overline{\chi_\gamma(f)}\Exp_{g\gets \F} [H(f+g)G(g)]\right]\\
={}& \Exp_{g,h \gets \F}\left[\overline{\chi_\gamma(g+h)}H(h)G(g)\right]\\
={}& \Exp_{g,h \gets \F}\left[\overline{\chi_\gamma(h)}H(h)\overline{\chi_\gamma(g)}G(g)\right]\\
={}& \widehat{H}(\gamma)\widehat{G}(\gamma)
\end{align*}
since $\overline{\chi_\gamma}=\chi_\gamma$ and by \cref{eqn: bilinear}.
\end{proof}

\begin{proof}[Proof of \cref{lem: convolDensity}]
Using $\dens{\D}(f)=|\F|\Pr_{\D}[f]$,
\begin{align*}
    (\dens{\D}*\dens{\mathcal E})(h)
    =\sum_{g\in\F}\frac{\dens{\D}(h+g)\dens{\mathcal E}(g)}{|\F|}
    &=|\F|\sum_{g\in\F}\Pr_{f\gets\D}[f=h+g]\Pr_{g'\gets\mathcal E}[g'=g]\\
    &=|\F|\Pr_{f\gets\D,\,g\gets\mathcal E}[f+g=h].\qedhere
\end{align*}
\end{proof}


We prove \cref{thm: Prob by functionals} here. For convenience, we recall the theorem statement.

\noindent\textbf{\cref{thm: Prob by functionals}.}
\emph{
    Let $f$ be a function from $\bit^n$ to $\bit^n$.
    Let $\mathcal A$ be a quantum oracle algorithm having access to $O_f$ and outputting a bit $b \in \bit$. 
    Then, there exists a functional $P_{\mathcal A}$ (which may depend on the input of $\mathcal A$, if any) of Fourier degree $\le 2q$ such that 
    the probability that $\mathcal A$ outputs $1$ equals to $P_{\mathcal A}(f)$.
    In other words, the functional $P_{\mathcal A}(f):=\Pr[\mathcal A^{O_f} \to 1]$ has the Fourier degree $\le 2q$ and can be decomposed by
    \[
        \Pr[\mathcal A^{O_f} \to 1] = P_{\mathcal A}(f) = \sum_{|\gamma|\le 2q} \widehat{P}_{\mathcal A}(\gamma) \chi_\gamma(f).
    \]
    In particular, $\|P_{\mathcal A}\|_2 \le 1.$ }
\begin{proof}[Proof of \cref{thm: Prob by functionals}]
    We first show the following claim.
    \begin{claim}
    The final state of a $q$-query algorithm (with the deferred measurement) can be written by
    \[
    \sum_z p_z^{(q)}(f) \ket{z}
    \]
    where $p_z^{(q)}$ is a functional of degree $\le q$.
    \end{claim}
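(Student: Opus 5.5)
This is the standard polynomial-method induction, and I will prove it by induction on the number of queries. I write the algorithm as an alternating sequence $U_q O_f U_{q-1}\cdots O_f U_0$ applied to a fixed initial state. Here the $U_i$ are oracle-independent unitaries acting on the query input and output registers and the workspace. The invariant I maintain is that after $k$ queries the state has the form $\sum_z p^{(k)}_z(f)\ket{z}$, with each amplitude $p^{(k)}_z$ a functional of Fourier degree at most $k$.

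The base case $k=0$ is immediate. Before any query the state is $U_0\ket{0}$, whose amplitudes are constants and hence functionals of degree $0$.

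For the inductive step, the oracle-independent unitaries need no work. Applying $U_k$ replaces each amplitude by a fixed $\C$-linear combination of the old amplitudes. By \cref{eqn:deg triangle}, linear combinations do not increase the degree, so the bound $k$ is preserved.

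The only real step is the query. I rewrite the state in a basis in which the oracle acts diagonally: the workspace basis, the query-input basis $\ket{x}$, and the Fourier basis $\ket{\widehat a}$ of \cref{eqn: Fourier_basis_state} on the output register. This change of basis is fixed (it is $H^{\otimes n}$ on the output register), so it is a linear map independent of $f$ and preserves degree. By \cref{eqn: query}, $O_f$ multiplies the coefficient of $\ket{w,x,\widehat a}$ by $\chi_a(f(x))=\chi_{a\delta_x}(f)$. Here $a\delta_x\in\hF$ is the element supported only on $x$ with value $a$, so it has degree at most $1$. Expanding an amplitude of degree at most $k$ in characters $\chi_\gamma$ with $|\gamma|\le k$, each term becomes $\chi_{\gamma+a\delta_x}$ by \cref{eqn: bilinear}, and $|\gamma+a\delta_x|\le|\gamma|+1\le k+1$. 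Changing back to the computational basis is again linear, so every amplitude after the query has degree at most $k+1$. This completes the induction.

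Finally, I expect no serious obstacle beyond keeping track of the basis changes. A product of a degree-$\le q$ functional with the conjugate of another has degree at most $2q$, and conjugation preserves degree. Hence the acceptance probability $\sum_{z,w}\bra{z}\Pi_1\ket{w}\,\overline{p_z(f)}\,p_w(f)$ is a functional of degree at most $2q$. The bound $\|P_\Alg\|_2\le1$ follows from $0\le P_\Alg(f)\le1$ for every $f$.
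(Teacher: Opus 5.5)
Your proof is correct and follows the paper's argument essentially verbatim: induction on the number of queries, oracle-independent unitaries preserving degree by linearity, and each query acting diagonally on the Fourier-basis states $\ket{x,\widehat a}$, so that $\chi_\gamma \mapsto \chi_{\gamma+a\delta_x}$ raises the degree by at most one. Your explicit handling of the $f$-independent basis change on the output register, and of the workspace register, spells out bookkeeping that the paper leaves implicit.
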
 
    \begin{proof}[Proof of Claim]
    Note that a $q$-query algorithm is represented by an alternating sequence of oracle query $O_f$ and an oracle-independent unitary $U$.
    Applying unitary $U$, as a linear map, does not increase the Fourier degree. We only focus on the oracle queries.

    Before proceeding, observe that for each query for the Fourier basis (\cref{eqn: Fourier_basis_state}) it holds that
    \[
    O_f (\chi_\gamma (f) \ket{x,\widehat a})  =\chi_\gamma (f)  \cdot \chi_{a}(f(x))\ket{x,\widehat a} = \chi_{\gamma + a \delta_x}(f) \ket{x,\widehat a}
    \]
    where $a\delta_x \in \hF$ is such that $a\delta_x(x)=a$ and $a\delta_x(y)=0^n$ for $y\neq x$.
    We use \cref{eqn: query} and \cref{eqn: bilinear} in the first and second equality. It holds that
    \[
    |\gamma + a \delta_x| \le |\gamma| + |a\delta_x| \le |\gamma|+1.
    \]

    We use the mathematical induction to prove the claim. For the state before any query, $p_z^{(0)}$ is independent of $f$, thus the base step holds. 
    
    Next, consider the state before applying the $k$-th query $\sum_z p_z^{(k-1)}(f) \ket{z}$ for $k\le q$. The Fourier degree of $p_z^{(k-1)}$ is bounded above by $k-1$ by the inductive hypothesis.
    Then, a single query increases the Fourier degree of each term in $p_z^{(k-1)}$ by at most 1, so that the degree after the query is bounded above by $(k-1)+1=k$.
    This proves the claim.
    \end{proof}

    Write $\ket{\phi_q} = \sum_z p_z(f) \ket{z}$ to denote the final state of the algorithm where $p_z$ is a functional of Fourier degree $\le q$.
    Let $(\Pi_0,\Pi_1=I-\Pi_0)$ be the final binary measurement of the algorithm. Then, the probability that $\mathcal A$ outputs 1 is
    \begin{align}
        \Pr[\mathcal A^{O_f}\to 1] &= \bra{\phi_q} \Pi_1 \ket{\phi_q}\\
        &= \sum_{z,w} \bra z \overline{p_z(f)} \Pi_1 p_w(f) \ket w\\
        &= \sum_{z,w} \bra z  \Pi_1  \ket w \cdot \overline{p_z(f)}p_w(f)
    \end{align}
    where $\deg(\overline{p_z}p_w) \le \deg(\overline{p_z}) + \deg(p_w) \le 2q$ because of \cref{eqn:deg triangle}.
    Since the summation does not increase the Fourier degree, $P_{\mathcal A}(f) = \Pr[\mathcal A^{O_f}\to 1]$ is a functional of Fourier degree $\le 2q$.

    The final ``in particular'' part is obvious because $|P_{\mathcal A}(f)|\le 1$ for all $f$.
\end{proof}

\fi
\ifnum\fullpage=0
\section{Higher-degree XoP components}

\subsection{Degree-4 components}\label{app: deg4}

\paragraph{Proof of \cref{eqn:deg4permutation}.}
Fix $\gamma \in \hF$ with $\supp(\gamma)=\{x_1,x_2,x_3,x_4\}.$ Write $\gamma(x_i) = y_i \neq 0^n$ for $i\in [4]$. For convenience, we recall \cref{eqn:deg4permutation}:
\[
    \hmuP(\gamma)=\dfrac{N\,C_{2,2}(\gamma)-6\,C_4(\gamma)}{(N-1)_3}.
\]
By \cref{eqn: simplifiaction},
\begin{align}\label{eqn: simple4}
    \hmuP(\gamma)
    =
    \sum_{z_1,z_2,z_3,z_4 \text{ distinct}}
    \frac{
    \chi_{y_1}(z_1)\chi_{y_2}(z_2)
    \chi_{y_3}(z_3)\chi_{y_4}(z_4)
    }{(N)_4}.
\end{align}

For four variables $z_1,z_2,z_3,z_4$, the inclusion-exclusion principle
for the six events $E_{ij}=[z_i=z_j]$ gives, after simplification,
\begin{align*}
    &\truth{z_1,z_2,z_3,z_4 \text{ distinct}}
    =
    1-\sum_{i<j}\truth{z_i=z_j}
    +2\sum_{i<j<k}\truth{z_i=z_j=z_k}
    \\
    &\qquad
    +
    \sum_{i<j, k<\ell, \{i,j,k,\ell\}=\{1,2,3,4\}}
    \truth{z_i=z_j\land z_k=z_\ell}
    -6\truth{z_1=z_2=z_3=z_4}.
\end{align*}
Applying this to \cref{eqn: simple4} over all $z_1,z_2,z_3,z_4$,
every term with a separated single vertex vanishes by \cref{eqn: characsum}.
Hence only the last two terms remain:
\begin{align*}
    \hmuP(\gamma)
    &=
    \sum_{\substack{i<j, k<\ell\\ \{i,j,k,\ell\}=\{1,2,3,4\}}}
    \sum_{z,w\in \bit^n}
    \frac{\chi_{y_i+y_j}(z)\chi_{y_k+y_\ell}(w)}{(N)_4}
    -
    \sum_{z\in \bit^n}
    \frac{6\chi_{y_1+y_2+y_3+y_4}(z)}{(N)_4}
    \notag \\
    &=
    \sum_{\substack{i<j, k<\ell\\ \{i,j,k,\ell\}=\{1,2,3,4\}}}
    \frac{N\truth{y_i=y_j}\truth{y_k=y_\ell}}{(N-1)_3}
    -
    \frac{6\truth{y_1+y_2+y_3+y_4=0^n}}{(N-1)_3}
    \notag \\
    &=
    \frac{N C_{2,2}(\gamma)-6C_4(\gamma)}{(N-1)_3},
\end{align*}
where we use \cref{eqn: characsum} in the second equality.
This proves \cref{eqn:deg4permutation}.

\paragraph{Detailed calculation of \cref{eqn: simplified deg4 upper}.}
Our interested term is
\begin{align*}
    &
    \left(\frac{3N}{(N-1)_3}\right)^{2r-2}
    \frac{
        \binom N4
        \left(
            9N^2(N-1)^2+36(N-1)^3
        \right)
    }{((N-1)_3)^2}
    \\
    &\qquad\qquad=
    \frac{3^{2r}}{N^{4r-6}}
    \frac{
        \displaystyle
        \frac1{24}+\frac{N-1}{6N^2}
    }{
        \displaystyle
        \left(1-\frac1N\right)^{2r-3}
        \left(1-\frac2N\right)^{2r-1}
        \left(1-\frac3N\right)^{2r-1}
    }.
\end{align*}
Observe that
$    \frac1{24}+\frac{N-1}{6N^2}\le \frac{1}{12}$ for $N\ge 4$
and
\begin{align*}
    \left(1-\frac1N\right)^{2r-3}
    \left(1-\frac2N\right)^{2r-1}
    \left(1-\frac3N\right)^{2r-1}
    \ge
    1-\frac{12r-8}{N}
    \ge \frac14
\end{align*}
for $N\ge16r$. This bounds the above term by
\[
\le \frac{3^{2r}}{N^{4r-6}} \cdot \frac{1/12}{1/4} = \frac{3^{2r-1}}{N^{4r-6}}
\]
as desired.

\subsection{Higher-degree components}\label{app: higher}

Now we extend the low-degree calculations to the higher-degree terms.
We prove the following lemmas.

\begin{lemma}\label{lem: higher degree tail}
For $r\ge 2$ and $5\le d_{\max}\le N/28887$, it holds that $\|\muXoPr^{=6}\|_2^2 \leq 3^{18}2^3\left(\frac{12}{N}\right)^{6r-9}$ and
\[
    \|\muXoPr^{=5}\|_2^2
    +
    \sum_{d=7}^{d_{\max}}
    \|\muXoPr^{=d}\|_2^2
    \le
    3^{22}2^{11}\left(\frac{10}{N}\right)^{6r-8}.
\]
\end{lemma}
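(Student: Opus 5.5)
We follow the template already used for degree 4: bound each degree-$d$ weight by peeling off all but two powers of the coefficient, and then sum the resulting bounds. Since $\hmuXoPr(\gamma)=\hmuP(\gamma)^r$ by \cref{eqn:xopconvol}, for every $d$ we have
\[
\|\muXoPr^{=d}\|_2^2=\sum_{|\gamma|=d}|\hmuP(\gamma)|^{2r}\le \left(M^{=d}[\muP]\right)^{2r-2}\sum_{|\gamma|=d}|\hmuP(\gamma)|^2 .
\]
For the two factors we invoke \cref{lem: deg d magnitude bound} and \cref{lem: deg d weight bound}. These need $d\le N/16$, which follows from $d_{\max}\le N/28887$. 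Together they give
\[
\|\muXoPr^{=d}\|_2^2\le W_d:=3^{3d}\,2^{(2r-2)\lceil d/2\rceil}\left(\frac dN\right)^{(2r-1)\lceil d/2\rceil-d},
\]
where we used $\lfloor d/2\rfloor+\lceil d/2\rceil=d$ to collect the powers of $d/N$. For $d=6$ the exponent of $d/N$ is $3(2r-1)-6=6r-9$, so $W_6=3^{18}2^{6r-6}(6/N)^{6r-9}$. Rewriting $2^{6r-6}6^{6r-9}=2^3\,12^{6r-9}$ gives exactly the stated bound $3^{18}2^3(12/N)^{6r-9}$.

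For the tail $d=5$ and $7\le d\le d_{\max}$, we show that $W_d$ decays geometrically along each parity class. The key step is the ratio estimate $W_{d+2}/W_d\le 1/2$ for $5\le d\le d_{\max}-2$. Since $\lceil (d+2)/2\rceil=\lceil d/2\rceil+1$, the ratio equals
\[
3^6\,2^{2r-2}\left(1+\tfrac2d\right)^{(2r-1)\lceil d/2\rceil-d}\left(\tfrac{d+2}{N}\right)^{2r-3}.
\]
For $d\ge5$ we have $\lceil d/2\rceil\le 3d/5$, so the middle exponent is at most $\frac{6r-8}{5}d$. Using $1+x\le e^x$, the middle factor is then at most $e^{(12r-16)/5}$. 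It remains to check the numerical inequality $3^6 2^{2r-2}e^{(12r-16)/5}<\tfrac12\cdot 28887^{2r-3}$ for all $r\ge2$. The worst case is $r=2$, and the base $28887$ grows faster in $r$ than the left-hand side. With $d+2\le d_{\max}\le N/28887$, this yields the ratio bound $1/2$.

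We then sum the odd chain $5,7,9,\dots$ and the even chain $8,10,\dots$ separately. Each chain is dominated by twice its first term, so the tail is at most $2(W_5+W_8)$. Comparing $W_5$ with $W_8$ via an explicit ratio of the form $c\,(125N/8192)^{2r-4}$ with $c=5^4/(3^9 2^{14})$ bounds $W_8$ by a constant times $W_5$. Then $W_5=3^{15}2^{6r-6}(5/N)^{6r-8}$ produces the stated constant $3^{22}2^{11}(10/N)^{6r-8}$. The main obstacle is the ratio step: the threshold $28887$ must be chosen so that the constant inequality holds uniformly in $r$, which is a routine but delicate numeric check. The remaining steps are bookkeeping of constants.
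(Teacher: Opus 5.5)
Your proposal is correct and takes essentially the same route as the paper. You bound each weight by $W_d$ using the magnitude and weight lemmas, prove $W_{d+2}/W_d\le 1/2$ via $\lceil d/2\rceil\le 3d/5$ and the numerical inequality with $28887$ (which is indeed razor-thin at $r=2$), split into odd and even chains bounded by $2(W_5+W_8)$, and compare $W_5/W_8\ge 5^4/(3^9 2^{14})$. The final step uses $1+3^9 2^{14}/5^4\le 3^7 2^8$ and gives exactly the stated constants.
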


\begin{proof}
\cref{lem: deg d magnitude bound,lem: deg d weight bound} give,
for every $5\le d\le d_{\max}$,
\begin{align*}
    \|\muXoPr^{=d}\|_2^2
    &=
    \sum_{|\gamma|=d}|\hmuP(\gamma)|^{2r}
    \\
    &\le
    \left(M^{=d}[\muP]\right)^{2r-2}
    \sum_{|\gamma|=d}|\hmuP(\gamma)|^2
    \\
    &\le
    \left(\frac{2d}{N}\right)^{
        (2r-2)\lceil d/2\rceil
    }
    3^{3d}
    \left(\frac Nd\right)^{\lfloor d/2\rfloor}
    \\
    &=
    3^{3d}
    2^{(2r-2)\lceil d/2\rceil}
    \left(\frac dN\right)^{
        (2r-1)\lceil d/2\rceil-d
    }
    =:
    W_d,
\end{align*}
where we use the following in the last line
\[
    \left\lfloor\frac d2\right\rfloor
    +
    \left\lceil\frac d2\right\rceil
    =
    d.
\]
Degree-6 is immediate by $W_6$, in particular $N\geq 96$. We will prove
\[
    \frac{W_{d+2}}{W_d}\le\frac12
\]
for every $5\le d\le d_{\max}-2$. Since
\(
    \left\lceil\frac{d+2}{2}\right\rceil
    =
    \left\lceil\frac d2\right\rceil+1,
\)
we have
\begin{align*}
    \frac{W_{d+2}}{W_d}
    &=
    3^6 2^{2r-2}
    \left(1+\frac2d\right)^{
        (2r-1)\lceil d/2\rceil-d
    }
    \left(\frac{d+2}{N}\right)^{2r-3}.
\end{align*}
For $d\ge5$,
\[
    \left\lceil\frac d2\right\rceil
    \le
    \frac{3d}{5},
\]
and hence
\[
    (2r-1)\left\lceil\frac d2\right\rceil-d
    \le
    \frac{6r-8}{5}d.
\]
Using $1+x\le e^x$, we obtain
\begin{align*}
    \left(1+\frac2d\right)^{
        (2r-1)\lceil d/2\rceil-d
    }
    &\le
    \exp\left(
        \frac2d
        \left(
            (2r-1)\left\lceil\frac d2\right\rceil-d
        \right)
    \right)
    \le
    e^{(12r-16)/5}.
\end{align*}
In addition, we observe that, for all $r\ge 2$,
\[
            3^6 2^{2r-2}e^{(12r-16)/5}
    < \frac{1}{2} \cdot 28887^{2r-3}.
\]
Using the above,
\begin{align*}
    \frac{W_{d+2}}{W_d}
    &\le
    3^6 2^{2r-2}e^{(12r-16)/5}
    \left(\frac{d+2}{N}\right)^{2r-3}
    \le
    \frac{1}{2} \left( 28887 \right) ^{2r-3}
    \left(\frac{d_{\max}}{N}\right)^{2r-3}
    \le
    \frac12
\end{align*}

Summing the odd and even degrees separately gives
\begin{align*}
    &\|\muXoPr^{=5}\|_2^2
    +
    \sum_{d=7}^{d_{\max}}
    \|\muXoPr^{=d}\|_2^2
    \le
    W_5+
    \sum_{d=7}^{d_{\max}}W_d
    \\
    &\le
    \left(
        1+\frac12+\frac1{2^2}+\cdots
    \right)
    (W_5+W_8)
    =
    2(W_5+W_8).
\end{align*}
Moreover,
\[
    \frac{W_5}{W_8}
    =
    \frac{5^{6r-8}}{3^9\cdot 2^{26r-38}}N^{2r-4}
    =
    \frac{5^4}{3^9\cdot 2^{14}}\left(\frac{125N}{8192}\right)^{2r-4}
    \ge
    \frac{5^4}{3^9\cdot 2^{14}}.
\]
Consequently,
\begin{align*}
    &\|\muXoPr^{=5}\|_2^2
    +
    \sum_{d=7}^{d_{\max}}
    \|\muXoPr^{=d}\|_2^2
    \le
    2\cdot\left(1+\frac{3^9\cdot2^{14}}{5^4}\right)W_5
    \\&\le
    2\cdot\frac{3^9\cdot2^{14}}{3^2\cdot2^6}\cdot3^{15}\cdot2^{6r-6}
    \left(\frac5N\right)^{6r-8}
    =
    3^{22}2^{11}
    \left(\frac{10}{N}\right)^{6r-8}.\qedhere
\end{align*}
\end{proof}

\section{Maximal Magnitude and Weight of the components of $\muP$}\label{app: magniweight}

\subsection{Maximal magnitude of components}
The goal of this section is to prove \cref{lem: deg d magnitude bound}, or more precisely
\[
    M^{=d}[\muP]
    \le
    \left(\frac{2d}{N}\right)^{\lceil d/2\rceil}.
\]

Let $\gamma\in\hF$ be a character with $|\gamma|=d$ such that $\supp(\gamma)=\{x_1<\cdots<x_d\}.$
Write $\gamma(x_i)=y_i$ for $i\in[d].$
As shown in \cref{eqn: simplifiaction}, we can write
\[
    \hmuP(\gamma)
    =
    \Exp_{(z_1,\ldots,z_d)\gets(\bit^n)^{(*d)}}
    \left[
        \prod_{i=1}^d\chi_{y_i}(z_i)
    \right].
\]
We observe the following fact: The conditional distribution of $z_d$ given the other elements
$(z_1,\ldots,z_{d-1})$ is uniform over the set
$\bit^n\setminus\{z_1,\ldots,z_{d-1}\},$
which has size $N-d+1$.
From this, we have
\begin{align*}
    &\Exp_{z_d\gets
    \bit^n\setminus\{z_1,\ldots,z_{d-1}\}}
    \left[
        \chi_{y_d}(z_d)
    \right]
    \\
    &\quad=
    \frac1{N-d+1}
    \sum_{z\notin\{z_1,\ldots,z_{d-1}\}}
    \chi_{y_d}(z)
    \\
    &\quad=
    \frac1{N-d+1}
    \left(
        \sum_{z\in\bit^n}\chi_{y_d}(z)
        -
        \sum_{i=1}^{d-1}\chi_{y_d}(z_i)
    \right)
    \\
    &\quad=
    -\frac1{N-d+1}
    \sum_{i=1}^{d-1}\chi_{y_d}(z_i),
\end{align*}
where the last equality follows from
\(
    \sum_{z\in\bit^n}\chi_{y_d}(z)=0
\)
because $y_d\neq0^n$.

Using this equation, we have
\begin{align*}
    \hmuP(\gamma)
    &=
    \Exp_{(z_1,\ldots,z_d)\gets(\bit^n)^{(*d)}}
    \left[
        \chi_{y_d}(z_d)
        \prod_{j=1}^{d-1}\chi_{y_j}(z_j)
    \right]
    \\
    &=
    \Exp_{(z_1,\ldots,z_{d-1})\gets(\bit^n)^{(*(d-1))}}
    \left[
        \Exp_{z_d\gets
        \bit^n\setminus\{z_1,\ldots,z_{d-1}\}}
        \left[
            \chi_{y_d}(z_d)
        \right]
        \prod_{j=1}^{d-1}\chi_{y_j}(z_j)
    \right]
    \\
    &=
    -\frac1{N-d+1}
    \sum_{i=1}^{d-1}
    \Exp_{(z_1,\ldots,z_{d-1})\gets(\bit^n)^{(*(d-1))}}
    \left[
        \chi_{y_d}(z_i)
        \prod_{j=1}^{d-1}\chi_{y_j}(z_j)
    \right]
    \\
    &=
    -\frac1{N-d+1}
    \sum_{i=1}^{d-1}
    \Exp_{(z_1,\ldots,z_{d-1})\gets(\bit^n)^{(*(d-1))}}
    \left[
        \chi_{y_i+y_d}(z_i)
        \prod_{\substack{j=1\\j\neq i}}^{d-1}
        \chi_{y_j}(z_j)
    \right].
\end{align*}

To simplify the above expression, define
$\gamma^{(i\leftarrow d)}\in\hF$ by
\[
    \gamma^{(i\leftarrow d)}(x)
    :=
    \begin{cases}
        y_i+y_d,
        & x=x_i,\\
        y_j,
        & x=x_j
          \text{ for some }j\in[d-1]\setminus\{i\},\\
        0^n,
        & \text{otherwise}.
    \end{cases}
\]
If $y_i+y_d=0^n$, then $\gamma^{(i\leftarrow d)}(x_i)=0$ and therefore $\gamma^{(i\leftarrow d)}$ has degree $d-2$. In other cases, $\gamma^{(i\leftarrow d)}$ has degree $d-1$.
This gives the following recursion.

\begin{lemma}\label{lem: xoprecursion}
Let $2\le d\le N$, and let $\gamma\in\hF$ satisfy $|\gamma|=d$.
Then
\[
    \hmuP(\gamma)
    =
    -\frac1{N-d+1}
    \sum_{i=1}^{d-1}
    \hmuP\!\left(\gamma^{(i\leftarrow d)}\right).
\]
\end{lemma}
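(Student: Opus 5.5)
The plan is to derive the recursion of \cref{lem: xoprecursion} directly from the computation carried out just before its statement. The only other inputs are \cref{eqn: simplifiaction} and the orthogonality relation \cref{eqn: characsum}. No Fourier machinery beyond identifying expectations over $(\bit^n)^{(*d-1)}$ with Fourier coefficients of $\muP$ is needed.

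First I would fix $\gamma$ with $\supp(\gamma)=\{x_1<\cdots<x_d\}$ and $y_i=\gamma(x_i)\neq 0^n$. Then I write $\hmuP(\gamma)$ via \cref{eqn: simplifiaction} as an expectation over $(z_1,\dots,z_d)\gets(\bit^n)^{(*d)}$. I condition on $(z_1,\dots,z_{d-1})$, which is itself uniform over $(\bit^n)^{(*(d-1))}$. Given these, $z_d$ is uniform on the complement of $\{z_1,\dots,z_{d-1}\}$, a set of size $N-d+1$; this is positive since $d\le N$. Averaging $\chi_{y_d}(z_d)$ over that complement and using $\sum_z\chi_{y_d}(z)=0$, which holds because $y_d\ne0^n$, gives
\[
-\frac{1}{N-d+1}\sum_{i=1}^{d-1}\chi_{y_d}(z_i).
\]
Next I substitute this back, exchange the finite sum with the expectation, and merge $\chi_{y_d}(z_i)\chi_{y_i}(z_i)=\chi_{y_i+y_d}(z_i)$ using \cref{eqn: ortho}. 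The $i$-th summand is then exactly
\[
\Exp_{(z_1,\dots,z_{d-1})}\left[\chi_{y_i+y_d}(z_i)\prod_{j\ne i}\chi_{y_j}(z_j)\right].
\]

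The one point needing care is the final identification of this expectation with $\hmuP(\gamma^{(i\leftarrow d)})$. When $y_i+y_d=0^n$, the factor $\chi_{0^n}(z_i)=1$ drops out. The expectation is then over $d-1$ distinct values, of which only $d-2$ carry nontrivial characters, whereas \cref{eqn: simplifiaction} for the degree-$(d-2)$ index $\gamma^{(i\leftarrow d)}$ uses only $d-2$ distinct samples. I would resolve this by observing that the marginal of any sub-tuple of a uniform sample without replacement is again uniform without replacement. Equivalently, I would note, as in the weight lemma's proof, that \cref{eqn: simplifiaction} holds verbatim with some $y_i$ allowed to be zero: in the permutation picture, $\hmuP(\gamma')=\Exp_{p\gets\P}[\chi_{\gamma'}(p)]$ and $(p(x_1),\dots,p(x_{d-1}))$ is uniform over distinct tuples regardless of which labels vanish.

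With this, each summand equals $\hmuP(\gamma^{(i\leftarrow d)})$ in both the degree $d-1$ and degree $d-2$ cases, and summing over $i$ gives the stated recursion. I expect no real obstacle beyond stating this marginalization remark cleanly.
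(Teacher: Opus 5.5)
Your proposal is correct and follows the paper's derivation step for step: condition on $(z_1,\dots,z_{d-1})$, average $\chi_{y_d}(z_d)$ over the complement using $\sum_z\chi_{y_d}(z)=0$, then merge $\chi_{y_d}(z_i)\chi_{y_i}(z_i)=\chi_{y_i+y_d}(z_i)$. The paper leaves implicit the identification of each summand with $\hmuP(\gamma^{(i\leftarrow d)})$ when $y_i+y_d=0^n$; your marginalization remark, via $\hmuP(\gamma')=\Exp_{p\gets\P}[\chi_{\gamma'}(p)]$, supplies that step correctly.
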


We are now ready to prove \cref{lem: deg d magnitude bound}.

\begin{proof}[Proof of \cref{lem: deg d magnitude bound}]
It is immediate that $M^{=0}[\muP]=1$ and $M^{=1}[\muP]=0.$
Now fix $2\le d\le N$, and let $\gamma\in\hF$ satisfy
$|\gamma|=d$. By \cref{lem: xoprecursion},
\[
    \hmuP(\gamma)
    =
    -\frac1{N-d+1}
    \sum_{i=1}^{d-1}
    \hmuP\!\left(\gamma^{(i\leftarrow d)}\right).
\]
For each $i\in[d-1]$, the character
$\gamma^{(i\leftarrow d)}$ has support size either $d-1$ or $d-2$.
Therefore,
\[
    \left|
        \hmuP\!\left(\gamma^{(i\leftarrow d)}\right)
    \right|
    \le
    \max\left\{
        M^{=(d-1)}[\muP],
        M^{=(d-2)}[\muP]
    \right\}.
\]
Since the recursion holds for every degree-$d$ character, taking
absolute values gives
\begin{align}\label{ineq: recursive}
    M^{=d}[\muP]
    \le
    \frac{d-1}{N-d+1}
    \max\left\{
        M^{=(d-1)}[\muP],
        M^{=(d-2)}[\muP]
    \right\}.
\end{align}
We now prove the claimed bound by induction on $d\ge 2$. If $d>N/2$, the bound is immediate as $M^{=d}[\muP]\le 1\le (2d/N)^{\lceil d/2\rceil}$.
So let $2\le d\le N/2$ and assume the claim for all smaller degrees. Since $2d/N\le1$ and $\lceil (d-1)/2\rceil\ge\lceil (d-2)/2\rceil$, the induction hypothesis gives
\begin{align*}
    &\max\left\{
        M^{=(d-1)}[\muP],
        M^{=(d-2)}[\muP]
    \right\}\\
    &\quad\le
    \max\left\{
        \left(\frac{2(d-1)}N\right)^{\lceil(d-1)/2\rceil},
        \left(\frac{2(d-2)}N\right)^{\lceil(d-2)/2\rceil}
    \right\}
    \le
    \left(\frac{2d}{N}\right)^{\lceil(d-2)/2\rceil}.
\end{align*}
Moreover, $\frac{d-1}{N-d+1}\le\frac{2d}{N}$ for $d\le N/2$.
Plugging both bounds into~\eqref{ineq: recursive},
\[
    M^{=d}[\muP]
    \le
    \frac{2d}{N}
    \left(\frac{2d}{N}\right)^{\lceil(d-2)/2\rceil}
    =
    \left(\frac{2d}{N}\right)^{\lceil d/2\rceil},
\]
this completes the induction and proves the lemma.
\end{proof}

\subsection{Weight of components}
We prove that, for $5\le d\le N/16$, it holds that
\[
    \sum_{|\gamma|=d}|\hmuP(\gamma)|^2
    \le
    3^{3d}
    \left(\frac Nd\right)^{\lfloor d/2\rfloor}.
\]

\begin{proof}[Proof of \cref{lem: deg d weight bound}]
By \cref{eqn: simplifiaction}, $\hmuP(\gamma)$ depends only on the labels of $\gamma$, so for a support $S$ of size $d$ the quantity
\[
    B_d:=\sum_{\supp(\gamma)=S}|\hmuP(\gamma)|^2
\]
does not depend on $S$, and $\sum_{|\gamma|=d}|\hmuP(\gamma)|^2=\binom Nd B_d$. We will show that
\begin{align}\label{eqn: Bd identity}
    B_d = \sum_{j=0}^d (-1)^{d-j}\binom{d}{j}\frac{N^j}{(N)_j}
\end{align}
and
\begin{align}\label{eqn: Bdupper}
    B_d \le 2^{d+1} \left(\frac{8d}N\right)^{\lceil d/2 \rceil}
    \qquad\text{for } 5\le d\le N/16.
\end{align}
Given~\eqref{eqn: Bdupper}, the lemma follows from $\binom Nd\le(eN/d)^d$ and $d-\lceil d/2\rceil=\lfloor d/2\rfloor$:
\begin{align*}
    \sum_{|\gamma|=d}|\hmuP(\gamma)|^2
    =
    \binom Nd B_d
    &\le
    2^{d+1}
    \left(\frac{eN}{d}\right)^d
    \left(\frac{8d}{N}\right)^{\lceil d/2\rceil}\\
    &=
    2^{d+1}8^{\lceil d/2\rceil}e^d
    \left(\frac Nd\right)^{\lfloor d/2\rfloor}
    \le
    3^{3d}
    \left(\frac Nd\right)^{\lfloor d/2\rfloor},
\end{align*}
where the last inequality uses $2^{d+1}8^{\lceil d/2\rceil}\le 8^d$ for $d\ge 5$ (equivalently, $d+1\le 3\lfloor d/2\rfloor$) and $8e<27$.
 
\paragraph{The identity \cref{eqn: Bd identity}.}
Fix $S=\{x_1,\ldots,x_d\}$ and, for $y=(y_1,\ldots,y_d)\in(\bit^n)^d$, let $\gamma_y\in\hF$ be the index with $\gamma_y(x_i)=y_i$ and $\gamma_y(x)=0^n$ for $x\notin S$; thus $\supp(\gamma_y)=\{x_i:y_i\neq 0^n\}$.
Let $\nu$ be the uniform distribution on $(\bit^n)^{(*d)}$, viewed as a probability mass function on $(\bit^n)^d$. By~\eqref{eqn: simplifiaction} (which does not require the $y_i$ to be nonzero),
\[
    \hmuP(\gamma_y)=\sum_{z\in(\bit^n)^d}\nu(z)\prod_{i=1}^d\chi_{y_i}(z_i)
    \qquad\text{for all }y\in(\bit^n)^d,
\]
i.e., $(\hmuP(\gamma_y))_{y}$ is $N^d$ times the Fourier transform of $\nu$ on the group $(\bit^n)^d$. Parseval's identity on $(\bit^n)^d$ therefore gives
\[
    \sum_{y\in(\bit^n)^d}|\hmuP(\gamma_y)|^2
    =N^{2d}\cdot\frac{1}{N^d}\sum_{z\in(\bit^n)^d}\nu(z)^2
    =N^d\cdot\frac{(N)_d}{(N)_d^2}
    =\frac{N^d}{(N)_d}.
\]
On the other hand, if exactly the coordinates in $J\subseteq[d]$ of $y$ are nonzero, then $\hmuP(\gamma_y)$ is the Fourier coefficient of a degree-$|J|$ index, so grouping the $y$ according to $J$ yields $\sum_{y\in(\bit^n)^d}|\hmuP(\gamma_y)|^2=\sum_{J\subseteq[d]}B_{|J|}=\sum_{j=0}^d\binom djB_j$, with $B_0:=1$. Hence $N^d/(N)_d=\sum_{j=0}^d\binom dj B_j$ for every $d\ge 0$, and binomial inversion gives~\cref{eqn: Bd identity}.
 
\paragraph{The bound~\cref{eqn: Bdupper}.}
The right-hand side of~\cref{eqn: Bd identity} resembles the binomial expansion of $(1-1)^d=0$, and we exploit the resulting cancellations. Recall the elementary fact that
\begin{align}\label{eqn: binomial cancellation}
    \sum_{j=0}^d
    (-1)^{d-j}\binom{d}{j}p(j)
    =
    0
    \qquad
    \text{for every polynomial $p$ with } \deg(p)<d.
\end{align}
Write
\[
    \frac{N^j}{(N)_j}
    =
    \prod_{\ell=0}^{j-1}\frac{1}{1-\ell/N}
    =
    u_j(1/N),
    \qquad
    u_j(t)
    :=
    \prod_{\ell=0}^{j-1}\frac{1}{1-\ell t}
    =
    \sum_{k\ge0}p_k(j)t^k,
\]
where the last expression is the Taylor expansion of $u_j$ at $t=0$, which converges absolutely for $|t|<1/(j-1)$, and in particular at $t=1/N$ for all $j\le d\le N/16$.
Comparing the coefficients of $t^k$ in $u_{j+1}(t)=u_j(t)/(1-jt)=u_j(t)\sum_{\ell\ge0}j^\ell t^\ell$ gives
\[
    p_k(j+1)
    =
    \sum_{\ell=0}^k
    j^\ell p_{k-\ell}(j),
    \qquad\text{i.e.,}\qquad
    p_k(j+1)-p_k(j)
    =
    \sum_{\ell=1}^k
    j^\ell p_{k-\ell}(j),
\]
with $p_0(j)=1$ and $p_k(0)=0$ for $k\ge 1$ (since $u_0=1$).
 
We claim that $j\mapsto p_k(j)$ is a polynomial of degree at most $2k$. This is clear for $k=0$. If it holds for all $k'<k$, then each summand $j^\ell p_{k-\ell}(j)$ on the right-hand side above has degree at most $\ell+2(k-\ell)\le 2k-1$, so the forward difference $p_k(j+1)-p_k(j)$ is a polynomial of degree at most $2k-1$; since $p_k(j)=\sum_{i<j}(p_k(i+1)-p_k(i))$ and partial sums of a polynomial of degree at most $2k-1$ form a polynomial of degree at most $2k$, the claim follows.
Consequently, by~\cref{eqn: Bd identity} and~\cref{eqn: binomial cancellation}, all terms with $2k<d$, i.e., with $k<\lceil d/2\rceil$, cancel:
\begin{align*}
    B_d
    &=
    \sum_{j=0}^d
    (-1)^{d-j}\binom{d}{j}
    \sum_{k\ge0}p_k(j)N^{-k} \\
    &=
    \sum_{k\ge0}N^{-k}
    \sum_{j=0}^d
    (-1)^{d-j}\binom{d}{j}p_k(j)\\
    &=
    \sum_{k\ge\lceil d/2\rceil}N^{-k}
    \sum_{j=0}^d
    (-1)^{d-j}\binom{d}{j}p_k(j).
\end{align*}
 
It remains to bound $|p_k(j)|$ for $j\le d$ and $k\ge\lceil d/2\rceil$. For $j\ge 2$, expanding each factor of $u_j(t)=\prod_{\ell=1}^{j-1}(1-\ell t)^{-1}$ as a geometric series shows that
\[
    p_k(j)
    =
    \sum_{\substack{
        s_1,\ldots,s_{j-1}\ge0\\
        s_1+\cdots+s_{j-1}=k
    }}
    \prod_{\ell=1}^{j-1}\ell^{s_\ell}
\]
is a sum of $\binom{j+k-2}{k}\le 2^{j+k-1}$ nonnegative terms, each at most $j^k$; hence $0\le p_k(j)\le j^k2^{j+k-1}$, and this also holds for $j\in\{0,1\}$ since $p_k(0)=p_k(1)=0$ for $k\ge 1$. For $j\le d\le 2k$ we get
\[
    |p_k(j)|\le d^k 2^{j+k-1}\le d^k2^{3k}=(8d)^k.
\]
Therefore, using $\sum_j\binom dj=2^d$ and $8d/N\le 1/2$,
\begin{align*}
    B_d
    \le
    \sum_{k\ge\lceil d/2\rceil}N^{-k}
    \sum_{j=0}^d
    \binom{d}{j}|p_k(j)|
    \le
    2^d
    \sum_{k\ge\lceil d/2\rceil}
    \left(\frac{8d}{N}\right)^k
    \le
    2^{d+1}
    \left(\frac{8d}{N}\right)^{\lceil d/2\rceil},
\end{align*}
which is \cref{eqn: Bdupper}.
\end{proof}

\fi

\section{Reductions between Planted Distributions}\label{app: collision}
\subsection{Several planted collisions}\label{app: multi_advantage}
We recall the statement of the lemma for convenience.

\noindent\textbf{\cref{lem: multi_planted}.}\textit{
    $\Adv^{\mathsf{dist}}_q(\D_{\PC,k},\dist{\F}) =O(q^3/N^2)$ for $k=2,3.$}

\begin{proof}[Proof of \cref{lem: multi_planted}]
Fix $i\in\{1,2,3\}$ and a $q$-query algorithm $\Alg$. 

Let $g$ be an oracle from $\dist{\F}$ or $\DPC$. 
Given the oracle $g$, we will construct another oracle $h$ as follows.
Sample $2i-2$ distinct points $z_1, \ldots, z_{2i-2} \in [N]$ uniformly and sample independent uniform values $v_1, \ldots, v_{i-1} \in [N]$. Let $T = [N] \setminus \{z_1, ... , z_{2i-2}\}$. Define
    \[
    h(x) =
    \begin{cases}
    v_j & x \in \{z_{2j-1}, z_{2j}\}, \\
    g(x) & x \in T.
    \end{cases}
    \]
We define an algorithm $\Blg_i$ given the oracle $g$ for $i\in \{1,2,3\}$ as follows.
\begin{description}
    \item[$\Blg_i$:] It defines $h$ as above, and executes $\Alg$ with oracle $h$.
\end{description}
Note that one quantum query to $h$ can be simulated using at most two queries to $g$. In more detail, for every \(x,z\in\{0,1\}^n\), the simulation acts as follows:
\[
\begin{aligned}
\lvert x,0^n,z\rangle
&\xrightarrow{\text{query to } g}
  \lvert x,g(x),z\rangle\\
&\xrightarrow{\text{unitary}}
  \lvert x,g(x),z+h(x)\rangle\\
&\xrightarrow{\text{query to } g}
  \lvert x,0^n,z+h(x)\rangle.
\end{aligned}
\]

If $g\sim \D_{\F}$, then $h\sim\D_{\PC,i-1}$ because of newly planted collisions.
On the other hand, if $g\sim\DPC$, both inputs of the original planted collision pair of $g$ is contained in $T$ with
probability
\[
    p_i=\frac{(N-2i+2)_2}{(N)_2}\ge\frac15.
\]
On this event, 
$h\sim\D_{\PC,i}$ as there are $i$ pairs of the planted collisions. 
If any of the inputs of the original planted collision of $g$ is not contained in $T$, we can observe that $g|_T$ is a uniform random function (without planted collisions), as well as $h|_T$. Also note that $h|_{[N]\setminus T}$ consists of $i-1$ collisions, which shows that $h\sim \D_{\PC,i-1}$ in this case.

Consequently, the distributions of $h$ in the two cases, $g \sim \D_{\F}$ and $g\sim\DPC$, are exactly
\[
    \D_{\PC,i-1}
    \quad\text{and}\quad
    p_i\D_{\PC,i}+(1-p_i)\D_{\PC,i-1}.
\]
\minki{
For the educational purpose, let me highlight the difference between the original writing and my modifications. I write my opinion inside **}
\minki{
If $g\sim \D_{\F}$, then $h\sim\D_{\PC,i-1}$ *I added the reason why*.
If $g\sim\DPC$, \emph{its latent planted pair} *I am a bit confused about what it is before checking the other parts. latent is unclear, and there are planted collisions from $h$ as well.* is contained in $T$ with
probability
\[
    p_i=\frac{(N-2i+2)_2}{(N)_2}\ge\frac15.
\]
On this event, the surviving pair *What is surviving pair? In my opinion this is nonsense.* is uniform in $T$, so
$h\sim\D_{\PC,i}$ *I added the reason why*. \emph{Otherwise *Otherwise for what? Is that mean not on this event? not on $g\sim \DPC$?*} $g|_T$ is an ordinary uniform function:
\emph{discarding *What is discarding?*} either \emph{endpoint*What is endpoint?*} \emph{removes the only planted constraint *What is remove, what is planted constraint?*.} \emph{*In my opinion this sentence is unreadable except for someone who already know the proof.*}
Consequently, the two \emph{simulated distributions *What is simulated distributions? You even never mentioned what is simulation*} are exactly
\[
    \D_{\PC,i-1}
    \quad\text{and}\quad
    p_i\D_{\PC,i}+(1-p_i)\D_{\PC,i-1}.
\]}
\minki{up to here.}
Taking the difference of the two acceptance probabilities of $\Blg_i$ for the two cases gives
\[
    p_i\Adv^{\mathsf{dist}}_q
       (\D_{\PC,i},\D_{\PC,i-1};\Alg)
    =\Adv^{\mathsf{dist}}(\DPC,\D_{\F};\Blg_i)
    \le \Adv^{\mathsf{dist}}_{2q}(\DPC,\D_{\F}).
\]
We can use the hybrid argument (or the triangle inequality) to show, using $\D_{\PC,0}=\mathcal \dist_{\F}$, the following for $k\in \{2,3\}$:
\begin{align*}
    &\Adv^{\mathsf{dist}}_q(\D_{\PC,k},\D_{\F})
    \le\sum_{i=1}^k p_i^{-1}\Adv^{\mathsf{dist}}_{2q}(\DPC,\D_{\F}) \\
    &\qquad\le5k\Adv^{\mathsf{dist}}_{2q}(\DPC,\D_{\F})
    =O(q^3/N^2).\qedhere
\end{align*}
\end{proof}

\subsection{A planted triple from a planted pair}\label{app: triple_advantage}
We first prove the decisional version of the quantum search bound using the polynomial method~\cite{JACM:BBCMD01}.
\begin{claim}
Let $\mathcal Z_M$ be the distribution that always outputs the all-zero function $b_{\emptyset}(j) = 0$ for all $j \in [M]$.
Let $\mathcal S_M$ be a distribution obtained by sampling $z\gets [M]$ uniformly at random, and outputting the indicator function $b_z(j) = \truth{j=z}$.
Then, for every $q$, it holds that
\[
    \Adv^{\mathsf{dist}}_q(\mathcal S_M,\mathcal Z_M)
    \le \frac{16q^2}{M}
\]
\end{claim}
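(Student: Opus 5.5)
We prove the claim with the polynomial method, in the same spirit as \cref{thm: Prob by functionals} but for Boolean oracles. Fix a $q$-query algorithm $\Alg$ with access to $b:[M]\to\bit$ via $O_b\ket{j,y}=\ket{j,y\oplus b(j)}$. Write $b$ as the vector $(b(1),\dots,b(M))\in\bit^M$. The standard argument of \cite{JACM:BBCMD01} shows that $P(b):=\Pr[\Alg^{O_b}\to1]$ is a real multilinear polynomial in $b(1),\dots,b(M)$ of degree at most $2q$. Each query makes the amplitudes affine in the bits $b(j)$, raising the degree by at most one, so the amplitudes have degree $\le q$ and the probability has degree $\le 2q$.

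The first step is to symmetrize. Define
\[
p(k):=\Exp_{\pi}\bigl[P(b\circ\pi)\bigr]
\]
for $b$ of Hamming weight $k$, where $\pi$ ranges over permutations of $[M]$. By Minsky--Papert symmetrization, $p$ is a univariate polynomial of degree at most $2q$. Since $0\le P\le 1$ on $\bit^M$, we get $0\le p(k)\le 1$ for every integer $k\in\{0,1,\dots,M\}$. The advantage against $\mathcal S_M$ versus $\mathcal Z_M$ equals $|\Exp_z P(b_z)-P(b_\emptyset)|=|p(1)-p(0)|$, because $\mathcal S_M$ is exactly the symmetrized weight-one distribution.

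The second step is to bound $|p(1)-p(0)|$ using a Markov/Paturi-type inequality for polynomials bounded on integer points. Let $p$ have degree $d\le 2q$ with $|p(k)|\le 1$ on the integers $0,\dots,M$. I would invoke the Ehlich--Zeller / Rivlin--Cheney bound: if $d^2\le M$ (say $d\le\sqrt{M/2}$), then $\max_{[0,M]}|p|\le 2$. Markov's inequality on the interval $[0,M]$ then gives $|p'(x)|\le 2d^2\cdot 2/M$ on $[0,M]$. Applying the mean value theorem,
\[
|p(1)-p(0)|\le \frac{4d^2}{M}\le\frac{16q^2}{M}.
\]

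The last step handles the complementary regime. When $2q>\sqrt{M/2}$, we have $16q^2/M\ge 2>1$, so the claimed bound is trivial. The main obstacle is pinning down the constant in the discrete-to-continuous bound, so that the two cases meet at the threshold with the constant $16$. I would first try the clean statement ``$|p|\le 1$ on integer points and $d^2\le M/2$ imply $|p|\le 2$ on $[0,M]$.''

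As a fallback, the BBBV hybrid argument gives the same bound directly. Summing over $z$ the squared distances between the runs with $b_\emptyset$ and with $b_z$ gives $\sum_z\|\psi_\emptyset-\psi_z\|\le\sqrt{M}\cdot 2q$ after Cauchy--Schwarz. This yields an average advantage at most $4q/\sqrt M$, which is weaker than needed. So I would rely on the polynomial argument.
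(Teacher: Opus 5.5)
Your proposal is correct and follows the paper's route: the polynomial method, then Minsky--Papert symmetrization to a univariate $p$ of degree at most $2q$ whose values $p(0)$ and $p(1)$ are the two acceptance probabilities, then the mean value theorem combined with the Ehlich--Zeller/Rivlin--Cheney bound. The paper cites that bound directly as $\deg(p)^2\ge \delta M/(1+\delta)$ and splits cases at $16q^2/M\ge 1$. Your version unpacks the bound via Markov's inequality, and the step you flagged as the main obstacle does hold: with $L:=\max_{[0,M]}|p|$ and the nearest integer to the maximizer, one gets $L\le 1+\tfrac12\max_{[0,M]}|p'|\le 1+\tfrac{d^2}{M}L$, hence $L\le 2$ whenever $d^2\le M/2$.
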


\begin{proof}
    Fix a $q$-query quantum algorithm $\Alg$ having access to an oracle $b:[M]\to \bit$. Write $x=(x_1,\cdots,x_M) \in \bit^M$ be an alternative representation of $b$ by choosing $x_i = b(i)$. Let $P(x)= \Pr[\Alg^x \to 1]$ be the acceptance probability of $\Alg$ given oracle $x$. It is clear that $0 \le P(x) \le 1$ for all $x\in \bit^M$.

    The polynomial method for quantum query algorithms \cite{JACM:BBCMD01} shows that $P$ can be written as a real multilinear polynomial of degree at most $2q$ on input $x$. 

    We use the standard symmetrization argument in the original polynomial method paper \cite[Section 3.1]{JACM:BBCMD01} by taking the average of all inputs. This shows that there exists a univariate real polynomial $p$ of degree at most $2q$ such that or every $k\in\{0,\cdots,M\}$, it holds that
    \[
    p(k)
    =
    \frac{1}{\binom Mk}
    \sum_{{x\in\bit^M, |x|=k}}
    P(x).
    \]
    In other words, $p(k)$ is the average acceptance probability of $\Alg$ over all Boolean functions of Hamming weight $k$. In particular, it holds that $p(0)$ and $p(1)$ correspond to the acceptance probability of $\Alg$ with oracle from $\mathcal Z_M$ and $\mathcal S_M$, respectively. 
    
    It remains to give an upper bound of $\delta:=|p(1)-p(0)|$. By the mean value theorem, there exists $z \in (0,1)$ such that $|p'(z)|=\delta$.
    Since the degree of $p$ is at most $2q$ and $0\le p(k)\le 1$ for every $k \in \{0,\cdots,M\}$, the results from \cite{ehlich1964schwankung,rivlin1966comparison} show that
    \[
    4q^2 \ge \deg(p)^2  \ge \frac{\delta M}{1+\delta}.
    \]
    If $16q^2/M \ge 1$, there is nothing to prove. Otherwise, rearranging above gives
    \[
    \delta \le \frac{4q^2}{M-4q^2} \le \frac{16q^2}{M}
    \]
    which proves the desired upper advantage bound.
\end{proof}

\noindent\textbf{\cref{lem: muthpc_smallq}.} $\Adv^{\mathsf{dist}}_q(\DthPC,\dist{\F}) =O(q^2/N)$.

\minki{\@ Junyoung, please revise the following proof.}
\junyoung{I done. Take a look.}

\begin{proof}[Proof of \cref{lem: muthpc_smallq}]
Fix a $q$-query algorithm $\Alg$. We first prove the intermediate bound
\[
    \Adv^{\mathsf{dist}}_q(\DthPC,\DPC)
    \le \frac{64q^2}{N-2}
\]
using the above lemma. Define an algorithm $\Blg$ with a Boolean oracle $b:[N] \to \{0, 1\}$ as follows.
\begin{enumerate}
\item Sample uniformly distinct inputs $a_1,a_2$, a uniform output $v$, and $R \gets \F$, all independently.
\item Define $h_b: [N] \to [N]$ by
    \[
        h_b(x):=
        \begin{cases}
            v    & \text{if } x\in\{a_1,a_2\} \text{ or } b(x)=1,\\
            R(x) & \text{otherwise}.
        \end{cases}
    \]
\item Run $\Alg$ with the oracle $O_{h_b}$.
\end{enumerate}
As one query to $O_{h_b}$ is simulated by two queries to $O_b$, $\Blg$ makes at most $2q$ queries to $b$.

Suppose $b \gets \mathcal{Z}_N$. As $b(x) = 0$ always, we have $h_b(a_1)=h_b(a_2)=v$ and $h_b(x)=R(x)$ for $x\notin\{a_1,a_2\}$. Since $v$ and the values $R(x)$ for $x\notin\{a_1,a_2\}$ are independent and uniform, $h_b$ is a uniform random function conditioned on $h_b(a_1)=h_b(a_2)$. Since $(a_1, a_2)$ is chosen uniformly, $h_b \sim \DPC$.

Suppose $b \sim \mathcal S_{N}$. Then we have $b(x)=\truth{x=z}$ for a randomly sampled $z \gets [N]$. Let $E$ be the event $z\in\{a_1,a_2\}$. Note that $\Pr[E]=2/N$. If $E$ occurs, since $b(x)=1$ implies $z \in \{a_1,a_2\}$, $h_b \sim \DPC$ by the above argument. If $E$ does not occur, we have $h_b(a_1)=h_b(a_2)=h_b(z)=v$ and $h_b(x)=R(x)$ for
$x\notin\{a_1,a_2,z\}$. Since $v$ and $R$ are independent and uniform, $h_b$ is a uniform random function conditioned on $h_b(a_1)=h_b(a_2)=h_b(z)$. Since $(a_1,a_2,z)$ is chosen uniformly, $h_b \sim \DthPC$.

Consequently, a $q$-query distinguisher for $\frac{N-2}{N}\DthPC + \frac{2}{N}\DPC$ and $\DPC$ gives a
$2q$-query distinguisher for $\mathcal S_{N}$ and $\mathcal Z_{N}$.
By the claim, we have
\begin{align*}
    \Adv^{\mathsf{dist}}_q(\DthPC,\DPC;\Alg)
    &= \frac{N}{N-2}\Adv^{\mathsf{dist}}_q\left(\frac{N-2}{N}\DthPC + \frac{2}{N}\DPC,\DPC;\Alg\right) \\
    &= \frac{N}{N-2}\Adv^{\mathsf{dist}}_{2q}(\mathcal S_{N},\mathcal Z_{N};\Blg) \leq \frac{64q^2}{N-2}.
\end{align*}
The triangle inequality now gives exactly the desired comparison:
\begin{align*}
    \Adv^{\mathsf{dist}}_q(\DthPC,\D_{\F})
    &\le
      \Adv^{\mathsf{dist}}_q(\DthPC,\DPC)
      +\Adv^{\mathsf{dist}}_q(\DPC,\D_{\F})\\
    &=O\left(\frac{q^2}{N}+\frac{q^3}{N^2}\right)=O\left(\frac{q^2}{N}\right),
\end{align*}
where we use $q \leq N$ in last equality.
\end{proof}

\subsection{Planted 4-XOR by splitting outputs}\label{app: quadxor_advantage}

\noindent\textbf{\cref{lem: mufox_smallq}.}
    $\Adv^{\mathsf{dist}}_q(\DfoX,\dist{\F}) =O(q^3/N^2)$.
\begin{proof}[Proof of \cref{lem: mufox_smallq}]
Fix $\Alg$ be a $q$-query algorithm.

Let $g:\bit^n \to \bit^n$ be either a uniform random function (from $\D_{\F}$) or a random function with a single planted collision (from $\DPC$). We define a function $h$ using $g$ that is either a uniform random function or a random function with a planted 4-xor tuple with a certain probability.

Formally, we define $h: \bit^n \to \bit^n$ as follows.
Let $M=N/2=2^{n-1}$. Sample a uniform random permutation $\pi$ of $\bit^n$ and independent and uniform random $R_1,\ldots,R_M\in\bit^n$.
For each $j\in \bit^{n-1}$, set
\[
    h(\pi(j\|0))=R_j,
    \qquad
    h(\pi(j\|1))=R_j+g(j\|0)
\]
where $\|$ denotes the concatenation.
Let $\Blg$ be an algorithm that defines $h$ as above, and executes $\Alg$ with oracle $h$. As noted before, each query to $h$ requires at most two queries to $g$.

We analyze two cases. If $g \sim \D_{\F}$, then it is not hard to see that $h \sim \D_{\F}$.
Consider $g\sim\DPC$ and let $\{u,v\}$ be the planted collision pair of $g$.
Define $E$ by the event that both of $u,v$ are included in $\{(j\|0): j\in \bit^{n-1}\}$.
This event occurs with probability at least
\[
    p=\frac{(M)_2}{(N)_2}
     =\frac{N-2}{4(N-1)}\ge\frac16.
\]
Conditioned on $E$, write $v=v_0\| 0$ and $u= u_0\|0$. 
The restriction of $g$ to $\{(j\|0): j\in \bit^{n-1}\}$ is uniform subject
to $g(u)=g(v)$. 
For $h$, this gives
\[
h(\pi(u_0\|0)) +h(\pi(u_0\|1)) + h(\pi(v_0\|0)) + h(\pi(v_0\|1)) = 0^n.
\]
The outputs of $h$ on those inputs $\pi(u_0\|0),\pi(u_0\|1),\pi(v_0\|0),\pi(v_0\|1)$ are uniform conditioned on the sum being $0^n$, meaning that they becomes the planted 4-xor tuple. All other outputs are uniformly and randomly distributed. In other words, $h \sim \DfoX$ in this case.
On the other hand, conditioned on $\lnot E$, we can see that the $h \sim \D_{\F}$.


Therefore, the distribution of the oracles given to $\Alg$ in this execution become $\D_{\F}$ and
$p\DfoX+(1-p)\D_{\F}$, respectively. Subtracting acceptance probabilities gives
\[
    p\Adv^{\mathsf{dist}}(\DfoX,\D_{\F};\Alg)
    =\Adv^{\mathsf{dist}}(\DPC,\D_{\F};\Blg)
    \le\Adv^{\mathsf{dist}}_{2q}(\DPC,\D_{\F}).
\]
Taking the maximum over $\Alg$ proves
\[
    \Adv^{\mathsf{dist}}_q(\DfoX,\D_{\F})
    \le6\Adv^{\mathsf{dist}}_{2q}(\DPC,\D_{\F})
    =O(q^3/N^2).\qedhere
\]
\end{proof}

\ifnum\fullpage=0
\section{Degree-6 component as planted three collisions plus small terms}\label{app: xopdeg6_smallq}

We use the following lemma.
\begin{lemma}\label{lem: binom_inequality}
    $\mathrm{For\ } x, y \in \mathbb{R}, |(x+y)^r-x^r|\leq r|y|(|x|+|y|)^{r-1}.$
\end{lemma}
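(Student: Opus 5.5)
We prove the inequality by expanding $(x+y)^r$ with the binomial theorem and comparing coefficients. We assume $r\ge 1$ is an integer, as in the application; for $r=0$ both sides vanish.

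First write
\[
(x+y)^r-x^r=\sum_{k=1}^r\binom rk x^{r-k}y^k,
\]
since the $k=0$ term cancels $x^r$. The triangle inequality then gives
\[
|(x+y)^r-x^r|\le \sum_{k=1}^r\binom rk |x|^{r-k}|y|^k .
\]

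Next we bound each coefficient using the absorption identity $k\binom rk=r\binom{r-1}{k-1}$. Since $k\ge 1$, this yields $\binom rk\le r\binom{r-1}{k-1}$. Substituting and factoring out $r|y|$ gives
\[
|(x+y)^r-x^r|\le r|y|\sum_{k=1}^r\binom{r-1}{k-1}|x|^{r-k}|y|^{k-1}.
\]

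Finally, reindex with $j=k-1$, which runs from $0$ to $r-1$. The remaining sum becomes
\[
\sum_{j=0}^{r-1}\binom{r-1}{j}|x|^{r-1-j}|y|^{j}=(|x|+|y|)^{r-1}
\]
by the binomial theorem. This is exactly the claimed bound $r|y|(|x|+|y|)^{r-1}$.

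The only step needing any care is the coefficient comparison $\binom rk\le r\binom{r-1}{k-1}$, and it is immediate from $k\ge1$. The lemma can also be read as a mean value theorem bound, $|(x+y)^r-x^r|\le r|y|\max_{t\in[0,1]}|x+ty|^{r-1}$, but the binomial argument avoids any calculus.
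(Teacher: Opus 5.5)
Your proof is correct and matches the paper's argument step for step: binomial expansion, the triangle inequality, the bound $\binom{r}{k}\le k\binom{r}{k}=r\binom{r-1}{k-1}$ for $k\ge 1$, and then resumming with the binomial theorem to get $(|x|+|y|)^{r-1}$. One small aside: your $r=0$ remark is not needed, since the lemma is only applied with $r\ge 2$, and it is slightly off anyway because the right-hand side $(|x|+|y|)^{-1}$ is undefined when $x=y=0$.
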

\begin{proof}
Noting that $\binom{r}{k}\le k\binom{r}{k}=r\binom{r-1}{k-1}$ for $1\le k\le r$, we have
\[
|(x+y)^r-x^r| \le \sum_{k=1}^r  \binom{r}{k} |x^{r-k}y^k| \le r|y| \sum_{k=1}^r \binom{r-1}{k-1} |x|^{r-k}|y|^{k-1}
\]
which equals to $r|y|(|x|+|y|)^{r-1}.$
\end{proof}

\begin{lemma}
    $|\inner{\muXoPr^{=6},P_\Alg}| = O\left(\frac{q^3}{N^r}\right).$
\end{lemma}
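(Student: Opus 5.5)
The plan mirrors the degree-4 argument. First I compute $\hmuP(\gamma)$ for $|\gamma|=6$ with $\supp(\gamma)=\{x_1,\dots,x_6\}$ and labels $y_i=\gamma(x_i)$. I start from \cref{eqn: simplifiaction} and expand $\truth{z_1,\dots,z_6\text{ distinct}}$ by inclusion--exclusion over set partitions of $[6]$, using the M\"obius coefficients $\prod_B(-1)^{|B|-1}(|B|-1)!$. By \cref{eqn: characsum}, any partition with a singleton block contributes zero. The surviving partition types are $\{2,2,2\}$, $\{2,4\}$, $\{3,3\}$ and $\{6\}$, with coefficients $-1$, $+6$, $+4$ and $-120$. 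Each block $B$ contributes $N\truth{\sum_{i\in B}y_i=0^n}$. This gives
\[
\hmuP(\gamma)=\frac{-N^2C_{2,2,2}(\gamma)+R(\gamma)}{(N-1)_5},\qquad R(\gamma):=6NC_{2,4}(\gamma)+4NC_{3,3}(\gamma)-120C_6(\gamma)=O(N).
\]

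Next I raise this to the $r$-th power using \cref{eqn:xopconvol} and split it as $(-N^2)^rC_{2,2,2}(\gamma)+S(\gamma)$, where $S(\gamma)$ collects everything else. For the main term, summing $C_{2,2,2}(\gamma)\chi_\gamma(f)$ over all $\gamma$ of degree 6 works exactly as in the degree-4 computation. By symmetry the perfect-matching sum reduces to $\frac1{48}\sum_{x_1,\dots,x_6\text{ distinct}}\prod_{s=1}^3\bigl(N\truth{f(x_{2s-1})=f(x_{2s})}-1\bigr)$. Expanding the product gives
\[
\frac{(N)_6}{48}\Bigl((\mu_{\PC,3}-1)-3(\mu_{\PC,2}-1)+3(\muPC-1)\Bigr).
\]
Its inner product with $P_\Alg$ is therefore $O(N^6\cdot q^3/N^2)$ by \cref{lem: planted_collision,lem: multi_planted}. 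After multiplying by $N^{2r}$ and dividing by $(N-1)_5^r=\Theta(N^{5r})$, this contributes $O(q^3/N^{3r-4})=O(q^3/N^r)$ for $r\ge2$.

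The main obstacle is the remainder $S$, which I bound by Cauchy--Schwarz on $\|\sum S(\gamma)\chi_\gamma\|_2$. I partition the characters by the value of $C_{2,2,2}$, and the key is to count labels per support in each class.
\begin{itemize}
\item If $C_{2,2,2}=0$, then $S=R^r=O(N^r)$. The labels satisfy at least one linear constraint unless $R=0$, so there are at most $O(N^5)$ labels per support, hence $O(N^{11})$ characters in total. This class contributes $O(N^{r+5.5})$.
\item If $C_{2,2,2}=1$, \cref{lem: binom_inequality} gives $|S|\le r|R|(N^2+|R|)^{r-1}=O(N^{2r-1})$. There are $O(N^3)$ labels per support, so this class contributes $O(\sqrt{N^{9}}N^{2r-1})=O(N^{2r+3.5})$.
\item If $C_{2,2,2}\ge2$, the labels must satisfy a pair equality together with a 4-fold equality, leaving $O(N^2)$ labels per support. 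With the crude bound $|S|=O(N^{2r})$, this class contributes $O(N^{2r+4})$.
\end{itemize}
Dividing by $\Theta(N^{5r})$ gives $O(N^{-3r+4})$ at worst, which is $O(N^{-r})$ since $r\ge2$. Combining this with the main term proves the bound.

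The delicate points are two. The first is verifying the M\"obius coefficients in the inclusion--exclusion step. The second is confirming that $C_{2,2,2}\ge2$ really forces the label count down to $O(N^2)$: two distinct perfect matchings with all pairs equal merge into a 4-cycle of equal labels plus one pair.
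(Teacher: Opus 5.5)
Your proposal is correct and follows the paper's proof essentially step for step. It uses the same M\"obius/inclusion--exclusion formula for $\hmuP(\gamma)$ and the same split $(-N^2)^rC_{2,2,2}+S$, with the main term rewritten through $\muPC,\mu_{\PC,2},\mu_{\PC,3}$. It also uses the same three-way partition by $C_{2,2,2}$, with \cref{lem: binom_inequality} on the $C_{2,2,2}=1$ class.

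The differences are cosmetic. Your $O(N^{11})$ count on the $C_{2,2,2}=0$ class is slightly sharper than the paper's $O(N^{12})$, which does not matter. For $C_{2,2,2}\ge 2$, two distinct perfect matchings can also form a single $6$-cycle, not just a $4$-cycle plus a shared pair. In that case all six labels are equal, giving only $O(N)$ labels, so your $O(N^2)$ bound still holds.
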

\begin{proof}
We need to compute $\hmuP$ first. Let us define $Z_{2, 2, 2}, Z_{2, 4}, Z_{3, 3}, Z_6$ by
\begin{align*}
  Z_{2,2,2}
  &:=\sum_{M\ \mathrm{perfect\ matching\ of\ }[6]}\prod_{\{i,j\}\in M}\truth{z_i=z_j}, \\
  Z_{2,4}
  &:=\sum_{\substack{\{i, j\} \subset [6] \\ \{i, j\} \cap \{g, h, k, \ell\}=\varnothing}}\truth{z_i=z_j}\truth{z_g=z_h=z_k=z_\ell}, \\
  Z_{3,3}
  &:=\sum_{\substack{1 \in \{i, j, k\} \subset [6] \\ \{i, j, k\} \cap \{g, h, \ell\}=\varnothing}}\truth{z_i=z_j=z_k}\truth{z_g=z_h=z_\ell}, \\
  Z_6
  &:=\truth{z_1=z_2=z_3=z_4=z_5=z_6}. \\
\end{align*}
By the inclusion-exclusion principle for the fifteen events $E_{ij}=[z_i=z_j]$ gives,
\begin{align*}
    \truth{z_1, \cdots, z_6 \mathrm{\ distinct}} = 1 - \cdots - Z_{2, 2, 2} + 6Z_{2, 4} + 4Z_{3, 3} - 120Z_6,
\end{align*}
where the omitted terms correspond to partitions containing at least one
singleton block, which will be vanished.

Similarly, we define $C_{2,2,2}, C_{2,4}, C_{3,3}, C_6$ by
\begin{align*}
  C_{2,2,2}(\gamma)
  &:=\sum_{M\ \mathrm{perfect\ matching\ of\ }[6]}\prod_{\{i,j\}\in M}\truth{y_i=y_j}, \\
  C_{2,4}(\gamma)
  &:=\sum_{\substack{\{i, j\} \subset [6] \\ \{i, j\} \cap \{g, h, k, \ell\}=\varnothing}}\truth{y_i=y_j}\truth{y_g+y_h+y_k+y_\ell=0^n}, \\
  C_{3,3}(\gamma)
  &:=\sum_{\substack{1 \in \{i, j, k\} \subset [6] \\ \{i, j, k\} \cap \{g, h, \ell\}=\varnothing}}\truth{y_i+y_j+y_k=0^n}\truth{y_g+y_h+y_\ell=0^n}, \\
  C_6(\gamma)
  &:=\truth{y_1+y_2+y_3+y_4+y_5+y_6=0^n}. \\
\end{align*}
By applying \cref{eqn: characsum} and simplifying, we obtain
\begin{align*}
    \hmuP(\gamma) = \frac{-N^2C_{2,2,2}(\gamma) + 6NC_{2,4}(\gamma)+4NC_{3,3}(\gamma)-120C_6(\gamma)}{(N-1)_5}.
\end{align*}
set $R(\gamma) = 6NC_{2,4}(\gamma)+4NC_{3,3}(\gamma)-120C_6(\gamma)$. Note that $R(\gamma) = O(N)$. We rewrite
\begin{align*}
    \left(\hmuP(\gamma) (N-1)_5 \chi_\gamma\right)^r 
    &=\big(-N^2C_{2,2,2}(\gamma) + R(\gamma) \big)^r\chi_\gamma \\
    &=S(\gamma)\chi_\gamma + (-N^2)^rC_{2,2,2}(\gamma)\chi_\gamma,
\end{align*}
where $S(\gamma) = \big(-N^2C_{2,2,2}(\gamma) + R(\gamma) \big)^r - (-N^2)^rC_{2,2,2}(\gamma)$. For the second term, we have
\begin{align*}
    &\sum_{\supp(\gamma)=\{x_1<\ldots<x_6\}}C_{2,2,2}(\gamma)\chi_\gamma(f)\\
    &=\frac{15}{6!}\sum_{\supp(\gamma)=\{x_1,\ldots,x_6\}}\truth{y_1=y_2}\truth{y_3=y_4}\truth{y_5=y_6}(\gamma)\chi_\gamma(f)\\
    &=\frac{1}{48}\sum_{\substack{x_1,\ldots,x_6 \\ \text{distinct}}}\prod_{s=1}^3 (N\truth{f(x_{2s-1})=f(x_{2s})}-\constone) \\
    &=\frac{(N)_6}{48}(\mu_{\PC, 3} - 3\mu_{\PC, 2} + 3\muPC - \constone) \\
    &=\frac{(N)_6}{48}((\mu_{\PC, 3}-\constone) - 3(\mu_{\PC, 2}-\constone) + 3(\muPC-\constone)).
\end{align*}
This will be routinely bounded using \cref{lem: planted_collision,lem: multi_planted} below.

Now let us bound $S(\gamma)$. Partition the degree-six characters into
\begin{align*}
    \Gamma_0&:=\{\gamma:C_{2,2,2}(\gamma)=0\},\\
    \Gamma_1&:=\{\gamma:C_{2,2,2}(\gamma)=1\},\\
    \Gamma_{\geq2}&:=\{\gamma:C_{2,2,2}(\gamma)\geq2\}.
\end{align*}
For $J\in\{0,1,{\geq2}\}$, let $B_J:=\sum_{\gamma\in\Gamma_J}S(\gamma)\chi_\gamma$. Then we have
\begin{align*}
    \Big|\inner{\sum_{|\gamma|=6}S(\gamma)\chi_\gamma,P_\Alg}\Big| \leq |\inner{B_0,P_\Alg}| + |\inner{B_1,P_\Alg}| + |\inner{B_{\geq 2},P_\Alg}|
\end{align*}
and using Cauchy-Schwarz inequality
\begin{align*}
    |\inner{B_J,P_\Alg}| \leq \|B_J\|_2\|P_\Alg\|_2 \leq \|B_J\|_2 \le \left(\sum_{\gamma\in\Gamma_J}|S(\gamma)|^2\right)^{1/2}.
\end{align*}
If $\gamma\in\Gamma_0$, then $S(\gamma)=R(\gamma)^r=O(N^r)$.
\begin{align*}
    |\inner{B_0,P_\Alg}|
    &\leq \left( \sum_{\substack{|\gamma|=6 \\ C_{2,2,2}(\gamma)=0}} |S(\gamma)|^2 \right)^{1/2} \\
    &\leq \left( O(N^{12})O(N^{2r}) \right)^{1/2} = O(N^{r+6}).
\end{align*}
If $\gamma\in\Gamma_1$, we apply \cref{lem: binom_inequality}, which gives
\begin{align*}
    |S(\gamma)| &= \big|\big(-N^2C_{2,2,2}(\gamma) + R(\gamma) \big)^r - (-N^2)^rC_{2,2,2}(\gamma)\big| \\
    &= \big|\big(-N^2 + R(\gamma) \big)^r - (-N^2)^r\big| = O(N^{2r-1}).
\end{align*}
Since $C_{2,2,2}(\gamma)=1$, we have at most $O(N^3)$ choices of labels per support. Hence,
\begin{align*}
    |\inner{B_1,P_\Alg}|
    &\leq \left( \sum_{\substack{|\gamma|=6 \\ C_{2,2,2}(\gamma)=1}} |S(\gamma)|^2 \right)^{1/2} \\
    &\leq \left( O(N^{3+6})O(N^{(2r-1)\cdot 2}) \right)^{1/2} = O(N^{2r+3.5}).
\end{align*}
If $\gamma\in\Gamma_{\geq2}$, then
\begin{align*}
    |S(\gamma)| &= \big|\big(-N^2C_{2,2,2}(\gamma) + R(\gamma) \big)^r - (-N^2)^rC_{2,2,2}(\gamma)\big| \\
    &= \big|O(N^2)^r - (-N^2)^rO(1)\big| = O(N^{2r}).
\end{align*}
Since $C_{2,2,2}(\gamma)\ge2$, there exist indices $i, j$ and $g, h, k, l$ such that $y_i=y_j$ and $y_g=y_h=y_k=y_l$. Hence, we have at most $O(N^2)$ choices of labels per support. Hence,
\begin{align*}
    |\inner{B_{\geq 2},P_\Alg}|
    &\leq \left( \sum_{\substack{|\gamma|=6 \\ C_{2,2,2}(\gamma)\geq 2}} |S(\gamma)|^2 \right)^{1/2} \\
    &\leq \left( O(N^{2+6})O(N^{2r\cdot 2}) \right)^{1/2} = O(N^{2r+4})
\end{align*}
Finally,
\begin{align*}
    |\inner{\muXoPr^{=6},P_\mathcal{A}}|
    &\leq \frac{N^{2r}\Big|\inner{\sum_{|\gamma|=6}C_{2,2,2}(\gamma)\chi_\gamma,P_\mathcal{A}}\Big| + \sum_{J\in\{0,1,\ge 2\}}|\inner{B_J,P_\mathcal{A}}|}{(N-1)_5^r} \\
    &\leq \frac{O(q^3N^{2r+4}) + O(N^{r+6}) + O(N^{2r+3.5}) + O(N^{2r+4})}{\Theta(N^{5r})} \\
    &= O\Big(\frac{q^3}{N^{3r-4}}\Big) + O\Big(\frac{1}{N^{4r-6}}\Big) + O\Big(\frac{1}{N^{3r-3.5}}\Big) + O\Big(\frac{1}{N^{3r-4}}\Big)
    \\&= O\left(\frac{q^3}{N^r}\right).
\end{align*}
This concludes the proof.
\end{proof}
\fi
\section{Missing Proofs for the Compressed Oracles}\label{app: compressed}

\noindent\textbf{\cref{lem: pc recorded counts}.}\textit{
Let $\ket{\Xi_t}$ be the final state over the algorithm and database registers outputted by some algorithm after making $t\le N$ queries to the compressed oracle $O$. 
Then
\[
    \bra{\Xi_t}\mathsf D_{\mathrm{cnt}}\ket{\Xi_t}\le t,
    \qquad
    \bra{\Xi_t}\mathsf{Col}_{\mathrm{cnt}}\ket{\Xi_t}\le129t.
\]}

\begin{proof}
After $s$ queries, the state is supported on databases $D$ satisfying $
    |\{x:D(x)\ne\bot\}|\le s.$
Since the state has norm at most one, this proves the first inequality.

We now bound the increase in the collision count caused by a query.
Write
\[
    \chi_b(v):=(-1)^{b\cdot v},
    \qquad
    \ket{\chi_b}:=\frac1{\sqrt N}\sum_v\chi_b(v)\ket v
\]
for the Fourier basis of the response register.
In particular, $\ket{\chi_0}=\ket{+}$.
Fix a query input $x$ and a response state $\ket{\chi_b}$.
For $b\ne0$, the query acts on database register $x$ by the operator
\[
    O_b
    =
    Z_b+\ketbra{+}
    -\ket{+}\bra{\chi_b}
    -\ket{\chi_b}\bra{+}
    +\ket{\chi_b}\bra{\bot}
    +\ket{\bot}\bra{\chi_b},
    \qquad
    Z_b:=\sum_v\chi_b(v)\ketbra v,
\]
as follows from~\cite[Lemma~4.3]{EC:CFHL21}.
For $b=0$, the query is the identity.

Also fix the computational-basis values of the database registers other than $x$, and put
\[
    m_v:=|\{y\ne x:D(y)=v\}|,
    \qquad
    \ell:=\sum_v m_v,
    \qquad
    c:=\sum_v\binom{m_v}{2}.
\]
Thus, $\ell$ is the number of inputs $y\ne x$ with $D(y)\ne\bot$, and $c$ is the number of collision pairs not involving $x$.
In particular,
\[
    \sum_v m_v^2=\ell+2c.
\]
On this subspace, the collision-count operator is $cI+A$, where
\[
    A:=\sum_v m_v\ketbra v
\]
acts on database register $x$ and counts collision pairs involving $x$.
We have
\[
    \|A\ket{+}\|^2
    =
    \|A\ket{\chi_b}\|^2
    =
    \frac{\ell+2c}{N}.
\]
Moreover, $[A,Z_b]=0$ and $A\ket{\bot}=0$.
In the expansion of $O_b$, the commutator with each of
$\ketbra{+}$, $\ket{+}\bra{\chi_b}$, and
$\ket{\chi_b}\bra{+}$ has norm at most
$2\sqrt{(\ell+2c)/N}$.
The commutator with each of the remaining two terms has norm at most
$\sqrt{(\ell+2c)/N}$.
Therefore,
\[
    \|[A,O_b]\|
    \le8\sqrt{\frac{\ell+2c}{N}}.
\]
Since $O_b$ is unitary, the change in the collision-count operator satisfies
\[
    \|O_b^\dagger A O_b-A\|
    =
    \|O_b^\dagger[A,O_b]\|
    \le8\sqrt{\frac{\ell+2c}{N}}.
\]
The same bound holds for $b=0$, since the change is zero.

Let $\ket{\Xi_s}$ denote the state after $s$ queries and the subsequent algorithm operations, just before the next query, and put
\[
    k_s:=\bra{\Xi_s}\mathsf{Col}_{\mathrm{cnt}}\ket{\Xi_s}.
\]
Both $O$ and $\mathsf{Col}_{\mathrm{cnt}}$ preserve the subspaces specified by the query input $x$, the Fourier-basis response $b$, and the values of the other database registers.
Index these mutually orthogonal subspaces by $\lambda$.
Let $p_\lambda$ be the squared norm of the component of $\ket{\Xi_s}$ in subspace $\lambda$, and let $\ell_\lambda,c_\lambda$ be the corresponding values of $\ell,c$.
Then
\[
    \sum_\lambda p_\lambda=\|\Xi_s\|^2\le1,
    \qquad
    \ell_\lambda\le s,
    \qquad
    \sum_\lambda p_\lambda c_\lambda\le k_s.
\]
The last inequality holds because $c_\lambda$ counts only collision pairs not involving the queried input.
Applying the preceding operator bound in each subspace and then the Cauchy--Schwarz inequality gives
\begin{align*}
    &\bra{\Xi_s}O^\dagger\mathsf{Col}_{\mathrm{cnt}}O
        \ket{\Xi_s}-k_s\\
    &\qquad\le
        \frac8{\sqrt N}
        \sum_\lambda p_\lambda
            \sqrt{\ell_\lambda+2c_\lambda}\\
    &\qquad\le
        \frac8{\sqrt N}
        \left(\sum_\lambda p_\lambda\right)^{1/2}
        \left(
            \sum_\lambda p_\lambda
                (\ell_\lambda+2c_\lambda)
        \right)^{1/2}\\
    &\qquad\le
        8\sqrt{\frac{s+2k_s}{N}}.
\end{align*}

A unitary acting only on the algorithm's registers leaves the expectation of $\mathsf{Col}_{\mathrm{cnt}}$ unchanged.
An orthogonal projection on those registers cannot increase this expectation, provided that the resulting state is not renormalized.
Indeed, such a projection commutes with the positive operator $\mathsf{Col}_{\mathrm{cnt}}$.
Consequently,
\begin{align}\label{eqn: pc count recurrence}
    k_{s+1}
    \le k_s+8\sqrt{\frac{s+2k_s}{N}}.
\end{align}
Starting from $k_0=0$, we prove $k_s\le129s$ by induction.
If $k_s\le129s$ and $s<N$, then
\[
    k_{s+1}
    \le129s+8\sqrt{\frac{259s}{N}}
    \le129s+8\sqrt{259}
    \le129(s+1).
\]
This proves the second inequality.
\end{proof}

\paragraph{Proof of \cref{eqn: pc B bound}.}
We will prove that for
every joint algorithm-database state $\ket{\Xi}$ whose database registers are supported on $\operatorname{span}\{\ket{\widehat f}:f\in\F\}$, it holds that
\begin{align*}
    \|B^\dagger\ket{\Xi}\|^2
    \le\frac4N
        \bra{\Xi}\bigl(\mathsf D_{\mathrm{cnt}}
                    +\mathsf{Col}_{\mathrm{cnt}}\bigr)\ket{\Xi}.
\end{align*}
Put $\ket{z_v}:=\ket{\widehat v}-\ket{\bot}/\sqrt N$.
Then $\langle z_v|z_w\rangle=\truth{v=w}-1/N$ and
$\sum_v\ketbra{z_v}=Q$ on $\operatorname{span}\{\ket{\widehat v}:v\in\bit^n\}$.
For a fixed query input $x$, define
\[
    n_v:=\sum_{y\ne x}(\ketbra{z_v})_y,
    \qquad
    b_v:=\frac1{\sqrt N}\sum_{y\ne x}(\ket{\bot}\bra{z_v})_y.
\]
The identities for $\ket{z_v}$ give
\begin{align*}
    B_xB_x^\dagger
    &=\frac{\ketbra{\bot}_x}{N}
        \sum_v(n_v+b_v)^\dagger(n_v+b_v),\\
    \sum_v\|n_v\ket{\Xi}\|^2
    &=
    \left(1-\frac1N\right)
        \bra{\Xi}\mathsf D_{\mathrm{cnt},\ne x}\ket{\Xi}
    +2\bra{\Xi}\mathsf{Col}_{\mathrm{cnt},\ne x}\ket{\Xi},\\
    \sum_v\|b_v\ket{\Xi}\|^2
    &\le
    \frac{N-1}{N}
        \bra{\Xi}\mathsf D_{\mathrm{cnt},\ne x}\ket{\Xi}.
\end{align*}
Here the subscript $\ne x$ means that database register $x$ is omitted from the count, and the last line follows from the Cauchy--Schwarz inequality over the $N-1$ database registers.
Using $\|a+b\|^2\le2\|a\|^2+2\|b\|^2$, dropping the projector $\ketbra{\bot}_x$, and summing over the blocks corresponding to the query inputs proves \cref{eqn: pc B bound}.

\end{document}